\newcommand{\chapquote}[3]{\begin{quotation} \textit{#1} \end{quotation} \begin{flushright} - #2, \textit{#3}\end{flushright} }
\newtheorem{theorem}{Theorem}[section]
\newtheorem{corollary}[theorem]{Corollary}
\newtheorem{lemma}[theorem]{Lemma}
\newtheorem{definition}[theorem]{Definition}
\newtheorem{example}[theorem]{Example}
\newtheorem{remark}[theorem]{Remark}
\renewcommand{\arraystretch}{1.5}
\newcommand\tab[1][1cm]{\hspace*{#1}}
\newcommand{\gr}{\textcolor{gray}}
\title{\Huge CM0622 - Algorithms for Massive Data}
\author{\Large
Nicola Prezza\\
Ca' Foscari University of Venice
}
\date{\vspace{50pt} Version: \today}
\begin{document}

\begin{titlingpage}
\maketitle
\end{titlingpage}

\newpage

\tableofcontents

\newpage

\section*{Overview of the course}\label{sec:overview}

The goal of this course is to introduce algorithmic techniques for dealing with \emph{massive data}: data so large that it does not fit in the computer's memory. 
In this course, massive data will be modeled using two fundamental combinatorial objects: sequences (strings) and sets of integers. Solutions will be divided into two broad categories, covered in the two modules of the course: (lossless) \textbf{compressed data structures} and (lossy) \textbf{data sketches}. 

\vspace{10pt}

\subsection*{Module 1 - Compressed data structures}

A \emph{compressed data structure} supports fast queries on the data and uses a space proportional to the compressed data. This solution is typically \emph{lossless}: the representation allows to fully reconstruct the original data. Here we exploit the fact that, in some applications, data is extremely redundant and can be considerably reduced in size (even by orders of magnitude) without losing any information by exploiting this redundancy. Interestingly it turns out that, usually, computation on compressed data can be performed \emph{faster} than on the original (uncompressed) data: the field of compressed data structures exploits the natural duality between compression and computation to achieve this goal. 

The main results we will discuss in Chapter \ref{chapter:cds}, \textbf{compressed dictionaries} (solving various queries on sets of integers) and \textbf{compressed text indexes} (solving indexed pattern matching on strings), find important applications in information retrieval. They allow us to turn large collection of documents/integer sets into a compressed data structure that allows locating  substrings/finding integers very quickly, without decompressing the data. These techniques today stand at the core of modern search engines and sequence-mapping algorithms in computational biology. 

Importantly, lossless techniques cannot break the information-theoretic lower bound for representing the input. For example, since the number of subsets of cardinality $m$ of $\{1,\dots,n\}$ is $\binom{n}{m}$, the information-theoretic lower bound for storing such a subset is $\log_2\binom{n}{m} = m\log(n/m) + O(m)$ bits. No lossless data structure can use asymptotically less space than this bound for \emph{all} such subsets. 

\subsection*{Module 2 - Data sketches}

In the second module  we will resort to lossy compression and throw away some features of the data in order to dramatically reduce its size, usually breaking the information-theoretic lower bound. This feat will be achieved through randomization (queries may fail with small probability) and approximation (queries may return an approximate result).

In Chapter \ref{ch:filters} we show that any set of cardinality $m$ can be stored with a \textbf{filter data structure} in just $O(m)$ bits, provided that we accept a small probability that membership queries fail. Note that this breaks the information-theoretic lower bound discussed above, $m\log(n/m) + O(m)$ bits.

In Chapter \ref{ch:sketch-stream} we see how to shrink even more the size of our data while still being able to compute some useful information on it. The most important concept here is \textbf{sketching}: our input data is squeezed into \emph{sub-linear} representations called \emph{data sketch}. For example, a set of cardinality $m$ over a universe of cardinality $n$ can be stored in just $O(\mathrm{polylog}(m) + \mathrm{polylog}(n))$ bits (exponential compression!) using a data sketch supporting useful queries such as set similarity and cardinality estimation (approximately and with a bounded probability of error).
Sketches can be computed off-line to reduce the dataset's size and/or speed up the computation of distances, or on-line on \textbf{data streams}, where data is thrown away as soon as it arrives (only data sketches are kept in main memory).

\paragraph{Sources}

The proofs of these notes have been put together from various sources, reported in the bibliography. 
The notes collect also material from Leskovec et al.'s book \cite{leskovec2020mining} (\emph{Mining of massive data sets}), Amit Chakrabarti's notes on data stream algorithms \cite{ChakrabartiNotes}, Gonzalo Navarro's book \cite{navarro2016compact} (compressed data structures), and Demetrescu and Finocchi's chapter on algorithms for data streams from the book \cite{stojmenovic2007handbook}.

\chapter{Compressed Data Structures}\label{chapter:cds}

In this chapter we study data structures for representing strings and sets of integers. The goal is to design data structures using a space proportional to the compressed input. We will see two main notions of compressibility: worst-case entropy (used for sets of integers) and empirical entropy (used for string data structures). First, in Section \ref{sec: full-text indexing} we start with the pattern matching problem and give some classical uncompressed solutions.

\section{Full-text indexing - classical solutions}\label{sec: full-text indexing}

In this section we work with strings $\mathcal T \in \Sigma^n$ of length $n$ over alphabet $\Sigma$ of size $\sigma$. 

We consider the following problem:

\begin{definition}[Full-text indexing]
    Given a string $\mathcal T \in \Sigma^n$ (a text), build a data structure $\mathcal I(\mathcal T)$ (an index) solving the following queries:
    \begin{enumerate}
        \item $\mathtt{locate}(P)$: given a string $P\in \Sigma^m$, with $m\leq n$, return the exact occurrences of $P$ in $\mathcal T$, i.e. $\mathtt{locate}(P) = \{i\ :\ \mathcal T[i,i+m-1]=P\}$.
        \item $\mathtt{count}(P)$: given a string $P\in \Sigma^m$, with $m\leq n$, return the number $occ$ of exact occurrences of $P$ in $\mathcal T$, i.e. $\mathtt{count}(P) = occ = |\mathtt{locate(P)}|$.
        
    \end{enumerate}
\end{definition}

If we store the string $\mathcal T$ along with the index $\mathcal I(\mathcal T)$, then it's immediate that we can also access any character $\mathcal T[i]$ in constant time. In Sections \ref{sec:CSA} and \ref{sec:FMI}, however, we are going to present \emph{compressed indexes}: the bit-size of $\mathcal I(\mathcal T)$ is close to the size of the \emph{compressed} $\mathcal T$. It follows that we cannot afford to store $\mathcal T$ along with $\mathcal I(\mathcal T)$, because the former could be much larger than the latter! In such cases, it will be useful that $\mathcal I(\mathcal T)$ also provides \emph{random access} to the characters of $\mathcal T$. More formally, we want that $\mathcal I(\mathcal T)$ supports this extra query:

\begin{enumerate}
  \setcounter{enumi}{2}
    \item $\mathtt{extract}(i,\ell)$: return string $\mathcal T[i,\dots ,i + \ell -1]$
\end{enumerate}

An index supporting query \emph{extract} is called \emph{self-index}, to reflect that it is a replacement of the underlying string:

\begin{definition}
    If $\mathcal I(\mathcal T)$ supports queries \emph{count} and \emph{locate}, it is called a \emph{full-text index}. If $\mathcal I(\mathcal T)$ supports all three queries \emph{count}, \emph{locate}, and  \emph{extract}, it is called a \emph{full-text self-index}. 
\end{definition}

\subsection{Inverted indexes and suffix trie}

An \emph{inverted index} is a classical data structure solving a very restricted version of text indexing problem. If the text $\mathcal T$ is formed by words --- for example $\mathcal T$ = ``to be or not to be" --- then we may simply record the positions (counting from 1) where each word appears in $\mathcal T$, using a dictionary data structure $H$ (for example, a hash table): $H[to] = \{1, 14\}, H[be] = \{4,17\}, H[or] = 7$, etc. Inverted indexes do not work on texts without word breaks (for example, DNA:  $\mathcal T = ACCTGCATT\dots$), and can only answer quick locate queries on words (not on arbitrary substrings). 

A complete and fast (at query time) solution to full-text indexing is to build an inverted index over all the substrings $\mathcal T[i,j]$ of our string $\mathcal T$, and associate to each substring $\mathcal T[i,j]$ the result of $\mathtt{count}(\mathcal T[i,j])$ and $\mathtt{locate}(\mathcal T[i,j])$. 
This solution solves fast queries, but is very space-inefficient: there are $\Theta(n^2)$ such substrings, and each uses $O(n)$ space. The total space of this solution is, therefore $O(n^3)$ words of memory.

We first show how to reduce the space to $O(n^2)$ words. The main observation we will use is that every substring of $\mathcal T$ is a prefix of a suffix of $\mathcal T$.

\begin{example}
    Let $\mathcal T = BANANA\$$. Then, the substring $NAN$ is a prefix of the suffix $NANA\$$.
\end{example}

This observation allows us to index just the suffixes of a string (there are $n$ of them), rather than all substrings of the string (there are $\Theta(n^2)$ of them). We introduce the \emph{suffix trie} data structure; see Figure \ref{fig:strie} for an example.

\begin{definition}\label{def:STRIE}
The suffix trie of $\mathcal T$ is the trie of all suffixes of $\mathcal T$. On each leaf, corresponding to suffix $\mathcal T[i,n]$, we store position $i$. For each internal node $x$, corresponding to substring $\mathcal T[i,j]$ (i.e. the string read from the root of the trie to node $x$) we store (1) $\mathtt{count}(\mathcal T[i,j])$, and (2) a pointer to the leftmost and rightmost leaf in the subtree rooted in $x$. We moreover link all the trie's leaves, from the leftmost to rightmost, using a linked list. 
\end{definition}

We obtain: 

\begin{theorem}
    The suffix trie uses $O(n^2)$ words of space, supports \emph{count} queries in optimal time $O(m)$, and \emph{locate} queries in optimal time $O(m+occ)$.
\end{theorem}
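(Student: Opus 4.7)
The plan is to verify each of the three assertions (space, count-time, locate-time) separately, relying on the structural observation that every substring of $\mathcal T$ appears as a prefix of some suffix, together with the bookkeeping installed by Definition \ref{def:STRIE}.

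For the space bound, I would first note that the suffix trie is the trie built on the $n$ suffixes $\mathcal T[1,n], \mathcal T[2,n], \dots, \mathcal T[n,n]$, whose total length is $n+(n-1)+\dots+1 = \Theta(n^2)$. Since each edge consumes one character from some suffix, the total number of edges (and hence nodes) is $O(n^2)$. Each leaf stores a single position, each internal node stores a count and two leaf pointers, and the leaves carry the constant-size pointers of the linked list. Hence the total space is $O(n^2)$ words.

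For the $\mathtt{count}$ query, I would describe the natural descent: starting at the root, follow the edge labeled $P[1]$, then $P[2]$, and so on. Each step is $O(1)$ (we assume the standard constant-alphabet model). If at some point the required edge is missing, $P$ does not occur in $\mathcal T$ and we return $0$. Otherwise, after $m$ character-matches we arrive at the (possibly internal) node $x$ corresponding to $P$, and return the precomputed $\mathtt{count}(P)$ stored there. The total time is $O(m)$, matching the trivial lower bound of reading $P$.

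For $\mathtt{locate}$, I would perform the same descent in $O(m)$ time to reach the node $x$ corresponding to $P$. By the prefix-of-suffix observation, the occurrences of $P$ in $\mathcal T$ are in one-to-one correspondence with the leaves in the subtree rooted at $x$, and the positions stored at these leaves are exactly the elements of $\mathtt{locate}(P)$. Using the pointers to the leftmost and rightmost leaf of that subtree, and then walking along the leaf-linked list between them, I can enumerate all $occ$ positions in $O(occ)$ time. Summing, the cost is $O(m+occ)$, which is optimal since the output size alone is $occ$ and reading $P$ already costs $\Omega(m)$.

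No step looks like a real obstacle; the only subtle point is arguing that the leaves in the subtree rooted at $x$ form a \emph{contiguous} interval of the leaf-linked list, so that the two boundary pointers together with the list suffice to enumerate them. This follows because the linked list visits leaves in the trie's left-to-right order, and the leaves of any subtree form exactly a contiguous range in that order.
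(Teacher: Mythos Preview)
Your proposal is correct and matches the paper's (implicit) argument: the paper does not give a formal proof but relies on Definition~\ref{def:STRIE} and the caption of Figure~\ref{fig:strie}, which describes exactly the descent-then-follow-leaf-list procedure you spell out. Your added remark that the leaves of any subtree form a contiguous block of the linked list is the one point the paper leaves tacit, and you handle it correctly.
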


\begin{figure}[h!]
    \centering
    \includegraphics[width=0.5\linewidth]{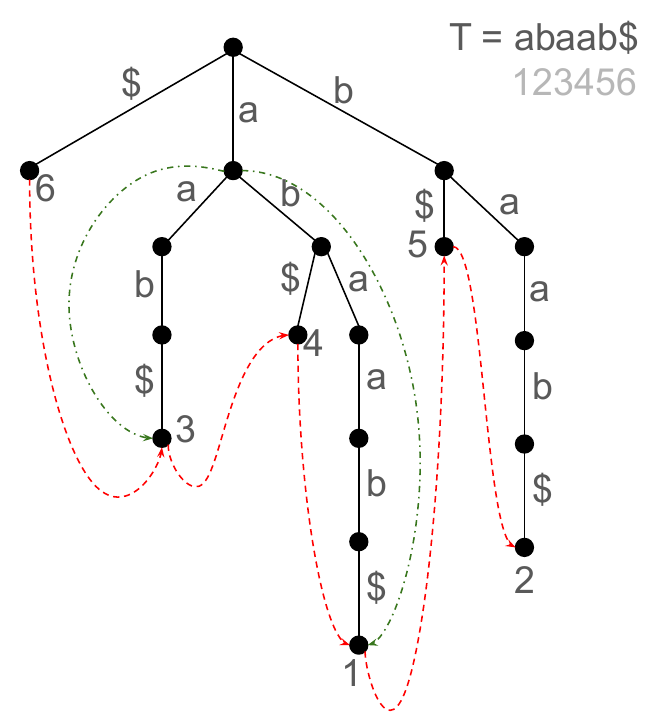}
    \caption{Suffix trie of the string $\mathcal T=abaab\$$. Leaves are labeled with the starting position in $\mathcal T$ of the corresponding suffix of $\mathcal T$.  In addition to the tree's edges, we also store additional information: (1) on each node $x$, we store two pointers to the leftmost and rigthtmost leaf in the subtree rooted in $x$ (for clarity, in the example we show these pointers --- in dashed green --- only on one node), (2) on each node $x$ such that the string read from the root to $x$ is $s$, we store $\mathtt{count}(s)$ (we do not show this information in the figure; for example, on the node reached by reading string $``a"$ from the root, this value would be $\mathtt{count}(``a")=3$), and (3) we link the leaves from left to right using a linked list (shown in dashed red in the figure): $6 \rightarrow 3 \rightarrow 4 \rightarrow 1 \rightarrow 5 \rightarrow 2$. \textbf{Pattern matching example}: to find all occurrences of the string $``a"$, descend from the root reading $``a"$, use the extra (dashed green) pointers to jump on the leftmost (3) and rightmost (1) leaves in the subtree of the node, and starting from the leftmost leaf (3) follow the linked list of leaves until reaching the rightmost leaf (1). Proceeding in this way, we navigate the sub-list $3\rightarrow 4 \rightarrow 1$, corresponding to all occurrences of $``a"$.}
    \label{fig:strie}
\end{figure}

\subsection{Suffix tree}

To reduce the space of the suffix trie, we observe that the strings we can read on the paths on the trie correspond to text's substrings. For example, in Figure \ref{fig:strie}, consider the node $x$ reached by reading string $b$ from the root, and the node $y$ reached by reading string $baab\$$. The path connecting $x$ to $y$ is labeled by string $aab\$ = \mathcal T[3,6]$. We can therefore replace this path by the two integers $[3,6]$, indicating that in order to reconstruct the original path, we need to read the text's substring $\mathcal T[3,6]$. 
We perform this operation for all maximal paths of nodes $x_1 \rightarrow x_2 \rightarrow \dots \rightarrow x_k$ such that $x_1, \dots, x_{k-1}$ have only one child each, and $x_k$ is either a leaf or has at least two children.
This process is called \emph{path compression}, and the resulting tree is called the \emph{suffix tree}: see 
Figure \ref{fig:stree}. 

\begin{figure}[h!]
    \centering
    \includegraphics[width=0.5\linewidth]{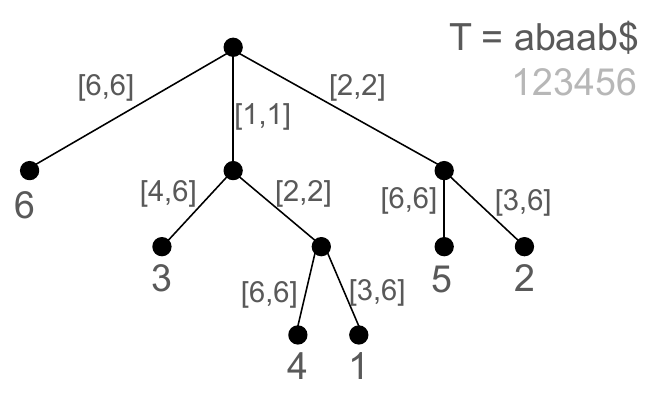}
    \caption{Suffix tree of the string $\mathcal T=abaab\$$. For each node, we store the same extra information shown in Figure \ref{fig:strie} (not shown here for simplicity).}
    \label{fig:stree}
\end{figure}

The suffix tree has exactly $n$ leaves, and its internal nodes are all branching (i.e. have at least 2 children). It follows that the total number of nodes is at most $2n-1$, and the number of edges at most $2n-2$. For each edge and each node, we store a constant number of pointers and integers (i.e. the same extra information as in Figure \ref{fig:strie}). Pattern matching works as with the suffix trie, except that (1) while descending the tree from the root, we also need to jump on the text to reconstruct the labels of the tree's edges, and (2) our search may end in the middle of an edge, for example: if we search $baa$ in Example \ref{fig:stree}, we end up in the middle of the edge entering in the leaf containing 2. 
Modification (1) does not increase the query running time, while issue (2) is not really an issue: if our search ends up in the middle of an edge, just jump at the end of the edge. For example: if we search $baa$ in Example \ref{fig:stree}, we end up in the middle of an edge; just jump at the end of the edge (corresponding to string $baab\$$): the result is correct, because all occurrences of $baa$ continue with characters $b\$$. We obtain:

\begin{theorem}
    The suffix tree uses $O(n)$ words of space, supports \emph{count} queries in optimal time $O(m)$, and \emph{locate} queries in optimal time $O(m+occ)$.
\end{theorem}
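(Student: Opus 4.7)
The plan is to split the proof into a structural space argument and a pair of algorithmic descent arguments, all of which build directly on the construction already described.

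For the space bound, I would first note that $\mathcal T$ ends with the unique terminator $\$$, so every suffix of $\mathcal T$ is represented by a distinct leaf, giving exactly $n$ leaves. I would then invoke the standard counting lemma: in a rooted tree where every internal node has at least two children, the number of internal nodes is at most $L-1$ where $L$ is the number of leaves (easy induction on tree size). Since path compression was defined precisely to force all internal nodes to be branching, the total number of nodes is at most $2n-1$ and the number of edges at most $2n-2$. Each edge now stores a constant-size pair of integers $[i,j]$ encoding its label instead of the explicit substring, and each node stores a constant number of auxiliary fields (the stored \texttt{count} value, and pointers to the leftmost/rightmost leaf under the node), plus the linked list adds $O(n)$ more words. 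Under the standard assumption that $\sigma = O(1)$ (or using perfect hashing on the first characters of edge labels at each node), the child-lookup table at each node is also $O(1)$ words, giving $O(n)$ words overall.

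For \texttt{count}, I would describe the descent: starting at the root, at each node use the child-lookup to select in $O(1)$ time the unique outgoing edge whose label begins with the next pending character of $P$, then read the text interval $[i,j]$ associated to that edge and compare $\mathcal T[i..j]$ against the corresponding prefix of the remaining pattern. Either the pattern is exhausted at or inside the edge (success), or a mismatch appears (return $0$). The key accounting step is that each character of $P$ participates in exactly one comparison over the whole descent, so the total matching work is $O(m)$; combined with $O(1)$ per node, the descent runs in $O(m)$. We then return the \texttt{count} stored at the node at the lower end of the edge on which the descent terminated — which is what makes modification (2) of the text legal, as discussed below. Since $\Omega(m)$ is a trivial lower bound (the pattern must be read), this is optimal.

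For \texttt{locate}, after the same $O(m)$ descent we are positioned at some node $x$ (either the endpoint of the edge where the match ended, or the lower endpoint of an edge on which the match ended in the middle). Using the auxiliary pointers, we jump in $O(1)$ to the leftmost and rightmost leaves of the subtree rooted at $x$, and then traverse the linked list of leaves between them, emitting at each leaf the stored text position. Each leaf visited corresponds to exactly one occurrence of $P$ in $\mathcal T$, so we output $occ$ positions in $O(occ)$ additional time, for a total of $O(m+occ)$, again matching the trivial lower bound.

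The main obstacle I anticipate is justifying the two subtle points raised in the text: (a) that branching at each node can be implemented in $O(1)$ time without blowing up the running time by a factor $\sigma$, which I would dispatch by assuming constant alphabet or invoking a hash on first characters, and (b) that terminating the descent in the middle of an edge and then ``jumping to the end of the edge'' does not alter the answer. For (b) one observes that every leaf in the subtree below such an edge corresponds to a suffix whose prefix reads the entire edge label, and in particular contains $P$ as a prefix; conversely, every occurrence of $P$ in $\mathcal T$ must show up under that same subtree. Hence the leaves reachable from the edge's lower endpoint are in bijection with the occurrences of $P$, which is exactly what both \texttt{count} and \texttt{locate} need.
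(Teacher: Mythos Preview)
Your proposal is correct and follows essentially the same line as the paper, which does not give a formal proof but only the informal discussion preceding the theorem (the $2n-1$ node bound from branching internal nodes, constant information per node/edge, and the two modifications (1)--(2) for the descent). Your write-up is actually more careful than the paper on one point: you make explicit the assumption $\sigma = O(1)$ (or perfect hashing on outgoing edges) needed for $O(1)$ child lookup, which the paper silently glosses over when claiming the $O(m)$ bound.
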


\subsection{Suffix array}\label{sec:suffix array}

Despite using linear $O(n)$ words of space, the constant hidden in the big-O notation of the suffix tree's space usage is very large. In fact, for every node and edge we need to record lots of information: pointers to children, count information, pointers to the leftmost/rightmost leaf in the node's subtree, etc. A good suffix tree implementation uses at least (approximately) $8n$ words, and each word takes 4 bytes (even 8 bytes if the text is very long). In total, at least $32n$ bytes. For example, if $\mathcal T$ is the DNA reference sequence of the Human genome, $n$ is approximately $3.2\cdot 10^9$. The text $\mathcal T$ can be stored in about 3 Gigabytes, while the suffix tree requires at least $32n \approx 95$ GiB.
In total, our data structure (text + suffix tree) uses approximately $98$ GiB: more than 32 times the text itself!

Observe that we ordered the leaves of the suffix tree and suffix trie according to the lexicographic order of their corresponding suffixes: 
$$
\$\ (6) < aab\$\ (3) < ab\$\ (4) < abaab\$\ (1) < b\$\ (5) < baab\$\ (2)
$$
This is not a coincidence: since suffixes are sorted, the idea is that we can use binary search to find efficiently all suffixes prefixed by any query pattern. This array of numbers on the leaves, 
$$
6,3,4,1,5,2 
$$
is called the \emph{suffix array} (SA). Suppose we want to find the suffixes prefixed by pattern $P=aa$. We do this by binary search. First, we jump in the middle of the suffix array, and read number 4. Then, we jump on the text and start reading $\mathcal T[4..] = ab..$. We compare this string with the pattern $aa$, discovering that it is larger: $ab... > aa$. It follows that we can ignore all entries of the suffix array 4,1,5,2, because they correspond to suffixes being larger than the pattern: $aa < ab\$\ (4) < abaab\$\ (1) < b\$\ (5) < baab$. We repeat recursively on the remaining suffix array interval (6,3), until finding the range of all suffixes starting with our pattern (exactly as when we perform binary search on an array of sorted integers to find all occurrences of some query integer). Note that we perform $O(\log n)$ binary search steps, and in each step we need to compare the pattern with a text substring of length at most $m$. Binary search thus runs in $O(m\log n)$ time and allows finding the suffix array range containing all suffixes prefixed by the pattern $P$. The length of this range is the answer to $\mathtt{count}(P)$, while the numbers contained inside the suffix array range are the answer to $\mathtt{locate}(P)$. We obtain:

\begin{theorem}
    The suffix array uses $n$ words of space on top of the text, supports \emph{count} queries in time $O(m\log n)$, and \emph{locate} queries in time $O(m\log n+occ)$.
\end{theorem}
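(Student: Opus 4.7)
The plan is to verify the three bounds separately, as each reduces to a routine analysis of the suffix array combined with standard binary search.

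For the space bound, I would observe that the suffix array is just a permutation of $\{1,\dots,n\}$ stored as an array of $n$ integers, each fitting in one machine word of $\Theta(\log n)$ bits; this gives exactly $n$ words on top of $\mathcal T$ (which is needed to resolve the comparisons during binary search).

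For the count query, the first key step is to prove that the set of suffixes of $\mathcal T$ prefixed by $P$ forms a contiguous range in the lexicographically sorted list of all suffixes. I would argue this by contradiction: if suffix $u$ lies lexicographically strictly between two suffixes $v$ and $w$ that both begin with $P$, then $u$ must also begin with $P$, for otherwise $u$ and $P$ would first disagree at some position $k \leq m$, and that disagreement would push $u$ either below $v$ or above $w$. With contiguity established, let $[l,r]$ denote the range of suffixes prefixed by $P$; then $\mathtt{count}(P) = r-l+1$. To compute $l$ and $r$, I would perform two binary searches on $SA$: one for the smallest index whose suffix is $\geq P$, and one for the largest index whose suffix starts with $P$ (equivalently, is lexicographically $\leq P$ followed by a symbol larger than everything in $\Sigma$). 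Each search uses $O(\log n)$ iterations, and each iteration compares $P$ with a text substring $\mathcal T[SA[\mathit{mid}]\ ..\ SA[\mathit{mid}]+m-1]$ in $O(m)$ time (stopping early on mismatch or after $m$ characters). This yields the $O(m \log n)$ bound.

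For the locate query, once $[l,r]$ has been identified in $O(m\log n)$ time by the procedure above, I would simply output $SA[l], SA[l+1], \dots, SA[r]$, adding $O(occ)$ further time for a total of $O(m\log n + occ)$.

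The only step with any subtlety is the contiguity lemma for matching suffixes; the space count and binary search analyses are entirely standard, so I expect no real obstacles beyond carefully writing out the lexicographic argument.
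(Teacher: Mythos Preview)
Your proposal is correct and follows essentially the same approach as the paper: the suffix array stores $n$ integers (one word each), and binary search over the sorted suffixes with $O(m)$-time lexicographic comparisons at each of the $O(\log n)$ steps yields the range $[l,r]$, from which count and locate follow immediately. Your explicit contiguity argument is a welcome addition of rigor, as the paper only illustrates the binary search by example without stating or proving that lemma.
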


On the Human genome ($n\approx 3.2\times 10^9$), each suffix array entry can be stored using words of 4 bytes. The text itself can be stored using 1 byte per character. In total, our data structure (text + SA) uses $5n \approx 15$ GiB: this is more than 6 times smaller than the suffix tree.

\section{Basics of information theory}

Consider again our example of the previous section where the text $\mathcal T$ is the human genome, of length $n= 3.2\times 10^9$ (we use equality here for simplicity). The human genome is a sequence over a 4-letters alphabet: $\Sigma = \{A,C,G,T\}$. But then, there is no need of using the whole ASCII alphabet (1 byte per character): 2 bits per character are sufficient! 
Let $\gamma:\Sigma\rightarrow \{0,1\}^+$ be an encoding of our alphabet. For example, we could encode $\gamma(A)=00$, $\gamma(C)=01$, $\gamma(G)=10$, $\gamma(T)=11$. Then, the whole human genome fits in $2n$ bits, which is just about $700$ MiB! This makes the suffix tree (95 GiB) and suffix array (12 GiB) appear even larger than they need to be: the former uses 135 times the space of the text, while the latter 17 times. 

Things could be much worse than this. Consider, for example, the following text:

$$
\mathcal T = ACGTTTTT \dots T
$$

Where we have one occurrence of A,C,G, and $3.2\times 10^9 - 3$ occurrences of $T$. What if we encode the characters as follows, $\gamma(A)=100$, $\gamma(C)=101$, $\gamma(G)=110$, $\gamma(T)=0$? The encoded text becomes (spaces are added just for clarity):

$$
\gamma(\mathcal T) = 100\ 101\ 110\ 0\ 0\ 0\ 0\ \dots\ 0
$$

Notice that we chose the encoding $\gamma()$ in such a way that the concatenation of the encoded characters, $\gamma(\mathcal T) = \gamma(A)\cdot \gamma(C) \cdot \gamma(G) \cdot \gamma(T) \dots \gamma(T)$ can be decoded unambiguously. A code with this property is called \emph{prefix code} and will be defined more precisely later.  
The total number of bits in $\gamma(\mathcal T)$ is $|\gamma(\mathcal T)| =  3\cdot 3 + 1\cdot (3.2\cdot 10^9-3) = 3.200.000.006$, so we are using an average of $|\gamma(\mathcal T)|/n \approx 1$ bit per symbol! Now, the encoded text uses just approximately $350$ MiB. We achieved \emph{compression} by observing that character's frequencies are very unbalanced. 

The comparison with the suffix tree and suffix array is even more dramatic than before: the former uses 270 times the space of the compressed text, while the latter 34 times. The goal of this chapter will be to design compressed indexes taking a space very close to the \emph{compressed text}, while at the same time including the text itself (i.e. also supporting access to the characters of $\mathcal T$). Using the above example, the compressed index and text will use \emph{in total} a space slightly larger than 350 MiB (let's say, around 400 MiB) and support very efficient \emph{count}, \emph{locate}, and \emph{access} ($\mathcal T[i]$) operations.

\subsection{Worst-case entropy}\label{sec:worst case entropy}

Consider our DNA alphabet $\Sigma = \{A,C,G,T\}$ and the first encoding we used: 
$\gamma(A)=00$, $\gamma(C)=01$, $\gamma(G)=10$, $\gamma(T)=11$. Observe that we used the same number of bits (2) for each character. Can we use 1 bit per character? It is easy to see that this will not work, since any encoding assigning 1 bit to \emph{all} characters, will assign the same code to at least two characters. For example: $\gamma(A)=0$, $\gamma(C)=1$, $\gamma(G)=0$, $\gamma(T)=1$. Such a code cannot be decoded because we cannot distinguish C from T and G from A. 

This reasoning leads to the conclusion that no uniquely-decodable code can assign less than $\lceil \log_2(|\Sigma|) \rceil$ bits to all characters in $\Sigma$.
We are now going to scale this result  to sets of combinatorial objects. Let $\mathcal U$ be a set (universe) of combinatorial objects, for example: all strings of length $n$ over alphabet $\Sigma$, or the number of subsets of $\{1,\dots, n\}$ of cardinality $m$.

\begin{definition}[Worst-case entropy]
    Let $\mathcal U$ be a universe of combinatorial objects. The \emph{worst-case entropy} of $\mathcal U$ is
    $$
    \mathcal H_{wc}(\mathcal U) = \lceil \log_2(|\mathcal U|) \rceil
    $$
\end{definition}

We conclude: 

\begin{corollary}
    Any uniquely-decodable encoding $\gamma : \mathcal U \rightarrow \{0,1\}^+$ of objects from $\mathcal U$ cannote use less than $\mathcal H_{wc}(\mathcal U)$ bits for every object of $\mathcal U$, i.e. there must exist  $x \in \mathcal U$ such that $|\gamma(x)| \geq \mathcal H_{wc}(\mathcal U)$.
\end{corollary}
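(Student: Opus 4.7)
The plan is to derive the bound by contradiction from Kraft's inequality, which states that any uniquely-decodable binary code $\gamma:\mathcal U \to \{0,1\}^+$ must satisfy $\sum_{x \in \mathcal U} 2^{-|\gamma(x)|} \leq 1$. This is the key ingredient; once granted, the corollary reduces to a one-line computation.

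Concretely, set $k := \lceil \log_2 |\mathcal U| \rceil$ and suppose for contradiction that every codeword satisfies $|\gamma(x)| \leq k-1$. Then each term of the Kraft sum is at least $2^{-(k-1)}$, so
$$
\sum_{x \in \mathcal U} 2^{-|\gamma(x)|} \;\geq\; |\mathcal U|\cdot 2^{-(k-1)} \;=\; \frac{2|\mathcal U|}{2^k}.
$$
The definition of the ceiling gives $k - 1 < \log_2 |\mathcal U|$ (equivalently $2^{k-1} < |\mathcal U|$) whenever $|\mathcal U| \geq 2$; the case $|\mathcal U|=1$ is trivial because any codeword in $\{0,1\}^+$ already has length at least $1 \geq k = 0$. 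Hence the displayed quantity is strictly greater than $1$, contradicting Kraft's inequality, so some $x \in \mathcal U$ must satisfy $|\gamma(x)| \geq k$.

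The only nontrivial obstacle is supplying Kraft's inequality itself, since the excerpt has not yet formalised either prefix codes or uniquely-decodable codes. For \emph{prefix} codes it is immediate from the binary-tree picture: truncating the code-tree at depth $L = \max_x|\gamma(x)|$ yields a binary tree in which the codewords label disjoint leaf-sets, so $|\mathcal U| \leq 2^L$ and hence $L \geq \lceil \log_2|\mathcal U|\rceil$ directly, without any summation. For the strictly more general class of uniquely-decodable codes --- which is the wording used in the corollary --- one instead needs McMillan's theorem, whose standard proof raises $\sum_x 2^{-|\gamma(x)|}$ to the $n$-th power and bounds the resulting polynomial by counting binary strings of each length. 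The cleanest presentation is therefore to insert Kraft--McMillan as a preliminary lemma and then deduce the corollary from the short calculation above; if instead the author is comfortable restricting attention to prefix codes, the tree argument alone is a self-contained two-line proof.
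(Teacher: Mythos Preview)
Your proof is correct. The paper itself does not supply a formal proof of this corollary: it simply writes ``We conclude'' after an informal pigeonhole discussion (the DNA example showing that four symbols cannot all receive $1$-bit codewords) and states the corollary without further argument.

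It is worth noting that the naive extrapolation of the paper's pigeonhole intuition---``there are too few short binary strings for $\gamma$ to be injective''---does \emph{not} actually suffice. For instance, with $|\mathcal U|=5$ we have $\mathcal H_{wc}=3$, yet there are six nonempty binary strings of length at most $2$, so an injective map with all codewords of length $\leq 2$ exists. What fails is unique decodability of concatenations, and that is precisely what Kraft--McMillan captures. So your decision to invoke Kraft's inequality (and to flag that the uniquely-decodable case genuinely requires McMillan's theorem rather than the prefix-tree picture) fills a real gap that the paper's informal discussion leaves open. Your alternative tree argument for the prefix-code case is also correct and is the shortest self-contained route if one is willing to restrict to prefix codes, which is in fact all the paper uses downstream.
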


The name \emph{worst-case} entropy comes from the fact that, for any encoding of  $\mathcal U$, the length (in bits) of the longest code for any object in $\mathcal U$ must be at least $\mathcal H_{wc}(\mathcal U)$.

\begin{example}
    Let $\mathcal B_n$ be the set of all sequences of length $n$ over binary alphabet $\{0,1\}$ (i.e. bitvectors). Then, $|\mathcal B_n| = 2^n$ and therefore $\mathcal H_{wc}(\mathcal B_n) = n$.
\end{example}

\begin{example}
    Let $\mathcal B_{n,m}$ be the set of all bitvectors of length $n$ containing exactly $m\leq n$ bits equal to 1. Then, $|\mathcal B_{n,m}| = \binom{n}{m}$ and therefore $\mathcal H_{wc}(\mathcal B_{n,m}) = \log_2 \binom{n}{m} = m\log (n/m) + O(m)$.
\end{example}

\begin{example}
    Let $\mathcal S_{n,m}$ be the set of all subsets of $\{1,\dots, n\}$ of cardinality $m$. Then, $|\mathcal S_{n,m}| = \binom{n}{m}$ and therefore $\mathcal H_{wc}(\mathcal S_{n,m}) = \log_2 \binom{n}{m} = m\log (n/m) + O(m)$.
\end{example}

Note that the worst-case entropy of the previous two examples is the same. In fact, it is easy to see that bitvectors are just a representation for sets (and vice-versa).

\begin{example}\label{example:ordinal}
    Let $\mathcal T_{n}$ be the set of all ordinal trees with $n$ nodes. Then, $|\mathcal T_{n}| = \frac{1}{n}\binom{2n-2}{n-1}$ and therefore $\mathcal H_{wc}(\mathcal T_{n}) = \log_2 \left(\frac{1}{n}\binom{2n-2}{n-1}\right) = 2n - \Theta(\log n)$.
\end{example}

The goal of the research field of \emph{succinct data structures} is to design data structures using $\mathcal H_{wc}(\mathcal U)$ bits to represent objects from $\mathcal U$ while supporting fast queries on them. For example, $\mathcal U$ could be the set of all texts of length $n$ over alphabet $\Sigma$, and the queries to support could be pattern matching queries (count, locate). With respect to this universe and these queries, the goal of a succinct data structure would be to use a space close to $\mathcal H_{wc}(\mathcal U) = n\log|\Sigma|$ bits while supporting fast \emph{count} and \emph{locate} queries.

\begin{Exercise}
    Consider example \ref{example:ordinal}. 
    Show how to encode any ordinal tree with $n$ nodes using exactly $2n$ bits. Note that such an encoding is (almost) worst-case optimal, because it (almost) matches the worst-case entropy (up to an additive term of $\Theta(\log n)$ bits, which can be ignored since it is exponentially smaller than the worst-case entropy).
\end{Exercise}

\subsection{Prefix-free codes}

Worst-case entropy tells us how much compressible is an object from some set $\mathcal U$, in the worst case.  
The goal of this chapter is to find encodings for our objects (in general, strings and sets) approaching this bound. 
If the objects we are trying to compress are strings, one way to approach worst-case entropy (as we will see) is to encode the string's individual characters using a good-enough \emph{prefix-free} code:

\begin{definition}
    A prefix-free code is an injective function $\gamma : \Sigma \rightarrow \{0,1\}^+$ such that $\gamma(x)$ is not a prefix of $\gamma(y)$, for any $x,y\in \Sigma$ such that $x\neq y$.
\end{definition}

\begin{example}
 The code $\gamma(A)=100$, $\gamma(C)=101$, $\gamma(G)=110$, $\gamma(T)=0$ is prefix-free because no character's encoding is a prefix of another. 
\end{example}

It is easy to see that, when using a prefix-free code,  we can just concatenate the encodings of all the string's characters, and the resulting bit sequence is invertible (i.e. we can reconstruct the string). More formally:

\begin{definition}
    If $\gamma$ is a prefix-free code and $S\in \Sigma^n$, we define the domain extension
    $$
    \gamma(S) = \gamma(S[1])\cdot \gamma(S[2]) \dots \gamma(S[n])
    $$
\end{definition}

\begin{remark}
    If $\gamma$ is prefix-free, we can reconstruct $S$ given $\gamma$ and $\gamma(S)$.
\end{remark}

Among many useful prefix-free codes we will use Huffman's (1952) which can be proved to be optimal among prefix codes. We show how to build a Huffman code via an example.

\begin{example}\label{ex:Huffman}
    Consider the string $S = abracadabra$. Count the frequency of every character in the string: $a$ appears 5 times, $b$ and $r$ appear 2 times each, and $c$ and $d$ appear 1 time each. The Figure below shows a possible Huffman code for these frequencies (the Huffman code is not necessarily unique, but all Huffman codes have the same cost). We are going to build a binary tree having in the leaves all the characters together with their frequencies. From each internal node, the left outgoing edge will be labeled with 0 and the right one with 1. The path from the root to each leaf will spell the code for that character. To build the tree, begin by creating the leaves. Then, pick two nodes $x,y$ without parent having the smallest frequencies (in the example, we pick $c$ and $d$, both with frequency 1), create a new node $xy$ whose frequency is the sum of the frequencies of $x$ and $y$ (in the example, we create node $cd$ of frequency 1+1=2), make it parent of $x$ and $y$, and repeat recursively until we have only 1 node left (the root) without any parent. In our example, we proceeded by grouping $b$ and $r$ (parented by $br$ of frequency 4), then we grouped $cd$ and $br$ (parented by $cdbr$ of frequency 6), and finally grouped $cdbr$ and $a$. At the end, we get the Huffman code $\gamma(c) = 000$, $\gamma(d) = 001$, $\gamma(b) = 010$, $\gamma(r) = 011$, $\gamma(a)=1$.
    Note that the most frequent character gets indeed the shortest code. The final bit-cost of the encoding is $|\gamma(abracadabra)| = 23$ bits. It can be proved that no prefix-free code for this string can use less than 23 bits: Huffman is optimal among all prefix-free codes.
    \tikzset{iv/.style={draw,fill=gray!10,circle,minimum size=20pt,inner
    sep=0pt,text=black},ev/.style={draw,fill=gray,rectangle,minimum
    size=20pt,inner sep=0pt,text=black}}
    \begin{figure}[h!]
        \centering
        \begin{forest}
        for tree={where n children={0}{ev}{iv},l+=8mm,
        if n=1{edge label={node [midway, left] {0} } }{edge label={node [midway, right] {1} } },}
        [cdbra(1)1
         [cdbr(6)
          [cd(2)
           [c(1)]
           [d(1)]
          ]
          [br(4)
           [b(2)]
           [r(2)]
          ]
         ]
         [a(5)]
        ]
        \end{forest}
    \end{figure}
\end{example}

\begin{remark}
    Note that after the first step in Example \ref{ex:Huffman} (after grouping $c$ and $d$) the nodes without parents are $cd(2)$, $b(2)$, $r(2)$, and $a(5)$. There are multiple valid ways to proceed: we could group $cd$ and $b$, or $cd$ and $r$, or $b$ and $r$ (in the example we chose the latter grouping). The resulting Huffman trees will have a different shape, but all the resulting codes will be optimal (i.e. use 23 bits, in that example). As an exercise, build all the valid Huffman trees for these frequencies. 
\end{remark}

\subsection{Zero-order empirical entropy}

We will see that Huffman encoding allows us to approach the worst-case entropy of the set of all strings with the same character frequencies of our input string. 
We show this by introducing a new compressibility measure for strings, which will turn out to be tightly related with worst-case entropy.

\begin{definition}
    Let $S\in \Sigma^n$ be a string. The \emph{zero-order empirical entropy} of $S$ is defined as:
    $$
    H_0(S) = \sum_{c\in \Sigma} \frac{n_c}{n}\log_2\left(\frac{n}{n_c}\right)
    $$
    Where $n_c = |\{i\ : \ S[i]=c\}|$ is the number of occurrences of character $c$ in $S$. When $S$ is clear from the context, we will just write $H_0$ instead of $H_0(S)$.
\end{definition}

Shannon's source coding theorem (1948) implies that (as $n$ grows to infinity) no prefix-free code can use less than $nH_0$ bits to encode a string $S\in \Sigma^n$. On the other hand, it can be shown that Huffman code uses no more than $n(H_0+1)$ bits: at most 1 bit per character over the empirical entropy. Other encodings (for example, arithmetic encoding) achieve $nH_0 + O(1)$ bits. In the next sections, we will see encodings achieving $nH_0 + o(n\log\sigma)$ bits which support also fast queries on the string (i.e. compressed data structures).

\begin{example}
    Observe that on a bitvector of length $n$ with $m\leq n$ bits equal to 1, $H_0 = \frac{m}{n}\log_2\left(\frac{n}{m}\right) + \frac{n-m}{n}\log_2\left(\frac{n}{n-m}\right)$. For example, take
    $$
    B = 001010001000
    $$
    Then, $n=12$ and $m=3$. The zero-order empirical entropy of $B$ is
    $$
    H_0(B) = \frac{3}{12}\log_2\left(\frac{12}{3}\right) + \frac{9}{12}\log_2\left(\frac{12}{9}\right) \approx 0.8\ \mathrm{bits}
    $$
    This means that $nH_0$ (zero-order empirical compression) is approximately $20\%$ smaller than $n$ bits (no compression). 
\end{example}

\begin{remark}
    In the example above, in order to achieve $nH_0$ we would need to use on average $0.8$ bits per bit.
    Observe, however, that Huffman is useless when encoding bitvectors, because it uses at least one bit per symbol!  In the next sections we will see that the trick to achieve this (apparently impossible) goal is to encode blocks of bits, rather than single bits. 
\end{remark}

It turns out that worst-case entropy and zero-order entropy are tightly connected. Later we will use the following lemma on bitvectors. Try to prove it by yourself as an exercise 
(full proof in \cite[Sec. 2.3.1]{navarro2016compact}), expanding the binomial coefficient and using Stirling's approximation of the logarithm of the factorial ($\log_2(n!) = n \log_2 n - n\log_2 e + O(\log n)$):

\begin{lemma}\label{lem:Hwc vs H0}
Let $\mathcal B_{n,m}$ be the set of all bitvectors of length $n$ with $m$ bits equal to $1$, and let $x \in \mathcal B_{n,m}$. Then:
$$
\mathcal H_{wc}(\mathcal B_{n,m}) = \left\lceil \log_2\binom{n}{m} \right\rceil = nH_0(x) - O(\log n)
$$
\end{lemma}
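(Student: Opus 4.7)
The plan is to reduce the statement to a direct computation using Stirling's approximation, as the hint suggests. The first equality $\mathcal H_{wc}(\mathcal B_{n,m}) = \lceil \log_2\binom{n}{m} \rceil$ is immediate from the definition of worst-case entropy applied to the universe $\mathcal B_{n,m}$, whose cardinality is $\binom{n}{m}$, so all the real work is in establishing
$$
\log_2\binom{n}{m} = nH_0(x) \pm O(\log n).
$$
The ceiling introduces at most an additive $1$, which is trivially absorbed into $O(\log n)$.

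First I would rewrite both sides in the same algebraic form. Expanding the definition of $H_0$ (noting that since $x\in \mathcal B_{n,m}$ we have $n_1=m$ and $n_0=n-m$) gives
$$
nH_0(x) = m\log_2\frac{n}{m} + (n-m)\log_2\frac{n}{n-m} = n\log_2 n - m\log_2 m - (n-m)\log_2(n-m),
$$
using the convention $0\log_2 0 = 0$ to handle $m\in\{0,n\}$. Next, expand the binomial coefficient:
$$
\log_2\binom{n}{m} = \log_2(n!) - \log_2(m!) - \log_2((n-m)!).
$$

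Now I would apply Stirling's approximation $\log_2(k!) = k\log_2 k - k\log_2 e + O(\log k)$ to each of the three factorials, obtaining
$$
\log_2\binom{n}{m} = \bigl(n\log_2 n - m\log_2 m - (n-m)\log_2(n-m)\bigr) - \log_2 e \cdot \bigl(n - m - (n-m)\bigr) + R,
$$
where $R = O(\log n) + O(\log m) + O(\log(n-m))$. The key cancellation is that the coefficient of $\log_2 e$ is exactly zero, so the linear terms in $e$ disappear cleanly and the leading term coincides with $nH_0(x)$ computed above.

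It remains to bound the error $R$ by $O(\log n)$. Since $0\le m\le n$, we have $\log m \le \log n$ and $\log(n-m)\le \log n$, so $R = O(\log n)$ whenever $1\le m\le n-1$. The degenerate cases $m=0$ and $m=n$ are trivial: both sides reduce to $0$ (again under the convention $0\log 0 = 0$), and the equality holds exactly. Putting everything together yields the desired $\log_2\binom{n}{m} = nH_0(x) - O(\log n)$, and re-inserting the ceiling only perturbs the left-hand side by at most $1$. The only genuinely delicate point I would expect is writing Stirling with enough care to make the $-\log_2 e$ terms cancel exactly rather than leaving behind a spurious linear-in-$n$ term, but as shown above this cancellation is automatic from $n-m-(n-m)=0$.
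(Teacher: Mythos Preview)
Your proposal is correct and follows exactly the approach the paper intends: the paper does not give a full proof but explicitly hints to expand the binomial coefficient and apply Stirling's approximation $\log_2(n!) = n\log_2 n - n\log_2 e + O(\log n)$, which is precisely what you do, including the key cancellation of the $\log_2 e$ terms via $n-m-(n-m)=0$.
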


A similar relation as the one of Lemma \ref{lem:Hwc vs H0} holds between the worst-case entropy of the set of strings with fixed symbol probabilities, and the zero-order empirical entropy of any member of that family of strings (see \cite[Sec. 2.3.1]{navarro2016compact}).

\subsection{High-order empirical entropy}\label{sec:high order entropy}

Zero-order empirical entropy exploits knowledge about the character's frequencies to achieve compression. Can we do better? Yes: intuitively, we could apply the same reasoning of zero-order compression to the characters followed by the same context of $k$ characters, for some fixed (small) integer $k$. For example, consider a book about algorithms. If we pick $k=8$ and see string (context) $lgorithm$ (of length $k=8$), then we are almost sure that the preceding character is $a$. In other words, $a$ has frequency $100\%$ among the set of characters followed by string $lgorithm$, while all other characters have frequency $0\%$: the set of characters followed by $lgorithm$ is extremely compressible. By repeating the same reasoning to all possible contexts of length $k$, we get to high-order empirical entropy.

\begin{definition}
    Let $S\in \Sigma^n$, and choose $k\leq n$. Let $w\in \Sigma^k$. We define $S_w$ as the string formed by all characters preceding occurrences of $w$ in $S$, from the leftmost to rightmost occurrence, considering the string $S$ as being circular (i.e. $S[1]$ is preceded by $S[n]$).
\end{definition}

\begin{example}\label{ex:k-order}
    Let $S = aababbabab$ and $k=2$. Consider context $w = ab$. $S_{ab}$ is the string formed by all characters that precede occurrences of $ab$ in $S$: $S_{ab} = abbb$. Similarly, $S_{bb}=a$, $S_{aa} = b$, and $S_{ba}=abaa$ (remember that we consider the string to be circular).
\end{example}

\begin{definition}[$k$-th order empirical entropy]\label{def:high order entropy}
    Let $S\in \Sigma^n$, and choose $k\leq n$. The $k$-th order empirical entropy of $S$ is:
    $$
    H_k(S) = \sum_{w\in \Sigma^k} \frac{|S_w|}{n} H_0(S_w)
    $$
    When clear from context, we simply write $H_k$ instead of $H_k(S)$.
\end{definition}

\begin{example}
    Consider the string of example \ref{ex:k-order}, of length $n=10$. Then $H_0(S_{aa}) = H_0(b) = 0$, $H_0(S_{ab}) = H_0(abbb) = \frac{1}{4}\log_2(4) + \frac{3}{4}\log_2(4/3) \approx 0.8112$, $H_0(S_{ba}) = H_0(abaa) = \frac{3}{4}\log_2(4/3) + \frac{1}{4}\log_2(4) \approx 0.8112$, $H_0(S_{bb}) = H_0(a) = 0$, and we get:
    $$
    \begin{array}{ccl}
    H_2(S) & = & \frac{|S_{aa}|}{n} H_0(S_{aa}) + \frac{|S_{ab}|}{n} H_0(S_{ab}) + \frac{|S_{ba}|}{n} H_0(S_{ba}) + \frac{|S_{bb}|}{n} H_0(S_{bb})\\
    & \approx & \frac{1}{10}\cdot 0 + \frac{4}{10}\cdot 0.8112 + \frac{4}{10}\cdot 0.8112 + \frac{1}{10}\cdot 0\\
    & \approx & 0.65
    \end{array}
    $$
    This means that a compressor achieving second-order entropy would use just 0.65 bits per symbol to encode $S$. On the other hand, the zero-order entropy of $S$ is $H_0(S) = 1$: no compression at all (note that the alphabet's size is equal to 2, so a naive encoding using 1 bit per symbol achieves this result as well). 
\end{example}

\begin{remark}
 The larger $k$ is, the better compression we get. In fact, it holds: 
 $$
 \log_2\sigma \geq H_0 \geq H_1 \geq H_2 \geq \dots
 $$
 However, we cannot use an arbitrarily large $k$: an information-theoretic lower bound prevents from using $k$ larger than $\log_{\sigma}n$ (see \cite{GAGIE2006246} for more details). 
\end{remark}

As a concrete example (see \cite{pizzachiliText}), a zero-order compressed XML file is typically $35\%$ smaller than its uncompressed version. Using high-order compression helps a lot: $nH_5$ is about $90\%$ smaller than the uncompressed file. This means that XML files are very predictable, given short contexts.

\section{Zero-order compressed bitvectors}\label{sec:zero order bv}

The topics of this section are treated more in detail in \cite[Sec. 4.1.1]{navarro2016compact}.

Observe that Huffman encoding uses an integer number of bits to encode each character of our string: it cannot use, for example, 3.7 bits to encode a particular character! as a result, Huffman is useless on bitvectors: it will always use 1 bit to encode each bit of the bitvector. No compression. 

In this section, the goal is to encode a bitvector in $nH_0 + o(n)$ bits while also supporting particular queries on it in constant time:

\begin{definition}[zero-order bitvector data structure]
    Given a bitvector $B \in \{0,1\}^n$, a zero-order bitvector data structure is a structure using $nH_0 + o(n)$ bits and supporting the following queries in $O(1)$ time (we enumerate positions of $B$ from 1):
    \begin{enumerate}
        \item Random access: extract any bit $B[i]$
        \item Rank: $B.rank_b(i)$ is the number of bits equal to $b$ in the prefix $B[1,\dots, i]$.
        \item Select: $B.select_b(i)$ is the position in $B$ of the $i$-th bit equal to $b$.
    \end{enumerate}
\end{definition}

\begin{example}
    Let $B = 011100010100110011$. Then:
    \begin{itemize}
        \item $B[5] = 0$
        \item $B.rank_0(4) = 1$
        \item $B.rank_0(6) = 3$ 
        \item $B.rank_1(8) = 4$
        \item $B.select_1(4) = 8$
        \item $B.select_0(3) = 6$
    \end{itemize}
\end{example}

We will show a simplified version of the RRR structure \cite{rajeev2002succinct} (the name comes from the initials of the authors). A nice self-contained description is also given in this blog \cite{BoweRRR}. We will only show how to solve \emph{random access} and \emph{rank} queries (for \emph{select}, see the book \cite[Sec 4.1.1]{navarro2016compact}).

\subsection{Random access in $n$ bits}

First, we show how to pack a bitvector using exactly $n$ bits (plus a constant number of memory words, which we will ignore for simplicity since they are negligible with respect to the $n$ bits).

While a bit can be stored as an integer, it is not a good idea to allocate $n$ integers to represent $B$ in your favorite programming language: each integer typically uses 32 bits (4 bytes), so the whole array uses $32n$ bits!

The idea is that, precisely because an integer is formed by 32 bits, we can use it to pack 32 bits of $B$. To be more general, let us denote with $w$ the size of an integer in our computer. 
Parameter $w$ is known in the literature as \emph{memory word size}.
In standard computers, we have $w= 32$ (or $w=64$ if we use long int). Note that it is reasonable to assume $w \in \Omega(\log n)$ because it should be possible to store any position $i \in [n]$ pointing inside $B$ using one (or a constant number of) integers. A standard assumption in the research field is $w \in \Theta(\log n)$. We will follow this assumption.

The following example shows how we can pack a bitvector, so to use just $n$ bits.

\begin{example}\label{ex:bitvector n bits}
    Let $B = 011100010100110011$. Assume the word size is $w=6$. Then, we break $B$ in $n/w$ blocks of $w$ bits:
    $$
    B = 011100\ 010100\ 110011
    $$
    We store $B$ as an array $B'$ of integers of $w$ bits each (i.e. treat each block as a binary integer of $w$ bits):
    $$
    B' = 28, 20, 51
    $$
    Et voilà. $B'$ is a standard array of integers (you can create it in your favorite programming language), contains exactly the same information as $B$, and it uses $n$ bits.
\end{example}

\begin{remark}
    To be precise, the encoding shown in Example \ref{ex:bitvector n bits} uses $n + \Theta(w) = n + \Theta(\log n)$ bits because the computer also needs to store the address of $B'$ in memory and its length. Those extra integers however use $\Theta(w) = \Theta(\log n)$ bits so they are negligible with respect to the size $n$ of the bitvector. To simplify our discussion, in the rest of the notes we will ignore this small overhead. Be careful, however, that in general it is not always safe to ignore it! (it will be safe in our data structures).
\end{remark}

\begin{remark}
    In general, $w$ does not divide $n$ so the last block could contain less than $w$ bits. In this case, just pad $B$ with bits '0' until its length is a multiple of $w$. This adds just $O(w)$ bits of overhead and we will still be able to solve all our queries.
\end{remark}

We use notation $(\cdot)_2$ to indicate an integer written in base 2. Otherwise, integers are expressed in base 10. For example, $(011)_2 = 3$.

Let's see how to do random access in constant time, with an example.

\begin{example}\label{ex:bitvector n bits2}
    Consider again bitvector $B$ of the previous example, divided in blocks:
    $$
    B = 011100\ 010100\ 110011
    $$
    Stored as the following integer array:
    $$
    B' = 28, 20, 51
    $$
    Assume we want to extract bit $B[10] (=1)$. The idea is that we locate the block containing the 10-th bit: this is the second block, $B'[2]=20$. In general, $B[i]$ is inside the block $B'[\lfloor (i-1)/w \rfloor + 1]$. 
    
    Now, we know that our bit $B[10]$ is hidden inside the binary representation of $B'[2]=20$. In particular, $B[10]$ is the 4-th bit in the binary representation of $B'[2]=20 = (010100)_2$. In general, $B[i]$ is the bit in position $i' = (i-1)\mod w + 1$ of $B'[\lfloor (i-1)/w \rfloor + 1]$. 

    To extract the 4-th bit of $B'[2]=20 = (010100)_2$, we first right-shift this number by $w-4=2$ bits, i.e. move every bit to the right by 2 positions; the rightmost 2 bits will disappear, and 2 new bits equal to 0 will appear to the left. Our bit (shown underlined here) will end up at the end of the block of bits:
    $$
    20 \gg 2 = (010\underline{1}00)_2 \gg 2 = (00010\underline{1})_2 = 5
    $$
    This operation (right shift) is supported in $O(1)$ time in our computers. 
    In general, to move the bit $B[i]$ at the end of our block $B'[\lfloor (i-1)/w \rfloor + 1]$ we will perform shift $B'[\lfloor (i-1)/w \rfloor + 1] \gg w - ((i-1)\mod w) - 1$.
    
    We have only one operation left to do: set to 0 all bits except the rightmost one. This can be done applying a bitwise AND (symbol \emph{\&}) with mask $1 = (000001)_2$:
    $$
    5\ \&\ 1 = (00010\underline{1})_2\ \&\ (000001)_2 = (00000\underline{1})_2 = 1 = B[10]
    $$
    Also this operation is supported in $O(1)$ time on any computer.
\end{example}

\begin{remark}
    In general, with almost the same technique we can extract in $O(1)$ time any block of at most $w$ contiguous bits from $B$. The extracted block will fit in an integer of $w$ bits. Exercise: show how to do it.
\end{remark}

We obtained a \emph{packed bitvector} data structure: 

\begin{theorem}[Packed bitvector] \label{thm:packed bv}
    Any bitvector $B \in \{0,1\}$ can be stored with a data structure of $n$ bits supporting the extraction of any block $B[i,\dots, i+\ell-1]$ of $\ell \leq w$ contiguous bits in $O(1)$ time. 
\end{theorem}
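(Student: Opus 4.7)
The storage scheme is exactly the one of Example \ref{ex:bitvector n bits}: pack $B$ into an array $B'$ of $\lceil n/w \rceil$ words of $w$ bits each, where $B'[k]$ encodes the $k$-th block $B[(k-1)w+1,\dots,kw]$ as an integer in base 2 (padding the final block with zeroes if $w \nmid n$). This uses $n + O(w) = n + O(\log n)$ bits total, which we identify with $n$ bits up to negligible overhead. The plan for the extraction operation is to reduce the general block-extraction query to at most two single-word reads, followed by a constant number of shifts, bitwise ANDs, and ORs, all of which are $O(1)$ operations in the word-RAM model under our assumption $w \in \Theta(\log n)$.

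\paragraph{Locating the block.} Given a query $(i,\ell)$ with $\ell \leq w$, first compute $j_1 = \lfloor (i-1)/w \rfloor + 1$ and $r = (i-1) \bmod w$; these identify the word $B'[j_1]$ containing $B[i]$ and the offset of $B[i]$ inside it. If $r + \ell \leq w$, the queried block lies entirely inside $B'[j_1]$ and we extract it exactly as in the single-bit case: right-shift $B'[j_1]$ by $w - r - \ell$ positions so that the requested bits occupy the $\ell$ rightmost positions, then apply a bitwise AND with the mask $(2^\ell - 1)$, which can be computed in $O(1)$ by the expression $(1 \ll \ell) - 1$. The result is an integer of $\ell$ bits whose binary representation equals $B[i,\dots,i+\ell-1]$.

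\paragraph{Cross-word case.} If instead $r + \ell > w$, the block spans exactly two consecutive words $B'[j_1]$ and $B'[j_1+1]$ (it cannot span more, because $\ell \leq w$). Let $\ell_1 = w - r$ be the number of bits the block takes from the first word, and $\ell_2 = \ell - \ell_1$ be the number of bits it takes from the second. The suffix of length $\ell_1$ of $B'[j_1]$ is extracted by $X = B'[j_1]\ \&\ (2^{\ell_1}-1)$, while the prefix of length $\ell_2$ of $B'[j_1+1]$ is extracted by $Y = B'[j_1+1] \gg (w-\ell_2)$. The queried block is then obtained as $(X \ll \ell_2)\ |\ Y$, which correctly places the bits from the first word to the left of those from the second word, producing an integer of $\ell$ bits equal to $B[i,\dots,i+\ell-1]$. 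Every operation performed is a single word-RAM instruction and hence takes $O(1)$ time.

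\paragraph{Where the subtlety lies.} The only delicate point is the bookkeeping of shift amounts in the two-word case: one must be careful that shifting a $w$-bit word by $0$ or by $w$ behaves as expected (some architectures leave $x \gg w$ undefined), which is handled by noting that in the cross-word case we always have $1 \leq \ell_2 \leq w-1$ and $1 \leq \ell_1 \leq w-1$, so all shift amounts lie strictly inside $(0,w)$. With this caveat, the procedure works uniformly and establishes the claimed $O(1)$-time block extraction within $n + O(\log n)$ bits of space, proving the theorem.
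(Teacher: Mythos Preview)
Your proof is correct and follows exactly the approach the paper intends: pack $B$ into $\lceil n/w\rceil$ words and extract a block via shifts and masks on at most two consecutive words. In fact the paper only spells out the single-bit case explicitly and leaves the general block extraction as an exercise (``with almost the same technique''); you have filled in precisely those details, including the cross-word case and the careful handling of shift amounts.
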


We will often use the above theorem to represent efficiently \emph{packed arrays of integers}:

\begin{theorem}[packed integer array]\label{thm:packed array}
    Assume we want to store efficiently an array of $n$ integers of size at most $2^b$ each, with $b\le w$ (i.e. each integer uses exactly $b$ bits). Then, Theorem \ref{thm:packed bv} can be used to store the array in optimal $nb$ bits while supporting random access to any integer in $O(1)$ time. 
\end{theorem}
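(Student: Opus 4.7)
The plan is to reduce the problem directly to Theorem \ref{thm:packed bv}. Let $A[1],\dots,A[n]$ denote the integers to be stored, each fitting in $b$ bits. Concatenate their $b$-bit binary representations (zero-padded on the left where necessary) to form a single bitvector $B \in \{0,1\}^{nb}$, arranged so that bits $B[(i-1)b+1,\dots,ib]$ encode $A[i]$. Store $B$ with the packed bitvector data structure of Theorem \ref{thm:packed bv}. The total space is exactly $nb$ bits, which is optimal: the universe of such arrays has cardinality $(2^b)^n = 2^{nb}$, so $\mathcal{H}_{wc} = nb$ by definition.

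For a random-access query on $A[i]$, first compute the starting and ending positions $(i-1)b+1$ and $ib$ in $O(1)$ arithmetic operations. Then invoke the block-extraction primitive guaranteed by Theorem \ref{thm:packed bv} on this range. Under the standing assumption $b \leq w$ (natural, since each integer is intended to fit in a machine word), the range has length at most $w$, so the primitive returns the corresponding bits packed into a single machine-word integer in $O(1)$ time; this integer is exactly $A[i]$.

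The only mildly delicate point I would flag is that the bit range $[(i-1)b+1,\, ib]$ belonging to $A[i]$ may straddle two adjacent words of the underlying packed storage $B'$, since in general $b$ need not divide $w$. This is precisely the situation the remark following Theorem \ref{thm:packed bv} advertises as handled: one reads the (at most two) overlapping words of $B'$, shifts and masks out the relevant bits from each via the same shift/AND trick used in Example \ref{ex:bitvector n bits}, and combines them with a bitwise OR. All steps are standard word-RAM operations running in $O(1)$ time. The main conceptual obstacle — if one can call it that — is simply recognising that fixed-width integer access at rank $i$ is nothing other than fixed-width block extraction at bit-offset $(i-1)b$, so no genuinely new algorithmic content is required beyond Theorem \ref{thm:packed bv}.
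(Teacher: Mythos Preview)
Your proof is correct and follows exactly the approach the paper intends: the paper does not give a separate proof of this theorem, presenting it as an immediate consequence of Theorem~\ref{thm:packed bv} (concatenate the $b$-bit encodings into a length-$nb$ bitvector and use block extraction). Your additional remarks on optimality via $\mathcal H_{wc}$ and on the two-word straddling case are welcome elaborations of points the paper leaves implicit.
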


\subsection{Encoding a bitvector in $nH_0 + o(n)$ bits}

Let $B \in \{0,1\}$ be our bitvector. We choose a \emph{block size} $b = \lceil (\log_2 n)/2 \rceil = \Theta(\log n)$, and divide $B$ in blocks of $b$ bits each. We call the blocks $B_1, B_2, \dots, B_{n/b}$ (again, we assume for simplicity that $b$ divides $n$).

\begin{example}
    Let
    $$
    B = 010101001110000011111
    $$
    The bitvector's length is $n=21$ so $b = \lceil (\log_2 n)/2 \rceil = 3$. The division in blocks is the following:
    $$
    \begin{array}{ccccccc}
         B_1 & B_2 & B_3 & B_4 & B_5 & B_6 & B_7  \\
         010 & 101 & 001 & 110 & 000 & 011 & 111
    \end{array}
    $$
\end{example}

We will encode each block $B_i$ with two integers: $C_i$ (the \emph{class} of the block) and $O_i$ (the \emph{offset} of the block). 

The class $C_i$ of block $B_i$ is simply the number of bits equal to 1 in $B_i$:

\begin{example}
   Continuing the previous example, the classes are:
    $$
    \begin{array}{ccccccc}
         C_1 & C_2 & C_3 & C_4 & C_5 & C_6 & C_7  \\
         1 & 2 & 1 & 2 & 0 & 2 & 3
    \end{array}
    $$
\end{example}

Note that each integer $C_i$ is a number between 0 and $b$, so it requires $\lceil \log_2(b+1) \rceil = \Theta(\log\log n)$ bits to be represented. We store this array of integers using exactly $\lceil \log_2(b+1) \rceil$ bits each, using Theorem \ref{thm:packed array}, and obtain:

\begin{lemma}
    Integers $C_1, \dots, C_{n/b}$ can be stored in $O\left(\frac{n}{b}\log(b+1)\right) = O\left(\frac{n}{\log n}\log\log n\right) = o(n)$ bits, and each $C_i$ can be extracted in $O(1)$ time. 
\end{lemma}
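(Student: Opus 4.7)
The plan is to apply Theorem \ref{thm:packed array} directly to the sequence $C_1, \dots, C_{n/b}$, treating it as a packed array of small integers. First I would observe that $C_i$ counts the number of $1$s in a block of $b$ bits, so $C_i \in \{0, 1, \dots, b\}$ and hence fits in $\lceil \log_2(b+1)\rceil$ bits. This is the per-entry bit-width we will feed into the packed-array machinery.

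Next I would instantiate Theorem \ref{thm:packed array} with an array of $n/b$ integers of bit-width $\lceil \log_2(b+1)\rceil$ each. The theorem immediately yields a representation using $(n/b)\cdot \lceil \log_2(b+1)\rceil = O\!\left(\frac{n}{b}\log(b+1)\right)$ bits and supporting $O(1)$-time random access to any $C_i$. This is exactly the first bound in the statement, and the constant-time access claim as well.

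Finally, I would substitute the chosen block size $b = \lceil (\log_2 n)/2\rceil = \Theta(\log n)$ to collapse the expression. Then $\log(b+1) = \Theta(\log\log n)$, so the space becomes $O\!\left(\frac{n}{\log n}\log\log n\right)$, and since $\log\log n / \log n \to 0$, this is $o(n)$, yielding the chain of equalities as stated.

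There is no real obstacle here; this is essentially bookkeeping on top of Theorem \ref{thm:packed array}. The only mild subtlety worth flagging is consistency with the standing word-size assumption $w = \Theta(\log n)$: one must check that a single $C_i$ of $\Theta(\log\log n)$ bits fits comfortably inside one machine word (so the shift-and-mask extraction inside the packed-array structure works in $O(1)$ time), and that $n/b$ and the per-entry width do not create alignment issues beyond the usual $O(w)$ padding overhead discussed in the earlier remarks. Both are immediate from $b = \Theta(\log n)$ and $w = \Theta(\log n)$.
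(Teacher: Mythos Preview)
Your proposal is correct and mirrors exactly what the paper does: observe that each $C_i \in \{0,\dots,b\}$ fits in $\lceil \log_2(b+1)\rceil$ bits, apply Theorem~\ref{thm:packed array} to the $n/b$ entries, and substitute $b = \Theta(\log n)$. The paper gives this argument in the text immediately preceding the lemma rather than as a separate proof, so there is nothing to add.
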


We now show what are the offsets $O_i$. Consider the lists $B_{b,j}$, for all $0\le j \le b$, containing all bitvectors (sorted in increasing order) of length $b$ with $j$ bits equal to 1:

\begin{example}
    In our example, $b=3$ so the lists are:
    \begin{itemize}
        \item $B_{3,0} = (000)$
        \item $B_{3,1} = (001, 010, 100)$
        \item $B_{3,2} = (011, 101, 110)$
        \item $B_{3,3} = (111)$
    \end{itemize}
\end{example}

Observe that a pointer inside $B_{b,j}$ requires $\lceil \log_2\binom{b}{j}\rceil$ bits. Offset $O_i$ is precisely such pointer (in this case, we count positions  starting from 0): formally, $O_i \ge 0$ is the integer such that $B_i = B_{b,C_i}[O_i]$. In other words, to compute $O_i$:

\begin{enumerate}
    \item Count the number $C_i$ of 1's in $B_i$
    \item Search $B_i$ inside list $B_{b,C_i}$
    \item $O_i$ is the position (counting from 0) of $B_i$ inside list $B_{b,C_i}$.
\end{enumerate}

\begin{example}
   Continuing the previous example, the offsets are:
    $$
    \begin{array}{ccccccc}
         O_1 & O_2 & O_3 & O_4 & O_5 & O_6 & O_7  \\
         1 & 1 & 0 & 2 & 0 & 0 & 0
    \end{array}
    $$
\end{example}

Above we observed that $O_i$ requires $\lceil \log_2\binom{b}{C_i}\rceil$ bits to be represented; we are going to use exactly this number of bits. Notice that, differently from the classes $C_1, \dots, C_{n/b}$, we use a variable-length encoding for the offsets $O_1, \dots, O_{n/b}$.

\begin{example}\label{ex:offsets}
   Continuing the previous example, this is our variable-length encoding of the offsets:
    $$
    \begin{array}{ccccccc}
         O_1 & O_2 & O_3 & O_4 & O_5 & O_6 & O_7  \\
         01 & 01 & 00 & 10 & 0 & 00 & 0
    \end{array}
    $$
    Why? Because a pointer inside $B_{3,0}$ and $B_{3,3}$ (both of length 1) requires just 1 bit: this is the case of offsets $O_5$ and $O_7$. A pointer inside $B_{3,1}$ and $B_{3,2}$ (both of length 3) requires 2 bits. This is the case of all other offsets $O_1, O_2, O_3, O_4, O_6$.
\end{example}

We are going to simply concatenate the binary encodings of $O_1, \dots, O_{n/b}$, without any separator, using Theorem \ref{thm:packed bv}. Importantly, notice that this encoding is not prefix-free; however, given $C_1, \dots, C_{n/b}$ we can deduce the bit-length of the encoding of each $O_i$, so the joint encoding of  $C_1, \dots, C_{n/b}$ and  $O_1, \dots, O_{n/b}$ is invertible (i.e. it allows us to reconstruct any $C_i$ and any $O_i$).

How much space does our encoding of the offsets use? interestingly, zero-order compressed space:

\begin{lemma}
    Our variable-length encoding of $O_1, \dots, O_{n/b}$ uses $nH_0 + o(n)$ bits.
\end{lemma}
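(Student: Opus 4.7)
The plan is to bound the total bit-length $T=\sum_{i=1}^{n/b}\lceil\log_2\binom{b}{C_i}\rceil$ used by the concatenated offsets. The very first move is to get rid of the ceilings: $\lceil x\rceil\leq x+1$ contributes at most one extra bit per block, i.e.\ an additive overhead of $n/b$. Since we chose $b=\Theta(\log n)$, this overhead is $O(n/\log n)=o(n)$, which we can comfortably absorb into the desired slack. After this reduction, it suffices to prove
$$
\sum_{i=1}^{n/b}\log_2\binom{b}{C_i}\leq nH_0+o(n).
$$

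The combinatorial heart of the argument is the identity
$$
\binom{n}{m}=\sum_{\substack{m_1,\dots,m_{n/b}\geq 0\\ m_1+\dots+m_{n/b}=m}}\prod_{i=1}^{n/b}\binom{b}{m_i},
$$
where $m=\sum_i C_i$ is the total number of $1$-bits in $B$. This holds because choosing any $n$-bit string with exactly $m$ ones is the same as first deciding how many ones $m_i$ fall into each of the $n/b$ blocks, and then choosing their positions within each block. All terms on the right-hand side are non-negative, and one particular term corresponds to the actual class vector $(C_1,\dots,C_{n/b})$ of our bitvector $B$, so
$$
\prod_{i=1}^{n/b}\binom{b}{C_i}\leq\binom{n}{m}.
$$
Taking $\log_2$ turns the product into the sum we want to bound: $\sum_i\log_2\binom{b}{C_i}\leq\log_2\binom{n}{m}$.

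To finish, I invoke Lemma \ref{lem:Hwc vs H0} on the whole bitvector $B\in\mathcal B_{n,m}$, which gives $\log_2\binom{n}{m}=nH_0(B)-O(\log n)\leq nH_0+O(\log n)$. Chaining the three inequalities yields
$$
T\leq\sum_i\log_2\binom{b}{C_i}+\frac{n}{b}\leq\log_2\binom{n}{m}+\frac{n}{b}\leq nH_0+O(\log n)+O(n/\log n)=nH_0+o(n),
$$
as claimed.

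I do not expect a real obstacle here: the block decomposition of $\binom{n}{m}$ is elementary, and Lemma \ref{lem:Hwc vs H0} already packages the Stirling-based relation between the binomial and $nH_0$. The only point that deserves a moment of care is bookkeeping the two sources of sublinear slack ($n/b$ from the ceilings and $O(\log n)$ from the lemma) and checking that both are indeed $o(n)$ under the choice $b=\lceil(\log_2 n)/2\rceil$.
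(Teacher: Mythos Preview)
Your proof is correct and follows essentially the same approach as the paper: remove the ceilings (paying $n/b=o(n)$), bound $\prod_i\binom{b}{C_i}\le\binom{n}{m}$ via a Vandermonde-type argument, and finish with Lemma~\ref{lem:Hwc vs H0}. The only cosmetic difference is that the paper states the binomial inequality $\binom{x}{y}\binom{z}{k}\le\binom{x+z}{y+k}$ directly and iterates it, whereas you invoke the full Vandermonde convolution identity and pick out one term---these are the same combinatorial fact.
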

\begin{proof}
   $O_i$ is stored using $\lceil \log_2\binom{b}{C_i}\rceil$ bits. Summing over all offsets, we obtain: 
   $$
   \begin{array}{ccl}
        \sum_{i=1}^{n/b} \lceil \log_2\binom{b}{C_i}\rceil & \leq &  \sum_{i=1}^{n/b} \left(\log_2\binom{b}{C_i} + 1 \right)\\
        & = & \left( \sum_{i=1}^{n/b} \log_2\binom{b}{C_i} \right) + n/b\\
        & = & \left( \log_2\prod_{i=1}^{n/b} \binom{b}{C_i} \right) + n/b
   \end{array}
   $$
   We use the fact that $\binom{x}{y}\cdot \binom{z}{k} \le \binom{x+z}{y+k}$: the number $\binom{x+z}{y+k}$ of possible ways to choose $y+k$ elements from $x+z$ slots include all combinations of choosing $y$ elements from $x$ slots and $k$ elements from $z$ slots. Then, $\binom{b}{C_1} \cdot \binom{b}{C_2} \cdots \binom{b}{C_{n/b}} \le \binom{b+b+\dots + b}{C_1 + C_2 + \dots + C_{n/b}} = \binom{n}{m}$, where $m$ is the number of bits equal to '1' in $B$. Recalling from Lemma \ref{lem:Hwc vs H0} that $\log_2\binom{n}{m} \leq nH_0$, we finally obtain that the number of bits is
   $$
   nH_0 + n/b = nH_0 + O(n/\log n) = nH_0 + o(n)
   $$
\end{proof}

Notice that, given $C_i$ and $O_i$, we can reconstruct $B_i$ since we know how to build lists $B_{b,j}$ (we do not need to store them together with the encoding). To sum up (note that we need to store the number ``$n$'' as well to retrieve $b$ and the number of blocks):

\begin{lemma}
    Our encoding of $C_1, \dots, C_{n/b}$ and $O_1, \dots, O_{n/b}$, in addition to the number ``$n$'' (stored using an integer of $w$ bits), uses $nH_0 + o(n)$ bits of space and allows reconstructing the original bitvector $B$.
\end{lemma}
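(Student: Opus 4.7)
The plan is to package the two preceding lemmas together and then explain how decoding works. For the space bound, I would simply add up the two contributions already established: the classes $C_1,\dots,C_{n/b}$ fit in $o(n)$ bits by the first lemma (each occupies $\lceil\log_2(b+1)\rceil=\Theta(\log\log n)$ bits, times $n/b=\Theta(n/\log n)$ blocks), and the concatenated offsets $O_1,\dots,O_{n/b}$ fit in $nH_0+o(n)$ bits by the second lemma. Summing gives $nH_0+o(n)$ bits overall.

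For the reconstruction claim, I would describe a simple decoding algorithm. Process the blocks left to right. At step $i$, retrieve $C_i$ from the packed classes array in $O(1)$ time (this array has fixed-width cells). Then read the next $\lceil\log_2\binom{b}{C_i}\rceil$ bits from the concatenated offset bitstream; this yields $O_i$, and the reader's cursor advances by exactly that many bits. Given $(C_i,O_i)$, reconstruct $B_i$ by algorithmically enumerating the sorted list $B_{b,C_i}$ of all length-$b$ bitvectors with $C_i$ ones and taking its $O_i$-th element (positions counted from $0$). Since $b=\Theta(\log n)$ is known globally, this list can be regenerated on the fly without being stored. Concatenating $B_1\cdot B_2\cdots B_{n/b}$ reproduces $B$.

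The only subtlety worth emphasising is that the offset stream is \emph{not} self-delimiting on its own (it is not a prefix code, as noted in the paragraph just before the lemma), yet the joint encoding $(C_1,\dots,C_{n/b},O_1,\dots,O_{n/b})$ is invertible: the classes supply exactly the length information needed to parse each offset, so sequential decoding never ambiguously crosses an offset boundary. This is the one non-trivial observation; everything else follows from combining the previously proved bounds. No further analysis or calculation is needed, which is why I expect the argument to fit in just a few lines once the decoding procedure is spelled out.
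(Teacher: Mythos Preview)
Your proposal is correct and matches the paper's approach exactly: the paper also treats this lemma as a summary of the two preceding space bounds plus the remark that knowing $(C_i,O_i)$ suffices to rebuild $B_i$ via the lists $B_{b,j}$. You have simply spelled out the sequential decoding procedure and the non-self-delimiting subtlety in more detail than the paper does, but the underlying argument is identical.
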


\subsection{Random access and rank in $nH_0+o(n)$ bits}

To extract any bit $B[i]$, we proceed in a similar way as in Example \ref{ex:bitvector n bits2}, except that now we replace $w$ with $b$ since our blocks are made of $b$ bits. To extract $B[i]$, we:

\begin{enumerate}
    \item Identify which block $B_{j}$ contains position $i$. Like in Example \ref{ex:bitvector n bits2}, the block number is $j = \lfloor (i-1)/b \rfloor + 1$,
    \item Extract $C_j$ and $O_j$,
    \item Decompress the block: from $(C_j,O_j)$, obtain the bitvector $B_j$ of length $b$ (packed in a memory word: note that $b\leq w$), and
    \item Like in Example \ref{ex:bitvector n bits2}, extract bit in position $i' = (i-1)\mod b + 1$ of $B_j$.
\end{enumerate}

Steps 1 and 4 can be performed in $O(1)$ time, as seen in Example  \ref{ex:bitvector n bits2}. In step 2, we already know how to extract $C_j$ in constant time: recall that each class $C_i$ is stored using $\lceil \log_2(b+1) \rceil$ bits, so we can use Theorem \ref{thm:packed array}. 
We cannot use the same technique for the offsets, because they are encoded using a variable-length encoding. 
Let's see how to extract also any $O_j$ in constant time. 

\subsubsection{Extracting offsets}

In Example \ref{ex:offsets}, our offsets are stored as the following bitvector: 011000100000 (no separators!). The idea is to store extra information telling us the position where each offset starts in this bitvector. Ideally, we would like to store the position of every offset, like this:

$$
\begin{array}{lllllll}
    O_1 & O_2 & O_3 & O_4 & O_5 & O_6 & O_7  \\
     01 & 10 & 00 & 10 & 0 & 00 & 0\\
     1  & 3  & 5  & 7  & 9 & 10  & 12
\end{array}
$$

However, this solution uses too much space: we are storing $n/b$ positions, and each uses $\log_2 n$ bits. In total, $(n/b)\log_2 n = \Theta(n)$ bits (we want to use only extra $o(n)$ bits). The solution is to use a two-levels scheme: 

\begin{enumerate}
    \item We group the blocks into macroblocks of $b$ blocks each,
    \item we remember the \emph{absolute} starting position of every macroblock in the encoded offsets, and
    \item we remember the \emph{relative} starting position of each block inside its macroblock.
\end{enumerate}

\begin{example}
    Continuing our example, we store the following extra information:
    $$
    \begin{array}{rllll|lll|l}
        &&O_1 & O_2 & O_3 & O_4 & O_5 & O_6 & O_7  \\
        \mathrm{encoded\ offsets} & & 01 & 10 & 00 & 10 & 0 & 00 & 0\\
        \mathrm{starting\ position\ of\ macroblock} & & 1  &    &    & 7  &   &    & 12\\
        \mathrm{relative\ starting\ position\ of\ offset\ inside\ macroblock} & & 1  &  3 &  5 & 1  & 3 & 4  & 1
    \end{array}
    $$
    This information is clearly sufficient to reconstruct the initial position of any offset. Take for example $O_6$: using the stored information, we know that $O_6$ starts in relative position 4 inside its macroblock (the second), which starts in absolute position 7. The absolute starting position of $O_6$ is then $7 + 4 - 1 = 10$.
\end{example}

The total number of bits in the encoded offsets is $nH_0 + o(n) = O(n)$, so absolute macroblock positions use $O(\log_2 n)$ bits each. We store them using Theorem \ref{thm:packed array}. Each of those positions uses many bits, but they are few: there are in total $n/b^2$ macroblocks, so these integers use in total $(n/b^2) \cdot O(\log_2 n) = O(n/\log n) = o(n)$ bits.

Relative block positions inside macroblocks, on the other hand, are more numerous: $n/b$. However, these relative positions are small numbers: recall that each offset is encoded using $\lceil \log_2 \binom{b}{C_i}\rceil \leq \log_22^b = b$ bits. This means that the $b$ offsets inside each macroblock are encoded using at most $b^2$ bits, so each relative position inside a macroblock will use at most $\lceil \log_2b^2 \rceil$ bits. We encode the $n/b$ relative positions using $\lceil \log_2b^2 \rceil$ bits each with Theorem \ref{thm:packed array}. In total, these integers use $(n/b) \lceil \log_2b^2 \rceil \in O((n/b)\log b) = O\left( \frac{n\log\log  n}{\log n} \right) = o(n)$ bits.

This gives us: 

\begin{lemma}
    Our encoding of $C_1, \dots, C_{n/b}$ and $O_1, \dots, O_{n/b}$, in addition to absolute macroblock positions and relative block positions, uses $nH_0 + o(n)$ bits of space and allows retrieving any class $C_j$ and any offset $O_j$ in $O(1)$ time. 
\end{lemma}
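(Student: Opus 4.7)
The plan is to verify the two claims separately. The space bound falls out by adding up the four contributions already bounded in the previous lemmas and remarks, and the $O(1)$ retrieval of $C_j$ and $O_j$ reduces to straightforward pointer-chasing using Theorems \ref{thm:packed bv} and \ref{thm:packed array}.

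\textbf{Space.} The classes $C_1,\ldots,C_{n/b}$ use $O\!\left(\frac{n\log\log n}{\log n}\right) = o(n)$ bits by the earlier lemma. The variable-length encoding of the offsets uses $nH_0 + O(n/\log n)$ bits. The $n/b^2$ absolute macroblock positions, each stored in $O(\log n)$ bits via Theorem \ref{thm:packed array}, contribute $O(n/\log n) = o(n)$ bits. The $n/b$ relative positions inside macroblocks, each stored in $\lceil \log_2 b^2 \rceil$ bits, contribute $O\!\left(\frac{n\log\log n}{\log n}\right) = o(n)$ bits. Adding all four yields $nH_0 + o(n)$.

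\textbf{Retrieval.} For $C_j$, since the classes form a packed array of fixed-width entries of $\lceil \log_2(b+1)\rceil$ bits, Theorem \ref{thm:packed array} returns $C_j$ in $O(1)$ time. For $O_j$, the algorithm is: (i) compute the index $M = \lceil j/b \rceil$ of the macroblock that contains $O_j$; (ii) read in $O(1)$ the absolute starting bit position $\alpha_M$ of macroblock $M$ inside the encoded-offsets bitvector, using Theorem \ref{thm:packed array}; (iii) read in $O(1)$ the relative starting position $\rho_j$ of $O_j$ inside its macroblock, again via Theorem \ref{thm:packed array}; (iv) compute the absolute starting bit position $p = \alpha_M + \rho_j - 1$; (v) compute the bit-length $\ell_j = \lceil \log_2 \binom{b}{C_j}\rceil$ of $O_j$ using the previously retrieved $C_j$; (vi) extract the $\ell_j$ contiguous bits of the encoded-offsets bitvector starting at position $p$, via Theorem \ref{thm:packed bv}. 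Since $\ell_j \leq b \leq w$, this final extraction fits in a single machine word and costs $O(1)$.

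The only step that needs a small extra ingredient is (v): obtaining $\ell_j$ from $C_j$ in constant time. The cleanest fix is to precompute, once at construction, a lookup table $\Lambda[0..b]$ with $\Lambda[k] = \lceil \log_2 \binom{b}{k}\rceil$, so that $\ell_j = \Lambda[C_j]$. This table has $\Theta(\log n)$ entries of $O(\log\log n)$ bits each, occupying $O(\log n \cdot \log\log n) = o(n)$ bits and thus not perturbing the space bound. With $\Lambda$ in hand every step above runs in $O(1)$, completing the proof.
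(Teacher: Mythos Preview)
Your proof is correct and follows exactly the approach laid out in the paper's preceding discussion; the lemma there is essentially a summary statement with no separate proof, and you have faithfully formalized the space accounting and the pointer-chasing retrieval procedure (your computation $p=\alpha_M+\rho_j-1$ even matches the paper's worked example). Your addition of the $\Lambda$ lookup table for the offset bit-lengths is a clean way to make step (v) explicitly $O(1)$; the paper glosses over this point, so you have actually filled in a small gap rather than deviated from its argument.
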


\subsubsection{Decoding the block given its class and offset}

The last problem to solve is how to locally decompress a block: given $C_j$ and $O_j$, retrieve $B_j$ packed in a memory word. Our solution is simply to pre-compute all combinations of class and offset. We store a two-dimensional table

$$
T[C_j][O_j] = B_j
$$

How much space does the table use? the total number of classes is $b+1 = O(\log n)$. The total number of offsets for each class is $\binom{b}{C_i} \leq 2^b = 2^{\lceil (\log_2 n)/2 \rceil} = O(\sqrt n)$. For each class-offset combination, we store a bitvector $B_j$ of $b = O(\log n)$ bits. In total: $O(\sqrt n \log^2 n) = o(n)$ bits.

\begin{example}\label{ex:tableT}
    Table $T$ is just a way to encode our lists $B_{b,j}$:
        \begin{itemize}
        \item $B_{3,0} = (000)$
        \item $B_{3,1} = (001, 010, 100)$
        \item $B_{3,2} = (011, 101, 110)$
        \item $B_{3,3} = (111)$
    \end{itemize}
    The table is shown in Figure \ref{fig:tableT}.
    \begin{figure}[h!]
        \centering
        \begin{tabular}{|c|c|c|c|}
            \hline
             & 0 & 1 & 2 \\\hline
             0 & 000 & - & - \\\hline
             1 & 001 & 010 & 100 \\\hline
             2 & 011 & 101 & 110 \\\hline
             3 & 111 & - & - \\\hline
        \end{tabular}\caption{Table $T$ of Example \ref{ex:tableT}. Classes correspond to rows and offsets to columns.}\label{fig:tableT}
    \end{figure}
\end{example}

Putting everything together:

\begin{lemma}
    Our encoding of $C_1, \dots, C_{n/b}$ and $O_1, \dots, O_{n/b}$, in addition to absolute macroblock positions, relative block positions, and table $T$, uses $nH_0 + o(n)$ bits of space and allows retrieving any bit $B[i]$ in $O(1)$ time. 
\end{lemma}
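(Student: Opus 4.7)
The plan is straightforward: the lemma has two parts, a space bound and a constant-time access bound, and both follow by assembling ingredients already proved earlier in the section. First I would total the bit cost of the five components (classes, offsets, absolute macroblock pointers, relative block pointers, table $T$), then I would walk through the four-step access recipe sketched just before the statement and check that each step takes constant time.

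For the space, the previous lemmas do almost all of the work. The offsets are the only component with a nontrivial contribution: their variable-length concatenation was already shown to cost $nH_0 + o(n)$ bits. The classes take $O\!\left(\frac{n}{\log n}\log\log n\right) = o(n)$ bits. The $n/b^2$ absolute macroblock pointers each fit in $O(\log n)$ bits and thus contribute $O(n/\log n) = o(n)$ bits by Theorem~\ref{thm:packed array}. The $n/b$ relative pointers each fit in $\lceil \log_2 b^2\rceil = O(\log\log n)$ bits and contribute $O((n/b)\log b) = o(n)$ bits. Finally, table $T$ has $b+1 = O(\log n)$ classes, at most $2^b = O(\sqrt n)$ offsets per class, and each entry stores $b = O(\log n)$ bits, for a total of $O(\sqrt n \cdot \log^2 n) = o(n)$ bits. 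Summing all five contributions gives $nH_0 + o(n)$.

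For the running time, I would just compose the four steps. Step~1 (locating the block index $j = \lfloor(i-1)/b\rfloor + 1$) and step~4 (extracting the correct bit inside the recovered $b$-bit block) are the same $O(1)$ shift-and-mask operations used in Example~\ref{ex:bitvector n bits}, since $b \le w$. Step~3 (looking up $T[C_j][O_j]$) is a single constant-time array access. The only step that needs real attention is step~2. The class $C_j$ is stored in a fixed-width packed array and is retrieved in $O(1)$ by Theorem~\ref{thm:packed array}. To recover $O_j$, I would first compute its absolute starting position in the offset bitstream as (absolute start of its macroblock) $+$ (relative position inside that macroblock) $- 1$; both summands are $O(1)$-accessible by Theorem~\ref{thm:packed array}. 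Its length $\lceil \log_2\binom{b}{C_j}\rceil \le b \le w$ then lets me extract $O_j$ from the packed offset bitstream with a single call to Theorem~\ref{thm:packed bv}.

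The main subtlety, and really the whole point of the two-level pointer scheme, is step~2: because offsets have variable length, random access to $O_j$ is not free. What makes it work in $O(1)$ without blowing the space budget is exactly the choice $b = \Theta(\log n)$: offsets are short enough to fit in a single machine word (so Theorem~\ref{thm:packed bv} reads them in one shot), yet blocks are numerous enough that storing an absolute pointer per block would cost $\Theta(n)$ bits; splitting pointers into a few expensive absolute ones (one per macroblock) and many cheap relative ones (one per block) is what keeps the overhead $o(n)$. Once that is in place, everything else is bookkeeping.
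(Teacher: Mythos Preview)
Your proposal is correct and mirrors the paper exactly: the lemma there is stated as a ``putting everything together'' summary with no separate proof, and your write-up is precisely that assembly---totaling the five $o(n)$ overheads on top of the $nH_0$ offset cost, and verifying each of the four access steps is $O(1)$ using the already-established packed-array and packed-bitvector primitives.
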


\subsubsection{Rank}

The idea for supporting $B.rank_1()$ is almost the same as for random access. Instead of absolute and relative positions in the encoded offsets, we store the rank (number of bits equal to 1) before each macroblock (excluding the first macroblock position) and before each block inside its macroblock (excluding the first block position). The solution for $B.rank_0()$ is the same.

In addition, we store a three-dimensional table $R$ pre-computing the answer to any rank query on any possible bitvector of length $b$, specified as a class and offset:

$$
R[C_j][O_j][i] = B_j.rank_1(i)
$$

\begin{example}
    Continuing our example, we store the following extra information to support $B.rank_1()$:
    $$
    \begin{array}{rllll|lll|l}
         && B_1 & B_2 & B_3 & B_4 & B_5 & B_6 & B_7  \\
        \mathrm{blocks} && 010 & 101 & 001 & 110 & 000 & 0\underline 11 & 111\\
        \mathrm{absolute\ rank\ before\ macroblock} & & 0  &    &    & 4  &   &    & 8\\
        \mathrm{relative\ rank\ inside\ macroblock} & & 0  &  1 &  3 & 0  & 2 & 2  & 0
    \end{array}
    $$
    Let's see how to solve $B.rank_1(17)$ (position 17 is underlined). The rank before the macroblock containing position 17 is 4. Position 17 is inside block $B_6$, and the relative rank before $B_6$ in its macroblock is 2. Finally, position 17 is the second inside $B_6 = 0\underline11$. 
    We extract $C_6 = 2$ and $O_6 = 0$, and use table $R$ to get the rank inside  $B_6$: $R[2][0][2] = 1$. Finally, we add up the three partial ranks: $B.rank_1(17) = 4 + 2 + 1 = 7$.
\end{example}

Table $R$ uses in total at most $(b+1) \cdot 2^b \cdot b \cdot \lceil \log_2(b+1)\rceil = O(\sqrt{n}\log^2 n \log\log n) = o(n)$ bits (number of classes $\times$ number of offsets $\times$ number of positions inside a bitvector $B_j$ of length $b$ $\times$ number of bits to express the result of a rank query on $B_j$).

To solve $B.rank_0()$, simply observe that $B.rank_0(i) = i-B.rank_1(i)$.

The solution for query \emph{select} uses similar ideas, except that we build blocks of variable length containing a fixed number of bits equal to '1'. See \cite{navarro2016compact} for the details. We finally obtained the main result of the section:

\begin{theorem}[RRR bitvector \cite{rajeev2002succinct}]\label{thm:bitvector}
    Any bitvector $B\in \{0,1\}^n$ can be stored in a data structure of $nH_0 + o(n) \leq n + o(n)$ bits supporting random access, rank, and select queries in $O(1)$ time. 
\end{theorem}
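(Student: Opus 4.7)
The plan is to assemble the pieces already built in Section \ref{sec:zero order bv} and then add the missing ingredient for \emph{select}. The classes $C_1,\dots,C_{n/b}$ are stored in $o(n)$ bits via Theorem \ref{thm:packed array}, while the variable-length offsets $O_1,\dots,O_{n/b}$, together with the two-level absolute/relative pointer structure, occupy $nH_0+o(n)$ bits and permit $O(1)$ retrieval of any $(C_j,O_j)$ pair. The $o(n)$-bit decoding table $T$ turns such a pair into the packed block $B_j$ in one memory word, and the standard shift-and-mask trick of Example \ref{ex:bitvector n bits} isolates the requested bit within $B_j$; composing these operations gives random access in $O(1)$ time. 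For rank, I would augment exactly the same block/macroblock skeleton with (i) absolute 1-counts before each macroblock, (ii) relative 1-counts before each block inside its macroblock, and (iii) the precomputed table $R[C_j][O_j][i]$ of intra-block ranks; all three occupy $o(n)$ bits by the same accounting already used for positions, so $B.\mathtt{rank}_1(i)$ is the sum of three $O(1)$-time lookups, and $B.\mathtt{rank}_0(i)=i-B.\mathtt{rank}_1(i)$.

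The space bound $nH_0+o(n)\le n+o(n)$ follows immediately from $H_0\le 1$ for binary alphabets (convexity of $x\log(1/x)$, attained at $x=1/2$), so no further argument is needed on that side.

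The only step not fully written out in the excerpt is \emph{select}, and this is where I expect to spend the most effort. I would mirror the rank construction but swap the roles of position and count: instead of cutting $B$ into fixed-length blocks of $b$ bits, cut $B$ into variable-length \emph{superblocks} each containing exactly $b^2$ ones, and inside each superblock cut into \emph{subblocks} each containing exactly $b$ ones. Store the absolute position where each superblock begins (there are $n/b^2$ such entries, using $O(\log n)$ bits each, totalling $o(n)$ bits), and the relative position where each subblock begins within its superblock. When a superblock is ``sparse'' (spans more than $b^2\log^2 n$ bits), store answers explicitly; when it is ``dense'' the subblock answers fit in a small enough window that a precomputed table analogous to $T$ and $R$ — indexed by $(C_j,O_j,i)$ and returning the position of the $i$-th one inside the decoded block — resolves the query in $O(1)$ time. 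The same scheme with zeros in place of ones handles $\mathtt{select}_0$. The careful bookkeeping for the sparse/dense dichotomy, following \cite[Sec.~4.1.1]{navarro2016compact}, is the main technical obstacle; once it is in place, the total extra space remains $o(n)$, and combining with the previous two lemmas yields the three claimed $O(1)$ operations within $nH_0+o(n)$ bits, completing the proof of Theorem \ref{thm:bitvector}.
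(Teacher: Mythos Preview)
Your proposal is correct and follows essentially the same approach as the paper: the access and rank constructions you describe are exactly those developed in the preceding subsections, and your select sketch (variable-length blocks containing a fixed number of ones, with a sparse/dense dichotomy) is precisely what the paper alludes to when it says ``we build blocks of variable length containing a fixed number of bits equal to `1'\,'' and defers the details to \cite[Sec.~4.1.1]{navarro2016compact}. If anything, you give more detail on select than the paper itself does.
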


Finally, it is not hard to prove that the bit-size of the RRR bitvector data structure built on the concatenation of two bitvectors $B_1$ and $B_2$ is almost the same as the sum of the sizes of the RRR data structure built separately on $B_1$ and $B_2$. 
More in general (prove this as an exercise):

\begin{lemma}\label{lem:sum RRR}
    Let $\mathrm{RRR\_size}(B)$ denote the bit-size of the RRR bitvector data structure built on bitvector $B$.
    Let $B_i \in \{0,1\}^{n_i}$, for $i=1, \dots, t$ be $t$ bitvectors. Let $n = \sum_{i=1}^t n_i$. Then:
    $$
    \mathrm{RRR\_size}(B_1B_2\dots B_t) \le \left(\sum_{i=1}^t n_iH_0(B_i)\right) + o(n) + O(t\log n)
    $$
\end{lemma}
\begin{proof}
    Recall that the RRR data structure partitions the bitvector $B_1B_2\dots B_t$ in small blocks of size $b = O(\log n)$. If we ignore the $t-1$ blocks overlapping the borders of each pair of adjacent bitvectors $B_{i}B_{i+1}$, the analysis of this section shows that the remaining blocks use at most  $\sum_{i=1}^t \left(n_iH_0(B_i) + o(n_i)\right) = \left(\sum_{i=1}^t n_iH_0(B_i)\right) + o(n)$ bits. On the other hand, the $t-1$ blocks at the borders of adjacent bitvectors require at most $O(t\log n)$ bits of space (this is a safe upper-bound to our encoding into classes and offsets).
\end{proof}

This result will be useful later when extending access, rank, and select queries on arbitrary strings and finally using them to index text in high-order compressed space.

\section{Sets of integers - the Elias-Fano data structure}\label{sec:Elias-Fano}

The topics of this section are treated more in detail in \cite[Sec. 4.4]{navarro2016compact}.

Consider the problem of encoding a set of integers $S \subseteq \{0,\dots,n-1\}$ of cardinality $|S|=m$. As seen in Section \ref{sec:worst case entropy}, $m\log(n/m) + O(m)$ bits are necessary to encode $S$. In this section, we show that  $m\log(n/m) + \Theta(m)$ bits  are sufficient by showing a data structure of this size supporting several queries on $S$.

In Section \ref{sec:worst case entropy} we observed that sets of cardinality $m$ over universe $\{0,\dots,n-1\}$ and bitvectors of length $n$ with $m$ bis equal to '1' are essentially the same objects. In fact (note: in this section the smallest integer in our sets is $0$):

\begin{example}
    Consider the set $S = \{3,7,8\} \subseteq [9]$. The following bitvector of length $9$ with $3$ bits equal to '1' is an encoding of this set: $0001000110$. The other way round is also true: any bitvector of length $n$ with $m$ bis equal to '1' can be represented as a set of cardinality $m$ on universe $\{0,\dots,n-1\}$.
\end{example}

Then, this means that the data structure of the previous section, of size $nH_0 + o(n) = m\log(n/m) + O(m) + o(n)$ can be used to store also any set of cardinality $m$ on universe $\{0,\dots,n-1\}$. Notice that this space is almost optimal, except for the term $o(n)$. If $m\approx n$, then this term is indeed small and can be ignored. On the other hand, if $m\ll n$ then this term is too large! Recall that in our data structure this extra term is $O(n\log\log n / \log n)$. This can be much larger than $m$ if the set is very sparse.

\begin{example}
    Consider the problem of storing $m = 1000$ IPv4 addresses. Each such address is an integer of 32 bits, i.e. $n = 2^{32}$. 
    A naive solution using 32 bits per address would use $32\cdot 1000 = 32000$ bits $\approx 4$ KiB. 
    The information-theoretic lower bound, on the other hand, is approximately $m\log(n/m) \approx 3$ KiB.
    The solution of the previous section, on the other hand, uses $m\log(n/m) + O(n\log\log n / \log n) + O(m) \geq m\log(n/m) + n\log\log n / \log n \approx 6.7\cdot 10^8$ bits $\approx 80$ MiB! The term $O(n\log\log n / \log n) = o(n)$ is negligible only when $m \approx n$ (not in this case). 
\end{example}

We now introduce a data structure removing this extra $o(n)$ term. In the literature, the data structure is known with the name \emph{Elias-Fano} (EF), from the name of the researchers that first described these ideas. We present the EF data structure via an example. For simplicity we assume that $n$ and $m$ are powers of two (but with more care the structure can be described for arbitrary $n,m$).

\begin{example}
    We are going to build the Elias-Fano data structure on set $\{0,5,8,12,14,17,20,31\} \subseteq \{0,\dots, 31\}$. The universe's size is $n=32$, and the set's cardinality is $m=8$.
    We will first write the (sorted) integers in binary, using $\log n = 5$ bits each:
    $$
    \begin{array}{c|c}
        x\in S & (x)_2 \\\hline
        0 & 00000 \\
        5 & 00101 \\
        8 & 01000 \\
        12 & 01100 \\
        14 & 01110 \\
        17 & 10001 \\
        20 & 10100 \\
        31 & 11111 
    \end{array}
    $$
    The next step is to break each integer into a prefix of $\log m = 3$ bits, and a suffix of $\log n - \log m = \log(n/m) = 2$ bits. The suffixes form the first component $EF_1$ of the EF data structure:
    $$
    \begin{array}{c|c|c}
        x\in S & \mathrm{prefix\ of} (x)_2 & EF_1 = \mathrm{suffix\ of} (x)_2\\\hline
        0 & 000 & 00 \\
        5 & 001 & 01 \\
        8 & 010 & 00 \\
        12 & 011 & 00 \\
        14 & 011 & 10 \\
        17 & 100 & 01 \\
        20 & 101 & 00 \\
        31 & 111 & 11 
    \end{array}
    $$

    Note that the suffixes use in total $m\log(n/m)$ bits. We store $EF_1 = (00,01,00,00,10,01,00,11)$ with the packed array data structure of Theorem \ref{thm:packed array}. 
    As far as the prefixes are concerned, observe that they form a list of $m$ nondecreasing integers of $\log m$ bits each. We will store the differences between consecutive integers in the list, in unary encoding: difference $\delta_i = p_i-p_{i-1}$, where $p_i$ is the $i$-th prefix, is encoded as $0^{\delta_i}1$ (i.e. $\delta_i$ occurrences of bit '0' followed by one bit '1'). For example, the difference $\delta_8$ between the two adjacent prefixes $p_7 = 101$ and $p_8 = 111$ is $(111 - 101)_2  =2$; we encode this difference as $\delta_8 = 0^21 = 001$. 
    The first difference is defined as $\delta_1 = p_1$. The concatenated bits of this encoding of $\delta_1, \dots, \delta_m$ is the second part $EF_2$ of the EF data structure:
    
    $$
    \begin{array}{c|c|c|c}
        x\in S & \mathrm{prefix\ of} (x)_2 & EF_2 = \mathrm{difference}\ \delta_i\ \mathrm{in\ unary} & EF_1 = \mathrm{suffix\ of} (x)_2\\\hline
        0 & 000 & 1 & 00 \\
        5 & 001 & 01 & 01 \\
        8 & 010 & 01 & 00 \\
        12 & 011 & 01 & 00 \\
        14 & 011 & 1 & 10 \\
        17 & 100 & 01 & 01 \\
        20 & 101 & 01 & 00 \\
        31 & 111 & 001 & 11 
    \end{array}
    $$
    We build the data structure of Theorem \ref{thm:bitvector} on bitvector $EF_2 = 101010110101001$. The Elias-Fano data structure is the pair $EF = (EF_1,EF_2)$.
\end{example}

Observe that the bitvector $EF_2$, storing differences in unary, has $m$ bits equal to '1'. Also the number of '0' is at most $m$, because the total number of '0' is equal to the difference between the largest integer (in the example, $(111)_2$) and $(000)_2$; this difference is at most $m-1$ since the integers' prefixes are formed by $\log m$ bits each. As a result, $EF_2$ has at most $2m$ bits. We conclude that the Elias-Fano data structure $EF=(EF_1,EF_2)$ uses in total $m\log(n/m) + \Theta(m)$ bits.

We now show how to access any integer efficiently:

\begin{example}
    Consider the previous example. Suppose we want to extract the 5-th integer $14 = (01110)_2$. The suffix $(10)_2$ of this number is explicitly stored in the packed array $EF_1$, so we can extract it in constant time. To extract the prefix $011$ of the number, observe that $(011)_2 = 3$ is  equal to the number of '0's before the 5-th '1' in $EF_2$. In general, the prefix of the $i$-th integer is equal to $EF_2.rank_0(EF_2.select_1(i))$. Why? because the '0's in $EF_2$ store the differences between adjacent prefixes, so by adding up those differences we retrieve the prefixes.
\end{example}

Once obtained the prefix $p$ and suffix $s$ of an integer, we combine them in constant time to get the original integer: $(p \ll k) + s$, where $k=\log(n/m)$ is the length of the suffixes. We obtain:

\begin{theorem}\label{thm:Elias-Fano}
    The Elias-Fano data structure represents a sorted list of $m$ integers from universe $\{0,\dots,n-1\}$ using optimal $m\log(n/m)+\Theta(m)$ bits of space, and allows retrieving the $i$-th integer (random access) in $O(1)$ time. 
\end{theorem}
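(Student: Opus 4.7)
The plan is to verify the two claims---space and query time---separately, each by assembling the pieces already developed in the excerpt.

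For the space bound, I would first observe that $EF_1$ stores $m$ integers of exactly $\log(n/m)$ bits each, so Theorem~\ref{thm:packed array} represents it in $m\log(n/m)$ bits with $O(1)$ access. For $EF_2$, the unary encoding has exactly one terminator bit $1$ per integer, hence $m$ ones in total; the number of zeros equals $\delta_1 + \cdots + \delta_m = p_m$, which is at most $2^{\log m}-1 = m-1$ since each prefix $p_i$ fits in $\log m$ bits and the sequence of prefixes is non-decreasing. Thus $|EF_2| \le 2m-1$, and equipping it with the zero-order compressed bitvector of Theorem~\ref{thm:bitvector} costs $|EF_2| + o(|EF_2|) = \Theta(m)$ bits. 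Summing, the total is $m\log(n/m) + \Theta(m)$, which matches the worst-case entropy $\lceil \log_2\binom{n}{m}\rceil = m\log(n/m) + O(m)$ up to the additive $\Theta(m)$ term, justifying the word ``optimal.''

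For random access I would follow the recipe in the example preceding the theorem and justify each step. Given $i \in \{1,\dots,m\}$, extract the suffix $s_i := EF_1[i]$ in $O(1)$ via Theorem~\ref{thm:packed array}. To recover the prefix $p_i$, use the identity $p_i = EF_2.rank_0(EF_2.select_1(i))$; by Theorem~\ref{thm:bitvector} both operations take constant time. Finally, reassemble the $i$-th integer as $(p_i \ll \log(n/m)) + s_i$ in constant time on the word RAM, since the result fits in $\log n = O(w)$ bits by the standard word-size assumption.

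The only step that requires a genuine argument rather than bookkeeping is the correctness of the prefix-recovery identity. The reason is that the concatenated codeword $0^{\delta_1}1\,0^{\delta_2}1\,\cdots\,0^{\delta_m}1$ places the $i$-th bit equal to $1$ immediately after $\delta_1+\cdots+\delta_i$ zeros, so $rank_0$ applied at the position returned by $select_1(i)$ yields this partial sum, which by the telescoping definition (with $p_0 = 0$ and $\delta_j = p_j - p_{j-1}$) equals $p_i$. Once this invariant is stated and proved by induction on $i$, the theorem follows by direct invocations of Theorems~\ref{thm:packed array} and~\ref{thm:bitvector}.
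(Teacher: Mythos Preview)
Your proposal is correct and follows essentially the same approach as the paper: the space bound is obtained by counting $m\log(n/m)$ bits for $EF_1$ and at most $2m$ bits for $EF_2$ (via the telescoping sum of the $\delta_i$), and random access is done by reading the suffix from $EF_1$ and recovering the prefix via $EF_2.rank_0(EF_2.select_1(i))$. If anything, you supply slightly more detail than the paper by spelling out the inductive justification for the prefix-recovery identity.
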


The Elias-Fano structure actually supports other interesting queries such as membership of integers in the set (i.e. testing $x\in S$), and finding predecessors  (i.e. finding the largest $x\in S$ such that $x < y$, where $y$ is the query) and successors, all in \emph{logarithmic} time. 
Notice that these query times are larger than the $O(1)$-query times of the structure of Section \ref{sec:zero order bv}. This is the price we have to pay to remove the $O(n\log\log n / \log n)$ term from the space of our data structure!
We do not describe these queries since we will not need them in these notes. Exercise: try to show how to support these queries using $EF = (EF_1,EF_2)$.

\section{The compressed suffix array (CSA)}\label{sec:CSA}


We are finally ready to show how to compress the suffix array (SA) of Section \ref{sec:suffix array} in $nH_0 + \Theta(n)$ bits of space.

We are going to introduce the Compressed Suffix Array (CSA) via a running example. Consider Figure \ref{fig:SA banana}, showing the suffix array of string $S = BANANA\$$.

\begin{figure}[h!]
    \centering
    \begin{tabular}{ccccccccc}
        \gr {$i$}     & \gr = & \gr 1 & \gr 2 & \gr 3 & \gr 4 & \gr 5 & \gr 6 & \gr 7 \\\hline 
        $SA[i]$ & = & 7 & 6 & 4 & 2 & 1 & 5 & 3 \\\hline 
        sorted suffixes & : & \$ & A  & A  & A  & B  & N  & N  \\  
                        &   &    & \$ & N  & N  & A  & A  & A  \\  
                        &   &    &    & A  & A  & N  & \$ & N  \\  
                        &   &    &    & \$ & N  & A  &    & A  \\  
                        &   &    &    &    & A  & N  &    & \$ \\  
                        &   &    &    &    & \$ & A  &    &    \\  
                        &   &    &    &    &    & \$ &    &    \\  
    \end{tabular} \caption{The suffix array and sorted suffixes of $S = BANANA\$$.} \label{fig:SA banana}
\end{figure}

Recall that we can find the suffix array range of all suffixes prefixed by a given pattern using binary search. For example, the range of pattern $AN$ is $SA[3,4]$. At each step of binary search, we find ourselves on a particular position $SA[i]$ of the suffix array and we jump on the string $S$ in position $SA[i]$ to extract the suffix $S[SA[i], SA[i]+1, \dots]$, to be compared with the pattern. The first observation is that, in order to run binary search, we really only need to be able to extract the $i$-th suffix in alphabetic order. If we can do that, we don't need the suffix array! Let's therefore thrash the suffix array. We also keep only the first character of each sorted suffix. See Figure \ref{fig:sorted suffixes}: for clarity we show the whole suffixes, but we actually store only the characters in black (those in gray are shown just for clarity). We are going to call $F$ the string storing the first character of each suffix (in alphabetic order).

\begin{figure}[h!]
    \centering
    \begin{tabular}{ccccccccc}
        \gr {$i$}     & \gr = & \gr 1 & \gr 2 & \gr 3 & \gr 4 & \gr 5 & \gr 6 & \gr 7 \\\hline 
        F & = & \$ & A  & A  & A  & B  & N  & N  \\  
                        &   &    & \gr \$ & \gr N  & \gr N  & \gr A  & \gr A  & \gr A  \\  
                        &   &    &    & \gr A  & \gr A  & \gr N  & \gr \$ & \gr N  \\  
                        &   &    &    & \gr \$ & \gr N  & \gr A  &    & \gr A  \\  
                        &   &    &    &    & \gr A  & \gr N  &    & \gr \$ \\  
                        &   &    &    &    & \gr \$ & \gr A  &    &    \\  
                        &   &    &    &    &    & \gr \$ &    &    \\  
    \end{tabular} \caption{The sorted suffixes of $S = BANANA\$$. Let's store just the first character (in black) of each suffix, in a string $F = \$AAABNN$.} \label{fig:sorted suffixes}
\end{figure}

We know that the first character of the $i$-th suffix is $F[i]$. How about the second, third, fourth character? Suppose $i = 3$, so the $i$-th suffix is $ANA\$$. The second character of $ANA\$$ is equal to the first character of $NA\$$, i.e. suffix $ANA\$$ without the first letter. The third character of $ANA\$$ is equal to the first character of $A\$$, i.e. suffix $NA\$$ without the first letter. This observation suggests to link suffixes: we create an array $\psi$ and define $\psi[i]$ to be the position $j$ such that, if we remove the first character from the $i$-th suffix, we obtain the $j$-th suffix. In the example above, $ANA\$$ is the 3rd suffix and $NA\$$ is the 6th suffix, so $\psi[3]=6$.  See Figure \ref{fig:psi}.

\begin{figure}[h!]
    \centering
    \begin{tabular}{ccccccccc}
        \gr {$i$}     & \gr = & \gr 1 & \gr 2 & \gr 3 & \gr 4 & \gr 5 & \gr 6 & \gr 7 \\\hline 
        \gr{$SA[i]$} & \gr = & \gr 7 & \gr 6 & \gr 4 & \gr 2 & \gr 1 & \gr 5 & \gr 3 \\\hline 
        $\psi[i]$ & = & - & 1  & 6  & 7  & 4  & 2  & 3  \\  \hline
        F & = & \$ & A  & A  & A  & B  & N  & N  \\  
                        &   &    & \gr \$ & \gr N  & \gr N  & \gr A  & \gr A  & \gr A  \\  
                        &   &    &    & \gr A  & \gr A  & \gr N  & \gr \$ & \gr N  \\  
                        &   &    &    & \gr \$ & \gr N  & \gr A  &    & \gr A  \\  
                        &   &    &    &    & \gr A  & \gr N  &    & \gr \$ \\  
                        &   &    &    &    & \gr \$ & \gr A  &    &    \\  
                        &   &    &    &    &    & \gr \$ &    &    \\  
    \end{tabular} \caption{The sorted suffixes of $S = BANANA\$$. Let's store just the first character (in black) of each suffix, in a string $F = \$AAABNN$.} \label{fig:psi}
\end{figure}

In Figure \ref{fig:psi} we showed again the suffix array SA to make it clear that array $\psi$ is just a transformation of SA. If we denote by $SA^{-1}$ the \emph{inverse suffix array}, i.e. the array such that $SA^{-1}[SA[i]] = i$, then we have that:

\begin{definition}\label{def: psi}
    For all $2\le i \le n$:
    $$
    \psi[i] = SA^{-1}[SA[i]+1]
    $$
\end{definition}

\begin{example}
    Let's see why the formula of Definition \ref{def: psi} is true. Given a position $i$ in the suffix array, the formula tells us to first extract the suffix array entry $SA[i]$, and add 1: $j = SA[i]+1$. This is equivalent to moving one position forward in the text or, equivalently, to remove the first character from suffix $S[SA[i], SA[i]+1, \dots]$, obtaining suffix $S[SA[i]+1, SA[i]+2, \dots]$. Finally, $SA^{-1}[j]$ is defined to be the position $z$ in SA such that $SA[z]=j$.

    For example, let's see how to compute $\psi[3]$ with the formula. We extract $SA[3]=4$, and add 1: $SA[3]+1 = 5$. Finally, $SA^{-1}[SA[3]+1] = SA^{-1}[5]$ is the position in SA containing number 5: this position is $SA[6]=5$, so we obtain $\psi[3] = 6$.
\end{example}

Arrays $\psi$ and $F$ are all we need to perform binary search. To see this, observe that they can be used to extract the $i$-th suffix in lexicographic order.

\begin{example}
    Let's see how to extract the 4-th suffix in lexicographic order (ANANA\$). By definition, $F[4] = A$ is the first character of this suffix. We then jump to position $\psi[4] = 7$. By definition of $\psi$, the 7-th suffix (NANA\$) is obtained from the 4-th suffix by removing the first letter. But then, $F[\psi[4]] = F[7] = N$ is the second letter of ANANA\$. This process allows us to extract the whole suffix ANANA\$:
    $$
    \begin{array}{cc}
                     & F[4] = A  \\
        \psi[4] = 7; & F[7] = N\\ 
        \psi[7] = 3; & F[3] = A\\ 
        \psi[3] = 6; & F[6] = N\\ 
        \psi[6] = 2; & F[2] = A\\ 
        \psi[2] = 1; & F[1] = \$\\ 
    \end{array}
    $$
\end{example}

In general, we can state:

\begin{lemma}
    The $k$-th letter ($k\ge 1$) of the $j$-th suffix (in lexicographic order) is 
    $$
    F[\psi^{k-1}[j]]
    $$
    where notation $\psi^{k}$ is defined recursively as follows:
    $$
    \begin{array}{ll}
        \psi^0[j] = j & \mathrm{and} \\
         \psi^k[j] = \psi[\psi^{k-1}[j]] & \mathrm{for}\ k>0 
    \end{array}
    $$
\end{lemma}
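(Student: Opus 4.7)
The plan is a straightforward induction on $k$, whose whole content is captured by one structural property of $\psi$ that I would extract up front: removing the first character from the $j$-th sorted suffix produces the $\psi[j]$-th sorted suffix. This is exactly how $\psi$ was defined in Definition \ref{def: psi} (and re-justified in the example following it), and it is the only fact about $\psi$ I will use.

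The base case $k=1$ is immediate from the definition of $F$: the first letter of the $j$-th sorted suffix is $F[j]$, and since $\psi^0[j]=j$ we have $F[\psi^0[j]] = F[j]$, as claimed.

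For the inductive step, assume the statement for some $k\ge 1$ and every valid index $j$. Fix $j$ and consider the $(k+1)$-th letter of the $j$-th sorted suffix. By the structural property of $\psi$ above, this letter coincides with the $k$-th letter of the $\psi[j]$-th sorted suffix (dropping the first character shifts every subsequent character one position to the left). Applying the inductive hypothesis with $j$ replaced by $\psi[j]$, that letter is $F[\psi^{k-1}[\psi[j]]]$, which by the recursive definition $\psi^k[j]=\psi[\psi^{k-1}[j]]=\psi^{k-1}[\psi[j]]$ equals $F[\psi^k[j]]$, completing the induction.

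The only subtlety — more a matter of careful bookkeeping than a real obstacle — is the boundary at position $1$: $\psi[1]$ is undefined because the lexicographically smallest suffix $\$$ has no character after its unique one. The lemma therefore implicitly restricts $k$ to be at most the length of the $j$-th suffix, and the chain $j,\psi[j],\psi^2[j],\dots$ stays inside $\{2,\dots,n\}$ until it lands on index $1$ exactly when we read the terminator $\$$. With that domain convention stated once, the induction above never dereferences $\psi[1]$ prematurely and the argument goes through verbatim.
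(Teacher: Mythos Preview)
Your induction is correct and is precisely the formalization the paper has in mind: the text does not give a separate proof of this lemma but states it immediately after the worked example that extracts $ANANA\$$ by iterating $\psi$ and reading $F$, which is exactly your structural property ``dropping the first character of the $j$-th suffix yields the $\psi[j]$-th suffix.'' Your treatment of the boundary at index $1$ is also appropriate and more careful than the paper, which leaves $\psi[1]$ as ``$-$'' in Figure~\ref{fig:psi} without comment.
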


Remember, from Section \ref{sec:suffix array}, that all we needed to run binary search on the suffix array was the ability to extract any suffix in lexicographic order. It follows that, if we have constant-time random access on $F$ and $\psi$, then we can solve query \emph{count} in the same time as in section \ref{sec:suffix array}: $O(m\log n)$. We now show how to compress arrays $F$ and $\psi$ while guaranteeing constant-time random access to them. 

\subsection{Compressing $F$ and $\psi$ and counting patterns}

String $F$ contains all characters of $S$, in alphabetic order. This string is therefore easy to compress. The idea is to store a bitvector $FO$ (first occurrence) marking with a bit '1' the first occurrence of each character in $F$, and build our constant-time rank data structure of Section \ref{sec:zero order bv} on $FO$. Then, we store in array $\Sigma$ all alphabet characters in alphabetic order (without duplicates), stored using $w$ bits each (or less; for example, on ASCII alphabet, 7 bits per character suffice). 
Note: array $\Sigma$ contains all the alphabet's characters, so we use for it the same symbol $\Sigma$ used to denote the alphabet.
Finally, we observe that $F[i] = \Sigma[FO.rank_1(i)]$. See Figure \ref{fig:FO} for an example.

    \begin{figure}[h!]
    \centering
    \begin{tabular}{ccccccccc}
        \gr {$i$}     & \gr = & \gr 1 & \gr 2 & \gr 3 & \gr 4 & \gr 5 & \gr 6 & \gr 7 \\\hline 
        \gr F & \gr= & \gr{\$} & \gr A  & \gr A  & \gr A  & \gr B  & \gr N  & \gr N  \\ \hline
        FO & = & 1 & 1 & 0 & 0 & 1 & 1 & 0 \\
        $\Sigma$ & = & \$ & A & B & N & & & 
    \end{tabular} \caption{Suppose we want to extract $F[4]=A$. Then, we compute $FO.rank_1(4) = 2$ and access $FO[FO.rank_1(4)] = FO[2] = A$. Similarly, to access $F[5]=B$ we compute $FO.rank_1(5) = 3$ and access $FO[FO.rank_1(5)] = FO[3] = B$. } \label{fig:FO}
\end{figure}

We obtain: 

\begin{lemma}
 $FO$ and $\Sigma$ use at most $n + o(n) + \sigma\cdot w = O(n + \sigma\log n)$ bits of space and allow extracting any $F[i]$ in constant time.
\end{lemma}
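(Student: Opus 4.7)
The plan is to verify the two claims of the lemma — the space bound and the $O(1)$-time access — by direct appeal to Theorem \ref{thm:bitvector} and Theorem \ref{thm:packed array}, together with a brief justification of the identity $F[i] = \Sigma[FO.rank_1(i)]$.

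For the space analysis, I would first observe that $FO$ is a bitvector of length exactly $n$, since it has one bit per position of $F$ (and $F$ has length $n$). Plugging $FO$ into Theorem \ref{thm:bitvector} yields an encoding of $n + o(n)$ bits that also supports $rank_1$ in $O(1)$ time. The auxiliary array $\Sigma$ stores the $\sigma$ distinct alphabet symbols in sorted order, each packed in $w$ bits, contributing $\sigma \cdot w$ bits. Summing gives the claimed $n + o(n) + \sigma \cdot w$; invoking the standing assumption $w = \Theta(\log n)$ then simplifies this to $O(n + \sigma \log n)$.

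For the extraction claim, I would first justify the formula $F[i] = \Sigma[FO.rank_1(i)]$. Since $F$ is obtained by sorting the characters of $\mathcal T$, equal characters are grouped into consecutive runs, one per distinct symbol, appearing in alphabetic order. The bitvector $FO$ places a '1' at the first position of each such run and a '0' elsewhere, so the number of '1's in $FO[1..i]$ is exactly the rank (counting from $1$) of the symbol occupying position $i$ in $F$ within the sorted list of distinct symbols — that is, its index in $\Sigma$. Given this, computing $F[i]$ reduces to one $rank_1$ query on $FO$ (constant time by Theorem \ref{thm:bitvector}) followed by one lookup in the packed array $\Sigma$ (constant time by Theorem \ref{thm:packed array}).

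I do not anticipate a real obstacle: this lemma is essentially an assembly step that repackages previously established tools. The only care needed is to keep the indexing convention consistent — verifying that $FO.rank_1(i)$ produces exactly the 1-based index into $\Sigma$, including at boundary positions $i$ that mark the start of a new run — and to note that the $\sigma \cdot w$ term is already absorbed into the final $O(n + \sigma \log n)$ bound under $w = \Theta(\log n)$.
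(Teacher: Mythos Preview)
Your proposal is correct and matches the paper's approach exactly: the paper states this lemma immediately after describing the construction of $FO$ and $\Sigma$ and the identity $F[i]=\Sigma[FO.rank_1(i)]$, leaving the space and time claims as direct consequences of Theorem~\ref{thm:bitvector} and the packed-array representation. You have simply made explicit the justification that the paper leaves to the reader.
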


\begin{remark}
    Note that ASCII and UNICODE alphabets are constant-sized (their size does not depend on the text length $n$), so in those cases $\sigma = O(1)$ and the space in the above lemma simplifies to $O(n)$.
\end{remark}

To compress $\psi$, we observe that it is \emph{piecewise increasing}. For $c\in \Sigma \setminus \{\$\}$, denote with $\psi_c$ the sub-array corresponding to positions $\psi[i]$ such that $F[i]=c$. 

\begin{example}
Continuing our example, we have:
$$
    \begin{array}{lll}
        \psi_A & = & 1,6,7 \\
        \psi_B & = & 4 \\
        \psi_N & = & 2,3 \\
    \end{array}
$$
Observe that each of these three arrays is increasing. 
\end{example}

To see why each $\psi_c$ is increasing, consider any two suffixes starting with the same letter, for example the 3-rd (ANA\$) and 4-th (ANANA\$). We want to show that $\psi[3]<\psi[4]$. Since $ANA\$ < ANANA\$$ and they start with the same letter, the order is preserved if we remove the first letter: $\xcancel{A}NA\$ < \xcancel{A}NANA\$$. By definition of $\psi$, suffixes $NA\$$ and $NANA\$$ are the $\psi[3]$-th and $\psi[4]$-th in lexicographic order. But then, $NA\$ < NANA\$$ is equivalent to $\psi[3]<\psi[4]$.

We are going to encode with an Elias-Fano data structure (Section \ref{sec:Elias-Fano}) each $\psi_c$. Amazingly, in this way we achieve zero-order compression:

\begin{lemma}
    The Elias-Fano data structures $EF(\psi_c)$ (for $c\in\Sigma$) use in total $nH_0 + \Theta(n)$ bits of space and allow retrieving any $\psi_c[i]$ in $O(1)$ time.
\end{lemma}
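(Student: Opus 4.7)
The plan is to apply Theorem \ref{thm:Elias-Fano} individually to each sub-array $\psi_c$ and then sum the resulting space bounds over all $c \in \Sigma$, matching the result against the definition of $H_0$.

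First I would establish two structural observations about $\psi_c$: (i) its length is exactly $n_c$, the number of occurrences of $c$ in $S$, because $F$ is the sorted concatenation of the characters of $S$ and hence contains $c$ exactly $n_c$ times in a contiguous block; and (ii) $\psi_c$ is a strictly increasing sequence of integers in the universe $\{1,\dots,n\}$, as shown just before the lemma (using that two distinct suffixes starting with the same character retain their lexicographic order when the leading $c$ is stripped). These two facts exactly satisfy the hypotheses of Theorem \ref{thm:Elias-Fano}: a sorted list of $n_c$ integers from a universe of size $n$. I would then invoke Theorem \ref{thm:Elias-Fano} to conclude that $EF(\psi_c)$ occupies $n_c \log(n/n_c) + \Theta(n_c)$ bits and supports random access in $O(1)$ time.

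The main step is the global space accounting. Summing over all characters $c \in \Sigma$ with $n_c > 0$ (skipping those with $n_c = 0$),
\[
\sum_{c\in\Sigma} \bigl(n_c \log(n/n_c) + \Theta(n_c)\bigr) \;=\; \sum_{c\in\Sigma} n_c \log(n/n_c) \;+\; \Theta\!\left(\sum_{c\in\Sigma} n_c\right).
\]
Since $\sum_c n_c = n$, the second term is $\Theta(n)$. The first term equals $n \cdot \sum_c (n_c/n) \log(n/n_c) = n H_0$ directly by definition of the zero-order empirical entropy. This gives the claimed bound $n H_0 + \Theta(n)$. For random access to $\psi_c[i]$, one just invokes the $O(1)$-time access of the corresponding Elias-Fano structure, which is immediate from Theorem \ref{thm:Elias-Fano}.

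The step I expect to require most care is ensuring that the per-character constant overheads (pointers/metadata for each of the $\sigma$ Elias-Fano structures, plus rounding in the $\lceil\cdot\rceil$ of Elias-Fano's prefix encoding) are safely absorbed into $\Theta(n)$. Under the standard assumption $\sigma \leq n$ and word size $w = \Theta(\log n)$, the total overhead $O(\sigma \log n)$ is bounded by $O(n \log n)$ in the worst case, but for the CSA setting (where $\sigma \ll n/\log n$, e.g. ASCII or DNA) it collapses into $O(n)$; otherwise one can pre-filter the $c$'s with $n_c=0$ so that there are at most $n$ nonempty $\psi_c$'s, each contributing $O(\log n)$ overhead, for a total of $O(n)$. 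No other subtlety arises, and the random-access claim is a direct consequence of Theorem \ref{thm:Elias-Fano}.
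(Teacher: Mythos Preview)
Your proposal is correct and follows essentially the same approach as the paper: apply Theorem~\ref{thm:Elias-Fano} to each $\psi_c$ (a sorted list of $n_c$ integers over universe of size $n$), sum the $n_c\log(n/n_c)+\Theta(n_c)$ bounds over $c\in\Sigma$, and recognize the first sum as $nH_0$ while $\sum_c n_c = n$ collapses the second into $\Theta(n)$. Your final paragraph about the $O(\sigma\log n)$ per-structure metadata overhead is unnecessary here: the paper does not absorb that term into this lemma but carries it explicitly as a separate $\Theta(\sigma\log n)$ addend in the final CSA space bound (Lemma~\ref{lem:CSA count} and Theorem~\ref{thm:CSA}).
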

\begin{proof}
    Let $n_c$ be the number of characters equal to $c$ in $S$. In our running example, $n_A = 3$, $n_B=1$, and $n_N = 2$. $\psi_c$ contains $n_c$ integers of size at most $n$ each. Using Theorem \ref{thm:Elias-Fano}, we obtain:
    $$
    \begin{array}{lll}
        \sum_{c\in \Sigma} \mathrm{bitsize}(EF(\psi_c))& \leq  & \sum_{c\in \Sigma} \big( n_c \log(n/n_c) + \Theta(n_c) \big) \\
        & =  & \sum_{c\in \Sigma} \big( n_c \log(n/n_c)\big) + \Theta(n) \\
        & =  & n\sum_{c\in \Sigma} \big( (n_c/n) \log(n/n_c)\big) + \Theta(n) \\
        & =  & nH_0 + \Theta(n) \\
    \end{array}
    $$
\end{proof}

The last thing to do is to show how to access $\psi[i]$ given arrays $\psi_c$. We are going to use again bitvector $FO$. Similarly to what we did to simulate string $F$, it is not hard to see that:

\begin{lemma}
    Let $j = FO.select_1(FO.rank_1(i))$ and $c=F[i]$. Then:
    $$
    \psi[i] = \psi_{c}[i-j+1] 
    $$
\end{lemma}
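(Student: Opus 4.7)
The plan is to unpack what $j$ means and then use the fact that $F$ is sorted to count occurrences of $c$. I would first argue that $j$ equals the position in $F$ of the \emph{first} occurrence of the character $c=F[i]$. Indeed, by construction $FO$ has a bit $1$ exactly at every first occurrence of a distinct character in $F$. The quantity $FO.rank_1(i)$ counts how many distinct characters (in alphabetic order) appear up to position $i$, which is exactly the alphabetic rank of $c$ in $\Sigma$. Applying $FO.select_1$ to this rank gives back the position of the $1$ that marks the first occurrence of $c$; hence $F[j]=c$ and $F[j']\neq c$ for all $j'<j$.

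Next I would use the key structural property of $F$: since $F$ lists the first characters of the suffixes in lexicographic order, all positions with $F[i]=c$ form a contiguous range starting at $j$. Consequently, if $F[i]=c$, then position $i$ is the $(i-j+1)$-th position (counting from $1$) at which $F$ equals $c$.

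Finally I would invoke the definition of $\psi_c$: $\psi_c$ is the subsequence of $\psi$ obtained by keeping only those entries $\psi[i']$ with $F[i']=c$, in the original order. Since $i$ is the $(i-j+1)$-th such index, by definition we get
\[
\psi[i] \;=\; \psi_c[i-j+1],
\]
which is the desired identity. The only non-routine step is the first one, i.e.\ verifying that $FO.select_1(FO.rank_1(i))$ correctly returns the starting position of $c$'s run in $F$; everything else is an immediate consequence of the sortedness of $F$ and the definition of $\psi_c$.
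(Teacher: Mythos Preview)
Your proposal is correct and follows exactly the reasoning the paper intends. In fact, the paper does not give a formal proof of this lemma at all; it only supplies a remark (``$j$ is the position of the bit `1' immediately preceding (or equal to) position $i$'') and a worked example, leaving the verification to the reader. Your write-up spells out precisely the three ingredients the paper takes for granted: that $j$ is the start of $c$'s run in $F$, that the run is contiguous because $F$ is sorted, and that $\psi_c$ is by definition the restriction of $\psi$ to that run.
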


\begin{remark}
    $j = FO.select_1(FO.rank_1(i))$ is the position of the bit '1' immediately preceding (or equal to) position $i$.
\end{remark}

\begin{example}
    Suppose we want to extract $\psi[4]$. Then, $F[4] = A$ (we know how to retrieve $F[4]$ using $FO$ and $\Sigma$), and $j = FO.select_1(FO.rank_1(4)) = FO.select_1(2) = 2$. 
    We obtain $\psi[4] = \psi_{A}[4 -2+1] = \psi_{A}[3] = 7$.
\end{example}

Putting everything together:

\begin{lemma}\label{lem:CSA count}
    The Compressed Suffix Array uses $nH_0 + \Theta(n) + \Theta(\sigma\log n)$ bits of space and supports \emph{count} queries in $O(m\log n)$ time. 
\end{lemma}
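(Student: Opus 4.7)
The plan is to assemble this lemma directly from the component bounds already established in the section, verifying that (i) all space terms add up correctly and (ii) binary search on the suffix array can indeed be carried out using only $F$ and $\psi$ with the claimed random-access costs.

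For the space bound, I would simply add the three contributions:
\begin{itemize}
    \item the bitvector $FO$ together with the alphabet array $\Sigma$ use $n + o(n) + \Theta(\sigma \log n)$ bits, by the lemma following Figure~\ref{fig:FO} and Theorem~\ref{thm:bitvector};
    \item the Elias--Fano structures $EF(\psi_c)$ for $c \in \Sigma$ use $nH_0 + \Theta(n)$ bits in total, by the preceding lemma that bounds $\sum_{c\in\Sigma}\mathrm{bitsize}(EF(\psi_c))$.
\end{itemize}
Summing and absorbing the $o(n)$ term into $\Theta(n)$ yields the claimed $nH_0 + \Theta(n) + \Theta(\sigma \log n)$ bits.

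For the query time, I would recall that $\mathtt{count}(P)$ is reduced to finding the SA interval of suffixes prefixed by $P$ via binary search, exactly as in Section~\ref{sec:suffix array}. The argument is that at each of the $O(\log n)$ binary-search steps we land on some index $i$ and must lexicographically compare $P$ with the prefix of length $m$ of the $i$-th suffix in sorted order. By the lemma giving $F[\psi^{k-1}[j]]$ for the $k$-th character of the $j$-th sorted suffix, each character of this prefix can be obtained by one access to $\psi$ followed by one access to $F$. Both accesses take $O(1)$ time: $F[i]$ is computed as $\Sigma[FO.rank_1(i)]$ using constant-time rank on $FO$ (Theorem~\ref{thm:bitvector}), and $\psi[i]$ is computed as $\psi_c[i - j + 1]$ with $c=F[i]$ and $j=FO.select_1(FO.rank_1(i))$, via constant-time operations on $FO$ and on the Elias--Fano structure $EF(\psi_c)$ (Theorem~\ref{thm:Elias-Fano}). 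Hence each binary-search step costs $O(m)$, and the total is $O(m \log n)$.

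The main obstacle, such as it is, is to be careful that the binary search can actually terminate within a comparison after reading at most $m$ characters (rather than continuing until a mismatch deeper in the suffix): once we have extracted the first $m$ characters of the sorted suffix, we already know whether the pattern is lexicographically $<$, $=$, or $>$ its prefix of length $m$, so we can halt the extraction. Everything else is a routine assembly of the preceding lemmas, so no further machinery is needed.
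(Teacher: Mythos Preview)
Your proposal is correct and matches the paper's approach exactly: the paper states this lemma with the phrase ``Putting everything together'' and no further argument, relying on precisely the component lemmas you cite (the $FO$/$\Sigma$ space bound, the Elias--Fano bound for the $\psi_c$'s, and the reduction of \emph{count} to binary search with $O(1)$-time access to $F$ and $\psi$). Your write-up simply makes explicit the assembly that the paper leaves implicit.
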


\begin{remark}
    Again, ASCII and UNICODE alphabets are constant-sized (their size does not depend on the text length $n$), so in those cases $\sigma = O(1)$ and the CSA uses $nH_0 + \Theta(n)$ bits of space.
\end{remark}

\subsection{Locating patterns and extracting substrings}\label{sec:locate,extract}

We are left to show how to locate all $occ$ occurrences of a pattern $P$ in the indexed string $S$, i.e. how to retrieve $SA[l], SA[l+1], \dots, SA[r]$ where $[l,r]$ is the suffix array range of all suffixes prefixed by $P$ and $occ = l-r+1$ (for instance, in our running example: if $P=``AN''$ then $l=2$ and $r=3$; if $P=``A''$ then $l=2$ and $r=4$). With Lemma \ref{lem:CSA count} we can find $l$ and $r$ by binary search in $O(m\log n)$ time and $nH_0 + \Theta(n)$ bits of space. We now show how to extract any $SA[i]$ in $O(\log n)$ time without changing asymptotically the space usage.

The idea is that function $\psi$ moves us \emph{forward} in the text by one position. For instance, in our running example: $SA[4]=2$, $SA[\psi[4]] = SA[7] = 3$, $SA[\psi[7]] = SA[3] = 4$, etc. Choose $\rho = \lceil \log n\rceil$; parameter $\rho$ is called the \emph{sample rate}. We explicitly store $SA[i]$ either if $SA[i]\mod \rho = 0$, or if $SA[i]=n$.
The sampled values of $SA$ are stored in an array SSA called the \emph{sampled suffix array}. We moreover use a bitvector $M$ to mark with a bit '1' the entries of SA that have been sampled. See Figure \ref{fig:SSA} for a running example.

\begin{figure}[h!]
    \centering
    \begin{tabular}{ccccccccc}
        \gr {$i$}     & \gr = & \gr 1 & \gr 2 & \gr 3 & \gr 4 & \gr 5 & \gr 6 & \gr 7 \\\hline 
        \gr{$SA[i]$} & \gr = & \gr 7 & \gr 6 & \gr 4 & \gr 2 & \gr 1 & \gr 5 & \gr 3 \\\hline 
        $M[i]$ & = & 1 & 1  & 0  & 0  & 0  & 0  & 1 \\\hline 
        $SSA[i]$ & = & 7 & 6 & 3  &  &   &   &  \\\hline 
        $\psi[i]$ & = & - & 1  & 6  & 7  & 4  & 2  & 3 
    \end{tabular} \caption{Sampled Suffix array SSA and bitvector M on our running example. We have $\rho = \lceil \log n\rceil = 3$, so we sample SA values 3,6 (divisible by $\rho = 3$) and $7=n$. Elements in gray are not stored.} \label{fig:SSA}
\end{figure}

The idea is that, if $M[i]=1$ then $SA[i]$ is sampled and we can retrieve it from SSA with our usual rank trick: 
$$
SA[i] = SSA[M.rank_1(i)]
$$

Otherwise ($M[i]=0$), we apply $\psi$ to $i$ for $k$ times until obtaining a value $j = \psi^k[i]$ such that $M[j]=1$. Since we sample every other $\rho = \Theta(\log n)$ entries of SA, we will apply $\psi$ at most $k \in O(\log n)$ times before finding a sampled value. 
The number $k\geq 0$ of times we apply $\psi$ is therefore the smallest integer such that $M[\psi^k[i]] = 1$. 
Applying $k$ times function $\psi$ means walking by $k$ positions forward on string $S$ so in general the formula to retrieve any $SA[i]$ is:

$$
SA[i] = SSA[M.rank_1(\psi^k[i])] - k,\ \mathrm{where\ }k\geq 0\mathrm{\ is\ the \ smallest\ integer\ s.t.\ }M[\psi^k[i]] = 1
$$

\begin{example}
    Suppose we want to extract $SA[5]=1$. We have $M[5]=0$, so we apply $\psi[5]=4$. Also $M[4]=0$, so
    we apply again $\psi[4]=7$. Now, $M[7]=1$ so we stop (we found a sampled SA value). Note that we applied $\psi$ in total $k=2$ times, so our formula gives:
    $$
    \begin{array}{lll}
        SA[5] & = & SSA[M.rank_1(\psi^2[5])] - 2 \\
              & = & SSA[M.rank_1(7)] - 2 \\
              & = & SSA[3] - 2 \\
              & = & 3 - 2 \\
              & = & 1
    \end{array}
    $$
\end{example}

Array $SSA$ contains $n/\rho + 1 = O(n/\log n)$ integers of $\log n$ bits each, so it uses $O(n)$ bits of space. Array $M$ occupies $n+o(n)$ bits. Each application of function $\psi$ costs $O(1)$ time, and to extract any $SA[i]$ we apply $\psi$ at most $O(\log n)$ times. It follows that we can extract any $SA[i]$ (i.e. solve \emph{locate queries}) in $O(\log n)$ time using in total $nH_0 + \Theta(n)$ bits of space. 

\begin{Exercise}\label{ex:extract}
    Show how to support query \emph{extract}$(i,\ell)$: given integers $i,\ell$, extract from the compressed index the substring $S[i, \dots, i+\ell-1]$. The query's running time must be $O(\log n + \ell)$ and the space must not increase:  $nH_0 + \Theta(n)$ bits in total.
    Hint: sample the array $SA^{-1}$, similarly to how we sampled $SA$ to support \emph{locate}.
\end{Exercise}

An index supporting query \emph{extract} of Exercise \ref{ex:extract} is called a \emph{self-index} because it does not need the string $S$ to be stored: the index is sufficient to solve pattern matching \emph{and} extracting any substring of $S$ efficiently.

We finally obtained our first \emph{full-text compressed self-index}:

\begin{theorem}\label{thm:CSA}
The Compressed Suffix Array (CSA) uses $nH_0 + \Theta(n) + \Theta(\sigma\log n)$ bits of space and supports: 
\begin{itemize}
    \item \emph{count} queries in $O(m\log n)$ time,
    \item \emph{locate} queries in $O((m+occ)\log n)$ time, and
    \item \emph{extract} queries (extract any substring of length $\ell$ of $S$) in $O(\log n+\ell)$ time. 
\end{itemize}
\end{theorem}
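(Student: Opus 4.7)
The theorem is essentially an assembly of facts that have already been set up, so my plan is to verify each of the three query bounds by pointing to the right preceding lemma and checking that the extra data structures needed for locate and extract do not blow up the space.

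The count bound and the base space bound $nH_0 + \Theta(n) + \Theta(\sigma \log n)$ are exactly the content of Lemma \ref{lem:CSA count}, which already packages up (i) the encoding of $F$ via $FO$ and $\Sigma$, (ii) the piecewise-increasing decomposition of $\psi$ into the arrays $\psi_c$, each stored with Elias-Fano (Theorem \ref{thm:Elias-Fano}), and (iii) the binary-search simulation using constant-time access to $F$ and $\psi$. So for the first bullet I would simply cite Lemma \ref{lem:CSA count}.

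For locate, I would add the sampled suffix array $SSA$ and marker bitvector $M$ from Section \ref{sec:locate,extract} with sample rate $\rho = \lceil \log n\rceil$. The formula $SA[i] = SSA[M.rank_1(\psi^k[i])] - k$, with $k$ the smallest nonnegative integer such that $M[\psi^k[i]] = 1$, terminates after $k < \rho = O(\log n)$ steps because every $\rho$ consecutive text positions contain a sampled one. Hence each $SA[i]$ is retrieved in $O(\log n)$ time using $O(1)$ evaluations of $\psi$ per step. The combined cost is $O(m\log n)$ for locating the range $[l,r]$ plus $O(occ \cdot \log n)$ for extracting the $occ$ entries, giving the claimed $O((m+occ)\log n)$. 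Space-wise, $SSA$ stores $O(n/\log n)$ integers of $\log n$ bits and $M$ uses $n+o(n)$ bits (Theorem \ref{thm:bitvector}), both absorbed into the $\Theta(n)$ term.

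For extract I would follow the hint of Exercise \ref{ex:extract} and sample the inverse suffix array: store $SA^{-1}[k\rho]$ for $k=0,1,\dots$ in an array $SSA^{-1}$, together with the positions of the sampled entries. To extract $S[i,\dots,i+\ell-1]$, locate the largest sampled text position $i' \le i$ (so $i - i' < \rho$), read $j' = SA^{-1}[i']$ from $SSA^{-1}$, and apply $\psi$ exactly $i-i'$ times to obtain $j = SA^{-1}[i] = \psi^{i-i'}[j']$. Then output $F[j], F[\psi[j]], F[\psi^2[j]], \dots, F[\psi^{\ell-1}[j]]$, using the constant-time access to $F$ via $FO$ and $\Sigma$. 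The total number of $\psi$-applications is $O(\rho + \ell) = O(\log n + \ell)$, each costing $O(1)$ via the Elias-Fano representations of the $\psi_c$'s, and the extra space is $O(n/\rho)\cdot O(\log n) = O(n)$ bits, absorbed in $\Theta(n)$.

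The only mildly delicate step is justifying the extract procedure: one has to argue correctly that $SA^{-1}[i+1] = \psi[SA^{-1}[i]]$ (which is the contrapositive content of Definition \ref{def: psi}) so that walking forward in the text corresponds to iterating $\psi$ in the suffix-array domain. Once that identity is in hand, the remaining work is purely bookkeeping: summing the $O(\log n)$ setup cost with the $O(\ell)$ character-by-character scan, and verifying that neither $SSA$, $M$, nor $SSA^{-1}$ exceeds $\Theta(n)$ bits, which follows from the $\rho = \Theta(\log n)$ sampling rate. No new ideas beyond those already developed in Section \ref{sec:locate,extract} are required.
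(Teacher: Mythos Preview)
Your proposal is correct and follows exactly the approach the paper takes: the theorem is a summary statement assembled from Lemma~\ref{lem:CSA count} (count and base space), the $SSA$/$M$ sampling scheme of Section~\ref{sec:locate,extract} (locate), and the $SA^{-1}$ sampling left as Exercise~\ref{ex:extract} (extract). You have in fact spelled out the extract argument more explicitly than the paper does, including the key identity $SA^{-1}[i+1]=\psi[SA^{-1}[i]]$, which is indeed the right observation.
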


\begin{remark}
    Again, ASCII and UNICODE alphabets are constant-sized (their size does not depend on the text length $n$), so in those cases $\sigma = O(1)$ and the CSA uses $nH_0 + \Theta(n)$ bits of space.
\end{remark}

More advanced techniques \cite{CSA03} work for any alphabet size $\sigma$ and can reduce the space to $nH_k + o(n\log\sigma)$ bits, for an appropriate entropy order $k$.

\section{The FM-index}\label{sec:FMI}

In this section we present an index of size $nH_k + o(n\log\sigma)$ bits supporting faster queries than the CSA.
We need two ingredients:

\begin{itemize}
    \item Wavelet trees (Section \ref{sec:WT}): given a string $S$ on arbitrary alphabet, $WT(S)$ is an elegant data structure supporting \emph{rank, select, access}, and other queries on $S$.
    \item The Burrows-Wheeler transform (Section \ref{sec:BWT}). $BWT(S)$ is a magic permutation of $S$ with many interesting properties. In particular:
    \begin{enumerate}
        \item local zero-order entropy compression of $BWT(S)$ is equivalent to high-order entropy compression of $S$, and 
        \item \emph{count} queries on $S$ can be reduced to simple operations on $BWT(S)$ (mainly \emph{rank}).
    \end{enumerate}
\end{itemize}

The FM-index of $S$ is simply $WT(BWT(S))$. This data structure (explained in Section \ref{subsec:FMI}) achieves high-order entropy compression and supports \emph{count} queries on $S$. Adding the same additional structures of Section \ref{sec:locate,extract}, we will finally gain also \emph{locate} and \emph{extract} queries.

\subsection{Wavelet Trees: WT}\label{sec:WT}

In this section we generalize the result obtained in Section \ref{sec:zero order bv}  to strings on \emph{general alphabets}. More formally, we want to design a data structure satisfying these bounds:

\begin{definition}[zero-order string data structure]
    Given a string $S \in \Sigma^n$, a zero-order string data structure is a structure using $nH_0(S) + o(n\log\sigma)$ bits (where $\sigma = |\Sigma|$) and supporting efficiently the following queries:
    \begin{enumerate}
        \item Random access: extract any symbol $S[i]$
        \item Rank: $S.rank_c(i)$ is the number of characters equal to $c\in\Sigma$ in the prefix $S[1,\dots, i]$.
        \item Select: $S.select_c(i)$ is the position in $S$ of the $i$-th character equal to $c\in\Sigma$.
    \end{enumerate}
\end{definition}

\emph{Wavelet trees}, introduced by Grossi et al.~\cite{CSA03},  are an extremely elegant data structure that permits to reduce the above (and many other) queries to rank/select/access queries on bitvectors (a problem that we learned how to solve in Section \ref{sec:zero order bv}).  

The wavelet tree depends on a prefix-free encoding $\gamma:\Sigma \rightarrow \{0,1\}^+$ of the alphabet. 
In a wavelet tree, the string is represented as a binary tree with $\sigma$ leaves. 
Each internal node has at most two outgoing edges, labeled '0' and '1'.
The concatenated labels of the edges on the path from the root to the leaf corresponding to character $c$ yield $\gamma(c)$.
Each internal node of the tree stores a bitvector data structure.
We describe the wavelet tree data structure in Example \ref{ex: WT1} and Figure \ref{fig:wt}:

\begin{figure}[h!]
		\begin{center}
			\includegraphics[width=0.5\textwidth]{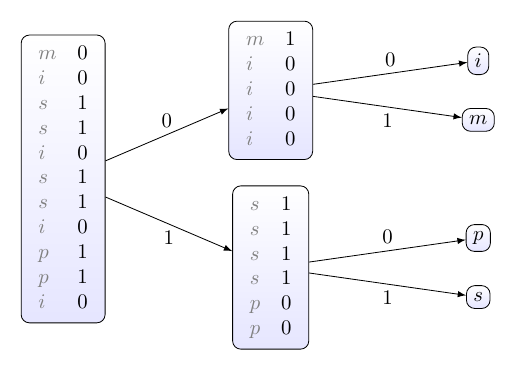}
		\end{center}\caption[Balanced wavelet tree]{Wavelet tree of the string $mississippi$. The $i$-th level stores the $i$-th bit in the encoding of the characters. 
        Since we use a balanced encoding ($\log\sigma$ bits per character), the wavelet tree is balanced and has height $\log\sigma$.
        There are $\sigma$ leaves, each labeled with a character. Characters in gray are shown just for clarity but are not explicitly stored in the data structure.}\label{fig:wt}
\end{figure}
 
\begin{example}\label{ex: WT1}
We show how to build the WT of string $S = mississippi$. We need any prefix-free binary encoding of the alphabet; here we choose $\gamma(i) = 00$, $\gamma(m) = 01$, $\gamma(p) = 10$, $\gamma(s) = 11$. Figure \ref{fig:wt} depicts $WT(S)$. The bitvector in the root is formed by the first bit of the binary encodings of the string's characters. The root has two outgoing edges, labeled '0' and '1'. We send to child '0' all string's characters whose encoding starts with a '0', and to child '1' all string's characters whose encoding starts with a '1'. We repeat recursively: in the second level, we extract the second bit of the encoding of every character and use this bit to decide if the character goes to child '0' or '1'. In general, in level $i$ ($i=1$ is the root) the bitvectors store the $i$-th bit in the encodings of the characters.

\end{example}

\begin{remark}
    We \emph{do not} store characters (shown in gray in the figure) inside internal nodes. In Example \ref{ex: WT1}, characters are shown inside internal nodes only for explanatory purposes.
\end{remark}

We now show how to solve \emph{access} and \emph{rank} queries on $WT(S)$. 
We will not need \emph{select} in these notes, so we do not show how to solve it (exercise: do it yourself).

We use the following notation: $root$ is the root node of the WT, $N.bitvector$ is the bitvector associated with the node $N$, and $N.child(b)$, with $b\in\{0,1\}$ is the node child of $N$ reached following the edge labeled with bit $b$ from node $N$. If $N$ is a leaf, $N.label$ is the character stored in the leaf.

\paragraph{Random access}

To extract $S[i]$ using $WT(S)$, start from the bitvector $root.bitvector$ at the root of the wavelet tree.  $root.bitvector[i]$ is the first bit of $\gamma(S[i])$. 
Note that the bitvector $root.child(0).bitvector$ 
in child '0' of the root
stores the second bit of the encoding of characters in $S$ whose encoding starts with bit '0' (similarly for bitvector $root.child(1).bitvector$). Let $b = root.bitvector[i]$. It follows that the second bit of $\gamma(S[i])$ is $root.child(b).bitvector[root.bitvector.rank_b(i)]$. By recursively repeating this strategy, we finally reach the leaf of $WT(S)$ labeled with character $S[i]$. Since we descend the tree from the root to a leaf and spend only constant time at each level (one access and one rank), running time is $O(|\gamma(S[i])|)$ (number of bits of the encoding of $S[i]$). If the tree is balanced, this running time is $O(\log\sigma)$.\\

\begin{algorithm}[H]\label{alg: WT access}
	\footnotesize
	\caption{access $S[i]$}
	\SetKwInOut{Input}{input}
	\SetKwInOut{Output}{output}
	\SetSideCommentLeft
	\LinesNumbered
	
	\BlankLine
	
	$N\leftarrow root$\;
	
	\While{$N\ is\ not\ leaf$}{
	
		$B \leftarrow N.bitvector$\;
		$b \leftarrow B[i]$\;
		$N \leftarrow N.child(b)$\;
		$i \leftarrow B.rank_b(i)$\;
		
	}
	
	\Return $N.label$\;
	
\end{algorithm}

\paragraph{Rank}

Rank does not differ much from access. Let us start from bitvector $root.bitvector$ at the root of the tree. In order to answer $S.rank_c(i)$ using $WT(S)$, first count how many characters in $S$ start with bit $b = \gamma(c)[1]$ (first bit of $\gamma(c)$) before position $i$: $i' = root.bitvector.rank_b(i)$. Then, move at position $i'$ of the bitvector $root.child(b).bitvector$ and repeat with the second bit of $\gamma(c)$. By recursively repeating this strategy for all bits of $\gamma(c)$, we reach the leaf of $WT(S)$ labeled with character $c$. The last rank operation performed at the parent of this leaf yields exactly the value $S.rank_c(i)$. The running time of this procedure is $O(|\gamma(c)|)$. If the tree is balanced, this running time is $O(\log\sigma)$.\\

\begin{algorithm}[H]\label{alg: WT rank}
	\footnotesize
	\caption{$rank_c(i)$}
	\SetKwInOut{Input}{input}
	\SetKwInOut{Output}{output}
	\SetSideCommentLeft
	\LinesNumbered
	
	\BlankLine
	
	$N\leftarrow root$\;
	$k\leftarrow 1$\;
	
	\While{$N\ is\ not\ leaf$}{
		
		$B \leftarrow N.bitvector$\;
		$b \leftarrow \gamma(c)[k]$;
		
		$i \leftarrow B.rank_b(i)$\;
		
		$N \leftarrow N.child(b)$\;
		$k\leftarrow k+1$\;
		
	}
	
	\Return $i$\;
	
\end{algorithm}

\paragraph{Space}

We now analyze the space of the WT data structure.

Let  $n'$ be the total number of bits in the bitvectors of $WT(S)$.
The space used by the wavelet tree is equal to the space used by the data structure chosen for storing those bitvectors, plus $O(\sigma\log n)$ bits for the tree topology and the characters stored in the leaves. 
If we use a balanced encoding (like in the previous example), then $n' = n\log\sigma$. 
We are going to build the data structure of Theorem \ref{thm:bitvector} (supporting rank/select/access) on each bitvector in $WT(S)$. Remember that this data structure uses at most $n' + o(n')$ bits. Since $n' = n\log\sigma$, we immediately obtain that the data structure $WT(S)$ uses \emph{at most} $n' + o(n') + \Theta(\sigma\log n) \leq n\log\sigma +o(n\log\sigma) + \Theta(\sigma\log n)$ bits. 

Remarkably, it is possible to improve this bound. Zero-order compression of the bitvectors \emph{propagates to zero-order compression of $S$}:

\begin{lemma}\label{lem:zero-order WT}
    Using the RRR bitvector of Theorem \ref{thm:bitvector} to store the bitvectors of $WT(S)$, the structure $WT(S)$ uses in total $n H_0(S) + o(n\log\sigma) + \Theta(\sigma\log n)$ bits of space and answers all queries in $O(\log\sigma)$ time.
\end{lemma}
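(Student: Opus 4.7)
The plan is to separate the argument into the space bound and the query-time bound, with essentially all the work going into the space bound.  For the query time, each of access and rank descends from root to a leaf, performing one bitvector access or one bitvector rank per level via Theorem \ref{thm:bitvector}, so the cost is $O(1)$ per level.  Since the balanced encoding has depth $\lceil \log \sigma\rceil$, both queries run in $O(\log\sigma)$ time.

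For the space bound, let $V$ denote the set of internal nodes of $WT(S)$, and for $v\in V$ let $B_v$ be the bitvector at $v$, of length $n_v$, with $n_v^0$ zeros and $n_v^1$ ones.  Applying Theorem \ref{thm:bitvector} to each $B_v$ with a uniform RRR block size $b=\Theta(\log n)$, we spend $n_v H_0(B_v) + O(n_v \log\log n / \log n) + O(\log n)$ bits per node.  Summing over $v\in V$, and using that each character of $S$ contributes exactly one bit to each of the $\lceil\log\sigma\rceil$ levels of the tree, we have $\sum_{v\in V} n_v = n\lceil\log\sigma\rceil$.  Hence the lower-order part aggregates to $O((n\log\sigma)\cdot \log\log n/\log n) = o(n\log\sigma)$, and the per-node $O(\log n)$ padding plus the tree topology and leaf labels give the $\Theta(\sigma\log n)$ term.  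What remains is the key identity
\begin{equation*}
\sum_{v\in V} n_v H_0(B_v) \;=\; n H_0(S).
\end{equation*}

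The hard part, and really the heart of the lemma, is proving this identity.  The approach I would take is a telescoping argument along root-to-leaf paths.  Fix a character $c\in\Sigma$ with frequency $n_c$, and let $v_0,v_1,\dots,v_{\log\sigma}$ be the path from the root $v_0$ to the leaf labelled $c$, with bit $\beta_i\in\{0,1\}$ labelling the edge from $v_i$ to $v_{i+1}$.  By construction of the wavelet tree, $n_{v_{i+1}} = n_{v_i}^{\beta_i}$ and $n_{v_0}=n$, $n_{v_{\log\sigma}}=n_c$, so
\begin{equation*}
\frac{n_c}{n} \;=\; \prod_{i=0}^{\log\sigma - 1} \frac{n_{v_i}^{\beta_i}}{n_{v_i}},
\qquad\text{hence}\qquad
\log\frac{n}{n_c} \;=\; \sum_{i=0}^{\log\sigma - 1}\log\frac{n_{v_i}}{n_{v_i}^{\beta_i}}.
\end{equation*}
Multiplying by $n_c$ and summing over $c\in\Sigma$ gives $nH_0(S)$ on the left.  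On the right, I would swap the order of summation and group contributions by the node $v$ they pass through; characters whose path passes through $v$ and then takes the bit-$b$ edge contribute a total frequency of exactly $n_v^b$.  Thus the right-hand side becomes $\sum_{v\in V} \bigl(n_v^0\log(n_v/n_v^0) + n_v^1\log(n_v/n_v^1)\bigr) = \sum_{v\in V} n_v H_0(B_v)$, which is the desired identity.

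Combining the identity with the per-node RRR bound yields $nH_0(S) + o(n\log\sigma) + \Theta(\sigma\log n)$ bits overall, completing the proof.  The main subtlety to be careful about is uniformly picking the RRR block size in terms of the global $n$ rather than the local $n_v$, so that the $o(\cdot)$ terms of short bitvectors (near the leaves) do not blow up; with that choice, the aggregation of lower-order terms against $\sum_v n_v = n\log\sigma$ is straightforward.
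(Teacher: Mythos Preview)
The paper does not actually prove this lemma; it simply defers to \cite[Sec.~6.2.4]{navarro2016compact}.  Your argument is correct and is in fact the standard proof one finds in that reference: the telescoping identity $\sum_{v\in V} n_v H_0(B_v) = nH_0(S)$ via root-to-leaf paths is exactly how Navarro establishes zero-order compression of the wavelet tree.  Your care about choosing the RRR block length with respect to the global $n$ (rather than the local $n_v$) is the right technical point to flag; without it, the per-node $o(n_v)$ redundancy does not aggregate cleanly, and the shared lookup table cannot be amortized.  One small omission: you argue the $O(\log\sigma)$ time only for access and rank, while the lemma says ``all queries''; select also takes $O(\log\sigma)$ by ascending leaf-to-root with one bitvector select per level, which you could add in one line for completeness.
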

\begin{proof}
The proof works inductively: take a node $x$ of the wavelet tree. The node has two children (left/right). Assuming inductively that the sub-Wavelet trees of the two children achieve zero-order compression of their respective sequences, it is possible to prove that also the sub-Wavelet tree rooted in $x$ achieves zero-order compression of its associated sequence. The inductive assumption holds in the lowest nodes, which contain a RRR bitvector (zero-order compressed). Propagating the induction up to the root, this will prove the main claim: the sub-Wavelet tree rooted in the root achieves zero-order compression.  

We show a simplified proof on a Wavelet tree of 2 levels of internal nodes (i.e. ignoring leaves): the root and two children, each containing a RRR bitvector. 
Extending the proof to an arbitrary number $\log\sigma$ of levels is then easy.
The whole Wavelet tree encodes a sequence $S \in \Sigma^n$ over alphabet $\Sigma = \{0,1,2,3\}$, encoded in binary as usual using 2 bits per character. Let $N_0$ and $N_1$ be the number of characters in $S$ whose binary encoding starts with $0$ and $1$, respectively. Observe that $n=N_0+N_1$, and that $N_i$ is the number of bits in the $i$-th children ($i\in\{0,1\}$). 
Let moreover $m_i$  be the number of 1's in children $i\in\{0,1\}$.

Since the $i$-th children, for $i\in\{0,1\}$, is zero-order compressed, it uses
$$
m_i \log\frac{N_i}{m_i} + (N_i-m_i)\log\frac{N_i}{N_i-m_i} + o(N_i)\ \ \ \mathrm{bits}
$$

Summed over the two children, we obtain that the RRR bitvectors in the two children use in total

\begin{equation}\label{eq:child wt zero ord}
\sum_{i\in\{0,1\}} \left( m_i \log\frac{N_i}{m_i} + (N_i-m_i)\log\frac{N_i}{N_i-m_i} \right) + o(n) \ \ \ \mathrm{bits}
\end{equation}

The bitvector in the root, on the other hand, has $N_0$ zeros and $N_1$ ones by definition of Wavelet tree. It uses therefore 

$$
N_0 \log\frac{n}{N_0} + N_1 \log\frac{n}{N_1} + o(n)\ \ \ \mathrm{bits}
$$

Note that we can rewrite $N_i = m_i + (N_i-m_i)$. Substituting this into the previous formula, we obtain that the bitvector in the root uses

\begin{equation}\label{eq:root wt zero ord}
 \sum_{i\in \{0,1\}} \left( m_i \log\frac{n}{N_i} + (N_i-m_i)\log\frac{n}{N_i} \right) + o(n)\ \ \ \mathrm{bits}   
\end{equation}

Summing up quantities \ref{eq:child wt zero ord} and \ref{eq:root wt zero ord} we obtain that the Wavelet tree uses in total

\begin{equation}\label{eq:wt zero ord}
 \sum_{i\in \{0,1\}} \left( m_i \log\frac{n}{m_i} + (N_i-m_i)\log\frac{n}{N_i-m_i} \right) + 2\cdot o(n)\ \ \ \mathrm{bits}   
\end{equation}

of course, the constant 2 multiplying $o(n)$  disappears in asymptotic notation. However, extending the proof to trees of arbitrary depth $\log \sigma$, it is easy to see that this term becomes $o(n\log\sigma)$.

Finally, observe that the four terms of the summation in Equation \ref{eq:wt zero ord} correspond precisely to the frequencies of the four characters in $\Sigma = \{0,1,2,3\}$: 

\begin{itemize}
    \item $N_0-m_0$, the number of zeros in the left children, is precisely the number of characters whose encoding is 00, i.e. the number $n_0$ of $0$'s in sequence $S$. 
    \item $m_0$, the number of ones in the left children, is precisely the number of characters whose encoding is 01, i.e. the number $n_1$ of $1$'s in sequence $S$.
    \item $N_1-m_1$, the number of zeros in the right children, is precisely the number of characters whose encoding is 10, i.e. the number $n_2$ of $2$'s in sequence $S$. 
    \item $m_1$, the number of ones in the right children, is precisely the number of characters whose encoding is 11, i.e. the number $n_3$ of $3$'s in sequence $S$.
\end{itemize}

It follows that Equation \ref{eq:wt zero ord} is equivalent to: 

$$
\sum_{c\in \Sigma} n_c\log\frac{n}{n_c} + 2\cdot o(n)\ \ \ \mathrm{bits}
$$

That is, the zero-order entropy of $S$. Generalizing the argument to Wavelet trees of height $\log \sigma$, we obtain the main claim. 
\end{proof}

To sum up, we obtained:

\begin{theorem}\label{thm:zero-WT}
    Given a string $S\in \Sigma^n$, the balanced wavelet tree $WT(S)$ described in this section uses $nH_0(S) + o(n\log\sigma) + \Theta(\sigma\log n)$ bits of space and supports queries \emph{access}, \emph{rank}, and \emph{select} in $O(\log\sigma)$ time.
\end{theorem}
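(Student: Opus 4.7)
The plan is to prove the two halves of the statement --- the time bound and the space bound --- separately, since the running-time claim follows almost directly from the algorithms already given in the section, while the space bound is the part that needs real work.

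First I would address the time bound. The pseudocode for \emph{access} and \emph{rank} (Algorithms \ref{alg: WT access} and \ref{alg: WT rank}) performs exactly one rank (and one bit extraction) at each node visited along the root-to-leaf path corresponding to $\gamma(S[i])$ or $\gamma(c)$. With the balanced encoding $|\gamma(c)| = \log\sigma$ for every $c \in \Sigma$, so the tree has height $\log \sigma$ and the traversal visits $O(\log\sigma)$ nodes. Since the bitvectors are represented with the structure of Theorem \ref{thm:bitvector}, each rank/access costs $O(1)$, giving an overall $O(\log \sigma)$ cost. The corresponding bottom-up argument handles \emph{select}.

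For the space bound I would decompose the total cost into three contributions: (i) the tree topology (pointers to children, pointers from leaves to alphabet symbols stored in $\Sigma$), which contributes $\Theta(\sigma\log n)$ bits; (ii) the $o(\cdot)$ overheads from Theorem \ref{thm:bitvector} applied at each node; and (iii) the dominant $\sum_{v} n_v H_0(B_v)$ term, where the sum ranges over internal nodes $v$, $n_v$ is the length of the bitvector stored at $v$, and $B_v$ is that bitvector. The balanced encoding guarantees $\sum_v n_v = n\log\sigma$, so summing the $o(n_v)$ terms over all nodes yields $o(n\log\sigma)$, handling (ii).

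The heart of the proof is showing $\sum_{v} n_v H_0(B_v) = n H_0(S)$. I would prove this by induction on the tree, via the chain rule of entropy. For the root, let $p_b = n_{root}^b/n$ be the fraction of characters in $S$ whose code starts with bit $b$; then $n H_0(B_{root}) = n(-p_0 \log p_0 - p_1 \log p_1)$. For each character $c$ with first code bit $b$, the fraction of $c$'s among the $n_{root}^b = n p_b$ characters descending into the child labeled $b$ is $n_c/(n p_b)$, so by induction the subtree rooted at $root.child(b)$ contributes $n p_b \cdot \sum_{c:\gamma(c)[1]=b}(n_c/(n p_b))\log(np_b/n_c)$ bits. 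Summing over $b$ and telescoping the $\log p_b$ factors with the root's contribution yields exactly $\sum_{c\in\Sigma} n_c \log(n/n_c) = n H_0(S)$, as required.

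The main obstacle will be handling this telescoping cleanly, together with the rounding/ceiling slack in Theorem \ref{thm:bitvector}'s statement ($nH_0 + o(n)$ allowing constants and lower-order terms at each node). I would bound the ceiling slack by a constant per bitvector, absorb the $\Theta(\sigma)$ overhead coming from such constants (one per internal node) into $\Theta(\sigma\log n)$, and confirm that all $o(n_v)$ terms can be uniformly upper-bounded by $f(n_v) \in o(n_v)$ with enough structure to let $\sum_v f(n_v) \in o(n \log \sigma)$. Once these technicalities are handled, combining the three contributions gives the claimed $nH_0(S) + o(n\log\sigma) + \Theta(\sigma\log n)$-bit bound.
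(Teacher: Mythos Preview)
Your proposal is correct and, in fact, supplies more detail than the paper itself: the paper does not prove the space bound at all but simply states it as Lemma~\ref{lem:zero-order WT} and defers to \cite[Sec.~6.2.4]{navarro2016compact}. The argument you sketch --- the entropy chain rule / telescoping identity showing $\sum_v n_v H_0(B_v) = nH_0(S)$ --- is precisely the standard proof one finds in that reference, so your approach is the intended one.

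One small technical point worth tightening: your concern about whether $\sum_v o(n_v) \in o(n\log\sigma)$ is legitimate, since ``$o(\cdot)$'' is not additive in general. The clean fix is to use a \emph{single global} block size $b = \Theta(\log n)$ (with $n = |S|$) in the RRR structure at every node, rather than a local $b_v = \Theta(\log n_v)$. Then the redundancy at node $v$ is $O(n_v \log\log n / \log n)$ uniformly, the tables $T$ and $R$ are shared across all nodes, and summing over all nodes gives $O(n\log\sigma \cdot \log\log n/\log n) = o(n\log\sigma)$ directly. With that adjustment your plan goes through without obstacles.
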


Let $S_1, \dots, S_t \in \Sigma^*$ be $t$ strings. 
What happens when building the Wavelet tree on the concatenation $S_1\dots S_t$ of the $t$ strings? It is not hard to see that the bitvectors on corresponding nodes (i.e. nodes reached by reading on the tree's edges the same sequence of bits from the root) of the $t$ Wavelet trees of $S_1, \dots, S_t$ get concatenated. 
Since the Wavelet tree has $O(\sigma)$ nodes, Lemma \ref{lem:sum RRR}, directly gives us:

\begin{lemma}\label{lem:sum WT}
    Denote with $\mathrm{WT\_size}(S) = |S| H_0(S) + o(|S|\log\sigma) + O(\sigma\log|S|)$ the bit-size of the Wavelet tree (Theorem \ref{thm:zero-WT}) built on string $S\in \Sigma^*$.
    Let $S_i \in \Sigma^{n_i}$, for $i=1, \dots, t$ be $t$ strings of total length $n = \sum_{i=1}^t n_i$. Then:
    $$
    \mathrm{WT\_size}(S_1S_2\dots S_t) \le \left(\sum_{i=1}^t n_iH_0(S_i)\right) + o(n\log\sigma) + O(t\sigma \log n)
    $$
\end{lemma}

This lemma will be crucial to analyze the size of the FM-index in Section \ref{subsec:FMI}.

\subsubsection{Using another encoding}

In the previous paragraphs, we used a balanced encoding assigning  $\log\sigma$ bits to each character. 
The techniques we described,  however, work with any prefix-free encoding of the alphabet. 

Even if in the FM-index of Section \ref{subsec:FMI} we will use a balanced encoding (so you can safely ignore this paragraph), it is interesting to see what happens when other encodings are used.

The Huffman encoding represents an interesting case: with this encoding, the total number of bits stored in the bitvectors of the wavelet tree is at most $n(H_0+1)$. This number is compressed, even if we do not use zero-order compressed bitvectors inside the internal nodes of the tree! 

\begin{example}
Consider again string $S = mississippi$. Use the Huffman encoding $\gamma(i) = 01$, $\gamma(m) = 001$, $\gamma(p) = 000$, $\gamma(s) = 1$. Figure \ref{fig:hwt} depicts $WT(S)$ with the shape of this encoding.
	\begin{figure}[ht!]
		\begin{center}
			\includegraphics[width=0.65\textwidth]{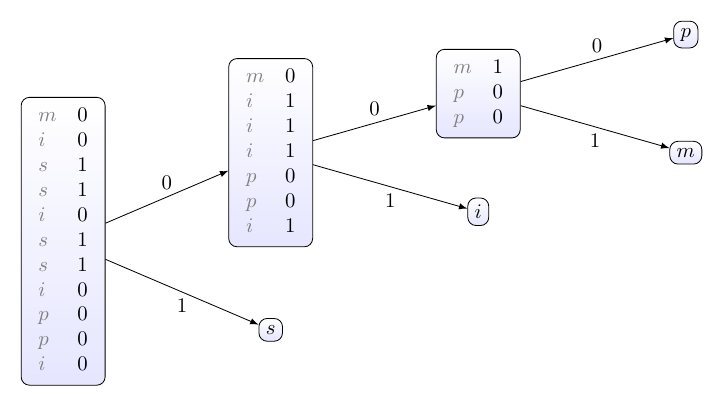}
		\end{center}\caption[Huffman-shaped wavelet tree]{Huffman-shaped wavelet tree of the string $mississippi$. The depth of each leaf is inversely proportional to the frequency of its label. This means that if we access $S[i]$ at a uniform position $i$, on average we traverse $H_0+1$ levels of the wavelet tree. On average, this is less than the $\log\sigma$ levels that we always traverse in balanced wavelet trees (but it can be larger than $\log\sigma$ in the worst case).}\label{fig:hwt}
	\end{figure}
\end{example}

\subsection{The Burrows-Wheeler Transform: BWT}\label{sec:BWT}

In 1994, Michael Burrows and David Wheeler~\cite{burrows1994block} introduced the following string permutation, today known as the \emph{Burrows-Wheeler transform}.
In this section we will assume that the string $S\in \Sigma^n$ ends with a special character $\$$ being alphabetically smaller than all the other characters and not appearing anywhere else in $S$.

\begin{enumerate}
    \item Sort all circular rotations of $S$. This yields a matrix: the BWT matrix of $S$. See Figure \ref{fig:BWT example} for an example.
    \item Concatenate in a new string $BWT(S)$ the last character of each rotation of $S$, in sorted order.
\end{enumerate}

    \begin{figure}[h!]
        \centering
        \setlength{\tabcolsep}{3pt}
        \renewcommand{\arraystretch}{1.15}
        \begin{tabular}{c|cccccccccc|ll}
            F &&&&&&&&&&& L & = $BWT(S)$\\\hline
             \$ & m & i & s & s & i & s & s & i & p & p & \cellcolor{green!25} i &\\
             i & \$ & m & i & s & s & i & s & s & i & p & \cellcolor{green!25} p &\\
             i & p & p & i & \$ & m & i & s & s & i & s & \cellcolor{green!25} s &\\
             i & s & s & i & p & p & i & \$ & m & i & s & \cellcolor{green!25} s &\\
             i & s & s & i & s & s & i & p & p & i & \$ & \cellcolor{green!25} m &\\
             m & i & s & s & i & s & s & i & p & p & i & \cellcolor{green!25} \$ &\\
             p & i & \$ & m & i & s & s & i & s & s & i & \cellcolor{green!25} p &\\
             p & p & i & \$ & m & i & s & s & i & s & s & \cellcolor{green!25} i &\\
             s & i & p & p & i & \$ & m & i & s & s & i & \cellcolor{green!25} s &\\
             s & i & s & s & i & p & p & i & \$ & m & i & \cellcolor{green!25} s &\\
             s & s & i & p & p & i & \$ & m & i & s & s & \cellcolor{green!25} i &\\
             s & s & i & s & s & i & p & p & i & \$ & m & \cellcolor{green!25} i &\\
        \end{tabular}
        \caption{BWT matrix of $S = mississippi\$$. 
        The rows of the BWT matrix contain all and only the circular rotations of $S$.
        The last column $L$ (in green) is $BWT(S) = ipssm\$pissii$. We have already encountered the first column $F = \$iiiimppssss$ in Section \ref{sec:CSA}.}
        \label{fig:BWT example}
    \end{figure}

As we will see, the BWT possesses several remarkable properties which boost compression and enable indexing. At the core of most properties, stands the \emph{LF property} (see Figure \ref{fig: LF}):

\begin{theorem}{\emph{LF property.}}\label{th: LF mapping} 
	The $i$-th occurrence of $c\in\Sigma$ on column L (i.e. $L[L.select_c(i)]$) corresponds to the $i$-th occurrence of $c$ on column F (i.e. $F[F.select_c(i)]$). In other words, these two characters correspond to the same position in the original string $S$.
\end{theorem}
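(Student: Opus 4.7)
The plan is to fix a character $c\in\Sigma$, look at the submatrix of rows whose last column entry is $c$, and show that cyclically shifting each such rotation forward by one position produces exactly the rows whose first column entry is $c$, in the same relative order.

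First I would enumerate the rows of the BWT matrix whose $L$-value is $c$, in the order they appear in the matrix; call these rotations $r^{(1)} < r^{(2)} < \dots < r^{(k)}$ (so $r^{(i)}$ is the rotation that contributes the $i$-th occurrence of $c$ in column $L$). Each such rotation has the form $r^{(i)} = r^{(i)}_1 r^{(i)}_2 \dots r^{(i)}_{n-1}\, c$. For each $i$ I would consider the cyclic shift that moves the trailing $c$ to the front, obtaining $s^{(i)} = c\, r^{(i)}_1 r^{(i)}_2 \dots r^{(i)}_{n-1}$. Since the BWT matrix contains \emph{all} cyclic rotations of $S$, each $s^{(i)}$ is itself a row of the matrix, and trivially $s^{(i)}$ belongs to the block of rows whose $F$-value is $c$.

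The key step is order preservation. For any two indices $i\neq j$, the rotations $r^{(i)}$ and $r^{(j)}$ agree on their last character (both equal $c$), so their lexicographic comparison is decided entirely by the prefixes of length $n-1$; that is, $r^{(i)} < r^{(j)}$ iff $r^{(i)}_1 \dots r^{(i)}_{n-1} < r^{(j)}_1 \dots r^{(j)}_{n-1}$. But after the shift, $s^{(i)}$ and $s^{(j)}$ both begin with $c$, and their comparison after the common leading $c$ is decided by exactly the same prefixes, so $s^{(i)} < s^{(j)}$ iff $r^{(i)}_1 \dots r^{(i)}_{n-1} < r^{(j)}_1 \dots r^{(j)}_{n-1}$. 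Hence $s^{(1)} < s^{(2)} < \dots < s^{(k)}$, i.e.\ $s^{(i)}$ is the $i$-th row of the matrix whose $F$-value is $c$. (I would remark that distinct rows give distinct $(n-1)$-prefixes because the sentinel $\$$ makes all rotations of $S$ distinct, so no ties arise.)

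Finally, I would translate the correspondence back to positions of $S$. A rotation of $S$ is determined by its starting position: if rotation $r^{(i)}$ starts at position $p$ in $S$, then its last character is $S[p-1\bmod n]=c$, and its cyclic shift $s^{(i)}$ starts at position $p-1\bmod n$, so its first character is $S[p-1\bmod n]=c$, the \emph{same} character of $S$. Thus $L[L.\mathtt{select}_c(i)]$ (the last character of $r^{(i)}$) and $F[F.\mathtt{select}_c(i)]$ (the first character of $s^{(i)}$) refer to the same occurrence of $c$ in $S$, which is exactly the LF property. The only subtle step is the order-preservation argument; once that is spelled out, the rest is bookkeeping.
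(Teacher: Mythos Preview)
Your proof is correct and follows essentially the same approach as the paper: both arguments observe that occurrences of $c$ in column $L$ and in column $F$ are ordered by the lexicographic rank of the string that follows them (your length-$(n-1)$ prefixes $r^{(i)}_1\dots r^{(i)}_{n-1}$), so the relative order is preserved under the cyclic shift. The paper gives only an informal sketch of this idea, while you spell out the bijection and the order-preservation step formally; the content is the same.
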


It is not hard to prove the LF-property. The key idea is to think about the \emph{ordering} of equal characters on the two columns. In column F, equal characters (for example, the 'i's) are ordered according to the lexicographic order of the text suffixes that follow them. 
For example, in Figure \ref{fig:BWT example} the second 'i' in column F is followed by $ppi\$$, which is smaller than the suffix $ssissippi\$$ which follows the fourth 'i' in column F.
It is easy to see that the same property holds for equal characters on the L column.
For example, in Figure \ref{fig:BWT example} the second 'i' in column L is followed by $ppi\$$, which is smaller than the suffix $ssissippi\$$ which follows the fourth 'i'.
It follows that occurrences of any $c\in\Sigma$ appear in the same order in the F and L columns.

\begin{figure}[h!]
	\begin{center}
		\includegraphics[trim=0cm 17cm 7cm 3.5cm, clip=true, width=1\textwidth]{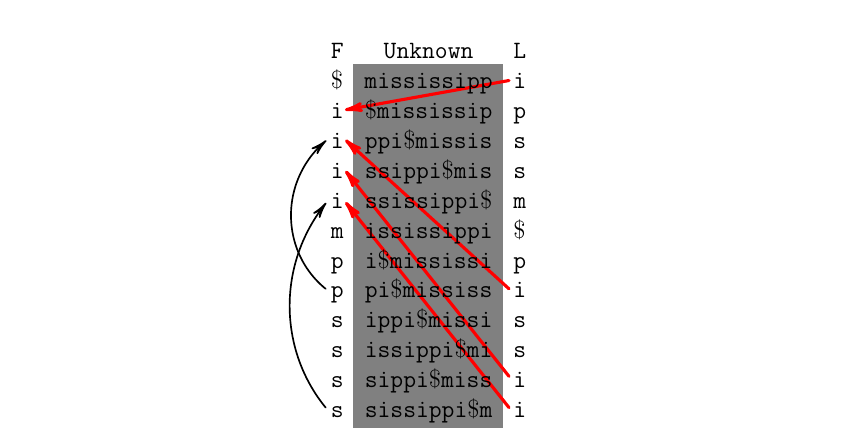}
	\end{center}\caption[LF mapping]{Red arrows: LF property (only on character `i' is shown, but the property holds for any character). Black arrows: function $LF$ induced by the LF property (only $LF(12)=5$ and $LF(8) = 3$ are shown in the figure). Note that $LF$ corresponds to backward navigation of $S$: $SA[LF(i)] = SA[i]-1$. In other words, $LF = \psi^{-1}$ (see Section \ref{sec:CSA}). The central part of the matrix is not needed for all the algorithms that we will present, so here we declare it as ``Unknown''.}\label{fig: LF}
\end{figure}

As shown in Figure \ref{fig: LF}, the LF property yields a function (in fact, a permutation) $LF : [n] \rightarrow [n]$. Note that $LF = \psi^{-1}$ ($\psi$ is the function we defined in Section \ref{sec:CSA} to compress the suffix array). In fact, the FM-index we are going to describe in Section \ref{sec:FMI} is just another way to compress the suffix array!

A first important consequence of the LF mapping is that $BWT(S)$ can be inverted. As displayed in Figure \ref{fig:invert BWT}, by repeatedly applying function $LF$ starting from position $1$ (note: $L[1]$ always contains the character preceding $\$$), we reconstruct $S$ \emph{backwards}. Note that $F$ can be obtained by sorting $BWT(S)$, so all we need to reconstruct $S$ is just $BWT(S)$.

\begin{figure}[h!]
$$
\begin{array}{rcc}
    L[1] & = & i \\
    L[LF[1]] & = & p \\
    L[LF^2[1]] & = & p \\
    L[LF^3[1]] & = & i \\
    L[LF^4[1]] & = & s \\
    L[LF^5[1]] & = & s \\
    L[LF^6[1]] & = & i \\
    L[LF^7[1]] & = & s \\
    L[LF^8[1]] & = & s \\
    L[LF^9[1]] & = & i \\
    L[LF^{10}[1]] & = & m
\end{array}
$$
\caption{Reconstructing $S = mississippi\$$ (backwards) by iteratively applying the LF mapping.}
\label{fig:invert BWT}
\end{figure}

In the next section we will use two additional properties of $BWT(S)$:
\begin{enumerate}
    \item In the BWT matrix, the string's suffixes are sorted lexicographically. This suggests that it should be possible to use $BWT(S)$ to solve pattern matching queries (for the same reason we are able to solve this task on the CSA of Section \ref{sec:CSA}).
    \item The first $k$ columns of the BWT matrix contain all substrings of length $k$ of $S$, in sorted order. But then, $BWT(S)$ is formed by the characters preceding those contexts: in other words,  $BWT(S)$ is the concatenation of strings $S_w$ (see Section \ref{sec:high order entropy}). This suggests that local zero-order entropy compression of $BWT(S)$ yields $k$-order entropy compression of $S$.
\end{enumerate}

\subsection{FM-index = WT(BWT)}\label{subsec:FMI}

The FM-index of string $S$ is $WT(BWT(S))$: a zero-order compressed wavelet tree over the Burrows-Wheeler transform of $S$. In the next subsection we show that this data structure supports efficient $count$ queries, then we use the same strategy of Section \ref{sec:locate,extract} to support also \emph{locate} and \emph{extract} queries. 
Finally, we show that actually $WT(BWT(S))$ automatically achieves high-order entropy-compressed space, \emph{without any further intervention} (just analysis).
This index has been described for the first time by Ferragina and Manzini in the year 2000 \cite{ferragina2000opportunistic}. The original paper \cite{ferragina2000opportunistic} used a more complicated compression strategy without Wavelet trees and only achieved $O(nH_k) + o(n\log \sigma)$ bits of space. The version of the FM-index that we present is more modern, easier to describe and analyze, and more efficient than the original one. 

\subsubsection{\emph{Count} queries: the backward search algorithm}\label{sec:FM count}

We show how to solve query $count(P)$: return the number of occurrences of string $P\in\Sigma^m$ in $S$ (more in general, the query returns the suffix array range $SA[l,r]$ of all suffixes prefixed by $P$).
The reason for this name is that we search the pattern from its last to first character (right-to-left instead of left-to-right as done in suffix trees, suffix arrays, and in the CSA). The algorithm is based on two observations:

\begin{enumerate}
	\item All occurrences of $P$ appear \emph{contiguously} in a range of rows in the BWT matrix, and 
	\item Let $SA[l,r]$ be the range of suffixes prefixed by some string $P'$. Then, the characters in $BWT[l,r]$ \emph{precede} the occurrences of $P'$ in $S$. In other words, String $BWT[i]\cdot P'$ is a substring of $S$ for all $l\leq i \leq r$ (unless $BWT[i]=\$$).
\end{enumerate}

Observations (1) and (2), together with the LF property, imply that we can find the range $[l,r]$ of all suffixes prefixed by $P$ iteratively, in $m$ steps. See Figure \ref{fig:bwsearch} for an example of one such step. The resulting algorithm is called \emph{backward search}.

\begin{figure}[h!]
\setlength{\tabcolsep}{2.5pt}
\renewcommand{\arraystretch}{0.7}
    \centering
    \begin{tabular}{r | c | cccccccccccccccc| l }
\textcolor{gray}{$i$} & F & & & & & & & & & & & & & & & & & $L=BWT(S)$ \\ \hline
{\footnotesize $C[\$] = $} \textcolor{gray}{0} & & & & & & & & & & & & & & & & & & \\
{\footnotesize $C[a] = $} \textcolor{gray}{1} & \$ & a & a & b & b & b & a & b & a & b & b & b & a & a & b & a & b & a \\ 
\textcolor{gray}{2} & a & \$ & a & a & b & b & b & a & b & a & b & b & b & a & a & b & a & b \\ 
\textcolor{gray}{3} & a & a & b & a & b & a & \$ & a & a & b & b & b & a & b & a & b & b & b \\ 
\textcolor{gray}{4} & a & a & b & b & b & a & b & a & b & b & b & a & a & b & a & b & a & \$ {\footnotesize $\leftarrow BWT.rank_b(5-1)+1 = 3$rd  'b' in L} \\\hline 
\textcolor{gray}{5} & \cellcolor{green!25}a & \cellcolor{green!25}b & a & \$ & a & a & b & b & b & a & b & a & b & b & b & a & a & \textcolor{red}{\textbf b}  \\ 
\textcolor{gray}{6} & \cellcolor{green!25}a & \cellcolor{green!25}b & a & b & a & \$ & a & a & b & b & b & a & b & a & b & b & b & a \\ 
\textcolor{gray}{7} & \cellcolor{green!25}a & \cellcolor{green!25}b & a & b & b & b & a & a & b & a & b & a & \$ & a & a & b & b & b \\ 
\textcolor{gray}{8} & \cellcolor{green!25}a & \cellcolor{green!25}b & b & b & a & a & b & a & b & a & \$ & a & a & b & b & b & a & \textcolor{red}{\textbf b} \\ 
{\footnotesize $C[b] = $} \textcolor{gray}{9} & \cellcolor{green!25}a & \cellcolor{green!25}b & b & b & a & b & a & b & b & b & a & a & b & a & b & a & \$ & a {\footnotesize $\leftarrow BWT.rank_b(9) = 5$th 'b' in L} \\\hline 
\textcolor{gray}{10} & b & a & \$ & a & a & b & b & b & a & b & a & b & b & b & a & a & b & a \\ 
\textcolor{gray}{11} & b & a & a & b & a & b & a & \$ & a & a & b & b & b & a & b & a & b & b \\ 
{\footnotesize $3$rd 'b' in F $\rightarrow$} \textcolor{gray}{12} &\cellcolor{blue!25}\textcolor{red}{\textbf b} & \cellcolor{blue!25}a & \cellcolor{blue!25}b & a & \$ & a & a & b & b & b & a & b & a & b & b & b & a & a \\ 
\textcolor{gray}{13} & \cellcolor{blue!25}b & \cellcolor{blue!25}a & \cellcolor{blue!25}b & a & b & b & b & a & a & b & a & b & a & \$ & a & a & b & b \\ 
{\footnotesize $5$th 'b' in F $\rightarrow$} \textcolor{gray}{14} &\cellcolor{blue!25}\textcolor{red}{\textbf b} & \cellcolor{blue!25}a & \cellcolor{blue!25}b & b & b & a & a & b & a & b & a & \$ & a & a & b & b & b & a \\ 
\textcolor{gray}{15} & b & b & a & a & b & a & b & a & \$ & a & a & b & b & b & a & b & a & b \\ 
\textcolor{gray}{16} & b & b & a & b & a & b & b & b & a & a & b & a & b & a & \$ & a & a & b \\ 
\textcolor{gray}{17} & b & b & b & a & a & b & a & b & a & \$ & a & a & b & b & b & a & b & a \\ 
\textcolor{gray}{18} & b & b & b & a & b & a & b & b & b & a & a & b & a & b & a & \$ & a & a \\ 
\end{tabular}
    \caption{BWT matrix of string $S = aabbbababbbaababa\$$. We show one step of \textbf{backward search}. Suppose we are looking for pattern $P=bab$, and suppose we already know the range $[5,9]$ of suffixes prefixed by string ``ab'' (shown in green). Then, the characters in $BWT[5,9] = babba$ are those preceding all occurrences of ``ab'' in $S$. Since we want to left-extend ``ab'' with character 'b' in order to obtain $P=bab$, we are only interested in 'b' in $BWT[5,9] = babba$. We identify the first and last 'b' in that range: $BWT[5,9] =$ \textcolor{red}{\textbf b}ab\textcolor{red}{\textbf b}a. These are $BWT[5] = BWT[8] =$ \textcolor{red}{\textbf b}, shown in red in the figure. To find them, two rank queries at the extremes of interval $[5,9]$ are sufficient: $BWT.rank_b(5-1)+1 = 3$ (third 'b' in BWT) and $BWT.rank_b(9) = 5$ (fifth 'b' in BWT). Finally, we apply the LF mapping: the third and fifth 'b' in column L=BWT correspond to the third and fifth 'b' in column F. To find them, we access $C[b] = 9$ (position preceding the first 'b' in F) and conclude that the third 'b' appears in position $C[b]+3 = 12$ in column F, and the fifth 'b' appears in position $C[b]+5 = 14$ in column F. Then, the range of  $P=bab$ is $[12,14]$ (shown in blue). }
    \label{fig:bwsearch}
\end{figure}

Backward search works as follows. We start from the range $[1,n]$ of all suffixes prefixed by the empty string and LF-map all characters equal to $P[m]$ inside $BWT[1,n]$ to column $F$. 
The range $[l_m,r_m]$ we obtain is the range of all suffixes prefixed by $P[m]$. We repeat this process with $P[m-1]$: we LF-map all characters equal to $P[m-1]$ inside $BWT[l_m,r_m]$ to the F column, obtaining the range $[l_{m-1},r_{m-1}]$ of all suffixes prefixed by $P[m-1,m]$. By repeating this process with $P[m-2], P[m-3], \dots$, we finally obtain the range $[l_1,r_1]$ of all suffixes prefixed by $P$. 

Observe that at step $i$ in the above process we do not really need to LF-map \emph{all} the characters equal to $P[i]$ inside $BWT[l_{i+1}, r_{i+1}]$ to column F: we only need to LF-map the \emph{first} and \emph{last} occurrence of $P[i]$ in $BWT[l_{i+1}, r_{i+1}]$. These occurrences can be quickly located with two \emph{rank} queries on the BWT. Let $c = P[i]$.  
The first occurrence of $c$ in $BWT[l_{i+1}, r_{i+1}]$ is the $L$-th occurrence of $c$ in $BWT(S)$, where
$$
L = BWT.rank_c(l_{i+1}-1) + 1
$$
Similarly, the last occurrence of $c$ in $BWT[l_{i+1}, r_{i+1}]$ is the $R$-th occurrence of $c$ in $BWT(S)$, where
$$
R = BWT.rank_c(r_{i+1})
$$

Note that $R<L$ if and only if there are no occurrences of $P[i]$ in $BWT[l_{i+1}, r_{i+1}]$.

Let $C[c] = F.select_c(1)-1$ be the array of length $\sigma$ storing, for each character $\sigma\in \Sigma$, the position preceding the first occurrence of $c$ in $F$ (i.e. $F[C[c]+1]$ is the first occurrence of $c$ in $F$). 
We pre-compute this array at index construction time and store it along with $WT(BWT(S))$ \footnote{Technically, $C[c]$ can be computed  in $O(\log\sigma)$ time from $WT(BWT(S))$ using an additional wavelet tree query that we have not introduced in these notes, so array $C$ is not strictly necessary (here we use it because it simplifies the description). Exercise: can you show how to compute $C[c]$ in $O(\log\sigma)$ time using $WT(BWT(S))$?}.
The range of suffixes prefixed by $P[i, \dots, m]$ is then $[C[c] + L, C[c] + R]$.

The backward search algorithm is formalized in Algorithm \ref{alg: BSearch}. 
We abbreviate $WT(BWT(S))$ as $BWT$. 
The algorithm returns the interval $[l,r]$ of suffixes prefixed by $P$. The number $occ$ of occurrences of $P$ in $S$ is $occ = r-l+1$ ($r=l-1$ if and only if $P$ does not occur in $S$). \\

\begin{algorithm}[H]\label{alg: BSearch}
	\footnotesize
	\caption{$count(P)$ (\emph{backward search} algorithm)}
	\SetKwInOut{Input}{input}
	\SetKwInOut{Output}{output}
	\SetSideCommentLeft
	\LinesNumbered
	
	\BlankLine

        $m \leftarrow |P|$\;

	$l \leftarrow 1$\;
	$r \leftarrow |S|$\;
		
	\BlankLine

	\For{$i=m\ \mathbf{downto}\ 1$}{
		
		$c\leftarrow P[i]$\;
		
		$l \leftarrow C[c] + BWT.rank_c(l-1) + 1$\;
		$r \leftarrow C[c] + BWT.rank_c(r)$\;
				
	}
	
	\BlankLine
	\Return $[l,r]$\;
\end{algorithm}

\ \\Note that each step of backward search requires just two \emph{rank} queries on $WT(BWT(S))$. We conclude:

\begin{lemma}
    $WT(BWT(S))$ supports $count(P)$ queries in time $O(m\log\sigma)$.
\end{lemma}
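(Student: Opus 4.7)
The plan is to show that the backward search algorithm already presented is both correct and runs within the claimed time bound; both parts follow essentially immediately from results already established.

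For correctness, I would argue by induction (downward on $i$) that after the iteration for index $i$, the pair $(l,r)$ is exactly the suffix array range of suffixes of $S$ prefixed by $P[i,\dots,m]$. The base case $i = m+1$ is the initial assignment $[l,r]=[1,n]$, which is the range for the empty pattern (all suffixes). For the inductive step, assume $[l_{i+1},r_{i+1}]$ is the range for $P[i+1,\dots,m]$. By the two observations preceding Algorithm 1, the characters in $BWT[l_{i+1},r_{i+1}]$ are exactly those preceding the occurrences of $P[i+1,\dots,m]$ in $S$. Hence the occurrences of $c = P[i]$ in $BWT[l_{i+1},r_{i+1}]$ correspond bijectively to the occurrences of $P[i,\dots,m]$ in $S$, and their positions in $BWT$ (as the $L$-th through $R$-th occurrences of $c$ in $BWT$, with $L = BWT.rank_c(l_{i+1}-1)+1$ and $R = BWT.rank_c(r_{i+1})$) get mapped by the LF property (Theorem \ref{th: LF mapping}) to a contiguous range in the $F$ column starting at $C[c]+L$ and ending at $C[c]+R$. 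Since the rows are sorted by suffix, this is precisely the range of suffixes prefixed by $P[i,\dots,m]$, closing the induction. After the final iteration at $i=1$ the pair $(l,r)$ is the suffix range of $P$, and $r-l+1$ is the number of occurrences.

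For the running time, each iteration of the for-loop performs $O(1)$ array lookups into $C$ plus exactly two $\mathtt{rank}_c$ queries on $WT(BWT(S))$. By Theorem \ref{thm:zero-WT}, each such rank query costs $O(\log\sigma)$ time on the balanced wavelet tree. The loop runs $m$ times, so the total cost is $O(m\log\sigma)$.

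I do not expect a real obstacle here: correctness is a bookkeeping argument that the invariant from the two preparatory observations is maintained by the LF mapping, and the time bound simply multiplies the loop length by the wavelet-tree rank cost from Theorem \ref{thm:zero-WT}. The only point worth being slightly careful about is handling the boundary case $r < l$ (no occurrences), which the algorithm propagates correctly since both rank queries remain well-defined and subsequent iterations will keep producing an empty range.
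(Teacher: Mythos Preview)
Your proposal is correct and takes essentially the same approach as the paper. The paper's own argument is even terser: it simply notes that each step of backward search requires two \texttt{rank} queries on $WT(BWT(S))$ and concludes directly, relying on the preceding exposition of backward search (including the worked example in Figure~\ref{fig:bwsearch}) for correctness rather than spelling out the inductive invariant as you do.
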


Notice that $O(m\log\sigma)$ is much faster than the time $O(m\log n)$ that we obtained with the compressed suffix array. In particular, if $\sigma$ is a constant with respect to $n$ (for example, ASCII or UNICODE), then this running time is $O(m)$: optimal!

\subsubsection{\emph{Locate} and \emph{extract} queries}\label{sec: FM locate}

The solution for \emph{locate} and \emph{extract} is essentially the same as the one described in Section \ref{sec:locate,extract}: we use the same arrays $SSA$ and $M$. The only two differences are:

\begin{enumerate}
    \item $LF = \psi^{-1}$, so applying $LF$ makes us move \emph{backwards} in the string by one position.
    We therefore need to make the following small adjustment. 
    Letting $\rho$ being the sample rate, we sample $SA[i]$ whenever $SA[i]\mod \rho = 0$ or $SA[i]=1$, and define
    $$
    SA[i] = SSA[M.rank_1(LF^k(i))] + k
    $$
    where $k\geq 0$ is the smallest integer s.t. $M[LF^k(i)] = 1$.
    \item We do not want to pay $O(n)$ bits for $SSA$ and array $M$: we can only afford an extra term $o(n\log\sigma)$. We use a different sample rate: $\rho = (\log n)^{1+\epsilon} / \log\sigma$, for any desired constant $\epsilon>0$ (for example, $\rho = (\log n)^{1.0001} / \log\sigma$). 
    We build the zero-order bitvector data structure of Theorem \ref{thm:bitvector} on $M$. 
    The bit-size of $SSA$ and $M$ is therefore $nH_0(M) + (n/\rho)\log n$ bits, where $\rho = (\log n)^{1+\epsilon}/\log\sigma$.
    Exercise: prove that $nH_0(M) + (n/\rho)\log n = o(n\log\sigma)$ bits.
\end{enumerate}

Since $M[LF^k(i)] = 1$ is true for $k \leq \rho$ (because we sample every other $\rho$ text positions) and computing the $LF$ function costs $O(\log\sigma)$ time, extracting any value $SA[i]$ costs $O(\rho\log\sigma) = O((\log n)^{1+\epsilon})$ time.
We obtain:

\begin{lemma}\label{lem:FMI1}
    Choose any constant $\epsilon>0$. The structure
    $WT(BWT(S))$, in addition to the locate and extract data structures described in this section, 
    uses $nH_0(S) + o(n\log\sigma) + \Theta(\sigma\log n)$ bits of space and
    supports:
    \begin{itemize}
        \item  $count(P)$ queries in time $O(m\log\sigma)$.
         \item  $locate(P)$ queries in time $O(m\log\sigma + occ\cdot (\log n)^{1+\epsilon})$.
         \item  $extract(i,\ell)$ queries in time $O((\log n)^{1+\epsilon} + \ell\log\sigma)$.
    \end{itemize}
\end{lemma}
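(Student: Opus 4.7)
The plan is to assemble the lemma from three ingredients already in place: the zero-order compressed wavelet tree of Theorem \ref{thm:zero-WT} applied to $BWT(S)$, the backward search of Algorithm \ref{alg: BSearch}, and the suffix-array sampling scheme of Section \ref{sec:locate,extract} (together with its symmetric analogue on $SA^{-1}$ from Exercise \ref{ex:extract}), reconfigured with the smaller sample rate $\rho = (\log n)^{1+\epsilon}/\log\sigma$.

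For the space bound, the key observation is that $BWT(S)$ is a permutation of the characters of $S$, hence $H_0(BWT(S)) = H_0(S)$. Theorem \ref{thm:zero-WT} applied to $BWT(S)$ therefore yields a structure using $nH_0(S) + o(n\log\sigma) + \Theta(\sigma\log n)$ bits and answering access/rank/select in $O(\log\sigma)$ time. The sampling adds a bitvector $M$ of length $n$ marking sampled positions and a packed array $SSA$ of the sampled $SA$ values (plus a symmetric pair for $SA^{-1}$). With the chosen $\rho$ there are $O(n/\rho)$ sampled positions, so $SSA$ uses $O((n/\rho)\log n) = O(n\log\sigma/(\log n)^{\epsilon}) = o(n\log\sigma)$ bits. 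Encoding $M$ with Theorem \ref{thm:bitvector} costs $nH_0(M) + o(n)$ bits; since $M$ has $O(n/\rho)$ ones, Lemma \ref{lem:Hwc vs H0} gives $nH_0(M) = O((n/\rho)\log\rho) = O(n\log\sigma\cdot \log\log n/(\log n)^{1+\epsilon}) = o(n\log\sigma)$. All sampling overheads therefore fit inside the error term.

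The count bound is immediate from Algorithm \ref{alg: BSearch}: the loop runs $m$ times, each iteration performing a constant number of rank queries on $WT(BWT(S))$ at $O(\log\sigma)$ each, for a total of $O(m\log\sigma)$. For locate, once the range $[l,r]$ of length $occ$ has been found in $O(m\log\sigma)$ time, each $SA[i]$ with $i\in[l,r]$ is retrieved by iterating $LF$ at most $\rho$ times until reaching a position marked in $M$ and reading the corresponding $SSA$ entry; each $LF$ application is one rank on $WT(BWT(S))$ costing $O(\log\sigma)$, so one $SA$ value costs $O(\rho\log\sigma) = O((\log n)^{1+\epsilon})$ and $occ$ values cost $O(occ\cdot(\log n)^{1+\epsilon})$. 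For $\mathtt{extract}(i,\ell)$, the symmetric sampling on $SA^{-1}$ locates a sampled row within $\rho$ text positions of $i+\ell-1$; then $O(\rho + \ell)$ applications of $LF$, each followed by a constant-time lookup on the $F$ column (supported by the $FO$ bitvector and $\Sigma$ as in Section \ref{sec:CSA}), reconstruct $S[i,\dots,i+\ell-1]$ in $O((\rho+\ell)\log\sigma) = O((\log n)^{1+\epsilon} + \ell\log\sigma)$ time.

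The main obstacle I anticipate is the asymptotic bookkeeping on the sampling structures: ensuring that both $nH_0(M)$ (which depends on the density of ones, hence on $\rho$) and the packed-array cost $O((n/\rho)\log n)$ collapse simultaneously to $o(n\log\sigma)$. This forces the precise choice $\rho = (\log n)^{1+\epsilon}/\log\sigma$ with $\epsilon>0$ strictly positive; the factor $1/(\log n)^\epsilon$ is what tames the $\log n$ in the $SSA$ cost, while the $\log\log n$ obtained from $\log\rho$ is harmless. Once this arithmetic is verified, the three query-time bounds follow mechanically from combining the $O(\log\sigma)$ per-operation cost of the wavelet tree with the $O(\rho)$ cap on the number of $LF$ steps per extracted value.
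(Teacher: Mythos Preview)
Your proof is essentially correct and mirrors the paper's own assembly of this lemma from Sections~\ref{sec:FM count} and~\ref{sec: FM locate}; you have even carried out explicitly the space bookkeeping on $M$ and $SSA$ that the paper leaves as an exercise.

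One small technical slip in the extract part: you propose reading characters from the $F$ column via the $FO$ bitvector of Section~\ref{sec:CSA}. That bitvector occupies $n+o(n)$ bits, which is \emph{not} $o(n\log\sigma)$ when $\sigma = O(1)$, so adding it would violate the stated space bound. The fix is immediate and standard: at each $LF$ step you already compute $c = BWT[j]$ via a wavelet-tree access (it is needed to evaluate $LF(j) = C[c] + BWT.rank_c(j)$), and $c = BWT[j] = S[SA[j]-1]$ is exactly the next character to emit while walking backwards. Hence no extra structure is required, the per-step cost stays $O(\log\sigma)$, and your time bound $O((\rho+\ell)\log\sigma) = O((\log n)^{1+\epsilon} + \ell\log\sigma)$ goes through unchanged.
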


\subsubsection{High-order entropy compressed space}\label{sec: FM space}

Amazingly, we now show that the index of Lemma \ref{lem:FMI1} already achieves $nH_k(S) + o(n\log\sigma) + \Theta(\sigma\log n)$ bits of space (for some $k\geq 0$ that we will derive below): we do not need to modify the index to achieve high-order entropy compressed space! See Figure \ref{fig:high order FMI} for an example with $k=2$.

\begin{figure}[h!]
\setlength{\tabcolsep}{2.5pt}
\renewcommand{\arraystretch}{0.7}
    \centering
    \begin{tabular}{r | cc | ccccccccccccccc| l }
\textcolor{gray}{$i$} & F & & & & & & & & & & & & & & & & & $L=BWT(S)$ \\ \hline
\textcolor{gray}{1} & \$ & a & a & b & b & b & a & b & a & b & b & b & a & a & b & a & b & a\ \ \ \ \textcolor{gray}{$S'_{\$a} = a$} \\ \hline 
\textcolor{gray}{2} & a & \$ & a & a & b & b & b & a & b & a & b & b & b & a & a & b & a & b\ \ \ \ \textcolor{gray}{$S'_{a\$} = b$} \\ \hline 
\textcolor{gray}{3} & a & a & b & a & b & a & \$ & a & a & b & b & b & a & b & a & b & b & b\ \ \ \ \textcolor{gray}{$S'_{aa} = b\$$} \\ 
\textcolor{gray}{4} & a & a & b & b & b & a & b & a & b & b & b & a & a & b & a & b & a & \$\\\hline 
\textcolor{gray}{5} & a & b & a & \$ & a & a & b & b & b & a & b & a & b & b & b & a & a & b\ \ \ \ \textcolor{gray}{$S'_{ab} = babba$}\\ 
\textcolor{gray}{6} & a & b & a & b & a & \$ & a & a & b & b & b & a & b & a & b & b & b & a \\ 
\textcolor{gray}{7} & a & b & a & b & b & b & a & a & b & a & b & a & \$ & a & a & b & b & b \\ 
\textcolor{gray}{8} & a & b & b & b & a & a & b & a & b & a & \$ & a & a & b & b & b & a & b \\ 
\textcolor{gray}{9} & a & b & b & b & a & b & a & b & b & b & a & a & b & a & b & a & \$ & a\\\hline 
\textcolor{gray}{10} & b & a & \$ & a & a & b & b & b & a & b & a & b & b & b & a & a & b & a\ \ \ \ \textcolor{gray}{$S'_{ba} = ababa$}\\ 
\textcolor{gray}{11} & b & a & a & b & a & b & a & \$ & a & a & b & b & b & a & b & a & b & b \\ 
\textcolor{gray}{12} &b & a & b & a & \$ & a & a & b & b & b & a & b & a & b & b & b & a & a \\ 
\textcolor{gray}{13} & b & a & b & a & b & b & b & a & a & b & a & b & a & \$ & a & a & b & b \\ 
\textcolor{gray}{14} &b & a & b & b & b & a & a & b & a & b & a & \$ & a & a & b & b & b & a \\ \hline 
\textcolor{gray}{15} & b & b & a & a & b & a & b & a & \$ & a & a & b & b & b & a & b & a & b\ \ \ \ \textcolor{gray}{$S'_{bb} = bbaa$} \\ 
\textcolor{gray}{16} & b & b & a & b & a & b & b & b & a & a & b & a & b & a & \$ & a & a & b \\ 
\textcolor{gray}{17} & b & b & b & a & a & b & a & b & a & \$ & a & a & b & b & b & a & b & a \\ 
\textcolor{gray}{18} & b & b & b & a & b & a & b & b & b & a & a & b & a & b & a & \$ & a & a \\ 
\end{tabular}
    \caption{Partitioning the BWT matrix by contexts of length $k=2$ (first two columns). This BWT gets partitioned into 6 strings: $BWT(S) = S'_{\$a}\cdot S'_{a\$}\cdot S'_{aa}\cdot S'_{ab}\cdot S'_{ba}\cdot S'_{bb}$. These strings are permutations of the strings $S_w$ encountered in Section \ref{sec:high order entropy}, and have therefore their same zero-order entropy: $H_0(S_w) = H_0(S'_w)$.}
    \label{fig:high order FMI}
\end{figure}

Figure \ref{fig:high order FMI} shows that, for any $k\geq 0$, we can partition the BWT as follows: 

$$
BWT(S) = S'_{w_1}\cdots S'_{w_t}
$$

where $w_1, \dots, w_t \in \Sigma^k$ are all the $t \leq \sigma^k$ distinct strings of length $k$ appearing in $S$ (considered as a circular string). Here, each $S'_{w_i}$ is a permutation of the string $S_{w_i}$ encountered in Section \ref{sec:high order entropy}. Since $S'_{w_i}$ and $S_{w_1}$ are permutations of each other, and zero-order empirical entropy only depends on the symbols' frequencies, we conclude that 

$$
|S'_{w_i}|\cdot H_0(S'_{w_i}) = |S_{w_i}|\cdot H_0(S_{w_i})
$$

But then, this means (see Definition \ref{def:high order entropy}) that:

\begin{lemma}\label{lem:high order S}
    Zero-order compression of each $S'_{w_i}$ gives us $k$-th order compression of $S$:
    $$
    \sum_{i=1}^t |S'_{w_i}|\cdot H_0(S'_{w_i}) = nH_k
    $$
\end{lemma}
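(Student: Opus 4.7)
The plan is to unpack the definitions on both sides and then reduce the equality to the fact, already highlighted in the excerpt, that $S'_{w_i}$ is a permutation of $S_{w_i}$. First I would recall the definition of $H_k$:
$$
nH_k(S) = \sum_{w\in\Sigma^k} |S_w| \cdot H_0(S_w),
$$
and note that any $w\in\Sigma^k$ not appearing as a length-$k$ substring of (circular) $S$ contributes $|S_w| = 0$ to this sum. Hence the sum reduces to one over $w_1,\dots,w_t$, the distinct length-$k$ substrings of $S$ that index the partition of $BWT(S)$.

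Next I would establish the central claim that $S'_{w_i}$ is a permutation of $S_{w_i}$. This is the part I expect to require the most care. The rows of the BWT matrix whose first $k$ columns spell $w_i$ are precisely the rotations of $S$ beginning with $w_i$; there is one such rotation for each occurrence of $w_i$ in (circular) $S$. The last column of such a row holds the character \emph{preceding} that occurrence of $w_i$ in $S$, which is exactly an entry of $S_{w_i}$. Therefore the multiset of characters in $S'_{w_i}$ coincides with the multiset of characters in $S_{w_i}$; they differ only in the order (the BWT orders them by the lexicographic rank of what follows $w_i$, while $S_{w_i}$ orders them by textual position). In particular $|S'_{w_i}| = |S_{w_i}|$.

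The final step is the observation, made explicitly in the excerpt, that $H_0$ depends only on symbol frequencies, so two permutations of the same string have identical zero-order empirical entropy: $H_0(S'_{w_i}) = H_0(S_{w_i})$. Combining,
$$
\sum_{i=1}^t |S'_{w_i}|\cdot H_0(S'_{w_i}) \;=\; \sum_{i=1}^t |S_{w_i}|\cdot H_0(S_{w_i}) \;=\; \sum_{w\in\Sigma^k} |S_w|\cdot H_0(S_w) \;=\; nH_k,
$$
which is the desired identity. The only non-routine part is the permutation claim; once it is in place, the chain of equalities is a direct rewrite of Definition \ref{def:high order entropy}.
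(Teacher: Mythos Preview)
Your proposal is correct and follows the same approach as the paper. The paper establishes the permutation claim $S'_{w_i}\sim S_{w_i}$ and the equality $|S'_{w_i}|\cdot H_0(S'_{w_i}) = |S_{w_i}|\cdot H_0(S_{w_i})$ in the text immediately preceding the lemma, then presents the lemma as a direct consequence of Definition~\ref{def:high order entropy}; you have simply spelled out that same chain of reasoning in full, including a careful justification of the permutation step.
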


Recall that Lemma \ref{lem:sum WT} tells us that a wavelet tree over the concatenation of $t$ strings takes the same space of the $t$ Wavelet trees for the strings, plus $O(t\sigma\cdot \log n)$ bits, where $n$ is the total length of the $t$ strings. 
In our case, the $t$ strings are $w_1, \dots, w_t$, with $t\le \sigma^k$.
We conclude that Lemmas \ref{lem:high order S} and \ref{lem:sum WT} immediately imply:

\begin{lemma}
    $WT(BWT(S))$ uses $nH_k + o(n\log\sigma) + O(\sigma^{k+1}\log n)$ bits of space, for any $k\geq 0$.
\end{lemma}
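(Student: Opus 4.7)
The plan is to exploit the fact that the Burrows--Wheeler transform clusters together the characters that share the same length-$k$ context, and to show that the RRR-compressed bitvectors inside the wavelet tree automatically adapt to this clustering, so that no extra layer of compression is needed.

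Fix $k\ge 0$ and recall the partition $BWT(S) = S'_{w_1}\cdot S'_{w_2}\cdots S'_{w_t}$ with $t\le\sigma^k$ from the preceding discussion, for which Lemma~\ref{lem:high order S} already gives $\sum_i |S'_{w_i}|\, H_0(S'_{w_i}) = nH_k$. This partition induces, at every internal node $v$ of $WT(BWT(S))$, a matching partition $B_v = B_v^{(1)} B_v^{(2)}\cdots B_v^{(t)}$, where $B_v^{(i)}$ is the subsequence of $B_v$ contributed by characters of $S'_{w_i}$. I would first record the standard wavelet-tree entropy identity $\sum_v |B_v^{(i)}|\, H_0(B_v^{(i)}) = |S'_{w_i}|\, H_0(S'_{w_i})$, which is just the chain-rule decomposition of zero-order entropy along the fixed balanced binary encoding (the bitvectors $B_v^{(i)}$ form precisely the wavelet tree of $S'_{w_i}$). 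Summing over $i$ yields the master identity
\[
\sum_v \sum_i |B_v^{(i)}|\, H_0(B_v^{(i)}) \;=\; \sum_i |S'_{w_i}|\, H_0(S'_{w_i}) \;=\; nH_k.
\]

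Next I would bound the RRR cost of each individual bitvector $B_v$ by its per-segment entropies. Recall from the proof of Theorem~\ref{thm:bitvector} that RRR with block size $b=\Theta(\log n)$ spends $\sum_j \lceil\log\binom{b}{C_j}\rceil$ bits on the offsets, plus $o(|B_v|)$ bits on classes, block/macroblock pointers and lookup tables. Assign each RRR block to the segment that contains its first bit; at most one block per segment boundary then straddles two segments, giving at most $t$ boundary blocks of cost $O(\log n)$ each. For the blocks fully contained in segment $i$, the super-multiplicativity $\prod_j \binom{b}{C_j}\le \binom{\sum_j b}{\sum_j C_j}$ together with Lemma~\ref{lem:Hwc vs H0} bounds their contribution by $|B_v^{(i)}|\, H_0(B_v^{(i)}) + O(\log|B_v|)$. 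Hence
\[
\mathrm{RRR}(B_v) \;\le\; \sum_i |B_v^{(i)}|\, H_0(B_v^{(i)}) \;+\; O(t\log n) \;+\; o(|B_v|).
\]
Summing this bound over the $O(\sigma)$ internal nodes of $WT(BWT(S))$, the leading term gives $nH_k$ by the master identity; the $o(|B_v|)$ overheads sum to $o(\sum_v |B_v|) = o(n\log\sigma)$ since the balanced wavelet tree has $n\log\sigma$ bits across all bitvectors; the $O(t\log n)$ boundary terms sum to $O(\sigma\cdot t\log n) = O(\sigma^{k+1}\log n)$, which also absorbs the $O(\sigma\log n)$ tree-topology overhead. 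The claimed space $nH_k + o(n\log\sigma) + O(\sigma^{k+1}\log n)$ follows.

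The main obstacle will be pushing the super-multiplicativity step through cleanly when a segment $S'_{w_i}$ is shorter than the RRR block size $b$: then the ``fully-contained'' portion may be empty and the whole segment is absorbed into boundary blocks. These short segments must be accounted for separately, but the bookkeeping still closes: each contributes at most $O(b)=O(\log n)$ bits to the RRR cost of each of the $O(\sigma)$ wavelet-tree nodes, and simultaneously their entropy contribution $|S'_{w_i}|\, H_0(S'_{w_i}) \le b\log\sigma$ fits into the same $O(\sigma^{k+1}\log n)$ budget. Verifying this and confirming that the approximation error between ``ratio of 1s in the fully-contained portion'' and ``ratio in the full segment'' stays within the $O(\log|B_v|)$ slack of Lemma~\ref{lem:Hwc vs H0} are the only nontrivial calculations the proof requires.
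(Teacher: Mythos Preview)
Your proposal is correct and follows essentially the same approach as the paper: both arguments exploit that RRR compresses in blocks of $\Theta(\log n)$ bits, so the cost of $WT(BWT(S))$ differs from the per-context zero-order cost only at the $O(\sigma\cdot t)=O(\sigma^{k+1})$ block boundaries, each contributing $O(\log n)$ bits. The paper phrases this as a comparison between $WT(BWT(S))$ and the separate wavelet trees $WT(S'_{w_1}),\dots,WT(S'_{w_t})$ (whose total size is $nH_k+o(n\log\sigma)$ by Lemma~\ref{lem:zero-order WT}), while you analyze the RRR cost of each $B_v$ directly via the super-multiplicativity bound on its fully-contained blocks---but these are two presentations of the same counting, and your version is simply a more detailed execution of what the paper only sketches.
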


We don't want the overhead $O(\sigma^{k+1}\log n)$  to be larger than the index itself. This translates to the constraint 
 $\sigma^{k+1}\log n \le \Theta(\sigma\log n) + o(n)$. This inequality is true for $k = \lfloor \max\{0, \alpha\log_\sigma n - 1\}\rfloor$, where $\alpha < 1$ is any constant fixed at construction time. We obtain: 

\begin{theorem}\label{thm:FMI}
    Choose any constants $\epsilon>0$ (as small as possible) and $\alpha < 1$ (as large as possible). 
    Let $k = \lfloor \max\{0, \alpha\log_\sigma n - 1\}\rfloor$.
    The FM-index uses $nH_k + o(n\log\sigma) + \Theta(\sigma\log n)$ bits of space and supports:
    \begin{itemize}
        \item  $count(P)$ queries in time $O(m\log\sigma)$.
         \item  $locate(P)$ queries in time $O(m\log\sigma + occ\cdot (\log n)^{1+\epsilon})$.
         \item  $extract(i,\ell)$ queries in time $O((\log n)^{1+\epsilon} + \ell\log\sigma)$.
    \end{itemize}
\end{theorem}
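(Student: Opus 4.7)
The plan is to assemble the theorem from the three pieces that have already been developed: the time bound for backward search ($count$), the $SA$-sampling argument that yields $locate$ and $extract$, and the space lemma that $WT(BWT(S))$ automatically inherits $k$-th order compression. Essentially no new argument is needed — the work is in checking that the choice $k = \lfloor \max\{0, \alpha\log_\sigma n - 1\}\rfloor$ makes all error terms fit inside $o(n\log\sigma) + \Theta(\sigma\log n)$.

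First I would deal with the running times. By the backward-search analysis in Section \ref{sec:FM count}, each of the $m$ iterations performs two $rank$ queries on $WT(BWT(S))$, each costing $O(\log\sigma)$ by Theorem \ref{thm:zero-WT}, giving $count(P)$ in $O(m\log\sigma)$. For $locate(P)$ we first run backward search to obtain the range $[l,r]$ in $O(m\log\sigma)$ time, then for each of the $occ = r-l+1$ positions we retrieve $SA[i]$ by iterating $LF$ at most $\rho$ times (where $\rho = (\log n)^{1+\epsilon}/\log\sigma$) until hitting a sampled entry; each $LF$ step is one $rank$ on the wavelet tree, costing $O(\log\sigma)$, so per occurrence we pay $O(\rho \log\sigma) = O((\log n)^{1+\epsilon})$. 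The $extract(i,\ell)$ bound follows in the same way, using the sampled inverse suffix array from Exercise \ref{ex:extract}: we spend $O((\log n)^{1+\epsilon})$ to land at the first sampled $SA^{-1}$ position, then $\ell$ applications of $LF$ (or $\psi$, depending on implementation) at $O(\log\sigma)$ each.

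For the space I would invoke the lemma of Section \ref{sec: FM space} stating that $WT(BWT(S))$ alone uses $nH_k(S) + o(n\log\sigma) + O(\sigma^{k+1}\log n)$ bits for any $k\ge 0$. With $k = \lfloor \max\{0, \alpha\log_\sigma n - 1\}\rfloor$ we have $\sigma^{k+1} \le \sigma^{\alpha\log_\sigma n} = n^{\alpha}$, hence $\sigma^{k+1}\log n \le n^{\alpha}\log n = o(n) = o(n\log\sigma)$ because $\alpha < 1$. In the degenerate case $k=0$ (which triggers when $\sigma \ge n^{\alpha}$, so $\log\sigma = \Theta(\log n)$), the term $\sigma^{k+1}\log n = \sigma\log n$ is exactly absorbed into the $\Theta(\sigma\log n)$ summand. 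The sampling structures $SSA$ and $M$ from Section \ref{sec: FM locate} add $nH_0(M) + (n/\rho)\log n$ bits; with $\rho = (\log n)^{1+\epsilon}/\log\sigma$ one checks that $(n/\rho)\log n = n\log\sigma/(\log n)^{\epsilon} = o(n\log\sigma)$, and since $M$ is very sparse ($n/\rho$ ones), $nH_0(M) = O((n/\rho)\log\rho) = o(n\log\sigma)$ as well. Array $C$ and the topology pointers add a further $\Theta(\sigma\log n)$ bits, which is already present in the bound.

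The only subtle point — and the part worth double-checking — is the absorption of $O(\sigma^{k+1}\log n)$ under both regimes of the $\max$ in the definition of $k$, and verifying that the chosen $\rho$ makes the sampling overhead genuinely $o(n\log\sigma)$ (this is the content of the exercise at the end of Section \ref{sec: FM locate}). Once these two inequalities are written out, adding up all contributions yields $nH_k(S) + o(n\log\sigma) + \Theta(\sigma\log n)$ bits, and the three running-time claims follow directly from Lemma \ref{lem:FMI1} and the backward-search analysis, completing the theorem.
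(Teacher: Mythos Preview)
Your proposal is correct and follows exactly the approach the paper takes: the theorem is stated as an immediate consequence of Lemma~\ref{lem:FMI1} (for the running times and the $nH_0$ baseline) together with the space lemma of Section~\ref{sec: FM space} (giving $nH_k + o(n\log\sigma) + O(\sigma^{k+1}\log n)$), and the constraint $\sigma^{k+1}\log n = \Theta(\sigma\log n) + o(n)$ that forces the stated choice of $k$. In fact you are more explicit than the paper in writing out the two regimes of the $\max$ and the $o(n\log\sigma)$ verification for the sampling structures, which the paper leaves as an exercise.
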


\begin{remark}
    Again, ASCII and UNICODE alphabets are constant-sized (their size does not depend on the text length $n$): $\sigma = O(1)$. In those cases, the FM-index uses $nH_k(S) + o(n\log\sigma)$ bits of space, 
    $count(P)$ queries are supported in optimal time $O(m)$, $locate(P)$ queries in time $O(m + occ\cdot (\log n)^{1+\epsilon})$, and $extract(i,\ell)$ queries in time $O((\log n)^{1+\epsilon} + \ell)$.
\end{remark}

To convince yourself that the FM-index is really small, consider the following example:

\begin{example}
Consider a string of length $n = 10^9$ on  English alphabet with $\sigma = 26$. 
By storing this string in plain ASCII format, the string will take about 1 GiB of space. 
What if we build an FM-index?
Choose, for example, $\alpha = 0.999$. Then, the FM index achieves $k$-th order empirical entropy for $k = \lfloor \max\{0, \alpha\log_\sigma n - 1\}\rfloor = 5$. When compressed to $H_5$, English text requires just $1.9$ bits per character \cite{pizzachiliText}. This means that $nH_5 \approx 240$ MiB. The FM-index adds on top of this space a small $o(n\log\sigma)$ term, which in practice is about $10\%$ of $n\log\sigma \approx 600$ MiB. Therefore, the FM-index will use approximately $300$ MiB of space. In this tiny space, we have both the text and the index! 
\end{example}

\chapter{Probability Theory and Hashing}\label{ch:basics}

Streaming and sketching techniques (next chapters) will require some basics of probability theory, which we cover in this chapter.

\section{Random variables}

A random variable (R.V.) $X$ is a variable that takes values from some sample space $\Omega$ according to the outcomes of a random phenomenon. $\Omega$ is also called the \emph{support} of $X$. Said otherwise, $X$ takes values in $\Omega$ according to some probability distribution. A random variable can be discrete if $|\Omega|$ is countable (examples: coin tosses or integer numbers), or continuous (for example, if it takes any real value in some interval). 
When considering multiple R.V.s with supports $\Omega_1, \dots, \Omega_n$, the sample space is the Cartesian product of the individual sample spaces: $\Omega = \Omega_1 \times \dots \times \Omega_n$.
In these notes the support of a R.V. will either be a set of integers or an interval of real numbers. 

\subsection{Distribution function}

We indicate with $F(x) = P(X \leq x)$ the \emph{cumulative distribution function of X}: the probability that $X$ takes a value in $\Omega$ smaller than or equal to $x$. 
$P(X = x) = f(x)$ is the \emph{probability mass function} (for discrete R.V.s) or the \emph{probability density function} (for continuous R.V.s). For discrete R.V.s, this is the probability that $X$ takes value $x$. For continuous R.V.s, it's the function satisfying $F(x) = \int_{-\infty}^{x} f(x) \,dx $.

\begin{example}
Take the example of fair coin tosses. Then, $X \in \{0,1\}$ (0=tail, 1=head) is a discrete random variable with probability mass function $P(X = 0) = P(X = 1) = 0.5$.
\end{example}

\subsection{Events}

An \emph{event} is a subset of the sample space, i.e. a set of assignments for all the R.V.s under consideration. Each event has a probability to happen. For example, $A = \{0 \leq X \leq 1\}$ is the event indicating that $X$ takes a value between $0$ and $1$. 
$P(A \cup B)$ is the probability that either $A$ or $B$ happens. 
$P(A \cap B)$ is the probability that both $A$ and $B$ happen. Sometimes we will also use the symbols $\vee$ and $\wedge$ in place of $\cup$ and $\cap$ (with the same meaning).
$P(A|B)$ indicates the probability that $A$ happens, provided that $B$ has already happened. In general, we have:
$$
P(A\cap B) = P(A) \cdot P(B|A)
$$

We say that two events $A$ and $B$ are \emph{independent} if $P(A\cap B) = P(A) \cdot P(B)$ or, equivalently, that $P(A|B) = P(A)$ and $P(B|A) = P(B)$: the probability that both happen simultaneously is the product of the probabilities that they happen individually. Said otherwise, the fact that one of the two events has happened, does not influence the happening of the other event. 

\begin{example}
Consider throwing two fair coins, and indicate $A = \{first\ coin =  head\}$ and $B = \{second\ coin = head\}$. The two events are clearly independent, so 
$$
P(A\cap B ) = P(A) \cdot P(B) = 0.5 \cdot 0.5 = 0.25
$$ 
On the other hand, consider throwing a coin in front of a mirror, and the two events $A = \{coin = head\}$ and $B = \{coin\ in\ the\  mirror = head\}$. 
We still have $P(A) = P(B) = 0.5$ (the events, considered separately, have both probability $0.5$ to happen), but 
the two events are clearly dependent! In fact, $P(B|A) = 1 \neq P(B) = 0.5$. So: $P(A\cap B) = P(A) \cdot P(B|A) = P(A)\cdot 1 = 0.5$.
\end{example}

We can generalize pairwise-independence to a sequence of R.V.s:

\begin{definition}[$k$-wise independence]\label{def:k-wise independence}
    Let $W = \{X_1, \dots, X_n\}$ be a set of $n$ random variables. We say that this set is $k$-wise independent, for $k\leq n$, iff $P(\bigwedge_{j=1}^k X_{i_j} = x_{i_j}) = \prod_{j=1}^k P(X_{i_j} = x_{i_j})$ for any subset $\{X_{i_1}, \dots, X_{i_k}\} \subseteq W$ of $k$ random variables. For $k=n$, we also say that the random variables are \emph{fully} independent.
\end{definition}

We will often deal with dependent random variables. A useful bound that we will use is the following: 

\begin{lemma}[Union bound]
For any set of (possibly dependent) events $\{A_1, A_2, \dots, A_n\}$ we have that: 
$$
P(\cup_{i=1}^n A_i) \leq \sum_{i=1}^n P(A_i)
$$
\end{lemma}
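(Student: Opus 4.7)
The plan is to prove the union bound by induction on $n$, using as the base case the two-event version of the inequality, which itself follows from finite additivity of probability on disjoint events.

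First I would handle the case $n=2$. The key observation is that $A_1 \cup A_2$ can be rewritten as the \emph{disjoint} union $A_1 \sqcup (A_2 \setminus A_1)$. Applying finite additivity gives $P(A_1 \cup A_2) = P(A_1) + P(A_2 \setminus A_1)$. Since $A_2 \setminus A_1 \subseteq A_2$, monotonicity of probability yields $P(A_2 \setminus A_1) \leq P(A_2)$, and the $n=2$ inequality follows. Note that we did \emph{not} use independence anywhere, which is exactly what the statement of the lemma allows.

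Next I would do the inductive step. Assume as inductive hypothesis that $P(\cup_{i=1}^{n-1} A_i) \leq \sum_{i=1}^{n-1} P(A_i)$. Write
$$
\bigcup_{i=1}^n A_i = \left(\bigcup_{i=1}^{n-1} A_i\right) \cup A_n,
$$
treat the left parenthesized union as a single event $B$, and apply the $n=2$ case to $B$ and $A_n$:
$$
P\!\left(\bigcup_{i=1}^n A_i\right) \leq P(B) + P(A_n) \leq \sum_{i=1}^{n-1} P(A_i) + P(A_n) = \sum_{i=1}^n P(A_i),
$$
where the second inequality is the inductive hypothesis. This closes the induction.

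There is no real obstacle here; the only subtlety worth stating carefully is that the proof relies solely on finite additivity and monotonicity of $P$, not on any independence assumption, which is precisely why the bound holds for \emph{possibly dependent} events. One might alternatively derive the result in one shot by writing the union as a telescoping disjoint union $\bigsqcup_{i=1}^n \left(A_i \setminus \bigcup_{j<i} A_j\right)$ and bounding each piece by $P(A_i)$, which avoids the induction but is essentially the same argument repackaged.
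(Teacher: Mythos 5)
Your proof is correct: the $n=2$ case via the disjoint decomposition $A_1 \sqcup (A_2\setminus A_1)$ plus monotonicity, followed by induction, is the standard and complete argument, and you rightly emphasize that no independence is used. The paper states this lemma without proof, so there is nothing to compare against; your write-up (or the telescoping one-shot variant you mention) would serve as a valid proof here.
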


The union bound can sometimes give quite uninformative results since the right hand-side sum can exceed 1. The bound becomes extremely useful, however, when dealing with \emph{rare} events: in this case, the probability on the right hand-side could be much smaller than 1. This will be indeed the case in some of our applications.  

We finally mention the law of total probability:

\begin{lemma}[Law of total probability]\label{lem:total prob}
    If $B_i$ for $i= 1, \dots, k$ is a partition of the sample space, then for any event $A$: 
    $$
    P(A) = \sum_{i=1}^k P(A\cap B_i) = \sum_{i=1}^k P(B_i) P(A|B_i)
    $$
\end{lemma}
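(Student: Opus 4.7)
The plan is to decompose the event $A$ along the partition $\{B_1, \dots, B_k\}$ and then apply finite additivity of $P$, followed by the definition of conditional probability.

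First I would note that since $\{B_i\}$ is a partition of the sample space $\Omega$, we have $\bigcup_{i=1}^k B_i = \Omega$ and $B_i \cap B_j = \emptyset$ for $i \neq j$. Intersecting with $A$, this gives $A = A \cap \Omega = A \cap \bigcup_{i=1}^k B_i = \bigcup_{i=1}^k (A \cap B_i)$, and moreover the events $A \cap B_i$ are pairwise disjoint (since any two of them are contained in disjoint $B_i, B_j$). So $A$ has been rewritten as a disjoint union of $k$ events.

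Next I would apply finite additivity of the probability measure to this disjoint union, obtaining $P(A) = \sum_{i=1}^k P(A \cap B_i)$, which is the first equality in the statement. For the second equality, I would invoke the identity $P(A \cap B_i) = P(B_i) \cdot P(A \mid B_i)$, which is stated earlier in the excerpt as the defining relation $P(A \cap B) = P(A) \cdot P(B \mid A)$ (with the roles swapped). Substituting term by term into the previous sum yields $P(A) = \sum_{i=1}^k P(B_i) P(A \mid B_i)$.

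There is no real obstacle here: the only mild subtlety is that the identity $P(A \cap B_i) = P(B_i) P(A \mid B_i)$ technically requires $P(B_i) > 0$ for $P(A \mid B_i)$ to be defined; for any $B_i$ with $P(B_i) = 0$, the term $P(A \cap B_i)$ is $0$ as well (being dominated by $P(B_i)$), so by the standard convention of dropping such terms (or defining $P(B_i) P(A \mid B_i) := 0$ in that case) the formula remains valid. I would flag this convention briefly and then conclude.
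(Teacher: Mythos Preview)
Your proof is correct and complete; the paper itself states this lemma without proof, treating it as a standard fact. Your argument is the canonical one, and your remark about the $P(B_i)=0$ case is a nice touch that the paper does not address.
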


\subsection{Expected value and variance}

Intuitively, the \emph{expected value} (or mean) $E[X]$ of a numeric random variable $X$ is the arithmetic mean of a large number of independent realizations of $X$. Formally, it is defined as $E[X] = \sum_{x\in \Omega} x\cdot f(x)$ for discrete R.V.s and $E[X] = \int_{-\infty}^{+\infty} x\cdot f(x) \,dx$ for continuous R.V.s.

Some useful properties of the expected value that we will use: 

\begin{lemma}[Linearity of expectation]\
Let $a_i$ be constants and $X_i$ be (any) random variables, for $i=1, \dots, n$. Then
   $E[\sum_{i=1}^n a_i X_i] = \sum_{i=1}^n a_i E[X_i]$
\end{lemma}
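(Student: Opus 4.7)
The plan is to prove the statement in two stages. First I would establish the base case $n=2$ for scalar multiples, namely $E[aX + bY] = aE[X] + bE[Y]$ for any two random variables $X,Y$ and constants $a,b$. Then I would extend to arbitrary $n$ by a straightforward induction on $n$, writing $\sum_{i=1}^{n} a_i X_i = \left(\sum_{i=1}^{n-1} a_i X_i\right) + a_n X_n$ and applying the $n=2$ case with the inductive hypothesis.

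For the base case, I would work directly from the definition of expectation using the joint distribution $f_{X,Y}(x,y) = P(X=x, Y=y)$ (and the analogous joint density in the continuous case, where sums become integrals). In the discrete case, I would write
\begin{equation*}
E[aX+bY] \;=\; \sum_{x \in \Omega_X}\sum_{y\in \Omega_Y} (ax + by)\, f_{X,Y}(x,y),
\end{equation*}
split the bracket into two separate double sums, pull the constants $a$ and $b$ outside, and then collapse one of the two indices in each double sum using the marginalization identities $\sum_{y}f_{X,Y}(x,y) = f_X(x)$ and $\sum_{x}f_{X,Y}(x,y) = f_Y(y)$. What is left is $a\sum_x x\, f_X(x) + b\sum_y y\, f_Y(y) = aE[X] + bE[Y]$. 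The continuous case is identical up to replacing sums by integrals.

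The crucial feature I would highlight is that this argument never invokes independence of $X$ and $Y$: it only uses the fact that marginals are obtained by summing (integrating) the joint distribution over the other coordinate. This is exactly why the lemma applies to \emph{any} random variables, as the statement claims, and is what makes linearity of expectation so useful in later chapters (where dependencies between sketch entries will be the norm).

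The main technical obstacle is minor and essentially bookkeeping: handling both the discrete and continuous cases uniformly, and being careful that the sums/integrals converge absolutely so that the splitting and reordering steps are legitimate (this is automatic whenever the individual expectations $E[X_i]$ exist, which is the implicit assumption of the statement). Once the $n=2$ case is in hand, the induction step is a one-line application, so I do not expect any real difficulty there.
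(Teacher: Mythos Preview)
Your proposal is correct and follows essentially the same approach as the paper: both use the joint distribution, split the sum, and marginalize to recover $E[X]$ and $E[Y]$, emphasizing that independence is never used. The only cosmetic difference is that the paper handles $E[X+Y]$ and $E[aX]$ as two separate mini-lemmas rather than proving $E[aX+bY]$ in one shot, and it leaves the induction to $n$ terms implicit (``the claim follows easily'') rather than spelling it out.
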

\begin{proof}
For simplicity we consider the cases of $E[X+Y]$ and $E[aX]$. The claim follows easily. 
$E[X+Y]$ is computed using the law of total probability: 
$$
\begin{array}{rcl}
    E[X+Y]  & = & \sum_{i,j}(x_i+y_j)P(X=x_i \wedge Y=y_j) \\
            & = &  \sum_{i,j}x_i \cdot P(X=x_i \wedge Y=y_j) + \sum_{i,j}y_j\cdot P(X=x_i \wedge Y=y_j)\\
            & = &  \sum_{i}x_i \sum_j P(X=x_i \wedge Y=y_j) + \sum_{j}y_j \sum_i P(X=x_i \wedge Y=y_j)\\
            & = &  \sum_{i}x_i \cdot P(X=x_i) +  \sum_{j}y_j \cdot P(Y=y_j)\\
            & = & E[X] + E[Y]
\end{array}
$$

and $E[aX] = \sum_i a\cdot x_i \cdot P(X=x_i) = a\sum_i x_i \cdot P(X=x_i) = a\cdot E[X]$.
\end{proof}

Also, the expected value of a constant $a$ is the constant itself: $E[a]=a$ (a constant $a$ can be regarded as a random variable that takes value $a$ with probability 1).

In general $E[X\cdot Y] \neq E[X] \cdot E[Y]$. Equality holds if $X$ and $Y$ are independent, though:

\begin{lemma}\label{lem:expectation product}
    if $X$ and $Y$ are independent, then $E[XY] = E[X]E[Y]$.
\end{lemma}
\begin{proof}
$$
\begin{array}{rcl}
E[XY] & = & \sum_{i,j} x_iy_j P(X= x_i \wedge Y= y_j)\\ 
 & = &  \sum_{i,j} x_iy_j P(X = x_i) P(Y = y_j) \\
 & = & \Big( \sum_i x_iP(X=x_i) \Big) \cdot \left(\sum_j y_j P(Y=y_j)\right)\\
 & = & E[X]E[Y]
\end{array}
$$
\end{proof}

More in general (prove it as an exercise):

\begin{lemma}
    If $X_1, \dots, X_n$ are fully independent, then $E[\prod_{i=1}^n X_i] = \prod_{i=1}^nE[X_i]$.
\end{lemma}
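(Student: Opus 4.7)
The plan is to proceed by induction on $n$. The base case $n=1$ is trivial, and the case $n=2$ has just been established in the excerpt for (pairwise) independent $X,Y$. For the inductive step, assuming the statement holds for $n-1$, I would write $\prod_{i=1}^n X_i = Y \cdot X_n$, where $Y = \prod_{i=1}^{n-1} X_i$, and try to reduce to the $n=2$ case applied to $Y$ and $X_n$.

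To make that reduction work I first need two facts. (i) The variables $X_1, \dots, X_{n-1}$ are themselves fully independent: this is immediate from Definition \ref{def:k-wise independence} with $k=n-1$, applied to the subset $\{X_1, \dots, X_{n-1}\}$ of a fully independent family. Hence, by the inductive hypothesis, $E[Y] = \prod_{i=1}^{n-1}E[X_i]$. (ii) $Y$ and $X_n$ are independent. This is the step that needs a short computation: for any value $x_n$ in the support of $X_n$ and any $y$ in the support of $Y$,
\begin{align*}
P(Y = y \wedge X_n = x_n) &= \sum_{\substack{(x_1,\dots,x_{n-1})\\ \prod_{i<n} x_i = y}} P\Big(\bigwedge_{i=1}^n X_i = x_i\Big) \\
&= \sum_{\substack{(x_1,\dots,x_{n-1})\\ \prod_{i<n} x_i = y}} \Big(\prod_{i=1}^{n-1} P(X_i = x_i)\Big) \cdot P(X_n = x_n) \\
&= P(X_n = x_n) \cdot P(Y = y),
\end{align*}
where the second equality uses full independence of $X_1, \dots, X_n$ and the last equality unfolds the definition of $P(Y=y)$ using the law of total probability (Lemma \ref{lem:total prob}).

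Once (i) and (ii) are in hand, the $n=2$ case already proved in the excerpt gives $E[Y \cdot X_n] = E[Y]\cdot E[X_n]$, and combining with (i) yields $E[\prod_{i=1}^n X_i] = \Big(\prod_{i=1}^{n-1}E[X_i]\Big)\cdot E[X_n] = \prod_{i=1}^n E[X_i]$, closing the induction.

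The main obstacle is really just step (ii): one has to be careful that independence of a collection of variables does not automatically transfer to independence of arbitrary functions of disjoint sub-collections without an argument, and the marginalization over all $(x_1,\dots,x_{n-1})$ with $\prod x_i = y$ is exactly the place where full (rather than merely pairwise) independence is used. Everything else is bookkeeping.
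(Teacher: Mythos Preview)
Your proof is correct. The paper does not actually prove this lemma --- it states it immediately after the $n=2$ computation and explicitly leaves the general case as an exercise --- so there is no proof in the paper to compare against. Your induction, with the key marginalization in step (ii) showing that $Y=\prod_{i<n}X_i$ and $X_n$ are independent, is exactly the standard argument and is the natural solution to the exercise.

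One minor nitpick on step (i): the paper's Definition~\ref{def:k-wise independence} defines ``fully independent'' as $n$-wise independent, i.e.\ the joint factorization is assumed only for the full set of $n$ variables. Concluding that $X_1,\dots,X_{n-1}$ are themselves fully independent is not quite ``immediate from the definition with $k=n-1$''; it requires the same one-line marginalization you already perform in (ii), namely summing $P(\bigwedge_{i=1}^n X_i=x_i)=\prod_i P(X_i=x_i)$ over all values of $x_n$. This is entirely routine, but worth stating since it is another place where full (not pairwise) independence is used.
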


Note that in the above lemma pairwise-independence is not sufficient: we need full independence. 
The expected value does not behave well with all operations, however. For example, in general $E[1/X] \neq 1/E[X]$.

The expected value of a non-negative R.V. can also be expressed as a function of the cumulative distribution function. We prove the following equality in the continuous case (the discrete case is analogous), which will turn out useful later in these notes. 

\begin{lemma}\label{lem:expected CDF}
For a non-negative continuous random variable X, it holds:
$$
E[X] = \int_{0}^{\infty} P(X\geq x) \,dx
$$
\end{lemma}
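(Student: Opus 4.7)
The plan is to start from the definition $E[X] = \int_0^\infty x f(x)\,dx$ (valid since $X$ is non-negative, so the density is supported on $[0,\infty)$) and rewrite the factor $x$ as an integral, then swap the order of integration by Fubini's theorem. Concretely, I would use the identity $x = \int_0^x 1\,dt$ for $x \geq 0$ to obtain
$$E[X] = \int_0^\infty \left(\int_0^x 1\,dt\right) f(x)\,dx = \iint_{\{(t,x)\,:\,0\leq t \leq x\}} f(x)\,dt\,dx.$$
The region of integration $\{(t,x) : 0 \le t \le x\}$ can equivalently be described as $\{(t,x) : t \geq 0,\ x \geq t\}$, so swapping the order of integration yields
$$E[X] = \int_0^\infty \left(\int_t^\infty f(x)\,dx\right) dt = \int_0^\infty P(X \geq t)\,dt,$$
which is the desired identity (after renaming the integration variable $t \mapsto x$).

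An equally clean alternative I would present as a remark is integration by parts. Writing $E[X] = \int_0^\infty x f(x)\,dx$ and taking $u = x$, $dv = f(x)\,dx$, so $du = dx$ and $v = F(x) - 1 = -P(X \geq x)$ (choosing this antiderivative of $f$ to vanish at infinity), we obtain
$$E[X] = \Big[-x \cdot P(X \geq x)\Big]_0^\infty + \int_0^\infty P(X \geq x)\,dx = \int_0^\infty P(X \geq x)\,dx,$$
provided the boundary term vanishes.

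The only real obstacle is the justification of the two technical steps: applying Fubini's theorem (in the first approach) and showing that $x\cdot P(X \geq x) \to 0$ as $x \to \infty$ (in the second approach). Both are immediate under the tacit assumption $E[X] < \infty$: Fubini applies since the integrand $f(x)$ is non-negative (Tonelli's theorem actually suffices, so no integrability condition is even needed to allow the swap, though both sides may be $+\infty$), and $x\cdot P(X \geq x) \leq \int_x^\infty t f(t)\,dt \to 0$ by dominated convergence when the expectation is finite. Since the lemma is stated for non-negative $X$, I would adopt the Fubini/Tonelli route as the primary proof because it handles both the finite and infinite cases uniformly and avoids any fuss about boundary terms.
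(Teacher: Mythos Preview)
Your proposal is correct and takes essentially the same approach as the paper: both proofs write the relevant quantity as a double integral over the region $\{(t,x):0\le t\le x\}$ and swap the order of integration. The only cosmetic difference is direction---the paper starts from $\int_0^\infty P(X\ge x)\,dx$ and arrives at $E[X]$, while you start from $E[X]$ and arrive at the integral of the tail probability---and you are more explicit about the Tonelli justification.
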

\begin{proof}
First, express  $P(X\geq x) = \int_{x}^{\infty} f(t) \,dt$:
$$
\int_{0}^{\infty} P(X\geq x) \,dx = \int_{0}^{\infty} \int_{x}^{\infty} f(t) \,dt \,dx
$$
In the latter integral, for a particular value of $t$ the value $f(t)$ is included in the summation for \emph{every} value of $x \leq t$. This observation allows us to invert the order of the two integrals as follows:
$$
\int_{0}^{\infty} \int_{x}^{\infty} f(t) \,dt \,dx = \int_{0}^{\infty} \int_{0}^{t} f(t) \,dx \,dt
$$
To conclude, observe that $\int_{0}^{t} f(t) \,dx = f(t) \cdot \int_{0}^{t} 1 \,dx = f(t) \cdot t$, so the latter becomes:
$$
\int_{0}^{\infty} \int_{0}^{t} f(t) \,dx \,dt = \int_{0}^{\infty} t\cdot f(t) \,dt = E[X]
$$

\end{proof}

The \emph{Variance} of a R.V. $X$ tells us how much the R.V. deviates from its mean: $Var[X] = E[(X-E[X])^2]$. The following equality will turn out useful:

\begin{lemma}
$Var[X] = E[X^2] - E[X]^2$
\end{lemma}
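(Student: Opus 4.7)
The plan is to start from the definition $Var[X] = E[(X-E[X])^2]$ and expand the square inside the expectation. First I would write $(X-E[X])^2 = X^2 - 2X\cdot E[X] + E[X]^2$, treating $E[X]$ throughout as a constant (it is a fixed real number, not a random variable, even though it is computed from one).

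Next I would apply linearity of expectation (proved earlier in the excerpt) to distribute the outer expectation across the three summands:
\[
E[(X-E[X])^2] = E[X^2] - 2\,E[X\cdot E[X]] + E[E[X]^2].
\]
Then I would simplify the last two terms using the rule $E[aX] = aE[X]$ and $E[a]=a$ for any constant $a$, both noted in the excerpt right after the linearity lemma: since $E[X]$ is a constant we get $E[X\cdot E[X]] = E[X]\cdot E[X] = E[X]^2$ and $E[E[X]^2] = E[X]^2$. Substituting back yields
\[
E[X^2] - 2\,E[X]^2 + E[X]^2 = E[X^2] - E[X]^2,
\]
which is exactly the claim.

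There is no real obstacle here; the only subtlety worth flagging explicitly in the write-up is that $E[X]$ must be viewed as a deterministic scalar when we pull it out of the expectation, so that the constant-factor and constant-expectation rules apply. I would make this remark in one line so the reader does not get confused into thinking we are (incorrectly) invoking $E[X\cdot Y]=E[X]E[Y]$, which only holds under independence.
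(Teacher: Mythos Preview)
Your proposal is correct and essentially identical to the paper's proof: both expand the square, apply linearity of expectation, and then simplify by observing that $E[X]$ is a constant so that $E[E[X]] = E[X]$ and $E[E[X]^2] = E[X]^2$. The paper makes the same one-line remark you suggest about $E[X]$ being a constant.
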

\begin{proof}
From linearity of expectation:
$Var[X] = E[(X-E[X])^2] = E[X^2 - 2XE[X] + E[X]^2] = E[X^2] - 2E[X]\cdot E[E[X]] + E[E[X]^2]$. If $Y$ is a R.V., note that $E[Y]$ is a constant (or, a random variable taking one value with probability 1). The expected value of a constant is the constant itself, thus the above is equal to $E[X^2] - E[X]^2$.
\end{proof}

If $X$ and $Y$ are independent, then one can verify that $Var[X+Y] = Var[X] + Var[Y]$. More in general, 

\begin{lemma}\label{lem:variance pairwise ind}
If $X_1, \dots, X_n$ are pairwise-independent, then $Var[\sum_{i=1}^n X_i] = \sum_{i=1}^n Var[X_i]$.
\end{lemma}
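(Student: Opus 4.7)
The plan is to expand the variance of the sum using the identity $Var[Y] = E[Y^2] - E[Y]^2$ already established in the excerpt, with $Y = \sum_{i=1}^n X_i$. First I would write $E\left[\left(\sum_{i=1}^n X_i\right)^2\right] = \sum_{i=1}^n \sum_{j=1}^n E[X_i X_j]$ and $\left(\sum_{i=1}^n E[X_i]\right)^2 = \sum_{i=1}^n \sum_{j=1}^n E[X_i] E[X_j]$, both by linearity of expectation applied to the expanded double sum.

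Subtracting, I get
$$
Var\left[\sum_{i=1}^n X_i\right] = \sum_{i=1}^n \sum_{j=1}^n \bigl( E[X_i X_j] - E[X_i] E[X_j]\bigr).
$$
I would then split the double sum into the diagonal terms $i=j$ and the off-diagonal terms $i\neq j$. For $i=j$, the summand is exactly $E[X_i^2] - E[X_i]^2 = Var[X_i]$, and these contribute $\sum_{i=1}^n Var[X_i]$. The key step is handling the off-diagonal terms: here each summand involves only the pair $(X_i, X_j)$, and pairwise independence (Definition \ref{def:k-wise independence} with $k=2$) gives $E[X_i X_j] = E[X_i] E[X_j]$ by the same calculation used earlier in the excerpt to prove $E[XY]=E[X]E[Y]$ for independent $X,Y$. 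Hence every off-diagonal term vanishes, leaving exactly $\sum_{i=1}^n Var[X_i]$.

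The main conceptual point worth emphasizing — and the only subtle obstacle — is that we do \emph{not} need full independence: because the expansion of a squared sum only produces products of \emph{two} variables at a time, pairwise independence is precisely enough to kill every cross term. This is exactly the property that makes pairwise-independent hash families so useful later in the course. The rest is a routine algebraic rearrangement and requires no further ingredients beyond linearity of expectation and the two-variable case $E[XY]=E[X]E[Y]$ already proved.
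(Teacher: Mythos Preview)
Your proof is correct and follows essentially the same approach as the paper: both expand $Var\bigl[\sum_i X_i\bigr] = E\bigl[(\sum_i X_i)^2\bigr] - \bigl(\sum_i E[X_i]\bigr)^2$, separate diagonal from off-diagonal terms, and use pairwise independence to make the cross terms $E[X_iX_j]-E[X_i]E[X_j]$ vanish. The only cosmetic difference is that the paper splits into diagonal and off-diagonal parts before taking the difference, whereas you subtract first and then split; the content is identical.
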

\begin{proof}
We have
$Var[\sum_{i=1}^n X_i]  = E[(\sum_{i=1}^n X_i)^2] - E[\sum_{i=1}^n X_i]^2 $. The first term evaluates to 
$$
E[(\sum_{i=1}^n X_i)^2] = \sum_{i=1}^n E[X_i^2] + 2\sum_{i\neq j} E[X_iX_j]
$$
Recalling that $E[X_i]E[X_j] = E[X_iX_j]$ if $X_i$ and $X_j$ are independent, the second term evaluates to:
$$
E[\sum_{i=1}^n X_i]^2 = \left(\sum_{i=1}^n E[X_i]\right)^2 =  \sum_{i=1}^n E[X_i]^2 + 2\sum_{i\neq j} E[X_i]E[X_j] = \sum_{i=1}^n E[X_i]^2 + 2\sum_{i\neq j} E[X_iX_j]
$$
Thus, the difference between the two terms is equal to
$$
\sum_{i=1}^n E[X_i^2] - \sum_{i=1}^n E[X_i]^2 = \sum_{i=1}^n \left( E[X_i^2] - E[X_i]^2  \right) = \sum_{i=1}^n Var[X_i]
$$
\end{proof}

Crucially, note that the above proof does not require full independence (just pairwise-independence). This will be important later.

Let $X$ be a R.V. and $A$ be an event. The \emph{conditional expectation} of $X$ conditioned on $A$ is defined as $E[X|A] = \sum_x x\cdot P(X=x | A)$.
The law of total probability (Lemma \ref{lem:total prob}) implies (prove it as an exercise):

\begin{lemma}[Law of total expectation]\label{lem:total exp}
	If $B_i$ for $i= 1, \dots, k$ are a partition of the sample space, then for any random variable $X$: 
	$$
	E[X] = \sum_{i=1}^k P(B_i) E[X|B_i]
	$$
\end{lemma}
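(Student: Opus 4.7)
The plan is to unfold both sides from their definitions and to use the law of total probability (Lemma \ref{lem:total prob}) on the events $\{X = x\}$ with respect to the given partition. Concretely, I would start from the right-hand side, expand $E[X \mid B_i]$ using the definition $E[X \mid B_i] = \sum_{x} x \cdot P(X = x \mid B_i)$, and then exchange the two sums so that the inner sum becomes $\sum_{i=1}^{k} P(B_i) \cdot P(X = x \mid B_i)$, which by Lemma \ref{lem:total prob} collapses to $P(X = x)$. Substituting this back yields $\sum_x x \cdot P(X = x) = E[X]$.

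In more detail, the key chain of equalities I would write out is:
$$
\sum_{i=1}^{k} P(B_i)\, E[X \mid B_i] \;=\; \sum_{i=1}^{k} P(B_i) \sum_{x} x \cdot P(X = x \mid B_i) \;=\; \sum_{x} x \sum_{i=1}^{k} P(B_i)\, P(X = x \mid B_i).
$$
Then I would apply Lemma \ref{lem:total prob} with $A = \{X = x\}$ to rewrite the inner sum as $P(X = x)$, and conclude with $\sum_{x} x \cdot P(X = x) = E[X]$. Finally I would swap the two sides of the resulting equality to match the statement in the lemma.

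I do not expect any real obstacle here: the argument is a two-line manipulation once the definitions are unpacked, completely parallel to the proof of Lemma \ref{lem:total prob} from which this result is being derived as an exercise. The only minor care needed is to justify the exchange of the two summations; in the discrete case (implicit in the notation used throughout this subsection) this is automatic since all terms are non-negative (or by Fubini/Tonelli if one wants to be formal), so no convergence issues arise. For a continuous random variable the same proof goes through by replacing the sum over $x$ with an integral against the density $f(x)$, using the conditional density $f(x \mid B_i)$ and the identity $f(x) = \sum_{i} P(B_i)\, f(x \mid B_i)$ (which is again the law of total probability applied pointwise).
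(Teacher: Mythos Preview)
Your proposal is correct and is precisely the argument the paper has in mind: the paper does not actually write out a proof but leaves it as an exercise, explicitly stating that the law of total probability (Lemma~\ref{lem:total prob}) implies the result. Your expansion of $E[X\mid B_i]$, swap of sums, and application of Lemma~\ref{lem:total prob} to collapse the inner sum to $P(X=x)$ is exactly the intended derivation.
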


\subsection{Bernoullian R.V.s}

\emph{Bernoullian} R.V.s model the event of flipping a (possibly biased) coin:

\begin{definition}
A Bernoullian R.V. $X$ takes the value 1 with some probability $p$ (parameter of the Bernoullian), and value 0 with probability $1-p$. The notation $X \sim Be(p)$ means that $X$ is Bernoullian with parameter $p$.
\end{definition}

\begin{lemma}\label{lem:Ber Exp Var}
If $X \sim Be(p)$, then $X$ has expected value $E[X] = p$ and variance $Var[X] = p(1-p)$. 
\end{lemma}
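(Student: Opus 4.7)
The plan is to compute both quantities directly from the definitions of expected value and variance, exploiting the fact that the support of $X$ consists only of the two values $0$ and $1$.

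First I would compute $E[X]$ from the definition of expected value for discrete random variables. Since the support is $\Omega = \{0,1\}$ with probability mass function $P(X=1) = p$ and $P(X=0) = 1-p$, a direct application gives
\[
E[X] = \sum_{x \in \Omega} x \cdot P(X=x) = 0 \cdot (1-p) + 1 \cdot p = p.
\]

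For the variance, I would use the identity $Var[X] = E[X^2] - E[X]^2$ proved earlier in the excerpt. The key observation (which makes this essentially trivial) is that since $X$ only takes values in $\{0,1\}$, we have $X^2 = X$ pointwise, and therefore $E[X^2] = E[X] = p$. Substituting into the variance formula yields
\[
Var[X] = E[X^2] - E[X]^2 = p - p^2 = p(1-p).
\]

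There is no real obstacle here: the lemma is a direct unpacking of the definitions, and the only mild trick is noticing that $X^2 = X$ when $X$ is $\{0,1\}$-valued (avoiding the need to compute $E[(X-p)^2]$ by hand via the definition of variance, although that calculation would also go through in one line since only two outcomes contribute to the sum).
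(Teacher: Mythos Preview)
Your proof is correct and essentially identical to the paper's own proof: both compute $E[X]$ directly from the two-point distribution and then obtain the variance via $Var[X]=E[X^2]-E[X]^2$ using the observation that $X^2=X$ for $\{0,1\}$-valued $X$.
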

\begin{proof}
    Note that $X^2=X$ since $X\in\{0,1\}$.
    $E[X] = 0\cdot (1-p) + 1\cdot p = p$. $Var[X] = E[X^2]-E[X]^2 = p - p^2 = p(1-p)$.
\end{proof}

Note, as a corollary of the previous lemma, that $Var[X] \leq E[X]$ and $Var[X] \leq 1-E[X]$ for Bernoullian R.V.s.

\section{Concentration inequalities}

Concentration inequalities provide bounds on how likely it is that a random variable deviates from some value (typically, its expected value). These will be useful in the next sections to calculate the probability of obtaining a good enough approximation with our randomized algorithms.

\subsection{Markov's inequality}

Suppose we know the mean $E[X]$ of a nonnegative R.V. $X$.
Markov's inequality can be used to bound the probability that a random variable takes a value larger than some positive constant. It goes as follows:

\begin{lemma}[Markov's inequality]
For any nonnegative R.V. $X$ and any $a>0$ we have: 
$$
P(X \geq a) \leq E[X]/a
$$
\end{lemma}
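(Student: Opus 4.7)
The plan is to prove $P(X \geq a) \leq E[X]/a$ by sandwiching $a \cdot P(X \geq a)$ below the expected value $E[X]$ of the nonnegative random variable $X$. I see two natural routes, and I would favor the second because it reuses a lemma just proved in the text.

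The first (direct) route uses an indicator-trick. Define the (random) indicator $\mathbf{1}_{\{X \geq a\}}$, which equals $1$ when $X \geq a$ and $0$ otherwise. The key pointwise inequality is $a \cdot \mathbf{1}_{\{X \geq a\}} \leq X$: when $X \geq a$ the left-hand side equals $a \leq X$; when $X < a$ the left-hand side is $0$ and the right-hand side is nonnegative (this is where nonnegativity of $X$ is used). Taking expectations on both sides and using linearity, together with $E[\mathbf{1}_{\{X \geq a\}}] = P(X \geq a)$, yields $a \cdot P(X \geq a) \leq E[X]$, and dividing by $a > 0$ gives the claim.

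The second (cleaner) route applies Lemma~\ref{lem:expected CDF}, which states $E[X] = \int_0^\infty P(X \geq x)\,dx$ for a nonnegative continuous $X$. Since $P(X \geq x)$ is nonincreasing in $x$, for every $x \in [0,a]$ we have $P(X \geq x) \geq P(X \geq a)$. Therefore
$$
E[X] \;=\; \int_0^\infty P(X \geq x)\,dx \;\geq\; \int_0^a P(X \geq x)\,dx \;\geq\; \int_0^a P(X \geq a)\,dx \;=\; a \cdot P(X \geq a),
$$
and dividing by $a$ concludes the proof. For the discrete case I would replace the integral by the analogous sum $E[X] = \sum_{x \geq 1} P(X \geq x)$, with the same monotonicity argument.

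Honestly, there is no real obstacle here: the only subtle point is emphasizing where nonnegativity of $X$ is used (to discard the contribution of $\{X < a\}$ without worrying about negative mass) and that $a > 0$ (to divide at the end). I would note in a brief remark that the bound can be extremely loose — it only uses the mean — which motivates sharper tools such as Chebyshev's and Chernoff bounds in the sequel.
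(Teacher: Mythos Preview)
Your proposal is correct. Your first (indicator) route is essentially the paper's proof: the paper writes out $E[X] = \sum_{x\geq 0} x\cdot P(X=x) \geq \sum_{x\geq a} x\cdot P(X=x) \geq a\cdot \sum_{x\geq a} P(X=x) = a\cdot P(X\geq a)$, which is the same splitting-and-bounding that your pointwise inequality $a\cdot \mathbf{1}_{\{X\geq a\}}\leq X$ encodes more abstractly.

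Your second route via Lemma~\ref{lem:expected CDF} is a genuinely different decomposition that the paper does not use. It is perfectly valid and has the pedagogical advantage of reusing a result just established; the trade-off is that it is slightly less self-contained, since Lemma~\ref{lem:expected CDF} itself rests on a Fubini-type interchange, whereas the direct argument needs nothing beyond the definition of expectation and nonnegativity. Since the paper favors the elementary route, I would lead with your first argument and perhaps mention the tail-formula proof as an alternative.
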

\begin{proof}
We prove the inequality for discrete R.V.s (the continuous case is similar). $E[X] = \sum_{x=0}^\infty x\cdot P(X = x) \geq \sum_{x=a}^\infty x\cdot P(X = x) \geq a\cdot \sum_{x=a}^\infty P(X = x) = a\cdot P(X\geq a)$.
\end{proof}

\subsection{Chebyshev's inequality}

Chebyshev's inequality gives us a stronger bound than Markov's, provided that we know the random variable's variance. This inequality bounds the probability that the R.V. deviates from its mean by some fixed value. Note that we do not require $X$ to be nonnegative.

\begin{lemma}[Chebyshev's inequality]\label{lem:chebyshev}
For any $k>0$:
$$
P(|X-E[X]| \geq k) \leq Var[X]/k^2
$$
\end{lemma}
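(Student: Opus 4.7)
The plan is to reduce Chebyshev's inequality to Markov's inequality, which has already been proved in the notes. The key observation is that although $X - E[X]$ can take negative values (so Markov does not apply directly), the squared deviation $(X - E[X])^2$ is manifestly nonnegative, and this is precisely the quantity whose expectation defines $Var[X]$.

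First, I would introduce the auxiliary random variable $Y = (X - E[X])^2$. By definition, $Y \geq 0$, so Markov's inequality is applicable to $Y$. Moreover, by the definition of variance given earlier in the notes, $E[Y] = E[(X - E[X])^2] = Var[X]$.

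Next, I would rewrite the event we care about in terms of $Y$: since squaring is monotone on nonnegative reals, the event $|X - E[X]| \geq k$ (with $k > 0$) is identical to the event $(X - E[X])^2 \geq k^2$, that is, $Y \geq k^2$. Therefore
$$P(|X - E[X]| \geq k) = P(Y \geq k^2).$$

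Finally, I would apply Markov's inequality to $Y$ with threshold $a = k^2 > 0$, obtaining
$$P(Y \geq k^2) \leq \frac{E[Y]}{k^2} = \frac{Var[X]}{k^2},$$
which is exactly the desired bound. There is no real obstacle here: the only subtlety is recognizing that Markov's inequality needs a nonnegative variable, which is the sole reason we pass to the squared deviation rather than working with $X - E[X]$ directly. The whole argument is essentially a two-line reduction.
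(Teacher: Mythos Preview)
Your proposal is correct and follows exactly the same route as the paper: apply Markov's inequality to the nonnegative random variable $(X-E[X])^2$, using that $|X-E[X]|\geq k$ is equivalent to $(X-E[X])^2\geq k^2$. The paper's proof is the same argument compressed into a single displayed line.
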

\begin{proof}
We simply apply Markov to to the (nonnegative) R.V. $(X-E[X])^2$:
$$
P(|X-E[X]| \geq k) = P((X-E[X])^2 \geq k^2) \leq E[(X-E[X])^2]/k^2 = Var[X]/k^2
$$
\end{proof}

\subsubsection{Boosting by averaging}

A trick to get a better bound is to draw $s$ pairwise-independent realizations of  the R.V. $X$ and average them out.

\begin{lemma}[Boosted Chebyshev's inequality]\label{lem:boosted Cheb}
Let $X$ be a R.V.
Fix $k>0$ and integer $s\geq 1$.
Let $X_1, \dots, X_s$ be pairwise independent and distributed as $X$, and define  $\hat X = \sum_{i=1}^s X_i / s$. Then:
$$
P(|\hat X-E[X]| \geq k) \leq \frac{Var[X]}{s\cdot k^2}
$$
\end{lemma}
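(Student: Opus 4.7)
The plan is to apply the standard Chebyshev inequality (Lemma \ref{lem:chebyshev}) directly to the averaged random variable $\hat X$, so the whole task reduces to computing $E[\hat X]$ and $Var[\hat X]$ in terms of $E[X]$ and $Var[X]$.

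First I would observe that each $X_i$ is distributed as $X$, so $E[X_i] = E[X]$ and $Var[X_i] = Var[X]$ for every $i$. By linearity of expectation (which does not require any independence),
$$E[\hat X] \;=\; E\!\left[\frac{1}{s}\sum_{i=1}^{s} X_i\right] \;=\; \frac{1}{s}\sum_{i=1}^s E[X_i] \;=\; E[X].$$
Next I would compute the variance. Since scaling a random variable by a constant $a$ scales its variance by $a^2$, we have $Var[\hat X] = (1/s^2)\, Var[\sum_{i=1}^s X_i]$. Here is where pairwise independence enters: by Lemma \ref{lem:variance pairwise ind} (which is stated precisely for pairwise-independent variables, not full independence), $Var[\sum_i X_i] = \sum_i Var[X_i] = s\cdot Var[X]$. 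Hence
$$Var[\hat X] \;=\; \frac{1}{s^2}\cdot s\cdot Var[X] \;=\; \frac{Var[X]}{s}.$$

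Finally I would apply Chebyshev's inequality (Lemma \ref{lem:chebyshev}) to $\hat X$ with deviation $k$:
$$P(|\hat X - E[X]| \geq k) \;=\; P(|\hat X - E[\hat X]| \geq k) \;\leq\; \frac{Var[\hat X]}{k^2} \;=\; \frac{Var[X]}{s\cdot k^2},$$
which is exactly the claimed bound.

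There is essentially no hard step here; the only subtle point worth flagging is that we must invoke the pairwise-independent version of the variance-of-a-sum identity, not the full-independence version. This is precisely the hypothesis we are given, and it is the reason Lemma \ref{lem:variance pairwise ind} was stated in the weaker pairwise-independent form earlier. The effect of the averaging is then transparent: the mean is preserved while the variance shrinks by a factor of $s$, and plugging into Chebyshev yields the claimed $1/s$ improvement over Lemma \ref{lem:chebyshev}.
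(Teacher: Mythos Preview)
Your proof is correct and follows essentially the same approach as the paper: compute $E[\hat X]=E[X]$ by linearity, use pairwise independence to get the variance of the sum, and apply Chebyshev. The only cosmetic difference is that the paper rescales the event to $|\sum_i X_i - E[\sum_i X_i]| \geq sk$ and applies Chebyshev to $\sum_i X_i$, whereas you compute $Var[\hat X]=Var[X]/s$ and apply Chebyshev directly to $\hat X$; the two computations are trivially equivalent.
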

\begin{proof}
Since $X_i\sim X$, it is easy to see by linearity of expectation that $E[\sum_{i=1}^s X_i / s] = E[X]$. Moreover, by pairwise independence we have $Var[\sum_{i=1}^s X_i] = s\cdot Var[X]$.
Then:
$$
\begin{array}{ccl}
P(|\hat X - E[X]| \geq k) & = & P\left(| \frac{1}{s}\sum_{i=1}^s X_i - E\left[\frac{1}{s}\sum_{i=1}^s X_i\right] | \geq k\right)\\
& = & P\left(\frac{1}{s} |\sum_{i=1}^s X_i - E\left[\sum_{i=1}^s X_i\right] | \geq k\right)\\
& = & P\left(|\sum_{i=1}^s X_i - E\left[\sum_{i=1}^s X_i\right] | \geq  sk\right)\\
& \le & Var[\sum_{i=1}^s X_i]/(s^2k^2)\\
& = & s\cdot Var[X] /(s^2k^2) \\
& = & Var[X] / (sk^2)
\end{array}
$$
\end{proof}

\subsection{Chernoff-Hoeffding's inequalities}

Chernoff-Hoeffding's inequalities are used to bound the probability that the sum $Y = \sum_{i=1}^n Y_i$ of $n$ independent identically distributed (iid)  R.V.s $Y_i$ exceeds by a given value its expectation.
The inequalities come in two flavors: with additive error and with relative (multiplicative) error. We will prove both for completeness, but only use the former in these notes. 
The inequalities give a much stronger bound w.r.t. Markov precisely because we know a particular property of the R.V.  $Y$ (i.e. it is a sum of iid. R.V.'s).
Here we study the simplified case of Bernoullian R.V.s. 

\begin{lemma}[Chernoff-Hoeffding bound, additive form]\label{lem:add Hoeffding}
Let $Y_1, \dots, Y_n$ be fully independent $Be(p)$ random variables. Denote $Y = \sum_{i=1}^n Y_i$. Then, for all $t\geq 0$:

\begin{itemize}
    \item $P(Y\geq E[Y] + t) \leq e^{\frac{-t^2}{2n}}$  [one sided, right]
    \item $P(Y\leq E[Y] - t) \leq e^{\frac{-t^2}{2n}}$  [one sided, left]
    \item $P(|Y-E[Y]|\geq t) \leq 2e^{\frac{-t^2}{2n}}$ [double sided] 
\end{itemize}
\end{lemma}
\begin{proof}
Let $X_i = Y_i - E[Y_i]$. Each $X_i$ is distributed in the interval $[-1,1]$, has mean $E[X_i] = E[Y_i-E[Y_i]] = E[Y_i]-E[Y_i] =  0$, and takes value $1-E[Y_i] = 1-p$ with probability $p$ and value $0-E[Y_i] = -p$ with probability $1-p$. Let $X = \sum_{i=1}^n X_i$. In particular, $X = Y - E[Y]$.
We  prove $P(X\geq t) \leq e^{\frac{-t^2}{2n}}$. The same argument will hold for $P(-X\geq t) \leq e^{\frac{-t^2}{2n}}$, so by union bound we will get $P(|Y-E[Y]|\geq t)  = P(|X|\geq t) \leq 2e^{\frac{-t^2}{2n}}$.

Let $s>0$ be some free parameter that we will later fix to optimize our bound. The event $X \geq t$ is equivalent to the event $e^{sX} \geq e^{st}$, so:
$$
P(X \geq t) = P(e^{s\sum_{i=1}^nX_i} \geq e^{st})
$$
We apply Markov to the (non-negative\footnote{Note: $X$ might be negative, but $e^{sX}$ is always positive, so we can indeed apply Markov's inequality}) R.V. $e^{s\sum_{i=1}^nX_i}$, obtaining
$$
P(X \geq t) \leq E[e^{s\sum_{i=1}^nX_i}] / e^{st} = E\left[\prod_{i=1}^n e^{sX_i}\right] /  e^{st}
$$
which, since the $X_i$'s are fully independent and identically distributed (in particular, they have the same expected value), yields the inequality 
\begin{equation}\label{eq:Hbound1}
P(X \geq t) \leq \left(E[e^{sX_1}]\right)^n  / e^{st}
\end{equation}

The goal is now to bound the expected value $E[e^{sX_1}]$ appearing in the above quantity. Define $A = \frac{1+X_1}{2}$ and $B = \frac{1-X_1}{2}$. Note that:

\begin{itemize}
	\item $A \geq 0$ and $B\geq 0$
	\item $A+B = \frac{(1+X_1)+(1-X_1)}{2} = 1$
	\item $A - B = \frac{(1+X_1)-(1-X_1)}{2} = X_1$
\end{itemize}

We recall Jensen's inequality: if $f(x)$ is convex, then for any $0\leq a \leq 1$ we have $f(ax + (1-a)y) \leq af(x) + (1-a)f(y)$. Note that $e^x$ is convex so, by the above three observations:

$$
\begin{array}{rcl}
e^{sX_1} & = & e^{s(A-B)}\\
& = & e^{sA - sB}\\
& \leq & Ae^{s} + Be^{-s}\\
& = & \frac{1+X_1}{2}e^s + \frac{1-X_1}{2}e^{-s}\\
& = & \frac{e^s+e^{-s}}{2} + X_1 \cdot \frac{e^s-e^{-s}}{2}
\end{array}
$$

Since $E[X_1] = 0$, we obtain:

$$
\begin{array}{rcl}
E[e^{sX_1}] & \leq & E\left[\frac{e^s+e^{-s}}{2} + X_1 \cdot \frac{e^s-e^{-s}}{2}\right]\\
& = & \frac{e^s+e^{-s}}{2} +  \frac{e^s-e^{-s}}{2} \cdot E\left[X_1\right]\\
& = & (e^s+e^{-s})/2\\
\end{array}
$$

The Taylor expansion of $e^s$ is $e^s = 1+ s + \frac{s^2}{2!} + \frac{s^3}{3!} + \dots$, while that of $e^{-s}$ is $ e^{-s} = 1 - s + \frac{s^2}{2!} - \frac{s^3}{3!} + \dots$ (i.e. odd terms appear with negative sign). 
Let $even$ and $odd$ denote the sum of even and odd terms, respectively.
Replacing the two Taylor series in the quantity $(e^s+e^{-s})/2$, we obtain 
$$
\begin{array}{ccl}
    (e^s+e^{-s})/2  & = & (even+odd)/2 + (even-odd)/2  \\
     & = & even\\
     & = & \sum_{i = 0,2,4,\dots} \frac{s^{i}}{i!}\\
     & = & \sum_{i = 0}^\infty  \frac{s^{2i}}{(2i)!}
\end{array}
$$

Now, note that $(2i)! = 1\cdot 2 \cdot 3 \cdots i \cdot (i+1) \cdots 2i \geq i! \cdot 2^i$, so 
$$
E[e^{sX_1}] \leq \sum_{i=0}^\infty \frac{s^{2i}}{(2i)!} \leq \sum_{i=0}^\infty \frac{s^{2i}}{i!\cdot 2^i} = \sum_{i=0}^\infty \frac{(s^2/2)^i}{i!}
$$
The term $\sum_{i=0}^\infty \frac{(s^2/2)^i}{i!}$ is precisely the Taylor expansion of $e^{s^2/2}$. We conclude
$$
E[e^{sX_1}] \leq e^{s^2/2}
$$
and Inequality \ref{eq:Hbound1} becomes
\begin{equation}\label{eq:Hbound2}
P(X \geq t) \leq \left(e^{s^2/2}\right)^n  / e^{st} = e^{(ns^2-2st)/2}
\end{equation}

Recall that $s$ is a free parameter. In order to obtain the strongest bound, we have to minimize $e^{(ns^2-2st)/2}$ as a function of $s$. This is equivalent to minimizing $ns^2-2st$. The coefficient of the second-order term is $n>0$, so the polynomial indeed has a minimum. In order to find it, we find the root of its derivative: $2ns-2t = 0$, which tells us that the minimum occurs at $s = t/n$. Replacing $s = t/n$ in Inequality \ref{eq:Hbound2}, we finally obtain 
$P(X \geq t) \leq e^{-t^2/(2n)}$.
\end{proof}

We will sometimes apply Chernoff-Hoeffding to the average of Bernoullian random variables. 
By simply replacing $t/n = \epsilon$ in 
Lemma \ref{lem:add Hoeffding}, we obtain:

\begin{corollary}\label{cor:avg Chernoff-Hoeffding}
Let $Y_1, \dots, Y_n$ be fully independent $Be(p)$ random variables. Let $\hat Y = \frac{1}{n}\sum_{i=1}^n Y_i$ be an estimator for $p$. Then, for all $0 < \epsilon < 1$:
$$
P(|\hat Y - p |\geq \epsilon) \leq 2e^{-\epsilon^2n/2}
$$
\end{corollary}

If $E[Y]$ is small, a bound on the relative error is often more useful:

\begin{lemma}[Chernoff-Hoeffding bound, multiplicative form]\label{lem:mult chernoff}
Let $Y_1, \dots, Y_n$ be fully independent $Be(p)$ random variables. Denote $Y = \sum_{i=1}^n Y_i$ and $\mu = E[Y] = np$. Then, for all $0 < \epsilon < 1$:

\begin{itemize}
    \item $P(Y \geq (1+\epsilon)  \mu) \leq e^{-\mu\epsilon^2/3}$  [one sided, right]
    \item $P(Y \leq (1-\epsilon)  \mu) \leq e^{-\mu\epsilon^2/2}$ [one sided, left]
    \item $P(|Y-\mu|\geq \epsilon  \mu) \leq 2e^{-\epsilon^2 \mu/3}$ [double sided] 
\end{itemize}

\end{lemma}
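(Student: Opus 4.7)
The plan is to use the Chernoff method, which mirrors the structure of the additive bound already proved in Lemma~\ref{lem:add Hoeffding}: apply Markov's inequality to the moment generating function $e^{sY}$, exploit full independence to factor the expectation, and then optimize the free parameter $s>0$. Concretely, for the right-tail bound I would write
\[
P(Y \geq (1+\epsilon)\mu) \;=\; P\!\left(e^{sY} \geq e^{s(1+\epsilon)\mu}\right) \;\leq\; \frac{E[e^{sY}]}{e^{s(1+\epsilon)\mu}} \;=\; \frac{\prod_{i=1}^n E[e^{sY_i}]}{e^{s(1+\epsilon)\mu}}.
\]
Since each $Y_i\sim Be(p)$, $E[e^{sY_i}] = 1 + p(e^s - 1) \leq e^{p(e^s-1)}$ (using $1+x\leq e^x$), so $\prod_i E[e^{sY_i}] \leq e^{\mu(e^s-1)}$.

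The next step is to minimize $e^{\mu(e^s-1) - s(1+\epsilon)\mu}$ over $s>0$. Differentiating the exponent and setting it to zero gives $s = \ln(1+\epsilon)$, yielding the classical intermediate bound
\[
P(Y \geq (1+\epsilon)\mu) \;\leq\; \left(\frac{e^{\epsilon}}{(1+\epsilon)^{1+\epsilon}}\right)^{\!\mu}.
\]
To convert this into the cleaner form $e^{-\mu\epsilon^2/3}$, I would show that for $0<\epsilon<1$ we have $(1+\epsilon)\ln(1+\epsilon) \geq \epsilon + \epsilon^2/3$. Expanding $\ln(1+\epsilon) = \epsilon - \epsilon^2/2 + \epsilon^3/3 - \cdots$ and multiplying by $(1+\epsilon)$ gives a series $\epsilon + \epsilon^2/2 - \epsilon^3/6 + \cdots$, so the inequality reduces to a Taylor-series estimate that is straightforward for $\epsilon<1$.

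The left-tail bound follows the symmetric strategy, but applying Markov to $e^{-sY}$ with $s>0$, so that $\{Y\leq(1-\epsilon)\mu\}$ becomes $\{e^{-sY}\geq e^{-s(1-\epsilon)\mu}\}$. The same MGF bound gives $E[e^{-sY}] \leq e^{\mu(e^{-s}-1)}$, and optimizing yields $s = -\ln(1-\epsilon)$ and the intermediate inequality
\[
P(Y \leq (1-\epsilon)\mu) \;\leq\; \left(\frac{e^{-\epsilon}}{(1-\epsilon)^{1-\epsilon}}\right)^{\!\mu}.
\]
To weaken this to $e^{-\mu\epsilon^2/2}$, I would verify $(1-\epsilon)\ln(1-\epsilon) \geq -\epsilon + \epsilon^2/2$, which follows from rewriting the left-hand side as $-\epsilon + \sum_{k\geq 2}\epsilon^k/(k(k-1))$ and discarding non-negative higher-order terms. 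Finally, the two-sided bound $P(|Y-\mu|\geq \epsilon\mu) \leq 2e^{-\epsilon^2\mu/3}$ is immediate from the union bound, since the weaker exponent $-\mu\epsilon^2/3$ dominates both tails.

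The main obstacle is neither the Chernoff framework (which we have essentially reused from the additive proof) nor the optimization over $s$ (mechanical calculus), but the two Taylor-series inequalities $(1+\epsilon)\ln(1+\epsilon)\geq \epsilon+\epsilon^2/3$ and $(1-\epsilon)\ln(1-\epsilon)\geq -\epsilon+\epsilon^2/2$. These must be handled carefully because the clean symmetric bound we might hope for ($\epsilon^2/2$ on both sides) actually fails on the right tail, which is why the asymmetric constants $1/3$ and $1/2$ appear in the statement. Everything else is bookkeeping around those two estimates.
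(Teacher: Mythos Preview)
Your proposal is correct and follows the same Chernoff-method skeleton as the paper: Markov on $e^{sY}$, the MGF bound $1+p(e^s-1)\le e^{p(e^s-1)}$, optimization at $s=\ln(1\pm\epsilon)$, the same intermediate bounds $\bigl(e^{\pm\epsilon}/(1\pm\epsilon)^{1\pm\epsilon}\bigr)^\mu$, and a union bound for the two-sided version. The only difference lies in how the final elementary inequalities are dispatched. The paper uses the closed-form estimates $\ln(1+\epsilon)\ge \epsilon/(1+\epsilon/2)$ for the right tail (yielding the intermediate $-\mu\epsilon^2/(2+\epsilon)$ before invoking $\epsilon<1$) and $\ln(1-\epsilon)\ge (\epsilon^2/2-\epsilon)/(1-\epsilon)$ for the left tail, whereas you argue via Taylor expansion. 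Your series argument for the left tail is in fact cleaner than the paper's, since $(1-\epsilon)\ln(1-\epsilon)=-\epsilon+\sum_{k\ge 2}\epsilon^k/(k(k-1))$ has all nonnegative higher terms. For the right tail, however, the series $(1+\epsilon)\ln(1+\epsilon)=\epsilon+\epsilon^2/2-\epsilon^3/6+\cdots$ is alternating, so ``straightforward'' is a bit optimistic; you would still need a short monotonicity argument (e.g.\ differentiate $g(\epsilon)=(1+\epsilon)\ln(1+\epsilon)-\epsilon-\epsilon^2/3$ and check $g'\ge 0$ on $[0,1]$), which is exactly the kind of work the paper sidesteps with its rational lower bound on $\ln(1+\epsilon)$.
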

\begin{proof}
We first study $P(Y \geq (1+\epsilon)  \mu)$.
Note that $\mu=np$, since our R.V.s are distributed as $Be(p)$. 

The first step is to upper-bound the quantity $P(Y \geq t)$ (later we will fix $t = (1+\epsilon)  \mu$).
Let $s>0$ be some parameter that we will later fix to optimize our bound. The event $Y \geq t$ is equivalent to the event $e^{sY} \geq e^{st}$, so:
$$
P(Y \geq t) = P(e^{s\sum_{i=1}^nY_i} \geq e^{st})
$$
We apply Markov to the R.V. $e^{s\sum_{i=1}^nY_i}$, obtaining
$$
P(Y \geq t) \leq E[e^{s\sum_{i=1}^nY_i}] / e^{st} = E\left[\prod_{i=1}^n e^{sY_i}\right] /  e^{st}
$$
which, since the $Y_i$'s are fully independent and identically distributed (in particular, they have the same expected value), yields the inequality 
\begin{equation}\label{eq:chernoff1}
P(Y \geq t) \leq \left(E[e^{sY_1}]\right)^n  / e^{st}
\end{equation}

Replacing $t = (1+\epsilon)  \mu$, we obtain
\begin{equation}\label{eq:chernoff2}
P(Y \geq (1+\epsilon)  \mu) \leq \left(E[e^{sY_1}]\right)^n  / e^{s(1+\epsilon)  \mu} = \left(E[e^{sY_1}]  e^{-sp(1+\epsilon)}\right)^n
\end{equation}

The expected value $E[e^{sY_1}]$ can be bounded as follows:
$$
\begin{array}{rcl}
   E[e^{sY_1}] & = & p\cdot e^{s\cdot 1} + (1-p)\cdot e^{s\cdot 0} \\
     & = & p\cdot e^{s} + 1-p\\
     & = & 1 + p(e^{s}-1)\\
     & \leq & e^{p(e^{s}-1)}
\end{array}
$$
where in the last step we used the inequality $1+x \leq e^{x}$ with $x = p(e^{s}-1)$. Combining this with Inequality \ref{eq:chernoff2} we obtain: 
\begin{equation}\label{eq:chernoff3}
P(Y \geq (1+\epsilon)  \mu) \leq \left(e^{p(e^{s}-1)} e^{-sp(1+\epsilon)}\right)^n = \left(e^{e^{s}-1} e^{-s(1+\epsilon)}\right)^\mu
\end{equation}
By taking $s = \log(1+\epsilon)$ (it can be shown that this choice optimizes the bound), we have $e^{e^{s}-1} e^{-s(1+\epsilon)} = e^{\epsilon-\log(1+\epsilon)^{(1+\epsilon)}}$, thus Inequality \ref{eq:chernoff3} becomes:
\begin{equation}\label{eq:chernoff4}
P(Y \geq (1+\epsilon)  \mu) \leq \left( \frac{e^\epsilon}{(1+\epsilon)^{(1+\epsilon)}} \right)^\mu = \rho
\end{equation}
To conclude, we bound $\log \rho = \mu(\epsilon - (1+\epsilon)\log(1+\epsilon))$. We use the inequality $\log(1+\epsilon) \geq \frac{\epsilon}{1+\epsilon/2}$, which holds for all $\epsilon \geq 0$, 
and obtain: 
\begin{equation}\label{eq:chernoff5}
\log \rho \leq \mu \left( \epsilon - \frac{\epsilon(1+\epsilon)}{1+\epsilon/2}  \right) = \frac{-\mu\epsilon^2}{2+\epsilon} \leq \frac{-\mu\epsilon^2}{3}
\end{equation}
Where the latter inequality holds since we assume $\epsilon < 1$.
Finally, Bounds \ref{eq:chernoff4} and \ref{eq:chernoff5} yield:
\begin{equation}\label{eq:chernoff6}
P(Y \geq (1+\epsilon)  \mu) \leq e^{-\mu\epsilon^2/3}
\end{equation}
We are left to find a bound for the symmetric tail $P(Y \leq (1-\epsilon)  \mu) = P(-Y \geq -(1-\epsilon)  \mu)$. Following the same procedure used to obtain Inequality \ref{eq:chernoff2} we have
$$
P(-Y \geq -(1-\epsilon)  \mu) \leq \left( E[e^{-sY_1}]  e^{s(1-\epsilon)p} \right)^n
$$
We can bound the expectation as follows:
$E[e^{-sY_1}] = p\cdot e^{-s}+(1-p) = 1+p(e^{-s}-1) \leq e^{p(e^{-s}-1)}$ and obtain:
\begin{equation}\label{eq:chernoff7}
P(-Y \geq -(1-\epsilon)  \mu) \leq \left( e^{e^{-s}-1} e^{s(1-\epsilon)} \right)^\mu
\end{equation}
It can be shown that the bound is minimized for $s = -\log(1-\epsilon)$. This yields:
\begin{equation}\label{eq:chernoff8}
P(Y \leq (1-\epsilon)  \mu) \leq \left(\frac{ e^{-\epsilon}}{(1-\epsilon)^{(1-\epsilon)}}  \right)^\mu = \rho
\end{equation}
Then, $\log \rho = \mu (-\epsilon - (1-\epsilon)\log (1-\epsilon))$. We plug the bound $\log(1-\epsilon) \geq \frac{\epsilon^2/2-\epsilon}{1-\epsilon}$, which holds for all $0  \leq \epsilon < 1$.  Then, $\log\rho \leq \mu(-\epsilon - (\epsilon^2/2-\epsilon)) = -\mu\epsilon^2/2$.
This yields 
\begin{equation}\label{eq:chernoff9}
P(Y \leq (1-\epsilon)  \mu) \leq e^{-\mu\epsilon^2/2} \leq e^{-\mu\epsilon^2/3} 
\end{equation}
and by union bound we obtain our double-sided bound. 
\end{proof}

Equivalently, we can bound the probability that the arithmetic mean of $n$ independent R.V.s deviates from its expected value. This yields a useful estimator for Bernoullian R.V.s (i.e. the arithmetic mean of $n$ independent observations of a Bernoullian R.V.). Note that the bound improves \emph{exponentially} with the number $n$ of samples.

\begin{corollary}\label{cor:mult mean}
Let $Y_1, \dots, Y_n$ be fully independent $Be(p)$ random variables. Consider the estimator $\hat Y = \frac{1}{n}\sum_{i=1}^n Y_i$ for the value $p$ ($=E[\hat Y]$). Then, for all $0 < \epsilon < 1$:
$$
P(|\hat Y - p|\geq \epsilon  p) \leq 2e^{-\epsilon^2 n p/3}
$$
\end{corollary}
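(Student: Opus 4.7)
The plan is to derive this corollary as a direct rescaling of the double-sided multiplicative Chernoff--Hoeffding bound (Lemma \ref{lem:mult chernoff}), which has already done all the hard work. No new probabilistic machinery is needed; the statement is literally the ``divide by $n$'' reformulation of that lemma.

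First I would set $Y = \sum_{i=1}^n Y_i$ and $\mu = E[Y] = np$, exactly as in Lemma \ref{lem:mult chernoff}. By linearity of expectation, $E[\hat Y] = E[Y]/n = p$, which confirms that $\hat Y$ is an unbiased estimator of $p$ and that $\mu = np$. Next I would observe the elementary equivalence of events
$$
\{\,|\hat Y - p| \geq \epsilon p\,\} \;=\; \Bigl\{\,\bigl|\tfrac{Y}{n} - p\bigr| \geq \epsilon p\,\Bigr\} \;=\; \{\,|Y - np| \geq \epsilon np\,\} \;=\; \{\,|Y - \mu| \geq \epsilon \mu\,\},
$$
obtained just by multiplying both sides of the inequality inside the absolute value by the positive constant $n$. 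Hence
$$
P(|\hat Y - p|\geq \epsilon p) \;=\; P(|Y-\mu|\geq \epsilon\mu).
$$

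Finally I would invoke the double-sided bound in Lemma \ref{lem:mult chernoff}, which applies precisely because the $Y_i$ are fully independent $Be(p)$ variables and $0<\epsilon<1$, to get $P(|Y-\mu|\geq \epsilon\mu) \leq 2e^{-\epsilon^2\mu/3} = 2e^{-\epsilon^2 np/3}$. Substituting back yields the claim. There is no real obstacle here: the only thing to be careful about is to use the \emph{relative-error} (multiplicative) version of the bound rather than the additive one, since the right-hand-side exponent depends on $np$ (not merely on $n$), and to note explicitly that dividing through by $n$ preserves the event in question.
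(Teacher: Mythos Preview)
Your proposal is correct and matches the paper's intent: the paper states this as an immediate corollary of Lemma~\ref{lem:mult chernoff} with no separate proof, and your argument---rescaling the event $\{|Y-\mu|\ge\epsilon\mu\}$ by $1/n$ and applying the double-sided multiplicative bound---is exactly the one-line derivation the paper leaves implicit.
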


\subsection{Combining bounds: the \emph{mean} and \emph{median} tricks}   \label{sec:mean+median}

Suppose that we want to estimate $E[X]$ with relative error at most $\epsilon$ for a random variable $X$ of which we know the variance.  
We want strong guarantees that our estimate is within this error rate; let us fix a desired failure probability $\delta$ (i.e. we want to exceed relative error $\epsilon$ with probability at most $\delta$).
In these notes, $X$ will be the output of some randomized algorithm whose goal is to estimate $E[X]$ as accurately as possible. 
Chebyshev (Lemma \ref{lem:chebyshev}) tells us that the probability that $X$ exceeds relative error $\epsilon$ is at most 
$P(|X-E[X]|\geq \epsilon E[X]) \leq \frac{Var[X]}{E[X]^2\epsilon^2}$. Unfortunately, this probability depends on $\epsilon$. Even worse, for small values of $\epsilon$ this probability is larger than $1$, so this bound is useless. The first step is to make this probability \emph{constant} (we choose constant 1/3 but any smaller constant will work). The following is an easy corollary of the Boosted Chebyshev bound (Lemma \ref{lem:boosted Cheb}):

\begin{corollary}[``mean'' trick]\label{cor:mean trick}
    Let $X$ be a random variable of which we know $E[X]$ and $Var[X]$ and suppose that our goal is to compute an estimate of $E[X]$ exceeding relative error $\epsilon$ with small probability. 

    Draw $s \ge \frac{3Var[X]}{E[X]^2\epsilon^2}$ i.i.d. realizations $X_1, \dots, X_s$ of $X$ and define $Y = \frac{1}{s}\sum_{i=1}^s X_i$. Then, $E[Y]=E[X]$  and the following holds: 
    $$
    P(|Y-E[Y]| \geq \epsilon E[Y]) \leq 1/3.
    $$
\end{corollary}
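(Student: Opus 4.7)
The plan is to recognize this as essentially a direct corollary of the Boosted Chebyshev inequality (Lemma \ref{lem:boosted Cheb}) that has already been proven. The statement packages that lemma in a form tailored to \emph{relative} error $\epsilon E[X]$ with a fixed failure probability $1/3$.

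First I would verify the easy claim $E[Y] = E[X]$ by linearity of expectation: since each $X_i$ is distributed as $X$, we have $E[Y] = \frac{1}{s}\sum_{i=1}^s E[X_i] = \frac{1}{s}\cdot s\cdot E[X] = E[X]$. This also shows that the event $|Y - E[Y]| \geq \epsilon E[Y]$ is the same as the event $|Y - E[X]| \geq \epsilon E[X]$.

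Next I would apply Lemma \ref{lem:boosted Cheb} with the deviation parameter $k := \epsilon E[X]$. Since the $X_i$ are i.i.d.\ (hence in particular pairwise independent) and distributed as $X$, the lemma yields
$$P\bigl(|Y - E[X]| \geq \epsilon E[X]\bigr) \;\leq\; \frac{Var[X]}{s\cdot (\epsilon E[X])^2} \;=\; \frac{Var[X]}{s\,\epsilon^2\, E[X]^2}.$$
Now I would just substitute the chosen sample size $s = \frac{3\,Var[X]}{E[X]^2\epsilon^2}$ into the denominator; all factors of $Var[X]$, $E[X]^2$, and $\epsilon^2$ cancel and leave exactly $1/3$, which is the desired bound.

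There is really no serious obstacle here: the work is entirely in bookkeeping the substitution. The only conceptual point worth flagging is the (implicit) assumption that $E[X] \neq 0$, so that the relative error statement makes sense and $s$ is well-defined; this is fine in the intended application where $X$ is a nonnegative estimator with strictly positive mean. I would also mention in passing that the constant $1/3$ is arbitrary: choosing $s = \frac{c\,Var[X]}{E[X]^2\epsilon^2}$ for any constant $c \geq 1$ gives failure probability at most $1/c$, which foreshadows the ``median trick'' boosting promised by the section title.
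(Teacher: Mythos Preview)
Your proposal is correct and follows exactly the approach the paper indicates: the corollary is stated as an immediate consequence of the Boosted Chebyshev inequality (Lemma~\ref{lem:boosted Cheb}), and you apply it with $k = \epsilon E[X]$ and substitute the chosen $s$ to get the bound $1/3$. The paper gives no further details beyond this, so your write-up is already more complete than what appears there.
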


To further reduce the failure probability to any desired $\delta>0$, we apply the so-called \emph{median trick}: we will draw several independent realizations of $Y$ of Corollary \ref{cor:mean trick} and take their \emph{median}. As it turns out, this will allow us to apply Chernoff-Hoeffding and reduce exponentially the failure probability. We summarize the median trick in the following theorem, which will be used several times in the notes:

\begin{theorem}[``median'' trick]\label{thm:median trick}
    Fix any desired failure probability $\delta > 0$. 
    Let $Y$ be a random variable such that 
    $$
    P(|Y-E[Y]| \geq \epsilon E[Y]) \leq 1/3
    $$
    for some relative error $\epsilon>0$.
    Draw $t = 72 \ln(1/\delta)$ i.i.d. realizations $Y_1,\dots, Y_t$ of $Y$ and define
    $$
    \hat Y = \mathtt{median}(Y_1, Y_2, \dots, Y_t).
    $$
    Then, the following holds: 
    $$
    P(|\hat Y-E[Y]| \geq \epsilon E[Y]) \leq \delta.
    $$
\end{theorem}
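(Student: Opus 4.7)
The plan is to reduce the median trick to a concentration bound on a sum of Bernoulli indicators. For each $i\in\{1,\dots,t\}$, define $Z_i = 1$ if $Y_i$ is a \emph{bad} sample, i.e.\ $|Y_i - E[Y]|\geq \epsilon E[Y]$, and $Z_i=0$ otherwise. By assumption each $Z_i \sim Be(p)$ with $p\leq 1/3$, and the $Z_i$ are fully independent since the $Y_i$ are. Let $Z = \sum_{i=1}^t Z_i$ be the total number of bad samples; then $E[Z] = tp \leq t/3$.

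The key deterministic observation is: if the median $\hat Y$ satisfies $|\hat Y - E[Y]| \geq \epsilon E[Y]$, then at least $t/2$ of the samples $Y_i$ are bad, i.e.\ $Z \geq t/2$. Indeed, suppose without loss of generality that $\hat Y \geq (1+\epsilon)E[Y]$. By definition of median, at least $t/2$ of the $Y_i$ satisfy $Y_i \geq \hat Y \geq (1+\epsilon)E[Y]$, and all such $Y_i$ are bad. The symmetric argument handles the case $\hat Y \leq (1-\epsilon)E[Y]$. Therefore
$$
P(|\hat Y - E[Y]|\geq \epsilon E[Y]) \;\leq\; P(Z \geq t/2).
$$

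Now I would apply the additive Chernoff-Hoeffding bound (Lemma \ref{lem:add Hoeffding}) to $Z$. Since $E[Z]\leq t/3$, the event $Z\geq t/2$ implies $Z - E[Z] \geq t/2 - t/3 = t/6$, so
$$
P(Z \geq t/2) \;\leq\; P\bigl(Z - E[Z] \geq t/6\bigr) \;\leq\; \exp\!\left(\frac{-(t/6)^2}{2t}\right) \;=\; e^{-t/72}.
$$
Plugging in $t = 72\ln(1/\delta)$ yields $e^{-t/72} = \delta$, which gives the desired bound.

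The only subtle point is the deterministic step linking the median to the count of bad samples, which works precisely because the ``bad region'' $\{y : |y-E[Y]|\geq \epsilon E[Y]\}$ is the complement of an interval containing $E[Y]$; had the good region been non-convex, this step would fail. Everything else is routine: bounding the Bernoulli parameter by $1/3$, converting the deviation of $Z$ from its mean into a gap of at least $t/6$ (this is where the constant $1/3$ of Corollary \ref{cor:mean trick} is used, rather than e.g.\ $1/2$), and plugging into Chernoff-Hoeffding. The constant $72$ in the statement is exactly what is needed to absorb the $1/6$ gap squared together with the factor $1/2$ in the exponent.
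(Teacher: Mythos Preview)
Your proof is correct and follows essentially the same approach as the paper: define Bernoulli indicators for the ``bad'' samples, observe that a bad median forces at least $t/2$ bad samples, and apply the additive Chernoff--Hoeffding bound with deviation $k=t/6$ to get the exponent $-t/72$. Your phrasing of the key inequality ($Z\geq t/2$ implies $Z-E[Z]\geq t/6$ since $E[Z]\leq t/3$) is in fact slightly cleaner than the paper's.
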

\begin{proof}
Consider the following indicator (Bernoullian) random variables $\mathbbm{1}_1, \dots, \mathbbm{1}_t$:

$$
\mathbbm{1}_j = \left\{
\begin{array}{cl}
   1  &  \mathrm{if\ }|Y_j - E[Y]| \geq \epsilon E[Y]\\
    0  & \mathrm{otherwise}
\end{array}\right.
$$

That is, $\mathbbm{1}_j$ is equal to 1 if and only if $Y_j$ ``fails'', i.e. if its relative error with respect to $E[Y]$ exceeds $\epsilon$. 
By assumption, note that $\mathbbm{1}_j$ takes value 1 with probability at most $1/3$.

What is the probability that $\hat Y$ (the median of all $Y_j$) ``fails'', i.e. that $|\hat Y - E[Y]| \geq \epsilon E[Y]$? If the median fails, then it is either too small (below $(1-\epsilon)E[Y]$) or too large (above $(1+\epsilon)E[Y]$). In either case, by definition of median, at least $t/2$ 
variables $Y_j$
return a result which is either too small or too large, and thus fail. In other words,

$$
P(|\hat Y - E[Y]| \geq \epsilon E[Y])  \leq P\left(\sum_{j=1}^t \mathbbm{1}_j \geq t/2\right)
$$

As seen above, each $\mathbbm{1}_j$ is a Bernoullian R.V. taking value 1 with probability at most $1/3$. We thus have $\mu = E[\sum_{j=1}^t \mathbbm{1}_j] \leq t/3$. 

Recall the one-sided right variant of the Chernoff-Hoeffding additive bound (Lemma \ref{lem:add Hoeffding}):
$$
P\left(\sum_{j=1}^t \mathbbm{1}_j\geq \mu + k\right) \leq P\left(\sum_{j=1}^t \mathbbm{1}_j\geq \frac{t}{3} + k\right) \leq e^{\frac{-k^2}{2t}}
$$

Solving $t/3 + k = t/2$, we obtain $k = t/6$. Replacing this value into the previous inequality, we obtain:

$$
P\left(\sum_{j=1}^t \mathbbm{1}_j\geq t/2\right) \leq e^{-t/72}
$$

We want the probability on the right-hand side to be equal to our desired failure probability $\delta$. Solving $\delta = e^{-t/72}$, we obtain $t = 72 \ln(1/\delta)$. 
\end{proof}

The mean and median tricks are usually  combined together as follows:

\begin{corollary}[combined ``mean+median'' trick]\label{cor:mean+median trick}
    Let $X$ be a random variable such that $Var[X] = O(E[X]^2)$ and suppose that our goal is to compute an estimate of $E[X]$ exceeding relative error $\epsilon$ with arbitrarily small probability $\delta$, for any pre-defined parameters $\epsilon>0$ and $\delta>0$. 
    Let $t= 72\ln(1/\delta)$ and define a new random variable $\hat Y$ as follows: $\hat Y = \mathtt{median}(Y_1, Y_2, \dots, Y_t)$, where each $Y_i$ is built by drawing $s = \frac{3Var[X]}{E[X]^2\epsilon^2} = O(\epsilon^{-2})$ i.i.d. realizations $X_{i,1}, \dots, X_{i,s}$ of $X$ and defining $Y_i = \frac{1}{s}\sum_{j=1}^s X_{i,j}$. Then, the following properties hold: 
    \begin{enumerate}
        \item $\hat Y$ is constructed from $O(st) = O(\epsilon^{-2}\log(1/\delta))$ independent realizations of $X$, and  
        \item $P(|Y-E[X]| \geq \epsilon E[X]) \leq \delta$.
    \end{enumerate}
\end{corollary}

In other words, the above corollary says that if a random variable $X$ satisfies $Var[X] = O(E[X]^2)$, then we can boost the accuracy of our prediction of $E[X]$ by sampling $O(\epsilon^{-2}\log(1/\delta))$ realizations of $X$, combining them as described in the corollary (i.e. taking a median of means) and obtaining a random variable $\hat Y$ whose realizations exceed $E[X]$ by more than a relative fraction $\epsilon$ with arbitrarily small probability $\delta$. We will use this powerful corollary several times in these notes.

\section{Hashing}\label{sec:hash functions}

A \emph{hash function} is a function $h : U \rightarrow [0,M)$ from some universe $U$ (usually, an interval of integers) to an interval of numbers (usually the integers, but we will also work with the reals). Informally speaking, $h$ is used to \emph{randomize our data} and should have the following basic features:

\begin{enumerate}
\item $h(x)$ should be ``as random'' as possible. Ideally, $h$ should map the elements of $U$ completely uniformly (but we will see that this has a big cost).
\item $h(x)$ should be quick to compute algorithmically. Ideally, we would like to compute $h(x)$ in time proportional to the time needed to read $x$ ($O(1)$ if $x$ is an integer, or $O(n)$ if $x$ is a string of length $n$).
\item $h$ occupies space in memory, since it is implemented with some kind of data structure. This space should be as small as possible (ideally, $O(1)$ words of space, or logarithmic space).
\end{enumerate}

$h(x)$ will also be called \emph{the fingerprint} of $x$.
 
Note that, while $h$ accepts as input any value from $U$, typically the algorithms using $h$ will apply it to much smaller subsets of $U$ (for example, $U$ might be the set of all $2^{32}$ possible IPv4 addresses, but the algorithm will work on just a small subset of them). 

We now formalize the notion of hashing. 

\begin{definition}
    The set of all functions $h : U \rightarrow [0,M)$ from domain $U$ to codomain $[0,M)$ is denoted as $[0,M)^U$.
\end{definition}

\begin{definition}
    A \emph{family of hash functions} $\mathcal H$ is a subset of $[0,M)^{U}$.
\end{definition}

\begin{remark}\label{remark:cardinality of M^U}
Note that each $h\in [0,M)^{U}$ assigns a unique value from $[0,M)$ to each of the $|U|$ universe elements. Then, a function     $h\in \mathcal H$ is fully characterized by the vector $( h(x_1), h(x_2), \dots, h(x_{|U|}) )$ of length $|U|$, where $U = \{x_1, x_2, \dots, x_{|U|}\}$. Conversely, each such vector defines exactly one function of $[0,M)^{U}$. We conclude that the cardinality of $[0,M)^{U}$ is $M^{|U|}$.
\end{remark}

Given a family $\mathcal H$ of hash functions, our randomized algorithm will first extract a uniform\footnote{One (strong) assumption is always needed for this to work: we can draw uniform integers. This is actually impossible, since computers are deterministic. However, there is a vast literature on pseudo-random number generators (PRNG) which behave reasonably well in practice. We will thus ignore this problem for simplicity.} $h\in \mathcal H$. Then, the algorithm will \emph{randomize} the input data by applying $h$ to it. The expected-case analysis of the algorithm will take into account the structure of $\mathcal H$ and the fact that $h$ has been  chosen uniformly from it. 

\begin{remark}
From Section \ref{sec:worst case entropy}, in the worst case we need at least $\log_2|\mathcal H|$ bits in order to represent (and store in memory) any $h \in \mathcal H$. This shows a trade-off: the larger   $\mathcal H$ is, the more \emph{random} (i.e. higher independence) our hash function $h$ will be. On the other hand, a large $\mathcal H$ means that we have to waste more bits to represent $h$.
\end{remark}

Ideally, we would like our hash function to be completely uniform: 

\begin{definition}[Uniform hash function]
	Assume that $h\in \mathcal H \subseteq [0,M)^{U}$ is chosen uniformly, and let $U = \{x_1, \dots, x_{|U|}\}$.
	We say that $\mathcal H$ is \emph{uniform} if for any $y_1, \dots, y_{|U|} \in [0,M)$,  we have  $P(( h(x_1), \dots, h(x_{|U|})) = ( y_1, \dots, y_{|U|}) ) = \frac{1}{M^{|U|}}$.
\end{definition}

As it turns out, the requirements (1-3) introduced at the beginning of this section are in conflict. In fact, it is impossible to obtain all three simultaneously:

\begin{remark}
    Suppose that our goal is to obtain a uniform hash family and let $U = \{x_1, \dots, x_{|U|}\}$.
    In other words, our family $\mathcal H$ should be such that, for any choice of $y_1, \dots, y_{|U|} \in [0,M)$ and uniform $h\in \mathcal H$: 
    $$
    P(( h(x_1), \dots, h(x_{|U|})) = ( y_1, \dots, y_{|U|}) ) = \frac{1}{M^{|U|}}.
    $$
    However, this is possible only if $\mathcal H = [0,M)^U$: in any other case, there would exist at least one choice of $y_1, \dots, y_{|U|} \in [0,M)$ such that  $P(( h(x_1), \dots, h(x_{|U|})) = ( y_1, \dots, y_{|U|}) ) < \frac{1}{M^{|U|}}$. Therefore, a uniform hash function must take $\log_2|\mathcal H| = \log_2 M^{|U|} = |U| \log_2 M$ bits of memory, which is typically too 
    much (for example: $U$ could be the set of all IPv4 addresses, so $|U| = 2^{32}$).
    It is easy to construct such a hash function: fill a vector $V[1,|U|]$ with uniform integers from $[0,M)$, and define $h(x) = V[x]$.
Note that such a hash function satisfies requirements (1) and (2), but not (3).
\end{remark}

In the next subsection we study good compromises that will work for many algorithms: $k$-wise independent and universal hashing. Then, we briefly discuss functions mapping $U$ to real numbers.

\subsection{k-wise independent hashing}  \label{sec: kwise indep hashing}

In this section we work with hash functions on the integers: $h : [1,n] \rightarrow [0,M)$ (i.e. our universe is $U=[1,n]$).

$k$-wise independent (or $k$-uniform / $k$-independent) hashing is a weaker version of uniform hashing:

\begin{definition}\label{def:k-wise hashing}
We say that the family $\mathcal H $ is $k$-wise independent if and only if, for a uniform choice of $h\in \mathcal H$, we have that:
\begin{enumerate}
    \item For any $x\in [1,n]$, $h(x)$ is a uniform random variable in $[0,M)$.
    \item The random variables $h(1), h(2), \dots, h(n)$ are $k$-wise independent.
\end{enumerate}
\end{definition}

\begin{remark}
Definition \ref{def:k-wise hashing} is equivalent to  
$$
P\left(  \bigwedge_{i=1}^k h(x_i) = y_i \right) = M^{-k}
$$
for any choice of distinct $x_1, \dots, x_k \in [1,n]$ and (not necessarily distinct) $y_1, \dots, y_k \in [0,M)$. Another equivalent characterization is: the
$k$-tuple $(h(x_1) , \dots, h(x_k))$ is a uniform random variable over $[0,M)^k$ when $x_1, \dots, x_k$ are distinct.
\end{remark}

We say that $\mathcal H$ is \emph{fully-uniform} (or simply \emph{uniform}) when $k = n$. 

In the next sections we will see that $k=2$ is already sufficient in many interesting cases: this case is also called \emph{pairwise-independent hashing}.

\subsection{Universal hashing}  \label{sec: universal hashing}

Another important concept is \emph{universal hashing}, which measures the probability of \emph{collisions}: cases where $x_1\neq x_2$ and $h(x_1)=h(x_2)$.

\begin{definition}
We say that $\mathcal H $ is universal if and only if, for a uniform choice of $h\in \mathcal H$, we have that 
$$
P\left(  h(x_1) = h(x_2) \right) \leq 1/M
$$
for any choice of distinct $x_1 \neq x_2 \in [1,n]$.
\end{definition}

Note that this is at most the probability of collision we would expect if the hash function assigned truly random outputs to every key.
It is easy to see that pairwise-independence implies universality (the converse is not true). Consider the partition of the sample space $\{h(x_2)=y\}_{y\in[0,M)}$. By the law of total probability (Lemma \ref{lem:total prob}) and by pairwise-independence: 

$$
\begin{array}{rcl}
  P\left(  h(x_1) = h(x_2) \right)  & = & \sum_{y\in [0,M)} P(h(x_1) = h(x_2) \wedge h(x_2) = y) \\ 
    & = & \sum_{y\in [0,M)} P(h(x_1) = y\ \wedge\ h(x_2) = y)\\ 
  & = & \sum_{y\in [0,M)} M^{-2} \\
  & = & 1/M
\end{array}
$$

Next, we show a construction (not the only possible one) yielding a pairwise-independent hash function. 
Let $M\geq n$ be a prime number, and define
$$
h_{a,b}(x) = (a\cdot x + b) \mod M
$$
We define our family $\hat{\mathcal H}$ as follows:
$$
 \hat{\mathcal H} = \{ h_{a,b}\ :\ a,b\in [0,M) \}
$$
In other words, a uniform $h_{a,b} \in \hat{\mathcal H}$ is a uniformly-random polynomial of degree 1 over $\mathbb Z_M$. Note that this function is fully specified by $a, b, M$ and thus it can be stored in $O(\log M)$ bits. Moreover, $h_{a,b}$ can clearly be evaluated in $O(1)$ time. 
In our applications, the primality requirement for $M$ is not restrictive since for any $x$, a prime number always exists between $x$ and $2x$ (and, on average, between $x$ and $x + \ln(x)$). This will be enough since we will only require asymptotic guarantees for $M$.

We now prove pairwise-independence.

\begin{lemma}\label{lem: h_ab 2 indep}
    $\hat{\mathcal H}$ is a pairwise-independent family.
\end{lemma}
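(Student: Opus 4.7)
The plan is to verify the two conditions of Definition \ref{def:k-wise hashing} (with $k=2$) directly by counting solutions of a small linear system in $\mathbb Z_M$. Since $(a,b)$ is drawn uniformly from $[0,M)^2$, for any event $E$ we have $P(E) = |\{(a,b)\,:\,E\}|/M^2$, so pairwise independence reduces to showing that for each pair of distinct inputs $x_1 \neq x_2$ and each pair of outputs $y_1, y_2 \in [0,M)$, there is \emph{exactly one} pair $(a,b)\in[0,M)^2$ with $h_{a,b}(x_1) = y_1$ and $h_{a,b}(x_2) = y_2$.

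The key step is the algebra. The two congruences $a x_1 + b \equiv y_1$ and $a x_2 + b \equiv y_2 \pmod M$ form a $2\times 2$ linear system over $\mathbb Z_M$; subtracting gives $a(x_1 - x_2) \equiv y_1 - y_2 \pmod M$. The main obstacle—indeed the only nontrivial point—is arguing that $x_1 - x_2$ is invertible modulo $M$. This is where the hypotheses that $M$ is prime and $M \geq n$ both enter: because $x_1, x_2 \in [1,n] \subseteq [1,M)$ are distinct we have $x_1 - x_2 \not\equiv 0 \pmod M$, and because $\mathbb Z_M$ is a field (as $M$ is prime) every nonzero element is a unit. Hence $a = (y_1-y_2)(x_1-x_2)^{-1} \bmod M$ is uniquely determined, and then $b = y_1 - a x_1 \bmod M$ is uniquely determined as well. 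This gives exactly one valid pair $(a,b)$, so $P(h(x_1)=y_1 \wedge h(x_2)=y_2) = 1/M^2$, which is the joint-uniformity characterization of pairwise independence stated in the remark following Definition \ref{def:k-wise hashing}.

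I would then briefly point out that the marginal uniformity condition (1) in Definition \ref{def:k-wise hashing} follows as a byproduct: fixing $x$ and $y$ and summing the joint probability over any second coordinate (via the law of total probability, Lemma \ref{lem:total prob}), or equivalently noting that for each of the $M$ choices of $a$ there is a unique $b$ with $a x + b \equiv y \pmod M$, gives $P(h(x)=y) = M/M^2 = 1/M$. With both conditions verified, $\hat{\mathcal H}$ is pairwise-independent, completing the proof. No further computation beyond one linear-system solve over a finite field is required.
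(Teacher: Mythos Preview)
Your proof is correct and follows essentially the same approach as the paper: both reduce the claim to solving the $2\times 2$ linear system $ax_i+b\equiv y_i\pmod M$ over the field $\mathbb Z_M$, using primality of $M$ and $M\geq n$ to ensure $x_1-x_2$ is a unit, and conclude that exactly one $(a,b)$ works so the joint probability is $M^{-2}$. The paper phrases the final step as ``$a$ and $b$ are independent random variables'' rather than your counting formulation, but the content is identical.
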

\begin{proof}
Pick any distinct $x_1, x_2 \in [1,n]$ and (not necessarily distinct) $y_1, y_2 \in [0,M-1)$. Crucially, note that since $x_1\neq x_2$ and $M\geq n$, then $x_1 \not\equiv_M x_2$. Let $a,b$ be two uniform integers in $[0,M)$. Then:

$$
\begin{array}{rcll}
   P\left(  h_{a,b}(x_1) = y_1 \wedge h_{a,b}(x_2) = y_2 \right)  & = &  P(ax_1 + b  \equiv_M y_1 \wedge ax_2 + b  \equiv_M y_2)&\\
   & = & P\left( b \equiv_M y_1 - x_1\cdot \frac{y_2-y_1}{x_2-x_1} \wedge a \equiv_M \frac{y_2-y_1}{x_2-x_1} \right) & (1) \\
   & = & P\left( b \equiv_M y_1 - x_1\cdot \frac{y_2-y_1}{x_2-x_1}\right) \cdot P\left(a \equiv_M \frac{y_2-y_1}{x_2-x_1} \right) & (2) \\
   & = & M^{-2} &
\end{array}
$$

Notes:

\begin{itemize}
    \item[(1)] Simply solve the system in the variables $a$ and $b$. Note that $(x_2-x_1)^{-1}$ exists because $x_2\not\equiv_M x_1$ and $\mathbb Z_M$ is a field, thus every element (except 0) has a multiplicative inverse.
    \item[(2)] $a$ and $b$ are independent random variables. 
\end{itemize}
\end{proof}

In general, it can be proved that the family $\hat{\mathcal H} = \left\{ \sum_{i=0}^{k-1} a_i x^i \mod M\ :\ a_0,\dots, a_{k-1} \in [0,M) \right\}$ is $k$-wise independent whenever $M\geq n$ is a power of a prime number. Note that members of this family require $O(k\log M)$ bits of space to be stored and can be evaluated in $O(k)$ time.

\begin{remark}
An important note for later: a function $h\in \hat{\mathcal H}$ maps integers from $[1,n]$ to $[0,M)$, with $M\geq n$. The co-domain size $M\geq n$ might be too large in some applications. An
example is represented by hash tables, see 
Section \ref{sec:hash tables}: if the domain is the space of all IPv4 addresses, $n=2^{32}$ and the table's size must be $M \geq 2^{32}$. This is too much, considering that typically we will 
insert $m \ll n$ objects into the table. In Section \ref{sec:hash tables} we will describe a technique for reducing the co-domain size of $h$ while still guaranteeing good statistical properties.     
\end{remark}

\subsection{Perfect hashing}\label{sec:perfect hashing}

In some applications we will need a \emph{perfect} hash function: 

\begin{definition}
    A hash function $h:[1,n] \rightarrow [0,M)$ is perfect on a set $A \subseteq [1,n]$ if and only if for any $x_1,x_2\in A$ with $x_1\neq x_2$ we have $h(x_1) \neq h(x_2)$ (i.e. $h$ is injective on $A$).
\end{definition}

In general we will be happy with a function that satisfies this property \emph{with high probability}: 

\begin{definition}[with high probability (w.h.p.)]
    We say that an event holds with high probability with respect to some quantity $n$, if its probability is at least  $1-n^{-c}$ for an arbitrarily large constant $c$. Equivalently, we say that the event succeeds with inverse-polynomial probability. 
\end{definition}

Typically, in the above definition $n$ is the size of the input or the universe size (i.e. the number of hashed elements or the hash function's domain size).
In practice, it makes sense to simply ignore the incredibly small failure probability of events holding with high probability. Note that this is reasonable in practice: for example, on a small universe with $n = 10^6$ (typical universes are much larger than that), the small constant $c = 3$ already gives a failure probability of $10^{-18}$. It is far more likely that your program fails due to a cosmic ray flipping a bit in RAM \footnote{\url{www.stackoverflow.com/questions/2580933/cosmic-rays-what-is-the-probability-they-will-affect-a-program}}.

We prove: 

\begin{lemma}
    If a family $\mathcal H$ of functions     $h : [1,n] \rightarrow [0,M)$ is universal and $M \geq n^{c+2}$ for an arbitrarily large constant $c$, then a uniformly-chosen $h\in \mathcal H$ is perfect on any set $A \subseteq [1,n]$ with high probability, i.e. with probability at least $1-n^{-c}$.
\end{lemma}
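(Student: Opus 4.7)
The plan is a straightforward union bound over all pairs of elements in $A$. Let me sketch it.

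First, I would fix an arbitrary $A \subseteq [1,n]$ and observe that $h$ fails to be collision-free on $A$ precisely when at least one of the events $E_{x_1,x_2} = \{h(x_1) = h(x_2)\}$ occurs, ranging over unordered pairs $\{x_1,x_2\} \subseteq A$ with $x_1 \neq x_2$. Since $|A| \leq n$, the number of such pairs is at most $\binom{n}{2} \leq n^2/2$.

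Second, I would invoke the universality assumption: for each such pair, $P(E_{x_1,x_2}) = P(h(x_1) = h(x_2)) \leq 1/M$. Applying the union bound over all pairs gives
$$
P(h \text{ has a collision on } A) \;\leq\; \sum_{\{x_1,x_2\}\subseteq A,\, x_1\neq x_2} P(E_{x_1,x_2}) \;\leq\; \binom{|A|}{2}\cdot \frac{1}{M} \;\leq\; \frac{n^2}{2M}.
$$

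Finally, plugging in the assumption $M \geq n^{c+2}$, this is bounded by $\frac{n^2}{2 n^{c+2}} = \frac{1}{2n^c} \leq n^{-c}$, which yields the desired success probability of at least $1 - n^{-c}$. No real obstacle arises here; the only thing to be slightly careful about is that universality is a pairwise statement, so it combines cleanly with the union bound without needing any independence between the different collision events (which is good, because $\mathcal H$ is only assumed universal, not pairwise-independent). The choice of exponent $c+2$ in $M$ is precisely what is needed to absorb the $\binom{n}{2}$ factor from the union bound and still leave an $n^{-c}$ tail.
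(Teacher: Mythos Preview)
Your proof is correct and follows essentially the same approach as the paper: apply the universality bound $P(h(x_1)=h(x_2))\le 1/M$ to each of the at most $n^2$ pairs in $A$, then use a union bound and plug in $M\ge n^{c+2}$. The paper's version is slightly looser (it uses $|A|^2$ rather than $\binom{|A|}{2}$), but the argument is otherwise identical.
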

\begin{proof}
Universality means that $P(h(x_1) = h(x_2)) \leq M^{-1}$ for any $x_1\neq x_2$. Since there are at most $|A|^2 \leq n^2$ pairs of distinct elements in $|A|$, by union bound the probability of having at least one collision is at most $n^2/M$. By choosing $M \geq n^{c+2}$ we have at least one collision with probability at most $n^{-c}$, i.e.
our function is perfect with probability at least $1-n^{-c}$. 
\end{proof}

Note that, by choosing $M \in \Theta(n^{c+2})$, one hash value (as well as the hash function itself) can be stored in $\log_2 M \in O(\log n)$ bits and can be evaluated in constant time using the hash family $\hat{\mathcal H}$ introduced in the previous section. 

Observe that function $h_{a,b}(x) = ax+b\mod M$ is always perfect with probability 1 on $[1,n]$ whenever $M \geq n$ and $a\neq 0$ (exercise: prove it). 

\subsection{Hashing integers to the reals}\label{sec:hash reals}

Let $\mathcal H$ be a family of functions $h : [1,n] \rightarrow \{x\in \mathcal R\ |\ 0\leq x \leq 1\}$  mapping the integers $[1,n]$ to the \textbf{real} interval $\{x\in \mathcal R\ |\ 0\leq x \leq 1\}$. To simplify notation, in these notes we will denote the codomain $\{x\in \mathcal R\ |\ 0\leq x \leq 1\}$ with $[0,1]$ (not to be confused with the interval of \textbf{integers} $[1,n]$ of the domain).  
The integer/real nature of the set $[a,b]$ will
always be clear from the context. We say that $\mathcal H$ is $k$-wise independent iff, for a uniformly-chosen $h\in\mathcal H$, $(h(x_1),\dots, h(x_k))$ is uniform in $[0,1]^k$ for any choice of distinct $x_1, \dots,  x_k$.

It is impossible to algorithmically draw (and store) a uniform function $h : [1,n] \rightarrow [0,1]$, since the interval $[0,1]$ contains infinitely-many numbers. 
However, we can aim at an approximation with any desired degree of precision (i.e. decimal digits of $h(x)$). Here we show how to simulate such a pairwise-independent hash function (enough for the purposes of these notes). 

We start with a pairwise-independent discrete hash function $h' : [1,n] \rightarrow [0,M]$ (just $O(\log M)$ bits of space, see the previous subsection) that maps integers from $[1,n]$ to integers from $[0,M]$ and define $h(x) = h'(x)/M \in [0,1]$. 
Since $h'$ is pairwise-independent, also $h$ is pairwise-independent (over our approximation of $[0,1]$).

In addition to being pairwise-independent, $h'$ should be perfect on the subset of $[1,n]$, of size $d \leq n$, on which the algorithm will work. This is required because, on a truly uniform $h : [1,n] \rightarrow [0,1]$, we have $P(h(x) = h(y)) = 0$ whenever $x\neq y$.
We will be happy with a guarantee that holds with high probability. 
Recalling that pairwise-independence implies universality, by the discussion of the previous section it is sufficient to choose $M \geq n^{c+2}$ to obtain a perfect hash function. 

\paragraph{Simplifications} For simplicity, in the rest of the notes we will simply say ``$h$ is $k$-wise independent/uniform hash function'' instead of ``$h$ is a function uniformly chosen from a $k$-wise independent/uniform family $\mathcal H$''.

\subsection{Hash tables}\label{sec:hash tables}
 
Our goal in this section is to use hashing to build a dynamic set data structure. We want to store sets of cardinality $m$ from universe $[1,n]$ supporting fast membership/insert/delete queries. More formally:

\begin{definition}[Dynamic set data structure]
    A \emph{dynamic set} (also called \emph{dictionary}) over universe $[1,n]$ is a data structure $H$ implementing a set and supporting the following operations for any $x\in [1,n]$ (ideally, in constant time): 
    \begin{itemize}
        \item Insert $x$ in the set
        \item Check if $x$ belongs to the set
        \item Remove $x$ from the set
    \end{itemize}
    After inserting $m$ (distinct) elements, the space of $H$ should be bounded by $O(m)$ words = $O(m \log n)$ bits. 
\end{definition}

\begin{remark}
Note: in these notes we allow a dynamic set structure to use $O(m \log n)$ bits. This is more than the information-theoretic lower bound of $m \log(n/m) + O(m)$ bits required to store a set of cardinality $m$ over universe $[1,n]$. It is actually possible to implement dynamic sets taking optimal space ($m \log(n/m) + O(m)$ bits) and supporting constant-time operations \cite{10.1007/3-540-45061-0_30}. In these notes we will only use the simpler (sub-optimal) $O(m \log n)$-bits variant described in the following paragraphs (hashing by chaining). 
\end{remark}

In the previous sections we introduced hash functions $h: [1,n] \rightarrow [0,M)$ having a low collision probability. 
This suggests that we could use $h(x)$ as an index inside an array $H[0,M-1]$; the low collision probability of $h$ should ensure that few distinct elements are associated with the same array entry $H[h(x)]$ (we will call these events \emph{collisions}).
The simplest hashing scheme following this idea is called \emph{hashing by chaining}. 
Assuming we know the number $m$ of elements that will be inserted in our structure, we choose a hash function $h : [1,n] \rightarrow [0,M)$ such that $M\in \Theta(m)$ and initialize an empty vector $H[0,M-1]$ (also called \emph{hash table}) of size $M$ (indexed from index 0). The cell $H[i]$ contains a linked list---we call it the \emph{$i$-th chain}---, initially empty. Then, operation \emph{insert} will be implemented by appending $x$ at the end of the list $H[h(x)]$.  
Note that this implementation allows us to associate with each $x\in H$ also some satellite date (e.g. a pointer). 

Of course, we want the collision probability of $h$ to be as low as possible. Here we will require universality: for any $x\neq y$, we want $P(h(x) = h(y)) \leq 1/M$. 

The function $h_{ab}:[1,n]\rightarrow [0,M)$ of the previous section requires $M > n$, so the space of $H$ would be prohibitively large (think about the IPv4 example: what if we want to store $m=1000$ values from the set of all possible $n=2^{32}$ IPv4 addresses?). 
We now describe a hash function with codomain size equal to $m$.

\begin{definition}\label{def:bar h}
    Choose a prime number $M > n$. Then, choose two uniform numbers $a\in (0,M)$ and $b\in [0,M)$. We define the following hash function:
    $$
    \bar h(x) = ((a\cdot x + b) \mod M) \mod m
    $$ 
\end{definition}

We prove that function $\bar h$ is universal:

\begin{lemma}
    Function $\bar h(x)$ is universal, i.e. for any $x\neq y$, it holds $P(\bar h(x) = \bar h(y)) \leq 1/m$.
\end{lemma}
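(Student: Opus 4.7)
The plan is to reduce the question to counting pairs of images of the inner linear hash, exploiting the pairwise-independence-like structure established in Lemma~\ref{lem: h_ab 2 indep}, and then show that the outer reduction $\bmod\ m$ introduces at most the allowed collision probability.

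First I would fix any two distinct $x,y\in[1,n]$ and introduce the random variables $u=(ax+b)\bmod M$ and $v=(ay+b)\bmod M$, so that $\bar h(x)=u\bmod m$ and $\bar h(y)=v\bmod m$. The key structural fact is that, when $(a,b)$ ranges over $(0,M)\times[0,M)$, the map $(a,b)\mapsto (u,v)$ is a bijection onto the set $\{(u,v)\in[0,M)^2 : u\neq v\}$. Indeed, since $M$ is prime and $x\not\equiv y\pmod M$ (because $0<|x-y|<n<M$), the linear system $ax+b\equiv u,\ ay+b\equiv v\pmod M$ has the unique solution $a\equiv (v-u)(y-x)^{-1},\ b\equiv u-ax\pmod M$, and $a=0$ corresponds exactly to $u=v$. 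Consequently, under the uniform choice of $(a,b)$ with $a\neq 0$, the pair $(u,v)$ is uniformly distributed over the $M(M-1)$ off-diagonal pairs.

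Next I would count how many of those $M(M-1)$ pairs are ``bad'', i.e.\ satisfy $u\equiv v\pmod m$. For each fixed $u\in[0,M)$, the residue class of $u$ modulo $m$ inside $[0,M)$ has at most $\lceil M/m\rceil$ elements, so there are at most $\lceil M/m\rceil-1\le (M-1)/m$ values of $v\neq u$ with $v\equiv u\pmod m$. Summing over $u$ gives at most $M\cdot(M-1)/m$ bad pairs. Therefore
\[
P(\bar h(x)=\bar h(y)) \;=\; P(u\equiv v\!\!\pmod m) \;\le\; \frac{M(M-1)/m}{M(M-1)} \;=\; \frac{1}{m}.
\]

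The main obstacle, though routine, is being careful with the ceiling in the residue-class count so that the clean bound $1/m$ survives even when $m\nmid M$; the bijection argument in the first paragraph is what makes this counting argument valid and is the step that actually uses the hypothesis $a\neq 0$ together with the primality of $M$.
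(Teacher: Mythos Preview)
Your proof is correct and follows essentially the same route as the paper: both arguments show that the inner pair $(u,v)$ is uniformly distributed over the $M(M-1)$ off-diagonal pairs of $[0,M)^2$ (you via an explicit bijection, the paper via the law of total probability and the computation of Lemma~\ref{lem: h_ab 2 indep}), then bound the number of $v\neq u$ with $v\equiv u\pmod m$ by $\lceil M/m\rceil-1\le (M-1)/m$. Your bijection framing is arguably cleaner, but the content is the same.
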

\begin{proof}
    Let $x_1\neq x_2$. Our goal is to upper-bound the probability $P(\bar h(x_1) = \bar h(x_2))$. Let $\hat X_1 = ax_1+b \mod M$ and $\hat X_2 = ax_2 + b \mod M$, so $\bar h(x_1) = \hat X_1 \mod m$ and $\bar h(x_2) = \hat X_2 \mod m$. Then we can rewrite the above probability as:
    $$
    P(\bar h(x_1) = \bar h(x_2)) = P(\hat X_1 \equiv_m \hat X_2)
    $$
    
    First, we show that it must be $\hat X_1 \neq \hat X_2$. If, for a contradiction, it were $\hat X_1 = \hat X_2$, then $ax_1+b \equiv_M  ax_2+b$. Since $a\neq 0$, this is equivalent to $x_1 \equiv_M x_2$. However, $M \geq n \geq x_1$ and $M \geq n \geq x_2$, so $x_1 \equiv_M x_2$ implies $x_1=x_2$, a contradiction. 

    Let's go back to the goal of upper-bounding $P(\hat X_1 \equiv_m \hat X_2)$. Applying the law of total probability (on $\hat X_2$):

    $$
    P(\hat X_1 \equiv_m \hat X_2) = \sum_{i=0}^{M-1} P(\hat X_1 \equiv_m \hat X_2 \wedge \hat X_2 = i) = \sum_{i=0}^{M-1} P(\hat X_1 \equiv_m i \wedge \hat X_2 = i)
    $$

    Applying again the law of total probability (this time on $\hat X_1$), the above is equal to 

    $$
    \sum_{i=0}^{M-1}\ \  \sum_{j=0,\dots, M-1, j\neq i} P(\hat X_1 \equiv_m i \wedge \hat X_2 = i \wedge \hat X_1 = j)
    $$

    Let $W = \{j \in [0,M)\ :\ j\neq i \wedge j\equiv_m i\}$ be the set of all values different than $i$ but equivalent to $i$ modulo $m$. The probability $P(\hat X_1 \equiv_m i \wedge \hat X_2 = i \wedge \hat X_1 = j)$ is equal to zero for $j \notin W$, so the above summation is equal to 

    $$
    \sum_{i=0}^{M-1} \sum_{j\in W} P(\hat X_2 = i \wedge \hat X_1 = j)
    $$ 

    We bound the size of $W$. First of all, notice that $|W| \leq \lceil M/m\rceil -1$: in the range $[0,M)$ there are at most $\lceil M/m\rceil$ integers equivalent to $i$ modulo $m$ (from those integers, we need to exclude $i$ so the bound follows).
    We now prove that $\lceil M/m\rceil -1 \leq (M-1)/m$. If $m$ divides $M$, then $\lceil M/m\rceil -1 = M/m - 1 \leq (M-1)/m$. Otherwise, let $k = \lfloor M/m \rfloor$ and write $M = km + k'$, for $k' = M - km$. Note that $1\leq k' < m$ since $m$ does not divide $M$; in particular, $1/m \leq k'/m$. Then: 
    $$
    \frac{M-1}{m} = \frac{M}{m} - \frac{1}{m} \geq \left( \frac{km+k'}{m} \right) - \frac{k'}{m} = k = \lfloor M/m \rfloor \geq \lceil M/m \rceil -1
    $$

    We conclude that $|W| \leq \frac{M-1}{m}$.

    Using the same argument of Lemma \ref{lem: h_ab 2 indep} and noting that $a\neq 0$, one can easily obtain that $P(\hat X_2 = i \wedge \hat X_1 = j) \leq \frac{1}{M(M-1)}$. Finally, putting everything together:

    $$
    P(\bar h(x_1) = \bar h(x_2)) \leq \sum_{i=0}^{M-1} \sum_{j\in W} P(\hat X_2 = i \wedge \hat X_1 = j) \leq M \cdot \frac{M-1}{m} \cdot \frac{1}{M(M-1)} = 1/m
    $$
    which proves universality of $\bar h$.  
\end{proof}

We now show why we require universality of $h$: this guarantees constant expected running time for all queries.

\begin{lemma}
    If $h$ is universal, then the expected length of any fixed chain $H[i]$ is $O(1)$.
\end{lemma}
\begin{proof}
    Let $x_1,\dots, x_m$ be the $m$ distinct elements in the hash table.
    Let $\mathbbm{1}_{i,j}$ be the indicator (Bernoullian) random variable taking value 1 if and only if $h(x_i)=h(x_j)$.
    Since $h$ is universal, $E[\mathbbm{1}_{i,j}] \le 1/m$.
    
    Universality of $h$ implies $E[|H[h(x_i)]|] = E[\sum_{j\neq i} \mathbbm{1}_{i,j}] = \sum_{j\neq i} E[\mathbbm{1}_{i,j}] \leq m\cdot (1/m) = 1$. This is the expected length of any fixed chain, so each operation on the hash table takes expected $O(1)$ time.
\end{proof}

We can lift the assumption that we know $m$ in advance with a classic doubling technique. Initially, we allocate $m=1$ cells for $H$. After having inserted the $m$-th element, we allocate a new table of size $2m$, re-hash all elements in this new table using a new hash function modulo $2m$, and delete the old table. It is easy to see that the total space is linear and operations still take $O(1)$ amortized time. 

\paragraph{Expected longest chain} We have established that a universal hash function generates chains of expected length $O(1)$. This means that $n$ insertions in the hash table will take expected $O(n)$ time. 
Another relevant question is: what is the variance of the chain length, and what is the expected length of the \emph{longest} chain? This is interesting because this quantity is precisely the expected worst-case time we should expect for \emph{one} operation (the slowest one) when inserting $n$ elements in a hash of size $n$. 

We introduce some notation. 
Suppose $x_1, \dots, x_m$ are the elements we want to insert in the hash table. 
Let 

$$
\mathbbm{1}_{i,j} = \left\{
\begin{array}{cl}
   1  &  \mathrm{if}\  h(x_i)=j\\
    0  & \mathrm{otherwise}
\end{array}\right.
$$

be the indicator R.V. taking value $1$ if and only if $x_i$ hashes to the $j$-th hash bucket.
The length of the $j$-th chain is then $L_j = \sum_{i=1}^m \mathbbm{1}_{i,j}$. The quantity $L'_j  = |L_j - E[L_j]|$ indicates how much $L_j$ differs from its expected value; since for universal hash functions we have $E[L_j] = O(1)$ (assuming that the hash' codomain has size $m$), $L'_j = \Theta(L_j)$ so the two R.V.s are asymptotically equivalent (we will study $L'_j$). Let $L'_{max} = \max_{j}L'_j$. Our goal is to study $E[L'_{max}]$.

It turns out that, if $h$ is completely uniform, then $E[L'_{max}] \in O(\log m/\log \log m)$; this is the classic \emph{balls into bins problem}\footnote{\url{en.wikipedia.org/wiki/Balls_into_bins_problem}}. Surprisingly, a simple policy (the so-called \emph{power of two choices}) improves this bound exponentially: let's use two completely uniform hash functions $h_1$ and $h_2$. We insert each element $x$ either in $H[h_1(x)]$ or in $H[h_2(x)]$, choosing the bucket that contains the least number of elements. This simple policy yields $E[L'_{max}] \in O(\log \log m)$.

In practice, however, we almost never use completely uniform hash functions since they require too much space to be stored. What happens if $h$ is simply pairwise-independent? the following theorem holds for any 2-independent hash function (including $\bar h$ of Definition \ref{def:bar h}, even if that function is not completely 2-independent):

\begin{theorem}
    If $h$ is pairwise-independent, then $E[L'_{max}] \in O(\sqrt{m})$.
\end{theorem}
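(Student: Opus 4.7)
The plan is to reduce the problem to a per-bucket variance bound, apply Chebyshev, take a union bound over the $m$ buckets, and then integrate the tail to bound the expectation.

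First I would fix a bucket $j$ and analyze $L_j = \sum_{i=1}^m \mathbbm{1}_{i,j}$. By pairwise-independence of $h$, the indicators $\mathbbm{1}_{1,j},\dots,\mathbbm{1}_{m,j}$ are pairwise-independent Bernoullis with parameter $1/m$ (since each $h(x_i)$ is marginally uniform on $[0,m)$). Linearity of expectation gives $E[L_j]=1$, and Lemma \ref{lem:variance pairwise ind} (variance is additive under pairwise-independence) together with Lemma \ref{lem:Ber Exp Var} yields
$$
Var[L_j] \;=\; \sum_{i=1}^m Var[\mathbbm{1}_{i,j}] \;=\; m\cdot \tfrac{1}{m}\bigl(1-\tfrac{1}{m}\bigr) \;\leq\; 1.
$$
Chebyshev's inequality (Lemma \ref{lem:chebyshev}) then gives, for any $t>0$,
$$
P(L'_j \geq t) \;=\; P(|L_j - E[L_j]| \geq t) \;\leq\; \tfrac{Var[L_j]}{t^2} \;\leq\; \tfrac{1}{t^2}.
$$

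Next I would union-bound over the $m$ buckets to control $L'_{\max}$:
$$
P(L'_{\max} \geq t) \;=\; P\!\left(\bigcup_{j=0}^{m-1}\{L'_j \geq t\}\right) \;\leq\; \sum_{j=0}^{m-1} P(L'_j \geq t) \;\leq\; \tfrac{m}{t^2}.
$$
This is the key tail bound, and it is the main ``interesting'' step; the obstacle here is purely in choosing how to combine per-bucket variance with the maximum, which the union bound handles cleanly because $1/t^2$ is summable past $t=\sqrt{m}$.

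Finally I would turn the tail bound into an expectation bound using Lemma \ref{lem:expected CDF} (applicable since $L'_{\max}\geq 0$), splitting the integral at the threshold $\sqrt{m}$:
$$
E[L'_{\max}] \;=\; \int_0^\infty P(L'_{\max}\geq t)\,dt \;\leq\; \int_0^{\sqrt{m}} 1\,dt \;+\; \int_{\sqrt{m}}^\infty \tfrac{m}{t^2}\,dt \;=\; \sqrt{m} \;+\; \sqrt{m} \;=\; 2\sqrt{m}.
$$
Hence $E[L'_{\max}] \in O(\sqrt{m})$, as claimed. The choice of split point $\sqrt{m}$ is exactly where the two bounds cross, which is why it is optimal for this argument.
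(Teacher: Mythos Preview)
Your proof is correct and follows essentially the same approach as the paper: per-bucket variance bound via pairwise-independence, Chebyshev, union bound over the $m$ buckets to get $P(L'_{\max}\ge t)\le m/t^2$, and then conversion of the tail bound into an expectation bound. The only difference is in that last step: the paper partitions $[0,\infty)$ dyadically as $[\sqrt{2^i}\sqrt m,\sqrt{2^{i+1}}\sqrt m)$ and applies the law of total expectation (yielding $E[L'_{\max}]\le \sqrt{2m}(2+\sqrt 2)$), whereas you invoke Lemma~\ref{lem:expected CDF} and split the integral at $\sqrt m$, which is a bit cleaner and gives the tighter constant $2\sqrt m$.
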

\begin{proof}
If $h$ is pairwise-independent, then it is also 1-independent so $\mathbbm{1}_{i,j} \sim Be(1/m)$. Then, $E[L_j] = m\cdot (1/m) = 1$. 
From pairwise-independence, we also get $Var[L_j] = Var[\sum_i \mathbbm{1}_{i,j}] = m\cdot Var[\mathbbm{1}_{1,j}] = m\cdot (1/m)\cdot(1-1/m) = 1-1/m \leq 1$. Then, applying Chebyshev:
$$
\begin{array}{rcl}
    P(L'_j \geq k) & = & P(|L_j-E[L_j]| \geq k)\\
    & \leq & Var[L_j]/k^2\\
    & \leq & 1/k^2
\end{array}
$$
By union bound: 
$$
\begin{array}{rcl}
    P(L'_{max} \geq k) & = & P(\bigvee_j L'_j \geq k)\\
    & \leq & m/k^2
\end{array}
$$
Let us rewrite $k = \sqrt t \cdot \sqrt m$:
$$
P(L'_{max} \geq \sqrt t \cdot \sqrt m) \leq  1/t
$$
We apply the law of total expectation on the partition of the event space $[0,\sqrt m),  [\sqrt{2^i}\sqrt m, \sqrt{2^{i+1}}\sqrt m)$, for all integers $i \geq 0$ (assume for simplicity that $L'_{max} \in [0,\infty)$: this does not affect our upper bound). 
The probability that $L'_{max}$ falls in the interval $[\sqrt{2^i}\sqrt m, \sqrt{2^{i+1}}\sqrt m)$ is at most $2^{-i}$; moreover, inside this interval the expectation of $L'_{max}$ is (by definition of the interval) at most $\sqrt{2^{i+1}}\sqrt m$, i.e.:
$$
E[L'_{max}|L'_{max}\in [\sqrt{2^i}\sqrt m, \sqrt{2^{i+1}}\sqrt m)] \leq \sqrt{2^{i+1}}\sqrt m$$
Applying the law of total expectation: 
$$
\begin{array}{rcl}
E[L'_{max}] & \leq & \sum_{i=0}^{\infty} 2^{-i}\sqrt{2^{i+1}}\sqrt m\\
& = & \sqrt{2m} \cdot \sum_{i=0}^{\infty} 2^{-i/2}
\end{array}
$$
It is easy to show that $\sum_{i=0}^{\infty} 2^{-i/2} = 2+\sqrt 2$ (prove it as an exercise), which proves our main claim. 
\end{proof}

Alon et al. \cite{alon1999linear} proved that there exist pairwise-independent hash functions with $E[L'_{max}] \in \Omega(\sqrt{m})$, so the above bound is tight in general for pairwise-independent hash functions. 
Nothing however prevents a \emph{particular} pairwise-independent hash function to beat the bound. In fact, 
Knudsen in \cite{knudsen2016linear} proved that the simple function $\bar h$ of Definition \ref{def:bar h} satisfies $E[L'_{max}] \in O(\sqrt[3]{m\log m})$.

\chapter{Randomized   filters}\label{ch:filters}

A \emph{filter} is a randomized data structure encoding a set of $m$ elements from a universe $U$ of cardinality $n$ and supporting typical set operations such as insertion of new elements, membership queries, union/intersection of two sets, frequency estimation (in the case of multi-sets). The data structure is \emph{probabilistic} in the sense that queries such as membership and frequency estimation may return a wrong result with a small (user-defined) probability. Typically, the smaller this probability is, the larger the space of the data structure will be. 

The name \emph{filter} comes from the typical usage case of these data structures: usually, they are used as an interface to a much larger and slower (but exact) set data structure; the role of the filter is to quickly discard negative queries in order to minimize the number of queries performed on the slower data structure. 
Another usage case is to filter streams: filters guaranteeing no false negatives (e.g. Bloom filters, Section \ref{sec:bloom}) can be used to quickly discard most stream elements that do not meet some criterion. 
A typical real-case example comes from databases: when implementing a database management system, a good idea could be to keep in RAM a fast (and small) filter guaranteeing no false negatives (e.g. a Bloom filter). A membership query first goes through the filter; the disk is queried if and only if the filter returns a positive answer. In situations where the user expects many negative queries, such a strategy speeds up queries by orders of magnitude. Another example is malicious URL detection: for example, the Google Chrome browser uses a local Bloom filter to detect malicious URLs. Only the URLs that pass the filter, are checked on Google's remote servers.

Note: also the \emph{sketches} discussed in Chapter \ref{ch:sketch-stream} (e.g. MinHash) are a randomized (approximate) representation of sets. The characterizing difference between those sketches and the filters described in this section, is that the former often require \emph{sub-linear} space (i.e. $o(m)$ bits, where
$m$ is the number of elements in the set), while the latter still require linear ($O(m)$ bits) space. The common feature of the both solutions is that they break the information-theoretic lower bound of $\log_2 \binom{n}{m} = m\log(n/m) + O(m)$ bits which are required in the worst case to represent a set of cardinality $m$ over a universe of cardinality $n$. In general, this is achieved at the price of returning wrong answers with  small probability.

\section{Bloom filters}\label{sec:bloom}

A Bloom filter (Burton H. Bloom, 1970 \cite{bloom1970space}) is a data structure representing a set $S$ under these operations: 

\begin{itemize}
    \item Insert: given an element $x$ (which may be already in the set $S$) update the set as $S \leftarrow S \cup \{x\}$
    \item Membership: given an element $x$, return \texttt{YES} if $x  \in S$ and \texttt{NO} otherwise. 
\end{itemize}

Bloom filters do not support \emph{delete} operations (\emph{counting Bloom filters} do: see Section \ref{sec:counting bloom}). 

Bloom filters (as well as the other filters introduced in this chapter) may return a wrong answer on membership queries, with small probability:

\begin{definition}
    Let $S$ be the set represented by the filter. Suppose we test membership of an element $x$ in $S$, using the filter. 
    \begin{itemize}
        \item If $x\notin S$ but the filter reports that $x$ belongs to $S$, we say that the filter has generated a \emph{False Positive} (FP). We denote with PFP the probability of having a false positive.
         \item If $x \in S$ but the filter reports that $x$ does not belong to $S$, we say that the filter has generated a \emph{False Negative} (FN). We denote with PFN the probability of having a false negative.
    \end{itemize}
\end{definition}

Bloom filters guarantee a bounded one-sided error probability on membership queries, as long as the maximum capacity $m$ of the filter is not exceeded: $PFP > 0$ and $PFN = 0$. We will denote with symbol $\delta = PFP$ the false positive probability.
As we show next, the Bloom filter uses $\Theta(m \log (1/\delta))$ bits of space to store at most $m$ elements ($m$ is the maximum filter capacity) from a universe of cardinality $n$. Notice that this space is independent from $n$ and breaks the lower bound of $m\log(n/m) + O(m)$ bits when $\delta$ is not too small (for example: if $\delta$ does not depend on $n$ and $m$, i.e. if $\delta$ is a constant) and $n$ is much larger than $m$ (which is typically the case: for example, if the universe is the set of all IPv4 addresses, then $n=2^{32}$ but typically we are going to store $m \ll n$ addresses in the set).

\subsection{The data structure}

There exist several variants of Bloom filters; here we cover \emph{partitioned Bloom filters} \cite{mullin1983second}, which are easier to analyze. 

Let $M>0$ and $k>0$ be two integer parameters that we will determine later (as a function of $m$ and of the desired false positive probability $\delta$). Without loss of generality, we assume that $M$ is a multiple of $k$.
Let $h_1, \dots, h_k$ be $k$ hash functions whose domains and codomains are $h_i : U \rightarrow [(i-1)M/k, iM/k )$, for $i=1,\dots, k$. In other words, we partition the interval $[0,M)$ into $k$ non-overlapping sub-intervals, each of size $M/k$. 
Function $h_i$ sends elements from $U$ to the $i$-th such sub-interval, where $U$ is the universe of size $n$ from which the set elements are chosen (for example: integers, strings, etc). 

In our analysis, we will assume that the functions $h_1, \dots, h_k$ are \emph{independent and completely uniform}; in other words, we require that $(h_1(x_1), \dots, h_k(x_k))$ is a $n^k$-wise independent random variable whenever $x_1, \dots, x_k$ are pairwise distinct \footnote{Actually, the original filter proposed in \cite{bloom1970space} is slightly different than the one we are describing: in that case, it is required that $h_1(x), \dots, h_k(x)$ are all distinct (i.e. hashing without replacement). The variant we describe is more common nowadays. See \cite{grandi2018analysis} for an analysis of both variants: they have similar performance.}.
Recall from the previous sections that this is not a realistic assumption, since full independence requires too much space (in this particular case, we would need $O(nk\log M)$ bits of space!).  
While in these notes we make this assumption to simplify the analysis, this is not a strict requirement: in \cite[Sec. 3.3]{mitzenmacher2008simple} the authors show that using $O(1)$-independent hash functions, we can obtain the same theoretical guarantees (false positive rate) of full-independent hashing, at the price of multiplying the filter's space usage by a constant.

The Bloom filter is simply a bit-vector $B[0,M-1]$ of length $M$, initialized with all entries equal to 0. Queries are implemented as follows: 

\begin{itemize}
    \item Insert: to insert $x$ in the set, we set $B[h_i(x)] \leftarrow 1$ for all $i=1, \dots, k$.
    \item Membership: to check if $x$ belongs to the set, we return $\bigwedge_{i=1}^k B[h_i(x)]$.
\end{itemize}

In other words, the filter returns \texttt{YES} if and only if all bits $B[h_1(x)], \dots, B[h_k(x)]$ are equal to 1.
It is easy to see that no false negatives can occur: if the Bloom filter returns \texttt{NO} on a membership query, then the element is not in the set. Equivalently, if an element is in the set then the filter returns \texttt{YES}. However, false positives may occur due to hash collisions. In the next section we analyze their probability. 

See \url{https://florian.github.io/bloom-filters/} for a nice online demo of classic Bloom filters (the ones we study in this section --- \emph{partitioned Bloom filters} --- are slightly different).

\subsection{Analysis}

Suppose we have inserted $S=\{x_1, \dots, x_m\}$ in the Bloom filter. We want to compute the probability $PFP$ that, for any $x\notin S$, the filter erroneously reports that $x$ belongs to $S$.

The insertion of one element in the filter causes the modification of at most $k$ bits in the bitvector $B$, one for each sub-interval of size $M/k$ of the range $[0,M)$ of indexes of $B$. 
Moreover, by definition of full-independence and uniformity of the hash functions, 
the random variables $h_t(x_j)$ are independent for all $1\le t \le k$ and $1\le j \le m$.

We now compute the probability that any bit $B[i]$ is equal to 1. Note that position $i$ belongs to the $t$-th sub-interval of size $M/k$ of $[0,M)$, where $t=\lfloor ik/M \rfloor + 1$.

$$
\begin{array}{ccl}
    P(B[i]=1) & = & P\left(\bigvee_{j\in[m]} h_t(x_j)=i\right) \\
    & = & 1 - P\left(\bigwedge_{j\in[m]} h_t(x_j)\neq i\right) \\
    & = & 1 - \prod_{j\in[m]} P\left(h_t(x_j)\neq i\right) \\
    & = & 1- \prod_{j\in[m]} (1-k/M) \\
    & = & 1  -  \left(  1-\frac{k}{M} \right)^{m} \\ 
    & = & 1  -  \left(\left(  1-\frac{k}{M} \right)^{M/k}\right)^{mk/M} \\
    & \stackrel{M/k\rightarrow \infty}{=} & 1-e^{-mk/M}
\end{array}
$$

Where the last equality holds for $M/k$ (the size of each sub-interval) tending to infinity; this hypothesis is justified in our case since we are performing an asymptotic analysis ($m\rightarrow \infty$) and, as we show next, $M/k >  m$  will hold by construction.

We obtain a false positive if and only if $B[h_j(x)]=1$ for all $1\le j \le k$. Then:
$$
PFP = P(B[h_1(x)]=1 \wedge B[h_2(x)]=1 \wedge \dots \wedge B[h_k(x)]=1)
$$
where $P(B[h_j(x)]=1) = 1-e^{-mk/M}$ for all $1\le j \le k$, as proved above. Since the functions $h_1, \dots, h_k$ are independent and have disjoint domains, the events $B[h_1(x)]=1, B[h_2(x)]=1, \dots, B[h_k(x)]=1$ are \emph{independent}. We obtain:

$$
PFP = P(B[h_1(x)]=1 \wedge B[h_2(x)]=1 \wedge \dots B[h_k(x)]=1) = \prod_{j=1}^k P(B[h_j(x)]=1) = (1-e^{-mk/M})^k
$$
Where the last equality holds for $M/k\rightarrow \infty$ (again, true asymptotically).

It can be shown that the quantity $(1-e^{-mk/M})^k$ is minimized for $k = (M/m)\ln 2$; replacing this value into the above probability, we get that the false positive probability is
$$
PFP = (1/2)^{(M/m)\ln 2}
$$
Solving $(1/2)^{(M/m)\ln 2} = \delta$ as a function of $M$, we finally get $M = m\log_2 e \cdot \log_2(1/\delta) \approx 1.44 \cdot m\log_2(1/\delta)$ and $k = (M/m)\ln 2 = \log_2(1/\delta)$. Note that, indeed, $M/k = m/\ln 2 > m$ (we assumed this earlier). 

\begin{remark}
    An interesting observation: using $k = (M/m)\ln 2$, after exactly $m$ insertions the probability that any bit $B[i]$ is 0 is (for $M\rightarrow \infty$) equal to $e^{-mk/M} = 1/2$. In other words, after $m$ insertions $B$ is a uniform bitvector. This makes sense because it means that the entropy of $B$ is maximized, i.e., we have packed as much information as possible inside it. 
\end{remark}

We obtained:

\begin{theorem}
    Let $0 < \delta < 1$ be a user-defined parameter (false positive rate), and let $m$ be a maximum capacity.
    By using $k = \log_2(1/\delta)$ fully-independent hash functions and $M = m\log_2 e \cdot \log_2(1/\delta)$ bits of space, the Bloom filter supports membership and insert queries,  uses $O\left(m \log(1/\delta)\right)$ bits of space (in addition to the space required to store the hash functions), and guarantees false positive probability at most $\delta$, provided that no more than $m$ elements are inserted into the filter.  
    The Bloom filter does not generate false negatives. 
    Assuming that the hash functions can be evaluated in constant time, all queries take $O(\log(1/\delta))$ time. 
\end{theorem}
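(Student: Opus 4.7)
The plan is to follow the informal derivation already sketched in the excerpt and turn it into a structured argument. First I would dispatch the easy parts: insertion always succeeds because it only writes 1s, so a bit once set to 1 stays set; hence for $x\in S$ each of the bits $B[h_1(x)],\dots,B[h_k(x)]$ was set at the time $x$ was inserted and is still 1 at query time, so the filter answers \texttt{YES} with probability 1. This gives the ``no false negatives'' part. The space bound is $M+O(1)$ bits for the bitvector (plus whatever storing the $k$ hash functions costs, which we ignore per the theorem's phrasing) and each query performs $k$ hash evaluations and $k$ bit lookups, which at the chosen $k=\log_2(1/\delta)$ gives $O(\log(1/\delta))$ time and $O(m\log(1/\delta))$ space.

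The heart of the proof is the false-positive bound. First I would compute, for a fixed index $i\in[0,M)$, the probability that $B[i]=0$ after $m$ insertions. Since the $mk$ hash values $h_j(x_\ell)$ ($1\le j\le k$, $1\le \ell\le m$) are fully independent and uniformly distributed in $[0,M)$ by hypothesis, we have
\[
P(B[i]=0) = \left(1-\tfrac{1}{M}\right)^{mk}\!,
\]
which for large $M$ is well approximated by $p:=e^{-mk/M}$. Now fix $x\notin S$ and consider the event of a false positive, namely that $B[h_j(x)]=1$ for every $j=1,\dots,k$. Each individual event $\{B[h_j(x)]=1\}$ has probability $1-p$.

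The main obstacle is that these $k$ events are \emph{not} independent: conditioning on $B[h_1(x)]=1$ slightly biases the distribution of the number of 1's in $B$, which affects the probability that $B[h_2(x)]=1$, and so on. Rather than reprove the concentration argument from scratch, I would cite the analysis of Bose et al.~\cite{bose2008false}, which shows that the true false-positive probability differs from the naive product $(1-p)^k = (1-e^{-mk/M})^k$ by a quantity that is negligible in the relevant parameter regime. Using this, I take
\[
\mathrm{PFP} \;\approx\; (1-e^{-mk/M})^k.
\]

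Finally I would optimize this expression in $k$ for fixed $M$ and $m$. Taking the logarithm $k\log(1-e^{-mk/M})$ and differentiating (or substituting $t=mk/M$), the minimum is achieved at $k^\star=(M/m)\ln 2$, yielding $\mathrm{PFP}\le (1/2)^{(M/m)\ln 2}$. Requiring $(1/2)^{(M/m)\ln 2}\le \delta$ and solving for $M$ gives $M = m\log_2 e\cdot \log_2(1/\delta)$, and substituting back gives $k^\star = \log_2(1/\delta)$, matching the stated values. Collecting the space, time, and false-positive bounds completes the proof.
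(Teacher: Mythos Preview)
Your proposal is correct and follows essentially the same route as the paper's analysis: compute $P(B[i]=0)=(1-1/M)^{mk}\approx e^{-mk/M}$, acknowledge (citing Bose et al.) that the $k$ events $\{B[h_j(x)]=1\}$ are not independent but that $(1-e^{-mk/M})^k$ is nonetheless an accurate approximation, then optimize in $k$ to get $k=(M/m)\ln 2$ and solve for $M$. You are slightly more explicit than the paper about the no-false-negatives and space/time claims, but the core argument is identical.
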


In practice, however, $k$ and $M$ as computed above are often not integer values, and $k$ does not divide $M$ (we assumed this earlier). One solution is to choose $k$ as the closest integer to $\log_2(1/\delta)$ and then choose the smallest integer $M$ being a multiple of $k$ and such that $\left(1-e^{-mk/M}\right)^k \leq \delta$. These adjustments do not affect asymptotically the bounds of the theorem above.

\begin{example}\label{ex:bloom}
    Suppose we want to build a Bloom filter to store at most $m = 10^7$ malicious URLs, with false positive probability $\delta = 0.1$. 
    The average URL length is around 77 bytes (see e.g. \url{www.supermind.org/blog/740/average-length-of-a-url-part-2}), so just storing these URLs would require around $734$ MiB.
    Choosing $k = 3$ and $M = 48.100.000$, our Bloom filter uses just $5.73$ MiB of space (about 5 bits per URL) and returns false positives at most $10\%$ of the times. The filter uses 128 times less space than the plain URLs and speeds up negative queries by one order of magnitude (assuming that the filter resides locally in RAM and the URLs are on a separate server or on a local slow disk - so accessing the original data is much slower than using the Bloom filter).
\end{example}

\section{Counting Bloom filters}\label{sec:counting bloom}

What if we wanted to support deletions from the Bloom filter? The idea is to replace the bits of the bitvector $B$ with \emph{counters} of $t$ bits (i.e. able to store integers in the range $[0,2^t)$), for some parameter $t$ to be decided later: 
$B \in [0,2^t)^M$.
The resulting structure is called \emph{counting Bloom filter} and works as follows:

\begin{itemize}
    \item Insert: to insert $x$ in the set, we update $B[h_i(x)] \leftarrow \min\{B[h_i(x)]+1, 2^t-1\}$ for all $i=1, \dots, k$.
    \item Delete: to delete $x$ from the set, we update $B[h_i(x)] \leftarrow \max\{ B[h_i(x)]-1, 0 \}$ for all $i=1, \dots, k$. 
    \item Membership: we return \texttt{YES} if and only if $B[h_i(x)] \geq 1$ for all  $i=1, \dots, k$.
\end{itemize}

To simplify our analysis, we assume that we never insert an element that is already in the set. Similarly, we assume that we only delete elements which are in the set. 
In general, one can check these pre-conditions using the filter and, only if the filter returns a positive answer, checking the (slow) memory storing the set exactly, so we will assume they hold. 

\begin{definition}
    We say that an \emph{overflow} happens when we are trying to increment a counter $B[i]=2^t$ during an \emph{insert} query. Similarly, an \emph{underflow} happens when we are trying to decrement a counter $B[i]=0$ during a \emph{delete} query. 
\end{definition}

The first observation is that, as long as no overflows occur, the filter behaves exactly as a standard Bloom filter: no false negatives occur, and false positives occur with probability at most $\delta$. We therefore choose $M = m\log_2 e \cdot \log_2(1/\delta)$ and $k = (M/m)\ln 2 = \log_2(1/\delta)$ as in the previous section.

Let $T=2^t$. Our goal is to bound the probability that $B[i]$ overflows, for any fixed position $i$. 
Position $i$ belongs to bucket $j = \lfloor i/k \rfloor + 1$ (remember that $B$ is split into $k$ non-overlapping sub-vectors of size $M/k$ each) and therefore is associated with hash function $h_j$.
Let $A=\{x_1, \dots, x_m\}$ be the elements that have been inserted in the filter 
and let $i_1, \dots, i_{m}$ be the (not necessarily distinct) $m$ locations of $B$
that are incremented while inserting $x_1, \dots, x_m$ into the filter through hash function $h_j$. In other words:
$$
\begin{array}{l}
    i_1 = h_j(x_1), i_2 = h_j(x_2),\ \dots,\ i_m = h_j(x_m)
\end{array}
$$

Since we assume $h_j$ to be fully-independent, the indices $i_1, \dots, i_{m}$ are fully-independent random variables uniformly distributed in $[(j-1)M/k,jM/k)$. Then, for any $1\leq p \leq m$:

$$
P(i_p = i) = k/M
$$

If $B[i]$ overflows, then 
$T$ random variables among $i_1, \dots, i_{m}$ take value $i$. In other words, there exist $T$ distinct indices $j_1 < \dots < j_T$ such that $i_{j_1} = \dots = i_{j_T} = i$. As a consequence:
$$
P(B[i]\ \mathrm{overflows}) \le P(\exists\ j_1 < \dots < j_T\ :\ i_{j_1} = \dots = i_{j_T} = i)
$$

By independence of the variables $i_1, \dots, i_{m}$, for any \emph{fixed} choice of $T$ distinct indices $j_1 < \dots < j_T$ we have:

$$
P(i_{j_1} = i_{j_2} = \dots = i_{j_T} = i) = (k/M)^T
$$

There are $\binom{m}{T}$ ways of choosing the $T$ distinct indices $j_1, \dots, j_T$, so by union bound: 

$$
P(\exists\ j_1 < \dots < j_T\ :\ i_{j_1} = i_{j_2} = \dots = i_{j_T} = i) \leq \binom{m}{T} \cdot \left(\frac{k}{M}\right)^T
$$

We can upper-bound this quantity as follows:

$$
\begin{array}{ccl}
     \binom{m}{T} \cdot \left(\frac{k}{M}\right)^T & \leq &  \left( \frac{em}{T} \right)^T \cdot \left(\frac{k}{M}\right)^T \\ 
     & = & \left( \frac{emk}{TM} \right)^T \\
     & = &  \left( \frac{e\ln 2}{T} \right)^T \\
     &\leq & (1/2)^T\ \ \mathrm{for\ T\geq 4\ (i.e.\ for\ t\ge 2)}
\end{array}
$$

Here, the first inequality comes from the inequality $\binom{a}{b} \leq \left(\frac{e\cdot a}{b}\right)^b$, where $e$ is the base of the natural logarithm. 
Between the second and third line, we used the fact that $M=mk \log_2e$.
The last inequality holds for $2^t = T \geq 4$, i.e. $t \geq 2$.
To summarize, we obtained:

$$
P\left(B[i]\  \mathrm{overflows}\right) \leq (1/2)^{2^t}\ \ \mathrm{for\ t\geq 2}
$$

After $m$ insertions and any number of deletions, the filter could return a false negative on a particular query $x$ if at least one of the $k$ counters $B[h_i(x)]$ ($i\in [k]$)  associated with the query overflows at some point. 
These $k$ counters are independent, but for simplicity we use union bound and obtain that a particular query returns a false negative with probability upper-bounded by

$$
PFN \leq  k(1/2)^{2^t}
$$

As the next example shows, in practice the value $t = 4$ (4 bits per counter) is already sufficient to guarantee a negligible probability of false negatives for realistic values of $k$. 

\begin{example}
    Suppose we want to build a counting Bloom filter to store at most $m = 10^7$ malicious URLs, with false positive probability $\delta = 0.1$.
    From Example \ref{ex:bloom}, just storing these URLs would require around $734$ MiB.
    Choosing $k = 3$, $M = 48.100.000$, and $t = 4$, our Bloom filter uses just $22.92$ MiB of space (32 times less than the plain URLs) and returns false positives at most $10\%$ of the times. The probability that a query returns a false negative is at most $0.0046\%$. 
\end{example}

From the theoretical point of view, we may want to start from a user-defined false negative probability $\gamma$ and see how much space (as a function of $m$, $\delta$, and $\gamma$) the filter will take:

$$
PFN \leq  k(1/2)^{2^t} = \gamma
$$

Recalling that we choose $k=\log_2(1/\delta)$ and solving as a function of $t$, we obtain (logarithms are in base 2): 

$$
t = \log\log\left( \frac{\log(1/\delta) }{\gamma} \right)
$$

This is the number of bits per counter. There are $M=m\log_2 e \cdot \log_2(1/\delta)$ counters in total, so the final space usage of the counting Bloom filter is $O\left(m \log(1/\delta)  \log\log\left( \frac{\log(1/\delta) }{\gamma} \right)\right)$ bits. We can summarize this result in a theorem:

\begin{theorem}
    Let $0 < \delta < 1$ and $0<\gamma < 1$ be two user-defined parameter (false positive and false negative rate, respectively), and let $m$ be a maximum capacity.
    By using $k = \log_2(1/\delta)$ fully-independent hash functions and $M = m\log_2 e \cdot \log_2(1/\delta)$ counters of $\log\log\left( \frac{\log(1/\delta) }{\gamma} \right)$ bits each, the Counting Bloom filter supports membership, insert, and delete queries,  uses $O\left(m \log(1/\delta)  \log\log\left( \frac{\log(1/\delta) }{\gamma} \right)\right)$ bits of space (in addition to the space required to store the hash functions), and guarantees false positive probability at most $\delta$ and false negative probability at most $\gamma$, provided that no more than $m$ elements are inserted into the filter.
    Assuming that the hash functions can be evaluated in constant time, all queries take $O(\log(1/\delta))$ time. 
\end{theorem}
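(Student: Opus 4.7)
The plan is essentially to package together the calculations already carried out in the section, since all the ingredients have been established. The skeleton is: choose $M$ and $k$ as for the ordinary Bloom filter to inherit its false-positive guarantee on ``overflow-free'' executions, choose $t$ just large enough that overflows (and hence false negatives) occur with probability at most $\gamma$, and read off space and time.

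First I would establish a clean coupling argument: as long as no counter ever attempts to exceed $T-1 = 2^t - 1$ during the whole sequence of insertions and deletions, each $B[i]$ equals the exact number of pairs $(y,j)$ with $y \in S$ and $h_j(y)=i$. Under this event the counting filter is functionally indistinguishable from a standard Bloom filter (membership amounts to testing $B[h_j(x)] \geq 1$, equivalently the corresponding bit being set in the standard filter), so the analysis of the preceding theorem applies verbatim and yields false-positive rate at most $\delta$, provided $M = m\log_2 e \cdot \log_2(1/\delta)$ and $k = \log_2(1/\delta)$. Moreover, no false negative can occur in this event: if $x \in S$, all $k$ counters $B[h_j(x)]$ are positive by the exact-count property.

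Next I would bound the complementary event. Since deletions only decrement, a counter ever reaching $T$ during the execution requires at least $T$ of the $mk$ hash evaluations $i_1,\dots,i_{mk}$ performed across the $m$ insertions to land on the same bucket $i$. The union bound over the $\binom{mk}{T}$ subsets of hash evaluations, together with full independence and uniformity, gives
$$
P(B[i] \geq T \text{ at some time}) \le \binom{mk}{T}M^{-T} \le \left(\frac{emk}{TM}\right)^T = \left(\frac{e\ln 2}{T}\right)^T \le 2^{-T}
$$
for $T \ge 4$, exactly as computed in the section. For a membership query on $x$, a false negative requires at least one of the $k$ counters $B[h_j(x)]$ to have overflowed, so by a second union bound the false-negative probability per query is at most $k\cdot 2^{-T} = \log_2(1/\delta)\cdot 2^{-2^t}$.

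Finally I would pick $t$ so that this probability is at most $\gamma$. Solving $\log_2(1/\delta)\cdot 2^{-2^t} \le \gamma$ gives $t = \left\lceil \log_2\log_2\!\left(\log_2(1/\delta)/\gamma\right)\right\rceil$, which matches the stated bits-per-counter (and is at least $2$ for non-degenerate parameters, validating the use of $T \ge 4$). The total bit-size is then $M\cdot t = O\!\left(m\log(1/\delta)\cdot \log\log(\log(1/\delta)/\gamma)\right)$, and each operation touches exactly $k = \log_2(1/\delta)$ counters at positions given by $k$ hash evaluations, each assumed $O(1)$, so the time bound is $O(\log(1/\delta))$. The only delicate point, and the step I would spell out most carefully, is the first one: the implication ``no counter ever overflows $\Rightarrow$ each counter exactly equals its true multiplicity'', because it is precisely the failure of this invariant under overflow that allows later decrements to drive a counter below the true count and produce a false negative.
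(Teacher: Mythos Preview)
Your proposal is correct and follows essentially the same approach as the paper: inherit the false-positive guarantee from the standard Bloom filter analysis on the overflow-free event, bound $P(B[i]\ge T)$ via the union bound $\binom{mk}{T}M^{-T}\le (e\ln 2/T)^T\le 2^{-T}$, apply a second union bound over the $k$ counters to get $P(\text{false negative})\le k\cdot 2^{-2^t}$, and solve for $t$. Your explicit articulation of the coupling invariant (no overflow $\Rightarrow$ exact multiplicity counts) is slightly more careful than the paper's one-line observation, but the argument is otherwise identical.
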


\section{Quotient filters}\label{sec:quotient}

Quotient filters (QF) were introduced in 2011 by Bender et al. in \cite{bender2011don}. This filter uses a space slightly larger than classic Bloom filters, with a similar false positive rate. In addition, the QF supports deletes without incurring into false negatives and has a much better cache locality (thus being faster than the Bloom filter in practice).

\subsection{The data structure}

Essentially, a QF is just a clever (space-efficient) implementation of hashing with chaining and quotienting, see Figure \ref{fig:QF1}. We first describe how the filter works by using a standard hash table $T[0,M-1]$ where each $T[i]$ stores a chain. Then, in the next subsection we show how to encode $T$ using just one array $H$ of small integers. We use a uniform hash function $h$ mapping our universe to $[0,2^p)$, for a value $p$ that will be chosen later \footnote{Again, the uniformity assumption is not realistic in practice, but the authors show that, by using ``good in practice'' hash functions, the practical performance follow those predicted by theory}. We break hash values $h(x)$ (of $p$ bits) into two parts: a suffix (\emph{remainder}) $R(x)$ of $r$ bits (i.e. the $r$ least significant bits of $h(x)$) and a prefix (\emph{quotient}) $Q(x)$ of $q = p-r$ bits (i.e. the $q$ most significant bits of $h(x)$). 
The table T has $M = 2^q$ cells. 
The value $q$ is chosen such that $M = 2^q \geq m$ ($m$ is the maximum number of elements that will be inserted in the set) and such that the \emph{load factor} $\alpha = m/M$ of the table, i.e. the fraction of occupied slots, is a small enough constant (a practical evaluation for different values of $\alpha$ is provided in the paper).

\begin{figure}[h!]
    \centering
    \includegraphics[width=0.6\textwidth]{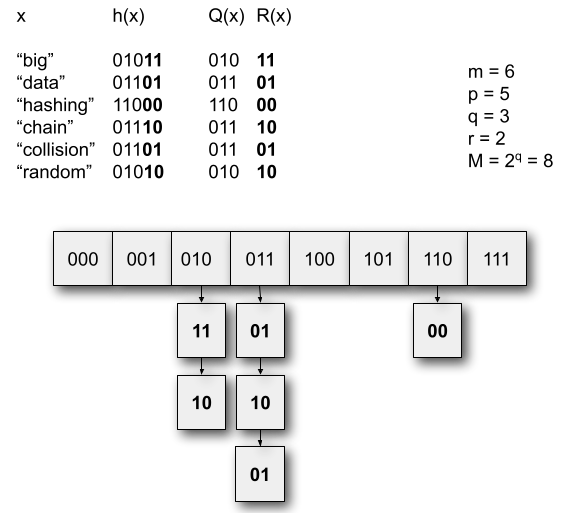}
    \caption{Hashing with chaining and quotienting. A quotient filter is a space-efficient implementation (avoiding pointers) of this hashing scheme, see Figure \ref{fig:QF2}}
    \label{fig:QF1}
\end{figure}

The operations on this simplified implementation of the filter work as follows: 

\begin{itemize}
    \item To insert $x$ in the set, we append $R(x)$ to the chain stored in $T[Q(x)]$. Importantly, we allow repetitions of remainders inside the same chain. 
    \item To remove $x$ from the set, we remove one occurrence of $R(x)$ from the chain stored in $T[Q(x)]$.
    \item To check if $x$ belongs to the set, we check if  $R(x)$ appears inside the chain stored in $T[Q(x)]$.
\end{itemize}

Notice that this scheme allows retrieving $h(x)$ from the table: if remainder $R$ is stored in the $Q$-th chain, then the corresponding fingerprint is $Q \cdot 2^r + R$. In other words, the trick is to exploit the location ($Q$) inside the hash table to store information implicitly, in order to reduce the information ($R$) that is explicitly inserted inside the table. This trick was introduced by Knuth in his 1973 book ``The Art of Computer Programming: Sorting and Searching'', and already allows to save some space with respect to a classic chained hash that stores the full fingerprints $h(x)$ inside its chains. 

Importantly, note that this implementation generates a false positive when we query an element $x$ which is not in the set, and the set contains another element $y\neq x$ with $h(y)=h(y)$. Later we will analyze the false positive probability, which can be reduced by increasing $p$. Note also that, thanks to the fact that we store all occurrences of repeated fingerprints in the table, the data structure does not generate false negatives.

\subsection{Reducing the space}

The QF encodes the table $T$ of the previous subsection using a circular\footnote{\emph{circular} means that the cell virtually following $H[m-1]$ is $H[0]$} array $H[0,M-1]$ of $M = 2^q$ slots, each containing an integer of $r+3$ bits: $r$ bits storing a remainder, in addition to the following $3$ metadata bits.

\begin{enumerate}
    \item \emph{is-occupied[i]}: this bit records whether there exists an element $x$ in the set such that $i = Q(x)$, i.e. if chain number $i$ contains any remainder.
    \item \emph{is-shifted[i]}: this bit is equal to 0 if and only if the remainder $R(x)$ stored in $H[i]$ corresponds to an element $x$ such that $Q(x)=i$, i.e. if $R(x)$ belongs to the $i$-th chain. In other words, \emph{is-shifted[i]}=1 indicates that the remainder $R(x)$ stored in $H[i]$ has been shifted to the right w.r.t. its ``natural'' position $H[Q(x)]$. 
    \item \emph{is-continuation[i]}: this bit is equal to 1 if and only if the remainder $R(x)$ stored in $H[i]$ belongs to the same chain of the remainder $R(y)$ stored in $H[i-1]$, i.e. if the two corresponding set elements $x,y$ are such that $Q(x)=Q(y)$.
\end{enumerate}

Figure \ref{fig:QF2} shows the QF implementation of the hash table of Figure \ref{fig:QF1}. In this example, the QF uses in total $m\cdot(r+3) = 40$ bits (i.e. the bitvector to the right of ``Metadata + remainders = QF'').

\begin{figure}[h!]
    \centering
    \includegraphics[width=0.7\textwidth]{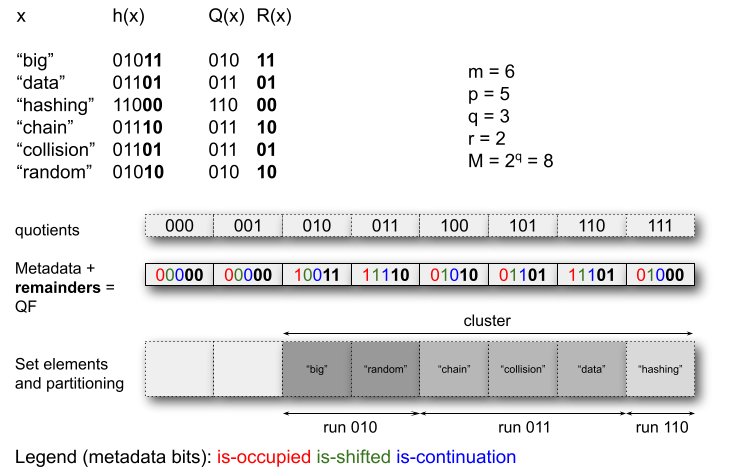}
    \caption{Quotient filter encoding of the hash table in Figure \ref{fig:QF1}.}
    \label{fig:QF2}
\end{figure}

While inserting elements, the following invariant is maintained: if $Q(x) < Q(y)$, then $R(x)$ comes before $R(y)$ in the table. We call \emph{runs} contiguous subsequences corresponding to the same quotient. See Figure \ref{fig:QF2}: there are three runs, sorted by their corresponding quotients. We say that a \emph{cluster} is a maximal contiguous portion $H[i,\dots,j]$ of runs; in particular, $H[i-1]$ and $H[j+1]$ are empty (i.e. do not store any remainder $R(x)$). In Figure \ref{fig:QF2}, there is just one cluster (the array $H$ is circular, so that after the cluster there is indeed an empty slot). 

It is not hard to see that this implementation allows to simulate chaining. Observe that:

\begin{enumerate}
    \item Empty cells are those such that $is\_occupied[i] = 0$ and $is\_shifted[i] = 0$. 
    \item Runs $H[i,\dots, i+k]$ of the same quotient can be identified because $is\_continuation[i]=0$ and $is\_continuation[i+j]=1$ for all $j=1,\dots, k$.   
    \item Points (1) and (2) allow us identifying clusters and runs inside a cluster. Looking at all '1'-bits $is\_occupied[i] = 1$ inside a cluster, we can moreover reconstruct which quotients $Q(x)$ are stored inside the cluster. Note also that $R(x)$ is always stored inside the cluster containing cell $H[Q(x)]$. 
    \item Since quotients in a cluster are sorted and known, and we know their corresponding runs, it is possible to insert/delete/query an element $x$ by scanning the cluster containing position $H[Q(x)]$ (see the original paper \cite{bender2011don} for the detailed algorithms). 
\end{enumerate}

Point (4) above implies that the average/worst-case query times are asymptotically equal to the average/largest cluster length, respectively.

\subsection{Analysis}

A false positive occurs when we query the QF on an element $x$ not in the set, and $h(x) = h(y)$ for some $y$ in the set ($x\neq y$). Since we assume $h$ to be completely uniform, the probability that $h(x) = h(y)$ is $1/2^p$. Then, the probability that $h(x) \neq h(y)$ is $1-1/2^p$, thus the probability that $h(x) \neq h(y)$ for all the $m$ elements $y$ in the set is (again by uniformity of $h$) $\left(1-1/2^p\right)^m$. We conclude that the false positive probability is bounded by

$$
1 - \left(1-\frac{1}{2^p}\right)^m =  1 - \left(1-\frac{1}{2^p}\right)^{2^p\cdot \frac{m}{2^p}} \approx 1-e^{-m/2^p} \leq \frac{m}{2^p} \leq \frac{2^q}{2^p} = 2^{-r}
$$

where the first inequality ($\leq$) follows from the inequality $x\leq \ln\left(\frac{1}{1-x}\right)$ for $x < 1$. By setting $\delta = 2^{-r}$ (where $0<\delta<1$ is the chosen false positive rate), we obtain that the space used by the QF is $M\cdot (r+3) = M\cdot \log_2(1/\delta) + 3M$ bits. Recalling that $M = m/\alpha$, where $0 < \alpha < 1$ is the table's load factor, we finally obtain that the space is $ (m/\alpha)\cdot \log_2(1/\delta) + 3m/\alpha = O(m \log(1/\delta))$ bits. Notice that this is smaller than the space of the Counting Bloom filter (whose space depends also on the false negative probability $\gamma$; recall that Quotient filters have no false negatives). 

The choice of the constant $\alpha$ affects the queries' running times. 
In any case, as the following theorem shows, the length of the longest cluster does not exceed $\Theta(\log m)$ with high probability:

\begin{theorem}\label{th:QF longest cluster}
    For constant load factor $0.5< \alpha < 1$ and any constant $c > 0$, 
    the probability that the longest cluster exceeds length
    $$
    \frac{3(1+c)\ln M}{1/\alpha + \alpha -2} = \Theta(\log m)
    $$
    is at most $m^{-c}$.
\end{theorem}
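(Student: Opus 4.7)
The plan is to reduce the ``longest cluster'' question to a binomial tail bound on ``how many elements fall into a short window of quotients'' and then apply the multiplicative Chernoff bound of Lemma \ref{lem:mult chernoff} together with a union bound over the $M$ possible starting positions of a cluster. Since $h$ is assumed uniform, the quotient $Q(x)$ of each of the $m$ stored elements is an independent uniform draw from $\{0,\dots,M-1\}$; this is what makes a binomial tail bound available.

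First I would establish the structural reduction. Suppose a cluster begins at position $j$ and has length $\ell \geq L$. Then positions $j-1$ and $j+\ell$ are empty, so no element with quotient $<j$ was shifted past $j-1$, and no element with quotient $\geq j+\ell$ can have been placed at a position $<j+\ell$. Consequently the $\ell$ elements that occupy positions $j,j+1,\dots,j+\ell-1$ all have quotients in $\{j,\dots,j+\ell-1\}$, and in particular the first $L$ slots of the cluster hold $L$ elements whose quotients lie in the window $W_j=\{j,\dots,j+L-1\}$. Letting $Y_j$ denote the number of elements with quotient in $W_j$, this shows
\[
\{\text{longest cluster} \geq L\}\ \subseteq\ \bigcup_{j=0}^{M-1}\{Y_j \geq L\}.
\]

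Next I would bound $P(Y_j \geq L)$. Each element has quotient in $W_j$ with probability $L/M$ independently, so $Y_j\sim\mathrm{Bin}(m,L/M)$ with mean $\mu=mL/M=\alpha L$. I want to write $L=(1+\epsilon)\mu$, which gives $\epsilon=1/\alpha-1$. The hypothesis $\alpha>1/2$ is exactly what makes $\epsilon<1$, so Lemma \ref{lem:mult chernoff} applies and yields
\[
P(Y_j\geq L)\ \leq\ e^{-\mu\epsilon^2/3}\ =\ e^{-\alpha L (1/\alpha-1)^2/3}\ =\ e^{-L(1/\alpha+\alpha-2)/3},
\]
using the identity $\alpha(1/\alpha-1)^2=(1-\alpha)^2/\alpha=1/\alpha+\alpha-2$. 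Note $1/\alpha+\alpha-2>0$ for $\alpha\neq 1$, so the bound is genuinely decreasing in $L$.

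Finally, a union bound over the $M$ possible cluster-starting positions gives
\[
P(\text{longest cluster}\geq L)\ \leq\ M\cdot e^{-L(1/\alpha+\alpha-2)/3}.
\]
Plugging in $L=\frac{3(1+c)\ln M}{1/\alpha+\alpha-2}$ reduces the exponent to $e^{-(1+c)\ln M}=M^{-(1+c)}$, and multiplying by $M$ yields $M^{-c}$; since $M=m/\alpha\geq m$ we have $M^{-c}\leq m^{-c}$, which is the claimed bound. The main obstacle is really only the first (structural) step: one has to be careful that the cluster is defined in terms of the packed array $H$ after shifting, so the argument must use that the cluster's left endpoint being at $j$ forces every element placed inside it to have quotient in $[j,j+\ell-1]$; once this is justified the rest is a routine Chernoff-plus-union-bound calculation.
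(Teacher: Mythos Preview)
Your proof is correct and follows essentially the same route as the paper: reduce ``longest cluster $\geq L$'' to the event that some length-$L$ window of quotients receives at least $L$ elements, bound each such binomial tail via the multiplicative Chernoff bound with $\epsilon=1/\alpha-1$ (which is why $\alpha>1/2$ is needed), and finish with a union bound over the $M$ windows. If anything, your structural step is stated a bit more carefully than the paper's, since you explicitly anchor the window at the cluster's left endpoint (where the empty slot $j-1$ guarantees no element was shifted in from the left), whereas the paper phrases it for an arbitrary window ``contained in a cluster''.
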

\begin{proof}
    If $H[i,\dots, i+k-1]$ is contained in a cluster, then $k$ elements $x_1,\dots, x_k$ in the set are such that $Q(x_j) \in [i,i+k-1]$ for all $j=1,\dots, k$. For a fixed $x$, the probability that $Q(x) \in [i,i+k-1]$ is $k/M$. Since the hash is fully uniform, the number $C_i$ of elements that hash inside $H[i,\dots, i+k-1]$ is the sum of $m$ independent Bernoulli variables $Be(k/M)$. 
    Note that $E[C_i] = m\cdot (k/M) = k\alpha$.
    Applying Lemma \ref{lem:mult chernoff} (multiplicative Chernoff), we obtain (note that for $0.5 < \alpha<1$ it holds $0 < 1/\alpha - 1 < 1$ so we can apply the lemma):
    $$
    P(C_i = k) = P(C_i = (1+(1/\alpha-1))E[C]) \leq e^{-k\alpha\cdot (1/\alpha-1)^2/3} = e^{-k\cdot (1/\alpha+\alpha-2)/3}
    $$
    So we have $P(H[i,\dots, i+k-1]\mathrm{\ is\ contained\ in\ cluster}) \leq e^{-k\cdot (1/\alpha+\alpha-2)/3}$.
    The longest cluster's length is greater than or equal to $k$ iff there exists an integer $0\leq i < M$ such that $H[i,\dots, i+k-1]$ is contained in a cluster, so by union bound:
    $$
    P(\mathrm{longest\ cluster\ length} \geq k) \leq M\cdot e^{-k\cdot (1/\alpha+\alpha-2)/3}
    $$
    The above probability is equal to $M^{-c} \leq m^{-c}$ for 
    $$
    k = \frac{3(1+c)\ln M}{1/\alpha + \alpha -2} = \Theta(\log m)
    $$
\end{proof}

Moreover, the expected cluster length is a constant:

\begin{theorem}\label{th:QF expected cluster}
    For constant load factor $0.5<\alpha < 1$, the expected cluster length is $O(1)$.
\end{theorem}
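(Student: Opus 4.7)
The plan is to reuse the tail bound established within the proof of Theorem \ref{th:QF longest cluster}, but without paying the extra union-bound factor of $M$ that we paid there, and then sum the tail to bound the expectation.

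First I would fix a cluster position and let $L$ denote the length of the cluster starting at (or containing) a particular cell $H[i]$; the goal is to show $E[L] = O(1)$, and by linearity / symmetry this will give the same bound for the expected length of a ``random'' cluster as well. The key fact I need is exactly what was proved mid-way inside Theorem \ref{th:QF longest cluster}: for any fixed window $H[i, \dots, i+k-1]$, the number $C_i$ of set elements that hash into this window is a sum of $m$ independent $Be(k/M)$ variables with $E[C_i] = k\alpha$, so by the multiplicative Chernoff bound (Lemma \ref{lem:mult chernoff}) with deviation $1/\alpha - 1 \in (0,1)$:
\[
P\bigl(H[i,\dots,i+k-1] \text{ is contained in a cluster}\bigr) \;\le\; P(C_i \ge k) \;\le\; e^{-k(1/\alpha + \alpha - 2)/3}.
\]
The event $\{L \ge k\}$ implies that some window of length $k$ anchored near $i$ is contained in a cluster, so up to shifting indices we obtain $P(L \ge k) \le e^{-k\beta}$ where $\beta := (1/\alpha + \alpha - 2)/3 > 0$ is a positive constant (here we use $0.5 < \alpha < 1$).

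Next I would apply the discrete analogue of Lemma \ref{lem:expected CDF}, namely $E[L] = \sum_{k \ge 1} P(L \ge k)$ for a non-negative integer random variable, and bound:
\[
E[L] \;\le\; \sum_{k=1}^{\infty} e^{-k\beta} \;=\; \frac{e^{-\beta}}{1 - e^{-\beta}} \;=\; O(1),
\]
since the geometric series converges whenever $\beta > 0$, and the value of $\beta$ depends only on the constant $\alpha$.

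The main subtlety, rather than a deep obstacle, is the indexing: the cluster containing a given cell may extend to the left as well as to the right of it, so one must either (i) condition on $H[i-1]$ being empty (turning $L$ into the length of the cluster that \emph{starts} at $i$, which only makes the tail smaller), or (ii) bound the length in each direction separately by $k/2$ and apply the tail bound twice, losing only a constant factor. Either variant folds cleanly into the geometric-series estimate above, giving $E[L] = O(1)$ as required.
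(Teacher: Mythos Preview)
Your proposal is correct and follows essentially the same route as the paper: reuse the Chernoff tail bound $P(L \ge k) \le e^{-k\beta}$ with $\beta = (1/\alpha+\alpha-2)/3$ from the proof of Theorem~\ref{th:QF longest cluster}, then sum over $k$. The only cosmetic difference is that you sum the tail directly via $E[L] = \sum_{k\ge 1} P(L\ge k) \le \sum_{k\ge 1} e^{-k\beta}$, whereas the paper bounds $\sum_{k\ge 1} k\cdot e^{-k\beta}$ using the closed form $e^\beta/(e^\beta-1)^2$; both converge to a constant depending only on $\alpha$.
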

\begin{proof}
    If a cluster of length $k$ starts in position $i$, then $C_i = k$ so the probability that a particular cluster has length $k$ is at most $e^{-k\cdot (1/\alpha+\alpha-2)/3}$ (see the proof of Theorem \ref{th:QF longest cluster}). 
    As a result, the expected cluster length is at most $E[C_i]$.
    Using the fact that $\sum_{k=1}^\infty k\cdot e^{-kc} = \frac{e^c}{(e^c-1)^2}$, we obtain that, for constant $0.5<\alpha < 1$:
    $$
    E[C_i] \leq \sum_{k=1}^\infty k\cdot e^{-k \cdot (1/\alpha+\alpha-2)/3} \leq \frac{e^{(1/\alpha+\alpha-2)/3}}{(e^{(1/\alpha+\alpha-2)/3}-1)^2} \in O(1)
    $$
    See the original paper \cite{bender2011don} for a tighter bound as function of $\alpha$.
\end{proof}

In practice, choosing $\alpha \in [0.5,0.9]$ guarantees a good space-time trade-off. By choosing $\alpha = 0.5$, for example, the space of the filter is $2mr + 6m = 2m\cdot \log_2(1/\delta) + 6m$ bits and $99\%$ of the clusters have less than 24 elements (see \cite{bender2011don}). This space is slightly larger than that of the Bloom filter, but query times of the QF are much faster: each query requires scanning only one cluster which (due to the average cluster length) will probably fit into a single cache line, thus causing at most one cache miss. Bloom filters, on the other hand, generate one cache miss per hash function used: this makes them several times slower than Quotient filters. 

We summarize everything in the following result:

\begin{theorem}
    Let $0 < \delta < 1$ be a user-defined parameter (false positive rate), and let $m$ be a maximum capacity.
    The Quotient filter supports membership, insert, and delete queries,   uses $O\left(m \log(1/\delta)\right)$ bits of space (in addition to the space required to store one fully-independent hash function), and guarantees false positive probability at most $\delta$, provided that no more than $m$ elements are inserted into the filter. 
    The Quotient filter does not generate false negatives. 
    Assuming that the hash function can be evaluated in constant time, all queries take expected $O(1)$ time / worst-case $O(\log m)$ time w.h.p. 
\end{theorem}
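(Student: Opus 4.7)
The plan is to assemble the claimed guarantees from the ingredients already developed in Section \ref{sec:quotient}, since each of the four properties (false positives, no false negatives, space, query time) has essentially already been established; what remains is to fix the parameters $r$, $q$, $M$ and verify the bounds line up.

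First I would fix parameters. Set $r = \lceil \log_2(1/\delta) \rceil$ (so $2^{-r} \leq \delta$) and pick any constant load factor $\alpha \in (0.5, 1)$, for instance $\alpha = 0.9$. Then choose $q = \lceil \log_2(m/\alpha) \rceil$ and $M = 2^q$, which ensures $m/M \leq \alpha$. The hash function $h$ maps to $[0, 2^p)$ with $p = q + r$. With these choices, the QF is exactly the structure described earlier (one circular array $H$ of $M$ slots, each holding $r+3$ bits), so insert/delete/membership are implemented by the cluster-scanning procedure already sketched, and no false negatives occur since every occurrence of every remainder (including duplicates) is physically stored.

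Next I would handle the false-positive bound. The paper already shows that, by uniformity of $h$, for any query element $x$ not in $S$ the probability that $h(x) = h(y)$ for some $y \in S$ is at most $2^{-r}$. Since a false positive in the QF occurs only when such a fingerprint collision happens (otherwise the cluster scan correctly reports absence), the false positive rate is at most $2^{-r} \leq \delta$. For the space bound, the array $H$ occupies
$$
M(r+3) = \frac{m}{\alpha} \bigl( \lceil \log_2(1/\delta) \rceil + 3 \bigr) = O\bigl(m \log(1/\delta)\bigr)
$$
bits, as required; the hash function itself is counted separately per the statement.

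Finally I would derive the running times directly from Theorems \ref{th:QF expected cluster} and \ref{th:QF longest cluster}. Each operation on $x$ reduces to (i) computing $h(x)$, which is $O(1)$ by assumption, and (ii) scanning the cluster that contains position $H[Q(x)]$ to simulate the chain at $T[Q(x)]$. Since the load factor $\alpha$ is a constant in $(0.5, 1)$, Theorem \ref{th:QF expected cluster} gives expected cluster length $O(1)$, yielding expected $O(1)$ query time, while Theorem \ref{th:QF longest cluster} gives that every cluster has length at most $\Theta(\log m)$ with probability at least $1 - m^{-c}$ for any desired constant $c$, yielding worst-case $O(\log m)$ time w.h.p. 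There is no real obstacle here since all hard work (Chernoff-based concentration of cluster lengths, false-positive calculation) is already done; the only care needed is that rounding $r$ up to an integer and rounding $M$ up to the next power of two does not change the asymptotics, and that the chosen $\alpha$ sits in the valid range where both cluster-length theorems apply.
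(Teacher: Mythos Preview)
Your proposal is correct and mirrors exactly what the paper does: the theorem is stated as a summary with no separate proof, since each claim (false-positive rate $\le 2^{-r}$, no false negatives from storing duplicate remainders, $O(m\log(1/\delta))$ bits, and the cluster-length bounds of Theorems~\ref{th:QF expected cluster} and~\ref{th:QF longest cluster}) is already established in the preceding subsections. Your explicit parameter choices ($r=\lceil\log_2(1/\delta)\rceil$, constant $\alpha\in(0.5,1)$, $M=2^q$ with $q=\lceil\log_2(m/\alpha)\rceil$) and the observation that rounding does not affect the asymptotics are the only details beyond what the paper writes, and they are all correct.
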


\chapter{Sketching and Streaming algorithms}\label{ch:sketch-stream}

\chapquote{
\hfill
\begin{tabular}{l}
     ``Not all the answers are the same \\ Yet we still play the game''
\end{tabular}
}{Greta Van Fleet}{Broken Bells}

\section{Introduction to sketching}

Let $x$ be some data: a set, a string, an integer, etc. 
A \emph{data sketch} is the output of a function $f$ (usually, the combination of a certain number of hash functions) mapping $x$ to a sequence of bits $f(x)$ with properties 1-3 below, plus (depending on the application) also property 4:

\begin{enumerate}
	\item The bit-size of $f(x)$ is much smaller than the bit-size of $x$ (usually, sub-linear or even poly-logarithmic).
     \item $f(x)$ can be used to estimate some properties of $x$. For example, if $x$ is a multi-set then $f(x)$ could be used to compute an approximation of the number of distinct elements contained in $x$, or the most frequent element in $x$. 
     Sometimes, we are able to compute the exact quantity we are interested in. More often, however, it is not possible to compute the exact answer in sublinear space (see point 1 above) and we must accept an approximation (for example, a multiplicative $(1\pm\epsilon)$-approximation). Even worse, we may even fail in computing an approximation; in these cases, we will require that the failure probability $\delta>0$ is arbitrarily small.
    Typically, the success probability will be boosted by repetition (concentration bounds): different instances of the randomized sketch will return different answers concentrated around the value we want to estimate, and a mean+median trick (Section \ref{sec:mean+median}) will do the job.
	\item $f(x)$ can be updated (efficiently) if $x$ gets updated. Importantly, it should be possible to update $f(x)$ \emph{without knowing $x$}. For example:
    \begin{itemize}
        \item if we add an element $y$ to a set $x$, it should be possible to compute $f(x\cup \{y\})$ knowing just $f(x)$ and $y$ (not $x$).
        \item More in general, given two sketches $f(x_1)$ and $f(x_2)$, it should be possible to compute the sketch of the composition  of $x_1$ and $x_2$ (under some operator). For example, 
        if $x_1$ and $x_2$ are sets we could be interested in obtaining the sketch of $f(x_1 \cup x_2)$ given $f(x_1)$ and $f(x_2)$, \emph{without knowing $x_1$ and $x_2$}.
    \end{itemize}
    \item If $x$ and $y$ are similar according to some measure of similarity (e.g. Jaccard similarity between sets), then $f(x)$ and $f(y)$ are likely to be similar (according to some measure of similarity, not necessarily the same as the one between $x$ and $y$). 
\end{enumerate}

Note that, 
when the function $f$ is randomized (e.g. it uses hashing),
 $f(x)$ and $f(y)$ are (in general, dependent) random variables.

We first focus on sketches possessing property 4 and then move to data streams, focusing more on property 2.

\section{Identity - Rabin's hash function (polynomial hashing)}\label{sec:KR}

The most straightforward measure of similarity is \emph{identity}: is $x$ equal to $y$? Without loss of generality, let $x$ be a string of length $n$ over alphabet $\Sigma=  [0,\sigma-1]$, where for simplicity we assume $1<\sigma = |\Sigma|\le n$. Note that we can treat strings as integers of $n$ digits in base $\sigma$.
Note that this setting can also be used to represent subsets of $[1,n]$, letting $\Sigma = \{0,1\}$. Observe that, for any function $f$, if $bitsize(f(x)) < bitsize(x)$ (where $bitsize(\cdot)$ is the function returning the number of bits that an object takes in memory) then collisions must occur: there must exist pairs $x \neq y$ such that $f(x) = f(y)$. 

The first idea to solve the problem could be to use function $\bar h(x) = ((a\cdot x+b)\mod M)\mod m$ of Definition \ref{def:bar h}: we simply view the string $x$ as a number with $n$ digits in base $|\Sigma|$. Unfortunately, this is not a good idea: recalling that we require $M>x$ for any input $x$ of our function, we would need to perform modular arithmetic on integers with $n$ digits in order to update the sketch!

\emph{Rabin's hashing} (also known as \emph{polynomial hashing}) is a string hashing scheme that solves the above problem (but it cannot achieve universality --- even if it guarantees a very low collision probability, see below):

\begin{definition}[Rabin's hash function / polynomial hashing \cite{rabin1981fingerprinting}]
	Fix a prime number $q$, and pick a uniform $z \in [0,q)$. 
	Let $x[1,n] \in \Sigma^n$ be a string of length $n$.
	Rabin's hash function $\kappa_{q,z}(x)$ is defined as: 
	$$
	\kappa_{q,z}(x) = \left(\sum_{i=1}^{n} x[i]\cdot z^{n-i}\right)\mod q 
	$$
\end{definition}

In other words: $\kappa_{q,z}(x)$ is a polynomial modulo $q$ evaluated in $z$ (a random point in $[0,q)$) and having as coefficients the characters of $x$.  \footnote{Another variant of Rabin's hashing draws a uniform prime $q$ instead, and fixes $z = |\Sigma|$} 

Let $|x|$ denote the length of string $x$.
 We define the \emph{Rabin sketch} $f(x)$ of the string $x$ to be the pair
$$
f(x) = (\kappa_{q,z}(x),\ z^{|x|}\mod q)
$$

Note that $f(x)$ uses just $O(\log q)$ bits. Later we show that we can choose $q \leq n^{O(1)}$, so the sketch will use just $O(\log n)$ bits.

First, we show that this sketch is easy to compute and update. Suppose we wish to append a character $c\in\Sigma$ at the end of $x$, thereby obtaining the string $x\cdot c$ (where $x\cdot c$ means \emph{$x$ concatenated with $c$}). The hash value of $x$ can be updated as follows (Horner's method for evaluating polynomials):

\begin{lemma}
	$
	\kappa_{q,z}(x\cdot c) = (\kappa_{q,z}(x) \cdot z + c ) \mod q
	$
\end{lemma}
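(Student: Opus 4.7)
The plan is to unroll the definition of $\kappa_{q,z}$ on the string $x\cdot c$ of length $n+1$, peel off the term corresponding to the newly appended character, reindex the remaining sum so that it exactly matches the definition of $\kappa_{q,z}(x)$, and then factor out the single extra $z$.

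More concretely, writing $y = x\cdot c$ so that $|y|=n+1$ and $y[n+1]=c$ while $y[k]=x[k]$ for $1\le k \le n$, the first step is to apply the definition:
\[
\kappa_{q,z}(x\cdot c) \;=\; \sum_{i=0}^{n} y[n+1-i]\cdot z^i \pmod{q}.
\]
I would then split off the $i=0$ term, which is $y[n+1]\cdot z^0 = c$, leaving
\[
\kappa_{q,z}(x\cdot c) \;=\; c \;+\; \sum_{i=1}^{n} x[n+1-i]\cdot z^i \pmod{q}.
\]
Next, I would substitute $j = i-1$ in the remaining sum, giving
\[
\sum_{j=0}^{n-1} x[n-j]\cdot z^{j+1} \;=\; z\cdot \sum_{j=0}^{n-1} x[n-j]\cdot z^{j},
\]
and recognize the inner sum as exactly $\kappa_{q,z}(x)$ by definition. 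Putting the pieces back together yields
\[
\kappa_{q,z}(x\cdot c) \;=\; \bigl(\kappa_{q,z}(x)\cdot z + c\bigr)\bmod q,
\]
which is the claim.

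There is no real obstacle here: the lemma is a direct consequence of Horner's rule applied modulo $q$. The only point requiring a little care is the indexing convention of the definition — the rightmost character of the string carries $z^0$ and the leftmost carries the highest power — so that appending a new character on the right corresponds to shifting every previous power up by one (the factor of $z$) and adding the new character as the constant term.
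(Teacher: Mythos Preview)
Your proof is correct. The paper itself does not give a proof of this lemma at all --- it simply states the identity and attributes it to Horner's method for evaluating polynomials --- so your argument is exactly the routine verification the paper omits.
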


The length of $x\cdot c$ is $|x|+1$ and $z^{|x|+1} \mod q = (z^{|x|}\mod q)\cdot z \mod q$.

The above lemma gives us also an efficient algorithm for computing $\kappa_{q,z}(x)$: start from $\kappa_{q,z}(\epsilon) = 0$ (where $\epsilon$ is the empty string) and append the characters of $x$ one by one. 

Using a similar idea, we can concatenate the sketches of two strings in constant time, as follows:

\begin{lemma}
	$
	\kappa_{q,z}(x\cdot y) = (\kappa_{q,z}(x) \cdot z^{|y|} + \kappa_{q,z}(y))\mod q
	$
\end{lemma}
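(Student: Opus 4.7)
The claim is a clean algebraic identity, and my plan is to prove it directly from the definition of $\kappa_{q,z}$, with the only subtle point being index bookkeeping (since the definition indexes coefficients from the right-hand end of the string).

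First I would expand the left-hand side from the definition, writing
$$
\kappa_{q,z}(x\cdot y) \;=\; \sum_{i=0}^{|x|+|y|-1} (x\cdot y)[|x|+|y|-i]\cdot z^i \pmod q,
$$
and split the sum according to whether the index $i$ picks out a character of $y$ (the ``low-order'' coefficients, $0 \le i \le |y|-1$) or a character of $x$ (the ``high-order'' coefficients, $|y| \le i \le |x|+|y|-1$). Concretely, for $0\le i\le |y|-1$ we have $(x\cdot y)[|x|+|y|-i] = y[|y|-i]$, and for $|y|\le i\le |x|+|y|-1$ we have $(x\cdot y)[|x|+|y|-i] = x[|x|+|y|-i]$.

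Next I would recognise the two pieces. The low-order piece is immediately
$$
\sum_{i=0}^{|y|-1} y[|y|-i]\cdot z^i \;=\; \kappa_{q,z}(y) \pmod q.
$$
For the high-order piece I would do the change of variable $j = i - |y|$, so that $j$ ranges over $0,\dots,|x|-1$ and the sum becomes
$$
\sum_{j=0}^{|x|-1} x[|x|-j]\cdot z^{\,j+|y|} \;=\; z^{|y|}\sum_{j=0}^{|x|-1} x[|x|-j]\cdot z^{j} \;=\; z^{|y|}\cdot\kappa_{q,z}(x) \pmod q.
$$
Adding the two pieces and reducing modulo $q$ yields exactly $(\kappa_{q,z}(x)\cdot z^{|y|} + \kappa_{q,z}(y))\bmod q$, which is the claim.

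There is no real obstacle here; the only thing to be careful about is the direction of indexing in the definition, and the fact that the $|x|$ characters contributed by the prefix $x$ are shifted $|y|$ positions to the left of the least-significant end of $x\cdot y$, which is exactly what produces the multiplicative factor $z^{|y|}$. Note also that the second component $z^{|x\cdot y|}\bmod q = z^{|x|+|y|}\bmod q$ of the Rabin sketch is obtained as the product $(z^{|x|}\bmod q)\cdot(z^{|y|}\bmod q)\bmod q$, so the full sketch $f(x\cdot y)$ can indeed be computed from $f(x)$ and $f(y)$ in $O(1)$ arithmetic operations on $\Theta(\log q)$-bit integers.
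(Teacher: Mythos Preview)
Your proof is correct. The paper actually states this lemma without proof, treating the identity as self-evident from the polynomial interpretation of $\kappa_{q,z}$; your direct expansion and index-splitting argument is exactly the natural verification one would supply, and your closing remark about computing $z^{|x|+|y|}\bmod q$ from the two sketch components matches the paper's own comment following the lemma.
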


The length of the string $x\cdot y$ is $|x|+|y|$ and the value $z^{|x|+|y|}\mod q$ can be computed efficiently as  $z^{|x|+|y|}\mod q = ((z^{|x|}\mod q) \cdot (z^{|y|}\mod q)) \mod q$. 

We prove another important property of Rabin's hashing: if $x\neq y$, then $\kappa_{q,z}(x) \neq \kappa_{q,z}(y)$ with high probability. This is implied by the following lemma:

\begin{lemma}
	Let $x \neq y$, with $\max(|x|,|y|) = n$. Then:
	$$
	P(\kappa_{q,z}(x) = \kappa_{q,z}(y)) \leq n/q
	$$
\end{lemma}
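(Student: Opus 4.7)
The plan is to view $\kappa_{q,z}(x) - \kappa_{q,z}(y)$ as a polynomial in the indeterminate $z$ over the field $\mathbb{Z}_q$, and then bound the number of roots this polynomial can have. Assume without loss of generality that $|x| \leq |y| = n$. I would first normalize lengths by defining $\tilde x$ to be $x$ prefixed with $|y|-|x|$ symbols $0$; this leaves $\kappa_{q,z}(x) = \kappa_{q,z}(\tilde x)$ unchanged (the added terms contribute $0 \cdot z^{i}$) while making the two strings equal in length. Since $x \neq y$, we still have $\tilde x \neq y$, so they must differ in at least one position.

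Next, define
\[
P(z) \;=\; \kappa_{q,z}(\tilde x) - \kappa_{q,z}(y) \;=\; \sum_{i=0}^{n-1} \big( \tilde x[n-i] - y[n-i] \big) \cdot z^{i} \pmod q.
\]
This is a polynomial in $z$ of degree at most $n-1$. I would then argue that $P$ is not the identically-zero polynomial over $\mathbb{Z}_q$: by the assumption $q > \sigma$ (implicit in the setup, since characters of $\Sigma$ must be representable mod $q$ without collision), each coefficient $\tilde x[n-i] - y[n-i]$ lies in $(-q, q)$, so it is zero mod $q$ if and only if it is zero as an integer, i.e.\ $\tilde x[n-i] = y[n-i]$. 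Since $\tilde x \neq y$, at least one coefficient is nonzero mod $q$.

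Now the key algebraic fact: because $q$ is prime, $\mathbb{Z}_q$ is a field, and a nonzero polynomial of degree at most $n-1$ over a field has at most $n-1$ roots. Hence the set $R = \{ z \in [0,q) : P(z) \equiv 0 \pmod q \}$ has cardinality $|R| \leq n-1$. Since $z$ is drawn uniformly from $[0,q)$,
\[
P\big(\kappa_{q,z}(x) = \kappa_{q,z}(y)\big) \;=\; P(z \in R) \;=\; \frac{|R|}{q} \;\leq\; \frac{n-1}{q} \;\leq\; \frac{n}{q},
\]
which is the desired bound.

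The main conceptual step is the passage to the polynomial viewpoint combined with the ``nonzero polynomial of degree $d$ has $\leq d$ roots over a field'' lemma; everything else is bookkeeping. The one mild subtlety to be careful about is the implicit assumption that $q$ is large enough that symbol values survive the reduction mod $q$ without cancellation, which is why we need (and the construction assumes) $q > \sigma$; without this, two distinct characters could have equal residues mod $q$ and the coefficient analysis would break.
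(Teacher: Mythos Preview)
Your proof is correct and follows essentially the same approach as the paper: both reduce to the difference polynomial in $z$ over $\mathbb{Z}_q$ and invoke the root bound for nonzero polynomials over a field. You are in fact slightly more careful than the paper, both in observing that the degree is at most $n-1$ (the paper loosely says ``at most $n$'') and in making explicit the assumption $q > \sigma$ needed to guarantee the difference polynomial is not identically zero modulo $q$.
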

\begin{proof}
	Note that $P(\kappa_{q,z}(x) = \kappa_{q,z}(y)) = P(\kappa_{q,z}(x) - \kappa_{q,z}(y) \equiv_q 0)$. Now, the quantity $\kappa_{q,z}(x) - \kappa_{q,z}(y)$ is, itself, a polynomial. Let $x-y$ be the string such that $(x-y)[i] = x[i]-y[i] \mod q$, where we left-pad with zeros the shortest of the two strings (so that both have $n$ characters). Then, it is easy to see that:
	$$
	\kappa_{q,z}(x) - \kappa_{q,z}(y) \mod q = \kappa_{q,z}(x-y)
	$$
	It follows that the above probability is equal to  $P(\kappa_{q,z}(x-y) \equiv_q 0)$. Since $x\neq y$, $\kappa_{q,z}(x-y)$ is a polynomial of degree at most $n$ over $\mathbb Z_q$ (evaluated in $z$) and it is not the zero polynomial. 
	Recall that any non-zero univariate polynomial of degree $n$ over a field has at most $n$ roots. Since $q$ is prime, $\mathbb Z_q$ is a field and thus there are at most $n$ values of $z$ such that $\kappa_{q,z}(x-y) \equiv_q 0$. Since we pick $z$ uniformly from $[0,q)$, the probability of picking a root is at most $n/q$.
\end{proof}

\begin{corollary}
	Choose a prime $n^{c+1} \leq q \leq 2\cdot n^{c+1}$ for an arbitrarily large constant $c$. Then, $bitsize(\kappa_{q,z}(x)) \in O(\log n)$ bits and, for any $x\neq y$:
	$$
	P(\kappa_{q,z}(x) = \kappa_{q,z}(y)) \leq n^{-c}
	$$
	that is, $x$ and $y$ collide with low (inverse polynomial) probability. 
\end{corollary}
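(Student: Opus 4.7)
The plan is to derive the corollary directly from the preceding lemma, which states $P(\kappa_{q,z}(x)=\kappa_{q,z}(y)) \leq n/q$ whenever $x\neq y$ and $\max(|x|,|y|)=n$. With the choice $q \geq n^{c+1}$, the collision probability is bounded by $n/q \leq n/n^{c+1} = n^{-c}$, which is exactly the claim. So the probabilistic part is essentially a one-line consequence of the previous lemma; the main content of the corollary is really just the observation that we can pick $q$ in the stated range.

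Next, I would address the bit-size claim. Since $\kappa_{q,z}(x) \in [0,q)$, it fits in $\lceil \log_2 q\rceil$ bits. The assumption $q \leq 2\cdot n^{c+1}$ gives $\log_2 q \leq (c+1)\log_2 n + 1$, which is $O(\log n)$ because $c$ is a constant. This handles the bit-size statement.

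The only remaining point worth commenting on is the existence of a prime in the interval $[n^{c+1}, 2\cdot n^{c+1}]$. This is guaranteed by Bertrand's postulate (for every integer $m\geq 1$ there is a prime between $m$ and $2m$), so the choice of $q$ is always feasible. No step here is really an obstacle: the earlier lemma does all the heavy lifting, and the corollary amounts to plugging in the specific choice of $q$ and checking the two asymptotic bounds. The ``hard part'', if any, is simply making explicit that such a prime exists and that $c$ is treated as a constant so that the logarithm of $q$ is $O(\log n)$.
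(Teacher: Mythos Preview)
Your proposal is correct and matches the paper's intent: the corollary is stated without an explicit proof, as it follows immediately from the preceding lemma by substituting $q\geq n^{c+1}$, together with the obvious bit-size bound and Bertrand's postulate (which the paper already invoked earlier when discussing the choice of the modulus).
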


Later in these notes, Rabin's hashing will be used to solve pattern matching in the streaming model. 
As noted above, Rabin hashing can be used also to sketch sets of integers under the following operations (prove it as an exercise):

\begin{enumerate}
    \item Inserting an element in the set, provided that the element does not belong to the set before the insertion takes place. This operation can be implemented in $O(\log q) = O(\log n)$ time (hint: here we need fast exponentiation to evaluate arbitrary powers $z^i \mod q$).
    \item Deleting an element from the set, provided that the element belongs to the set before the deletion takes place. This operation can be implemented in $O(\log q) = O(\log n)$ time.
    \item Computing the sketch of the union of two disjoint sets in constant time. 
    \item Checking the identity of two sets in constant time (succeeding w.h.p.). 
\end{enumerate}

Observe that, as opposed to the filters of Chapter \ref{ch:filters}, Rabin hashing allows us to squeeze an arbitrary subset of $[1,n]$ in just $O(\log n)$ bits! The price to pay is that we are limited just to the operations (1-4) above. Since operation 4 fails with low probability, we are not able to reconstruct the underlying set and therefore we do not break any information-theoretic lower bound. 

\section{Metrics}

Our next goal is to devise similarity-preserving sketches for \emph{distance metrics}.
A distance metric over a set $A$ is a function $d:A\times A \rightarrow \mathbb R$ with the following properties: 

\begin{itemize}
    \item Non-negativity: $d(x,y) \geq 0$
    \item Identity: $d(x,y) = 0$ iff $x=y$
    \item Simmetry: $d(x,y) = d(y,x)$
    \item Triangle inequality: $d(x,z) \leq d(x,y) + d(y,z)$
\end{itemize}

For example, the \emph{Jaccard distance} $d_J(x,y) = 1 - J(x,y) = 1 - |x \cap y|/|x\cup y|$ defined over sets (see also Section \ref{sec:minhash}) is indeed a distance metric.
Some examples of distances among vectors $x,y \in \mathbb R^n$ are:

\begin{itemize}
    \item $L_p$ norm (or Minkowski distance): $L_p(x,y) = \left( \sum_{i=1}^n |x_i-y_i|^p  \right)^{1/p}$
    \item $L_2$ norm (or Euclidean distance): $L_2(x,y) = \sqrt{\sum_{i=1}^n (x_i-y_i)^2}$
     \item $L_1$ norm (or Manhattan distance): $L_1(x,y) = \sum_{i=1}^n |x_i-y_i|$
     \item $L_\infty$ norm: $L_\infty(x,y) = \max\{ |x_1-y_1|, \dots, |x_n-y_n|\}$
     \item Cosine distance: $d_{cos}(x,y) = 1-cos(x,y) = 1 - \frac{x\cdot y}{\lVert x \rVert \cdot \lVert y \rVert} = 1- \frac{\sum_{i=1}^n x_iy_i}{\sqrt{\sum_{i=1}^n x_i^2} \cdot \sqrt{\sum_{i=1}^n y_i^2}}$
\end{itemize}

Between strings, we have:

\begin{itemize}
    \item Hamming distance between two equal-length strings: $H(s_1,s_2)$ is the number of positions $s_1[i]\neq s_2[i]$ in which the two strings differ. On alphabet $\{0,1\}$ it is equal to $L_1(s_1,s_2)$.
    \item Edit distance between any two strings: $Ed(s_1,s_2)$ is the minimum number of edits (substitutions, single-character inserts/deletes) that have to be applied to $s_1$ in order to convert it into $s_2$. 
\end{itemize}

All these metrics allow efficient sketching schemes. In the next two sections we study two particular cases: Hamming and Jaccard. 
In Section \ref{sec:P&P - Hamming} we will provide another solution for the Hamming distance. 
In Section \ref{sec: dimensionality reduction} we will provide a data sketch preserving the Euclidean distance.

\section{Hamming distance}\label{sec:Hamming sketch}

We devise a simple sketching mechanism for the \emph{normalized Hamming distance} between strings $x,y\in \Sigma^n$. We assume $|\Sigma| \le n$. 
Given two strings $x,y\in \Sigma^n$ of the same length $n$, the \emph{normalized Hamming distance} $d_H(x,y)$ of $x$ and $y$ is the number of positions where $x$ and $y$ differ, normalized by $n$:

$$
d_H(x,y) = \frac{1}{n} \sum_{i=1}^n (x[i]\neq y[i])
$$ 

where $(x[i]\neq y[i]) = 1$ if $x[i]\neq y[i]$, and 0 otherwise. 
Note that two strings are equal if and only if $d'_H(x,y)=0$ ($d_H(x,y)$ is indeed a metric). 

The normalized Hamming distance admits a very simple similarity-preserving sketching mechanism. Choose a uniform $i \in [1,n]$. Our first estimator is simply $(i,x[i])$, i.e. a character of $x$ at a uniform position $i$ (in addition to the position itself).

It is easy to see that $P(x[i] \neq y[i]) =  d_H(x,y)$: since $i$ is uniform, $x[i] \neq y[i]$ is true if and only if we choose a position where $x$ and $y$ differ. But, since there are $n\cdot d_H(x,y)$ such positions, the probability of choosing one of them is $n\cdot d_H(x,y)/n = d_H(x,y)$.

Let $(x[i] \neq y[i])$ be the indicator (Bernoullian) random variable (where $i$ is uniform in $[1,n]$) taking value 1 if and only if $x[i] \neq y[i]$. By the above observation, $E[(x[i] \neq y[i])] = d_H(x,y)$.

Unfortunately, $(x[i] \neq y[i])$ is not a good estimator since it has a large variance: in the worst case ($d_H(x,y)=0.5$), we have $Var[(x[i] \neq y[i])] = 0.25$ and thus the expected error (standard deviation) of $(x[i] \neq y[i])$ is $\sqrt{Var[(x[i] \neq y[i])]} = 0.5$. This means that on expectation  we are off by $50\%$ from the true value of $d_H(x,y)=0.5$. 

To reduce the variance, we average out $k$ i.i.d. indicators $(x[i] \neq y[i])$. In other words, we choose $k$ uniform indices $i_1, \dots, i_k \in [1,n]$ and define our sketch $f(x)$ for $x$ to be
$$
f(x) = ((i_1,x[i_1]), \dots, (i_k,x[i_k]))
$$

Note that, since we are storing also the positions $i_j$ in the sketch $f(x)$, we can easily update the sketch when the string $x$ gets updated by a substitution of the form $x[i] \leftarrow c$, for any $c\in \Sigma$. The sketch uses $O(k\log n)$ bits of space (remember that we assume $|\Sigma| \le n$).

To compare strings $x$ and $y$ using just their sketches $f(x)$ and $f(y)$, we compute the estimator:

$$
H(x,y) = \frac{1}{k}\sum_{j=1}^k (x[i_j] \neq y[i_j])
$$

Applying Corollary \ref{cor:avg Chernoff-Hoeffding} (Chernoff-Hoeffding on the average of $k$ Bernoullian RVs): 

\begin{theorem}
	Fix any desired absolute error $0 \leq \epsilon \leq 1$ and failure probability $0 < \delta \leq 1$.
    Then, $f(x)$ and $f(y)$ use $O(\epsilon^{-2} \log(1/\delta) \log n)$ bits of space and the estimator $H(x,y)$ exceeds absolute error $\epsilon$ with probability at most $\delta$, i.e. 
	$$
	P(|H(x,y) - d_H(x,y)| \geq \epsilon) \leq \delta
	$$
\end{theorem}
\begin{proof}
    By linearity of expectation: 
    $$
    E[\sum_{j=1}^k (x[i_j] \neq y[i_j])] = \sum_{j=1}^k E[(x[i_j] \neq y[i_j])] = k \cdot d_H(x,y)
    $$
    Corollary \ref{cor:avg Chernoff-Hoeffding} (Chernoff-Hoeffding) gives us:
    $$
    \begin{array}{ccl}
         P(|H(x,y) - d_H(x,y)| \geq \epsilon) & = & P(|\frac{1}{k}\sum_{j=1}^k (x[i_j] \neq y[i_j]) - d_H(x,y)| \geq \epsilon) \\
         & = & P(|\sum_{j=1}^k (x[i_j] \neq y[i_j]) - k\cdot d_H(x,y)| \geq k\epsilon) \\
         & \le & 2e^{-(k\epsilon)^2/(2k)} \\
         & = & 2e^{-\epsilon^2k/2}
    \end{array}
    $$
    Solving $2e^{-\epsilon^2k/2} = \delta$ as a function of $k$, we obtain $k = 2\epsilon^{-2}\ln(2/\delta)\in O\left( \frac{\log(1/\delta)}{\epsilon^2} \right)$. Our claim follows.
\end{proof}

\section{Jaccard similarity - MinHash}\label{sec:minhash}

MinHash is a sketching technique used to estimate the \emph{Jaccard similarity} of sets. It was invented by Andrei Broder in 1997 and initially used in the AltaVista search engine to detect duplicate web pages and eliminate them from search results.

Here we report just a definition and analysis of MinHash. For more details and applications see Leskovec et al.'s book \cite{leskovec2020mining}, Sections 3.1 - 3.3. 

The Jaccard similarity $J(A,B)$ of two sets $A$ and $B$ is defined as follows:

\begin{definition}[Jaccard similarity]
    $J(A,B) = \frac{|A\cap B|}{|A\cup B|}$
\end{definition}

The symmetric \emph{Jaccard distance} $d_J$ is defined as $d_J(A,B) = 1- J(A,B)$. It can be shown that $d_J$ is a metric. While in this section we show a sketching technique  estimating $J(A,B)$, the same technique works also for $d_J(A,B)$.

Without loss of generality, we may assume that we work with sets of integers from the universe $[1,n]$. This is not too restrictive, since we can always convert any object into an integer using a collision-free (w.h.p.) hash function.

\begin{definition}[MinHash hash function]\label{def:MinHash function}
Let $h$ be a hash function. The MinHash hash function of a set $A$ is defined as $\hat h(A) = \min\{h(x)\ :\ x\in A\}$, i.e. it is the minimum of $h$ over all elements of $A$.    
\end{definition}

\begin{definition}[MinHash estimator]\label{def:minhash estimator}
Let $\hat J_h(A,B)$ be the indicator R.V. defined as follows:
\begin{equation}
\hat J_h(A,B) = 
\begin{cases*}
1 & if $\hat h(A) = \hat h(B)$ \\
0        & otherwise
\end{cases*}
\end{equation}
\end{definition}

Note that $\hat J_h(A,B)$ is a Bernoullian R.V. We prove the following remarkable property:

\begin{lemma}\label{lem:jaccard minhas}
    If $h:[1,n]\rightarrow [1,n]$ is a uniform permutation, then
    $E[\hat J_h(A,B)] = J(A,B)$
\end{lemma}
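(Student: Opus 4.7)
The plan is to reduce the computation of $E[\hat J_h(A,B)]$ to a symmetry argument about which element of $A \cup B$ achieves the minimum hash value. Since $\hat J_h(A,B)$ is Bernoullian, we have
\[
E[\hat J_h(A,B)] = P(\hat h(A) = \hat h(B)),
\]
so the task becomes computing this collision probability.

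The key structural observation I would establish first is that, because $h$ is a permutation (hence injective, so no ties), the event $\hat h(A) = \hat h(B)$ is equivalent to the event that the unique element $x^\star \in A \cup B$ minimizing $h$ over $A \cup B$ belongs to $A \cap B$. The ``if'' direction: if $x^\star \in A \cap B$, then $x^\star$ is also a candidate in both $A$ and $B$, and since $h(x^\star)$ is already the smallest value taken by $h$ on the larger set $A \cup B$, it is a fortiori the minimum on $A$ and on $B$, so $\hat h(A) = h(x^\star) = \hat h(B)$. The ``only if'' direction: suppose without loss of generality $x^\star \in A \setminus B$. Then $\hat h(A) = h(x^\star)$, while $\hat h(B) = \min\{h(y) : y \in B\} > h(x^\star)$ by minimality of $x^\star$ on $A \cup B$ and injectivity of $h$; hence $\hat h(A) \neq \hat h(B)$.

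Given this equivalence, I would then invoke uniformity of $h$ over permutations to argue that $x^\star$ is uniformly distributed over $A \cup B$. Concretely, for any fixed element $y \in A \cup B$, by symmetry of the uniform distribution over permutations, the probability that $y$ is the element of $A \cup B$ on which $h$ attains its minimum is the same for every $y$, and these $|A \cup B|$ events partition the sample space, so each has probability $1/|A \cup B|$. Summing over $y \in A \cap B$ yields
\[
P(x^\star \in A \cap B) = \frac{|A \cap B|}{|A \cup B|} = J(A,B).
\]
Combining this with the equivalence above gives $E[\hat J_h(A,B)] = J(A,B)$.

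The main obstacle is the symmetry step: justifying cleanly that $x^\star$ is uniform on $A \cup B$. The cleanest way is to observe that for any two elements $y, y' \in A \cup B$, the swap-permutation $\pi$ of $[1,n]$ exchanging $h(y)$ and $h(y')$ yields another uniformly distributed permutation $\pi \circ h$, under which the roles of $y$ and $y'$ as argmin are exchanged; therefore $P(x^\star = y) = P(x^\star = y')$. Everything else is mechanical.
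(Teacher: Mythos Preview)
Your proof is correct and follows essentially the same approach as the paper: both identify the unique minimizer of $h$ over $A\cup B$, argue it is uniform on $A\cup B$ by symmetry, and observe that the estimator equals $1$ precisely when this minimizer lies in $A\cap B$. The only cosmetic difference is that the paper packages the case analysis via the law of total expectation on the partition $\{smallest(i)\}_{i\in A\cup B}$, whereas you phrase it directly as $P(\hat h(A)=\hat h(B))$ and give an explicit biconditional.
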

\begin{proof}
    Let $|A\cup B| = N$.
  For  $i \in A\cup B$, consider the event $smallest(i)$ being true if and only if $h(i) = \hat h(A\cup B)$, i.e. if $i$ is the element of $A\cup B$ mapped to the smallest hash $h(i)$ (among all elements of $A\cup B$). Since $h$ is a permutation, exactly one element from $A\cup B$ will be mapped to the smallest hash (i.e. $smallest(i)$ is true for exactly one $i\in A\cup B$), so $\{smallest(i)\}_{i \in A\cup B}$ is a partition of cardinality $N = |A\cup B|$ of the event space.
  Moreover, the fact that $h$ is a uniform permutation implies that $P(smallest(i)) = P(smallest(j))$ for all $i,j \in A\cup B$: every element of $A\cup B$ has the same chance to be mapped to the smallest hash (among elements of  $A\cup B$). This implies that $P(smallest(i)) = 1/|A\cup B| = 1/N$ for every $i \in A\cup B$.

  Note that, if we know that $smallest(i)$ is true and $i \in A\cap B$, then $\hat J_h(A,B) = 1$ (because $i$ belongs to both $A$ and $B$ and $h$ reaches its minimum $\mathtt{min}$ on $i$, thus $\hat h(A) = \hat h(B) = \mathtt{min}$). On the other hand, if we know that $smallest(i)$ is true and $i \in (A\cup B) - (A\cap B)$, then $\hat J_h(A,B) = 0$ (because $i$ belongs to either $A$ or $B$ --- not both --- and $h$ reaches its minimum $\mathtt{min}$ on $i$, thus either $\hat h(A) \neq \hat h(B) = \mathtt{min}$ or $\mathtt{min} = \hat h(A) \neq \hat h(B)$ holds).
  
  Using this observation and applying the law of total expectation (Lemma \ref{lem:total exp}) to the partition $\{smallest(i)\}_{i \in A\cup B}$ of the event space we obtain:

  $$
  \begin{array}{ccl}
  	E[\hat J_h(A,B)] & = & \sum_{i\in A\cup B} P(smallest(i)) \cdot E[\hat J_h(A,B)\ |\ smallest(i)]\\
  	& = & \sum_{i\in A\cup B} \frac{1}{N} \cdot E[\hat J_h(A,B)\ |\ smallest(i)] \\
  	& = & \sum_{i\in A\cap B} \frac{1}{N} \cdot E[\hat J_h(A,B)\ |\ smallest(i)] + \sum_{i\in (A\cup B)-(A\cap B)} \frac{1}{N} \cdot E[\hat J_h(A,B)\ |\ smallest(i)]\\
  	& = & \sum_{i\in A\cap B} \frac{1}{N} \cdot 1 + \sum_{i\in (A\cup B)-(A\cap B)} \frac{1}{N} \cdot 0\\
  	& = & \frac{1}{N} \cdot \sum_{i\in A\cap B} 1\\
  	& = & \frac{1}{N} |A\cap B| \\
  	& = & \frac{|A\cap B|}{|A\cup B|}\\
  	& = & J(A,B)
  \end{array}
  $$  
\end{proof}

The above lemma states that $\hat J_h(A,B)$ is an unbiased estimator for the Jaccard similarity. Note that evaluating the estimator only requires knowledge of $\hat h(A)$ and $\hat h(B)$: an entire set is squeezed down to just one integer!

\subsection{Min-wise independent permutations}

The main drawback of the previous approach is that $h$ is a random permutation. There are $n!$ random permutations of $[1,n]$, so $h$ requires $\log_2 (n!) \in \Theta(n\log n)$ bits to be stored. What property of $h$ makes the proof of Lemma \ref{lem:jaccard minhas} work? It turns out that we need the following: 

\begin{definition}[Min-wise independent hashing]
    Let $h:[1,n] \rightarrow [0,M)$ be a function from some family $\mathcal H$.
    For any subset $A\subseteq [1,n]$ and $i\in A$, let $smallest_h(A,i) = (\forall j\in A - \{i\})(h(i) < h(j))$.
    
    The family $\mathcal H$
    is said to be min-wise independent if, for a uniform $h\in \mathcal H$, $P(smallest_h(A,i)) = 1/|A|$ for any $A\subseteq [1,n]$ and $i\in A$.
\end{definition}

In other words, $\mathcal H$ is min-wise independent if, for any subset of the domain, any element is equally likely to be the minimum (through a uniform $h\in \mathcal H$). The definition could be made more general by further relaxing the uniformity requirement on $h$.

Unfortunately, Broder et al. \cite{broder2000min} proved that any family of min-wise independent permutations must include at least $e^{n-o(n)}$ permutations, so a min-wise independent function requires at least $n \log_2 e \approx 1.44 n$ bits to be stored. This lower bound is easy to prove. First, observe that any $h\in\mathcal H$ identifies exactly one minimum in $A$. Since every $i\in A$ should have the same probability to be mapped to the minimum through a uniform $h\in\mathcal H$, it follows that $|A|$ must necessarily divide $|\mathcal H|$. This should hold for every $A\subseteq [1,n]$, so each $k=1, 2, \dots, n$ should divide $|\mathcal H|$ and therefore $|\mathcal H|$ cannot be smaller than the least common multiple of all numbers $1, 2, \dots, n$. The claim follows from the fact that $lcm(1,2,\dots, n) = e^{n-o(n)}$.  \footnote{\footnotesize 
See \url{https://en.wikipedia.org/wiki/Chebyshev_function}.}

There are two solutions to this problem:
\begin{enumerate}
\item (\emph{k-min-wise independent hashing}) We require $P(smallest_h(A,i)) = 1/|A|$ only for sets of cardinality $|A|\leq k$.
\item (\emph{Approximate min-wise hashing}): we require $P(smallest_h(A,i)) = (1\pm \epsilon)/|A|$ for a small error $\epsilon > 0$.
\end{enumerate}

Also combinations of (1) and (2) are possible. 
A hash with property (1) can be stored in $O(k)$ bits of space and is a good compromise: in practice, $k$ is the cardinality of the union of the two largest sets in our dataset (much smaller than the universe's size $n$).
As far as solution (2) is concerned, there exist hash functions of size $\Theta(\log(1/\epsilon) \cdot \log n)$ bits with this property.
Such functions can be used to estimate the Jaccard similarity with absolute error $\epsilon$.
For more details, see \cite{indyk2001small,patracscu2012power}. 

\subsection{Reducing the variance}

To reduce the variance, we average out $k$ independent estimators, for sufficiently large $k$.
Let $h_i:[1,n] \rightarrow [1,n]$, with $i=1,\dots, k$, be $k$ independent uniform permutations. 
We define the MinHash sketch of a set $A$ to be the $k$-tuple:

\begin{definition}[MinHash sketch]
	$h_{min}(A) = (\hat h_1(A), \hat h_2(A), \dots, \hat h_k(A))$
\end{definition}

In other words: the $i$-th element of $h_{min}(A)$ is the smallest hash $h_i(x)$, for $x\in A$. 
Note that the MinHash sketch of a set $A$ can be easily computed in $O(k|A|)$ time, provided that $h$ can be evaluated in constant time. 
Then, we estimate $J(A,B)$ using the following estimator: 

\begin{definition}[Improved MinHash estimator]
$$
J^+(A,B) = \frac{1}{k}\sum_{i=1}^k \hat J_{h_i}(A,B)
$$
\end{definition}

In other words, we compute the average of $\hat J_{h_i}(A,B)$ for $i=1, \dots, k$. Note that the improved MinHash estimator can be computed in $O(k)$ time given the MinHash sketches of two sets. 

We can immediately apply the Chernoff-Hoeffding bound for the average of Bernoullian R.V. (Corollary \ref{cor:avg Chernoff-Hoeffding}) and obtain that $P(|J^+(A,B) - J(A,B)| \geq \epsilon) \leq 2e^{-\epsilon^2k/2}$ for any desired absolute error $0 < \epsilon \leq 1$. Fix now any desired failure probability $0 < \delta \leq 1$. By solving $2e^{-\epsilon^2k/2} = \delta$ we obtain $k = 2\ln(2/\delta)/\epsilon^2$. We can finally state: 

\begin{theorem}
	 Fix any desired absolute error $0 < \epsilon \leq 1$ and failure probability $0 < \delta \leq 1$. 
    The MinHash sketch uses $O(\epsilon^{-2}\log(1/\delta)\log n)$ bits of space and allows computing in $O(\epsilon^{-2}\log(1/\delta))$ time an estimator $J^+(A,B)$ which exceeds absolute error $\epsilon$ with probability at most $\delta$, i.e. 
	 $$
	 P(|J^+(A,B) - J(A,B)| \geq \epsilon) \leq \delta
	 $$
\end{theorem}

Note that it is easy to combine the MinHash sketches of two sets $A$ and $B$ so to obtain the MinHash sketch of $A\cup B$ (similarly, to compute the MinHash sketch of $A\cup \{x\}$ given the MinHash sketch of $A$): $h_{min}(A\cup B) = (\min\{\hat h_1(A), \hat h_1(B)\}, \dots, \min\{\hat h_k(A), \hat h_k(B)\})$.

\section{Locality-sensitive hashing (LSH)}\label{sec:LSH}

Suppose our task is to find all similar pairs of elements (small distance $d(x,y)$, for some metric $d$) in a data set $A \subseteq U$ ($U$ is some universe). While a distance-preserving sketch (e.g. for Jaccard distance) speeds up the computation of $d(x,y)$, we still need to compute $O(|A|^2)$ distances in order to find all similar pairs! On big data sets this is clearly not feasible. 

Locality-sensitive hash functions are used to accelerate the search of similar elements in a data set, where similarity is usually measured in terms of a distance metric. 
The main intuition behind LSH is that we want similar items to be hashed to the same value. In other words, we want to \emph{maximize} hash collisions between similar items.


\subsection{The theory of LSH}

A locality-sensitive hash function for some distance metric $d : U\times U \rightarrow \mathbb R$ is a function $h : U \rightarrow [0,M)$ such that similar elements (i.e. $d(x,y)$ is small) are likely to collide: $h(x) = h(y)$. This is useful to drastically reduce the search space  with the following algorithm:

\begin{enumerate}
    \item Scan the data set $A$ and put each element $x\in A$ in bucket $H[h(x)]$ of a hash table $H$.
    \item Compute distances only between pairs inside each bucket $H[i]$.
\end{enumerate}

Classic hash data structures use $O(m)$ space for representing a set of $m$ elements and support insertions and lookups in $O(1)$ expected time (see Section \ref{sec:hash tables}). More advanced data structures\footnote{\footnotesize Dietzfelbinger, Martin, and Friedhelm Meyer auf der Heide. ``A new universal class of hash functions and dynamic hashing in real time.'' International Colloquium on Automata, Languages, and Programming. Springer, Berlin, Heidelberg, 1990.} support queries in $O(1)$ worst-case time with high probability. In the following, we will therefore assume constant-time operations for our hash data structures. 

LSH works by first defining a distance threshold $t$. Ideally, we would like the collision probability to be equal to 0 for pairs such that $d(x,y) > t$ and equal to 1 for pairs such that $d(x,y) \leq t$. For example, using a distance $d:U\times U \rightarrow [0,1]$ (e.g. Jaccard distance) the ideal LSH function should be the one depicted in Figure \ref{fig:LSH1}. 

\begin{figure}[h!]
\centering
	\includegraphics[scale=0.45]{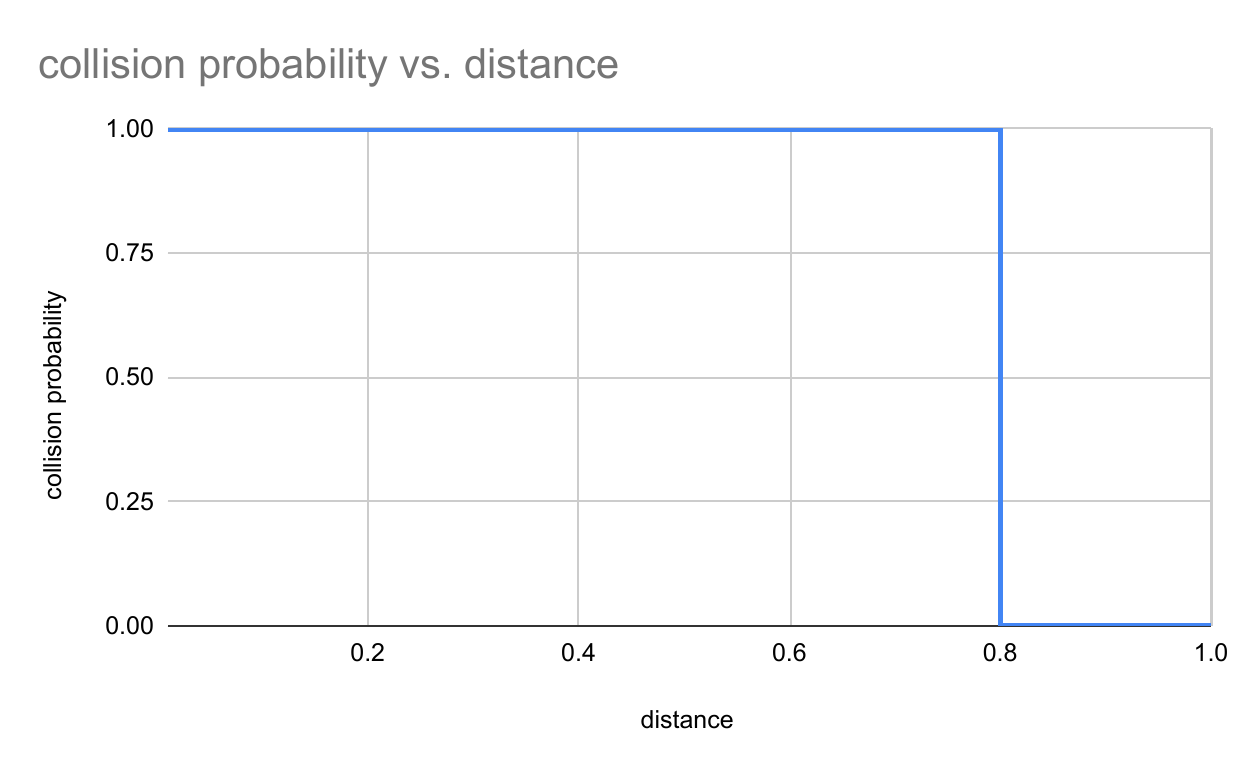}
	\caption{The ideal locality-sensitive hash function: elements whose distance is below the threshold $t=0.8$ collide with probability 1; elements whose distance is above the threshold do not collide.}\label{fig:LSH1}
\end{figure}

In practice, we are happy with a good approximation:

\begin{definition}
    A $(d_1,d_2, p_1, p_2)$-sensitive family $\mathcal H$ of hash functions is such that, for a uniformly-chosen $g \in \mathcal H$, we have:
    \begin{itemize}
        \item If $d(x,y) \leq d_1$, then $P(g(x) = g(y)) \geq p_1$.
        \item If $d(x,y) \geq d_2$, then $P(g(x) = g(y)) \leq p_2$.
    \end{itemize}
\end{definition}

Intuitively, we want $d_1$ and $d_2$ to be as close as possible ($d_1\leq d_2$), $p_1$ as large as possible, and $p_2$ as small as possible. 
To abbreviate, in the following we will say that \emph{$h$ is a $(d_1,d_2, p_1, p_2)$-sensitive hash function} when it is uniformly drawn from a $(d_1,d_2, p_1, p_2)$-sensitive family. 
For example, Figure \ref{fig:LSH2} shows the behaviour of a $(0.4,0.7, 0.999, 0.007)$-sensitive hash function for Jaccard distance (see next subsection for more details). 

We now show how locality-sensitive hash functions can be \emph{amplified} in order to obtain different (better) parameters. 

\subsubsection{AND construction}

Suppose $\mathcal H$ is a $(d_1,d_2, p_1, p_2)$-sensitive family. Pick uniformly $r$ independent hash functions $h_1, \dots, h_r \in \mathcal H$, and define:

\begin{definition}[AND construction]
$h^{AND}(x) = (h_1(x), \dots, h_r(x))$    
\end{definition}

Then, if two elements $x,y\in U$ collide with probability $p$ using any of the $h_i$, now they collide with probability $p^r$ using $h^{AND}$ (because the $h_i$ are independent).
In other words, the curve becomes $P(collision) = p^r$ and we conclude: 

\begin{lemma}
    $h^{AND}$ is a $(d_1,d_2, p_1^r, p_2^r)$-sensitive hash function.
\end{lemma}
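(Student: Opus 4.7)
The plan is to unfold the definition of $h^{AND}$ and observe that a collision under $h^{AND}$ is the conjunction of $r$ collisions under the independently chosen $h_i$'s. First I would write: $h^{AND}(x) = h^{AND}(y)$ if and only if $h_i(x) = h_i(y)$ for every $i = 1, \dots, r$. Hence
\[
  P\bigl(h^{AND}(x) = h^{AND}(y)\bigr) \;=\; P\!\left(\bigwedge_{i=1}^{r} h_i(x) = h_i(y)\right).
\]

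Next I would invoke the fact that the $h_i$ are drawn \emph{independently and uniformly} from $\mathcal{H}$, so the events $\{h_i(x)=h_i(y)\}$ are mutually independent. Therefore the joint probability factors:
\[
  P\bigl(h^{AND}(x) = h^{AND}(y)\bigr) \;=\; \prod_{i=1}^{r} P\bigl(h_i(x) = h_i(y)\bigr).
\]

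Then I would perform the two case analyses dictated by the definition of $(d_1,d_2,p_1,p_2)$-sensitivity of $\mathcal{H}$. If $d(x,y) \leq d_1$, each factor is at least $p_1$, so the product is at least $p_1^r$; if $d(x,y) \geq d_2$, each factor is at most $p_2$, so the product is at most $p_2^r$. These two bounds are exactly the defining conditions of a $(d_1,d_2,p_1^r,p_2^r)$-sensitive function, completing the proof.

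There is no real obstacle here; the only subtle point worth emphasizing is that independence of the $h_i$'s (as stated in the setup) is essential for the factorization step, since in general we can only bound the probability of a conjunction from above (e.g.\ via the minimum of the marginals) without independence. Since the construction explicitly picks the $h_i$'s independently, the factorization is valid and the bound is tight.
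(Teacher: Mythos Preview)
Your proof is correct and follows exactly the same reasoning as the paper: collision under $h^{AND}$ is the conjunction of collisions under the independently chosen $h_i$, so the collision probability becomes $p^r$, and applying this with $p=p_1$ and $p=p_2$ yields the two required bounds. Your write-up is in fact more detailed than the paper's, which dispatches the lemma in a single sentence immediately preceding its statement.
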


Observe that, if the output of $h$ is one integer, then $h^{AND}$ outputs $r$ integers. However, we may use one additional collision-free hash function $h'$ to reduce this size to one integer: $x$ is mapped to $y = h'(h^{AND}(x))$. This is important, since later we will need  to insert $y$ in a hash table (this trick reduces the space by a factor of $r$).

\subsubsection{OR construction}

Suppose $\mathcal H$ is a $(d_1,d_2, p_1, p_2)$-sensitive family. Pick uniformly $b$ independent hash functions $h_1, \dots, h_b \in \mathcal H$, and define:

\begin{definition}[OR construction]
We say that $x$ and $y$ collide iff $h_i(x)=h_i(y)$ for at least one $1\leq i \leq b$.
\end{definition}

Note: the OR construction can be simulated by simply keeping $b$ hash tables $H_1,\dots, H_b$, and inserting $x$ in bucket $H_i[h_i(x)]$ for each $1\leq i \leq b$. Then, two elements collide iff they end up in the same bucket in at least one hash table.

Suppose two elements $x,y\in U$ collide with probability $p$ using any hash function $h_i$. Then:

\begin{itemize}
    \item For a fixed $i$, we have that $P(h_i(x)\neq h_i(y)) = 1-p$
    \item The probability that all hashes do not collide is $P(\wedge_{i=1}^b h_i(x)\neq h_i(y)) = (1-p)^b$
    \item The probability that at least one hash collides is $$P(\vee_{i=1}^b h_i(x) = h_i(y)) = 1- P(\wedge_{i=1}^b h_i(x)\neq h_i(y)) = 1-(1-p)^b$$
\end{itemize}

We conclude that the OR construction yields a curve of the form $P(collision) = 1 - (1-p)^b$ so:

\begin{lemma}
    The OR construction yields a $(d_1,d_2, 1-(1-p_1)^b, 1-(1-p_2)^b)$-sensitive hash function.
\end{lemma}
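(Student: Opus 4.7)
The plan is to mirror the informal derivation already sketched in the text, but carry it out cleanly for both of the two conditions in the definition of a $(d_1,d_2,p_1,p_2)$-sensitive family. Fix any $x,y \in U$ and let $p = p(x,y)$ denote the common collision probability $P(h_i(x)=h_i(y))$, which is the same for each $i$ since the $h_i$'s are drawn i.i.d.\ from $\mathcal H$. By independence of $h_1,\dots,h_b$, the events $\{h_i(x)\neq h_i(y)\}_{i=1}^{b}$ are mutually independent, so
\[
P\!\left(\bigwedge_{i=1}^{b} h_i(x)\neq h_i(y)\right) \;=\; \prod_{i=1}^{b} P(h_i(x)\neq h_i(y)) \;=\; (1-p)^b,
\]
and hence the collision probability under the OR construction is exactly $1-(1-p)^b$.

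Now I would split into the two required cases. First, suppose $d(x,y)\leq d_1$. By $(d_1,d_2,p_1,p_2)$-sensitivity of $\mathcal H$, we have $p=P(h_i(x)=h_i(y))\geq p_1$ for each $i$. Since $t\mapsto 1-(1-t)^b$ is monotonically nondecreasing in $t\in[0,1]$, it follows that the OR-collision probability satisfies $1-(1-p)^b \geq 1-(1-p_1)^b$. Second, suppose $d(x,y)\geq d_2$. Then $p\leq p_2$, and by the same monotonicity $1-(1-p)^b \leq 1-(1-p_2)^b$. Combining the two inequalities yields exactly the two clauses in the definition of a $(d_1,d_2,1-(1-p_1)^b,1-(1-p_2)^b)$-sensitive hash function, which is what we wanted.

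The only genuine step is the use of independence to factor the ``no collision anywhere'' event as a product; everything else is routine arithmetic plus the observation that $1-(1-t)^b$ is monotone in $t$. I expect no real obstacle: the OR construction explicitly draws the $h_i$'s independently, so the factorization is justified, and monotonicity lets the bounds on the per-hash collision probability transfer directly to bounds on the aggregate collision probability.
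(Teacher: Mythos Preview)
Your proof is correct and follows essentially the same approach as the paper: the paper computes the OR-collision probability as $1-(1-p)^b$ via independence of the $h_i$'s and then states the lemma directly. Your version is slightly more explicit in that you verify both clauses of the sensitivity definition separately and invoke monotonicity of $t\mapsto 1-(1-t)^b$, but the core argument is identical.
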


\subsubsection{Combining AND+OR}

By combining the two constructions, each $x$ is hashed through $rb$ hash functions: we keep $b$ hash tables and insert each $x\in U$ in buckets $H_i[h^{AND}_i(x)]$ for each $1\leq i \leq b$, where $h^{AND}_i$ is the combination of $r$ independent hash values. We obtain: 

\begin{lemma}
    If  $\mathcal H$ is a $(d_1,d_2, p_1, p_2)$-sensitive family, then the AND+OR constructions with parameters $r$ and $b$ yields a $(d_1,d_2, 1-(1-p_1^r)^b, 1-(1-p_2^r)^b)$-sensitive family.
\end{lemma}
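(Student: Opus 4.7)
The plan is to simply compose the two amplification lemmas already proved in the excerpt, treating the AND construction as a black-box that produces a new sensitive family to which the OR construction is then applied.

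First I would define, for each $i\in\{1,\dots,b\}$, an independent copy $h_i^{AND}$ of the AND construction from $r$ independent draws $h_{i,1},\dots,h_{i,r}\in\mathcal H$. By the AND-construction lemma, each $h_i^{AND}$ (viewed alone, with uniform choice of its $r$ underlying hashes) is a $(d_1,d_2,p_1^r,p_2^r)$-sensitive hash function: for any $x,y$ with $d(x,y)\le d_1$ we have $P(h_i^{AND}(x)=h_i^{AND}(y))\ge p_1^r$, and for any $x,y$ with $d(x,y)\ge d_2$ we have $P(h_i^{AND}(x)=h_i^{AND}(y))\le p_2^r$. The key point is that all $rb$ hash draws $\{h_{i,j}\}_{1\le i\le b,\,1\le j\le r}$ are mutually independent, so the families $h_1^{AND},\dots,h_b^{AND}$ are themselves mutually independent.

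Next I would apply the OR-construction lemma to the $b$ independent functions $h_1^{AND},\dots,h_b^{AND}$. Fix a pair $x,y$ and write $q_i=P(h_i^{AND}(x)=h_i^{AND}(y))$. By independence of the $h_i^{AND}$,
\[
P\!\left(\bigvee_{i=1}^b h_i^{AND}(x)=h_i^{AND}(y)\right) \;=\; 1-\prod_{i=1}^b (1-q_i).
\]
If $d(x,y)\le d_1$, then each $q_i\ge p_1^r$, so each factor $1-q_i\le 1-p_1^r$, and the collision probability is at least $1-(1-p_1^r)^b$. If $d(x,y)\ge d_2$, then each $q_i\le p_2^r$, so each $1-q_i\ge 1-p_2^r$, and the collision probability is at most $1-(1-p_2^r)^b$. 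Together these two bounds are exactly the defining property of a $(d_1,d_2,1-(1-p_1^r)^b,1-(1-p_2^r)^b)$-sensitive family.

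There is essentially no obstacle in this proof; the only subtlety worth flagging explicitly is the independence bookkeeping. One must be careful that the $rb$ hash functions are drawn independently from $\mathcal H$ (not reused across the $b$ AND-blocks), because the OR-lemma needs independence of the $b$ combined hashes, and the AND-lemma inside each block needs independence of the $r$ hashes in that block. Once this is explicit, the result follows by a one-line substitution into the two previously established lemmas.
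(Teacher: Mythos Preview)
Your proposal is correct and matches the paper's approach exactly: the paper does not even give a separate proof of this lemma, simply stating it as the evident composition of the preceding AND lemma ($(d_1,d_2,p_1,p_2)\mapsto(d_1,d_2,p_1^r,p_2^r)$) and OR lemma ($(d_1,d_2,p_1,p_2)\mapsto(d_1,d_2,1-(1-p_1)^b,1-(1-p_2)^b)$). Your explicit independence bookkeeping is more careful than the paper, but the argument is the same.
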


It turns out (see next subsections) that by playing with parameters $r$ and $b$ we can obtain a function as close as we wish to the ideal LSH of Figure \ref{fig:LSH1}.

\subsection{LSH for Jaccard distance}

Let $\hat h$ be the MinHash function of Definition \ref{def:MinHash function}. 
In Section \ref{sec:minhash} we have established that $P(\hat h(A) = \hat h(B)) = J(A,B)$, i.e. the probability that two elements collide through $\hat h$ is exactly their Jaccard similarity. Recall that we have defined the \emph{Jaccard distance} (a metric) to be $d_J(A,B) = 1 - J(A,B)$. But then, $P(\hat h(A) = \hat h(B)) = 1 - d_J(A,B)$ and we obtain that $\hat h$ is a $(d_1,d_2, 1-d_1, 1-d_2)$-sensitive hash function for any $0\leq d_1 \leq d_2 \leq 1$, see Figure \ref{fig:LSH3}.

\begin{figure}[h!]
\centering
	\includegraphics[scale=0.45]{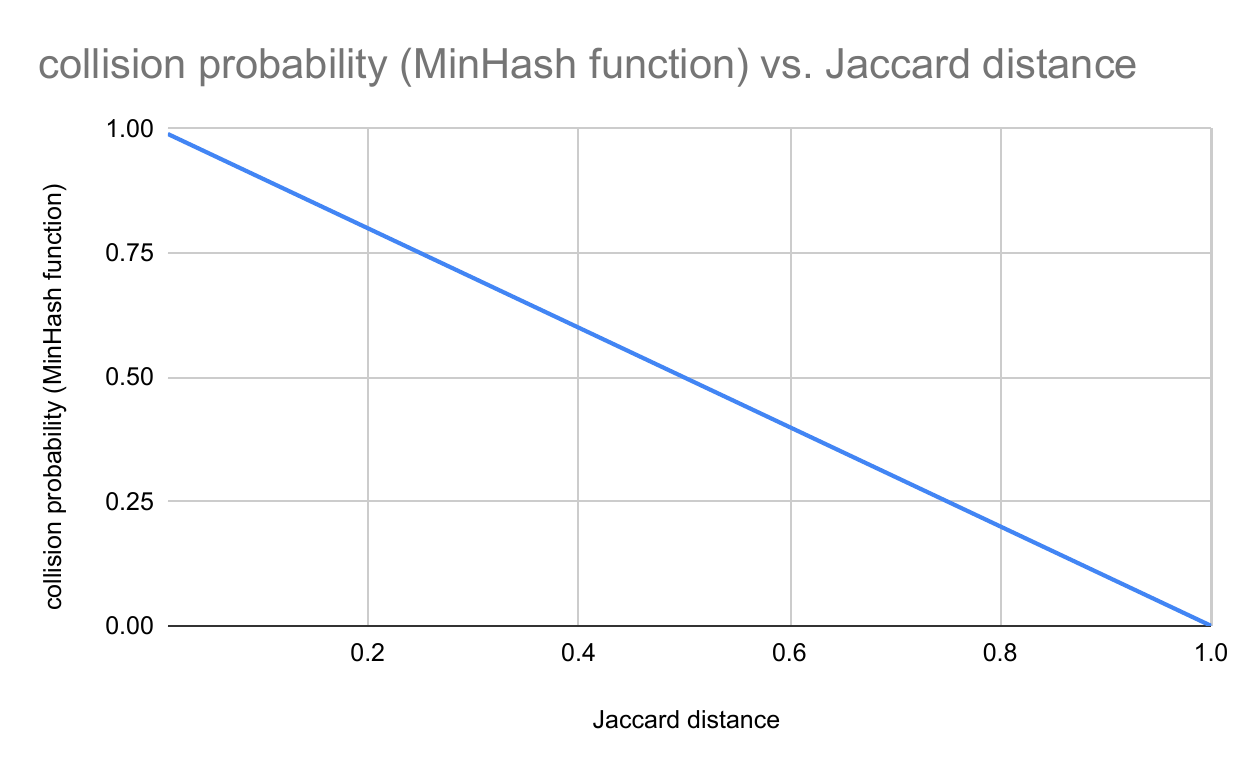}
	\caption{The MinHash function $\hat h$ of Definition \ref{def:MinHash function} is a $(d_1,d_2, 1-d_1, 1-d_2)$-sensitive function for any $0\leq d_1 \leq d_2 \leq 1$.}\label{fig:LSH3}
\end{figure}

Using the AND+OR construction, we can amplify $\hat h$ and obtain a  $(d_1,d_2, 1-(1-(1-d_1)^r)^b, 1-(1-(1-d_2)^r)^b)$-sensitive function for any $0\leq d_1 \leq d_2 \leq 1$. For example, with $r = 10$ and $b = 1200$ we obtain a function whose behaviour is depicted in Figure \ref{fig:LSH2}.

\begin{figure}[h!]
\centering
	\includegraphics[scale=0.45]{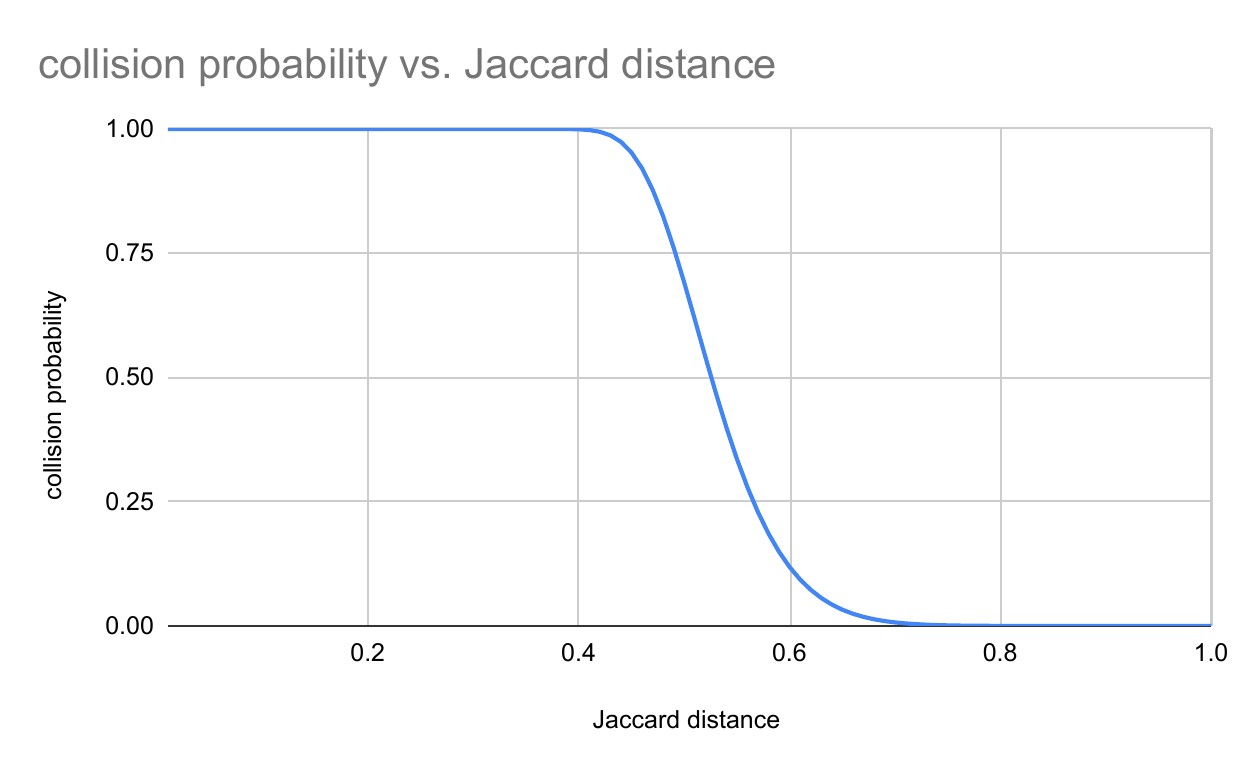}
	\caption{A $(0.4,0.7, 0.999, 0.007)$-sensitive hash function for Jaccard distance built with AND+OR construction with parameters $r = 10$ and $b = 1200$ starting from a  $(0.4,0.7,0.6,0.3)$-sensitive LSH function. 
	Equivalently, we can take two closer points $d_1$ and $d_2$ on the curve: for example, this function is also $(0.5,0.6, 0.69, 0.12)$-sensitive.
	}\label{fig:LSH2}
\end{figure}

The shape of the s-curve is dictated by the parameters $b$ and $r$. As it turns out, $b$ controls the steepness of the slope, that is, the distance between the two points where the probability becomes close to 0 and close to 1. 
The larger $b$, the steeper the s-curve is. 
In other words, $b$ controls the distance between $d_1$ and $d_2$ in our LSH: we want $b$ to be large. Parameter $r$, on the other hand, controls the position of the slope (the point where the curve begins to decrease). 

Let $p$ be the collision probability and $d_J$ be the Jaccard distance. The s-curve follows the equation $p = 1-(1-(1-d_J)^r)^b$
By observing that the center of the slope is approximately around $p=1/2$, one can determine the parameters $b$ and $r$ as a function of the slope position $d_J$. Let's solve the following equation as a function of $r$:

$$
1-(1-(1-d_J)^r)^b = 1/2
$$

We obtain (note that $r$ should be an integer so we must approximate somehow):

$$
r = \left\lfloor\frac{\ln\left( 1-2^{-1/b} \right)}{\ln(1-d_J)}\right\rfloor
$$

The fact that we have to approximate $r$ to an integer means that the slope of the resulting curve will not be centered exactly at $d_J$. By playing with parameter $b$, one can further adjust the curve. 

\begin{example}\label{ex:LSH Jaccard}
    Suppose we want to build a LSH to identify sets with Jaccard distance at most $0.9$. We choose a large $b = 100000$. Then, the above equation gives us $r = \left\lfloor\frac{\ln\left( 1-2^{-1/100000} \right)}{\ln(1-0.9)}\right\rfloor = 5$. Using these parameters, we obtain the LSH shown in Figure \ref{fig:LSH4}. For example, one can extract two data points from this curve and see that this is a $(0.85,0.95,0.99949,0.03076)$-sensitive function. 
\end{example}

\begin{figure}[ht!]
\centering
	\includegraphics[scale=0.44]{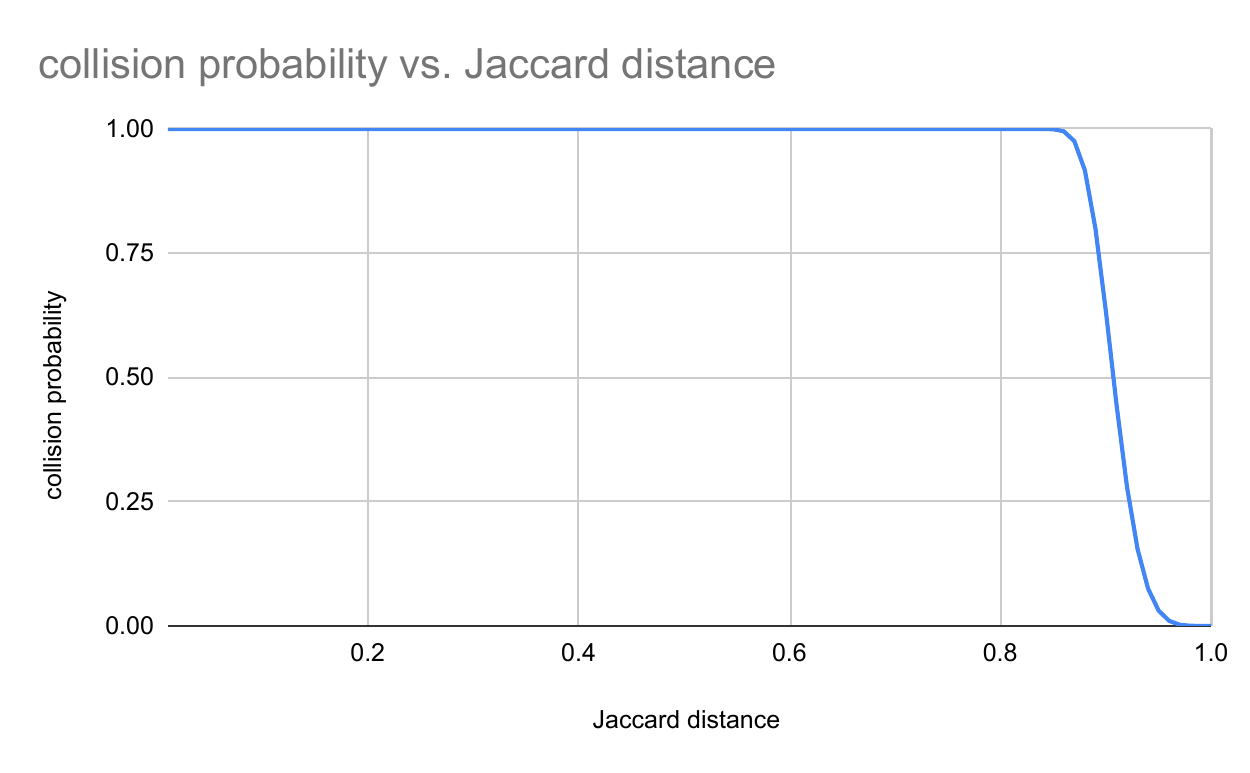}
	\caption{LSH built for Example \ref{ex:LSH Jaccard}.}\label{fig:LSH4}
\end{figure}

Clearly, a large $b$ has a cost: in Example \ref{ex:LSH Jaccard}, we have to compute $r\cdot b = 5\cdot 10^5$ MinHash functions for each set, which means that we have to apply $5\cdot 10^5$ basic hash functions $h$ (see Definition \ref{def:MinHash function}) to each element of each set. 
Letting $t=r\cdot b$, this translates to $O(|A|\cdot t)$ running time for a set $A$. Dahlgaard et al. \cite{dahlgaard2017fast} improved this running time to $O(|A| + t\log t)$. Another solution is to observe that the $t$ MinHashes are completely independent, thus their computation can be parallelized optimally (for example, with a MapReduce job running over a large cluster). 

Observe also that a large value of $b$ requires a large family of hash functions. While this is not a problem with the Jaccard distance (where the supply of $n!$ permutations is essentially unlimited), it could be a problem with the sketch for Hamming distance presented in Section \ref{sec:Hamming sketch}. There, we could choose only among $n$ hash functions, $n$ being the strings' length. It follows that the resulting LSH scheme is not good for small strings (small $n$).

\subsection{Nearest neighbour search}

One application of LSH is nearest neighbour search: 

\begin{definition}[Nearest neighbour search (NNS)]
	For a given distance threshold $D$, preprocess a data set $A$ of size $|A|=m$ in a data structure such that later, given any data point $x$, we can quickly find a point $y\in A$ such that $d(x,y) \leq D$.
\end{definition}

To solve the NNS problem, let $\mathcal H$ be a $(D',D, p_1, p_2)$-sensitive family, with $D'$ as close as possible to (and smaller than) $D$. Suppose moreover that $h(x)$ can be evaluated in time $t_h$ (this time is proportional to the size/cardinality of $x$) and $d(x,y)$ can be computed in time $t_d$. Note that $t_d$ can be reduced considerably by employing sketches --- see Section \ref{sec:minhash}. We amplify $\mathcal H$ with an AND+OR construction with parameters $r$ (AND) and $b$ (OR). Our data structure is formed by $b$ hash tables $H_1, \dots, H_b$. For each of the $m$ data points $x\in A$, we compute the $b$ functions $h^{AND}_i(x)$ in total time $O(m\cdot b \cdot r \cdot t_h)$ and insert in $H_i[h^{AND}_i(x)]$ a pointer to the original data point $x$ (or to its sketch). Assuming that a hash table storing $m$ pointers occupies $O(m)$ words of space and can be constructed in (expected) $O(m)$ time, we obtain:

\begin{lemma}
	Our NNS data structure can be constructed in $O(m\cdot b\cdot r \cdot t_h)$ time and occupies $O(m\cdot b)$ space (in addition to the original data points --- or their sketches).
\end{lemma}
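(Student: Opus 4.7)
The plan is to prove both bounds by a direct accounting, separating the cost of evaluating the amplified hash functions from the cost of the table operations.

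First I would analyze the per-point work. Each amplified function $h^{AND}_i(x) = (h_{i,1}(x), \dots, h_{i,r}(x))$ is obtained by evaluating $r$ independent basic hash functions from $\mathcal H$. By assumption each such evaluation costs $O(t_h)$ time (proportional to the size of $x$), so one call to $h^{AND}_i$ costs $O(r \cdot t_h)$. For a single data point $x \in A$, we evaluate $h^{AND}_1(x), \dots, h^{AND}_b(x)$, for a total cost of $O(b \cdot r \cdot t_h)$. Summing over all $m$ data points yields $O(m \cdot b \cdot r \cdot t_h)$ for the hashing phase.

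Next I would account for the table insertions. After computing $h^{AND}_i(x)$, we insert a pointer to $x$ into bucket $H_i[h^{AND}_i(x)]$. Using a standard hashing-by-chaining dictionary on $m$ elements (Section \ref{sec:hash tables}), each insertion takes expected $O(1)$ time and each such structure uses $O(m)$ words. Across all $b$ tables and all $m$ points we perform $m\cdot b$ insertions, costing $O(m \cdot b)$ expected time in total. Since $r\cdot t_h \geq 1$, this is absorbed into the $O(m \cdot b \cdot r \cdot t_h)$ bound already obtained for the hashing phase, yielding the claimed construction time.

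For the space bound, I would simply observe that each of the $b$ hash tables stores exactly $m$ pointers (one per data point), so in total the $b$ tables occupy $O(m\cdot b)$ words, not counting the original data points or their sketches. The constant-size per-entry overhead of a chained hash table (pointers and list links) does not affect the asymptotic bound.

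There is no real obstacle here: the whole argument is routine accounting once we recall that an $h^{AND}_i$ evaluation costs $O(r\cdot t_h)$ and that a standard dynamic dictionary supports insertions in expected constant time. The only subtle point to mention is that if one wishes to avoid storing $r$-tuples as table keys, one can compose $h^{AND}_i$ with an auxiliary (with-high-probability) collision-free hash into a single machine word, as already noted after the AND construction; this does not increase the asymptotic cost per insertion.
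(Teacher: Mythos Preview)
Your proof is correct and matches the paper's own reasoning: the paper justifies the lemma by the same direct accounting (evaluating $b\cdot r$ basic hashes per point at cost $O(t_h)$ each, then $O(m)$-space/$O(m)$-time hash tables for each of the $b$ tables). You simply spell out the absorption of the $O(m\cdot b)$ insertion cost into $O(m\cdot b\cdot r\cdot t_h)$ more explicitly than the paper does.
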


To answer a query $x$, note that we are interested in finding just \emph{one} point $y$ such that $d(x,y) \leq D$: we can stop our search as soon as we find one. In $O(t_h\cdot b \cdot r)$ time we compute the hashes $h^{AND}_i(x)$ for all $1\leq i \leq b$. In the worst case, all the $m$ data points $y$ are such that $d(x,y) > D$. The probability that one such point ends up in bucket $H_i[h^{AND}_i(x)]$ is at most $p_2^r$. As a result, the expected number of false positives in each bucket $H_i[h^{AND}_i(x)]$ is at most $m\cdot p_2^r$; in total, this yields $m\cdot b \cdot p_2^r$ false positives that need to be checked against $x$. For each of these false positives, we need to compute a distance in time $t_d$. We obtain: 

\begin{lemma}\label{lem:NNS}
	Let: 
	\begin{itemize}
		\item $FP = m\cdot b \cdot p_2^r$ be the expected number of false positives in the worst case.
		\item $T = b\cdot r$ be the total number of independent hash functions used by our structure.
	\end{itemize}
	Our NNS data structure answers a query in expected time $O(t_h\cdot T + FP \cdot t_d)$. If there exists a point within distance at most $D'$ from our query, then we return an answer with probability at least $1-(1-p_1^r)^b$.
\end{lemma}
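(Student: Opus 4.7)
The plan is to prove the running-time bound and the success probability separately, in both cases building directly on the amplification lemmas already established for the AND and OR constructions.

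For the running time, I would split the query cost into three parts: (i) evaluating the $b$ amplified hashes $h^{AND}_1(x), \dots, h^{AND}_b(x)$, which requires $b\cdot r = T$ basic hash evaluations for a total of $O(t_h\cdot T)$; (ii) doing the $b$ constant-time bucket lookups $H_i[h^{AND}_i(x)]$, absorbed into $O(T)$; and (iii) verifying each retrieved candidate by computing its distance to $x$, at cost $t_d$ each. To bound the expected number of candidate comparisons, fix any data point $y\in A$ and any table index $i$. By the $(D',D,p_1,p_2)$-sensitivity of $\mathcal{H}$ and the independence of the $r$ hash functions used inside $h^{AND}_i$, the probability that $y$ is hashed into the same bucket as $x$ satisfies $P(h^{AND}_i(x) = h^{AND}_i(y)) \leq p_2^r$ whenever $d(x,y) \geq D$. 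By linearity of expectation, summing over all $m$ data points and all $b$ tables gives expected number of candidates at most $m \cdot b \cdot p_2^r = FP$, yielding total expected query time $O(t_h\cdot T + FP\cdot t_d)$.

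For the success probability, let $y\in A$ be any point with $d(x,y) \leq D'$. Using independence of the $r$ hashes inside a single AND construction and the sensitivity bound for close pairs,
$$
P(h^{AND}_i(x) = h^{AND}_i(y)) = \prod_{j=1}^r P(h_{i,j}(x)=h_{i,j}(y)) \geq p_1^r.
$$
Since the $b$ AND constructions use mutually independent hash functions, the events \{$y$ does not collide with $x$ in table $i$\} are independent across $i$, so
$$
P(\forall\, i\colon h^{AND}_i(x)\neq h^{AND}_i(y)) \leq (1-p_1^r)^b,
$$
and therefore $y$ lands in at least one bucket $H_i[h^{AND}_i(x)]$ with probability at least $1-(1-p_1^r)^b$. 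When this happens, $y$ is retrieved as a candidate, and since $d(x,y) \leq D' \leq D$ the distance check validates it and we return $y$.

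The main (mild) obstacle is keeping track of where independence is invoked: independence of the $r$ coordinates is what powers the AND bound $p_1^r$, and independence of the $b$ tables is what powers the final OR bound $1-(1-p_1^r)^b$; both must be stated explicitly to justify multiplying probabilities. Beyond this, every step is either a direct application of the sensitivity definition, linearity of expectation, or the previously proved AND/OR amplification lemmas, so no further technical machinery is required.
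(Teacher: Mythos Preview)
Your proposal is correct and mirrors the paper's own argument (which appears in the paragraph preceding the lemma rather than in a separate proof): hash evaluation cost $O(t_h\cdot b\cdot r)$, linearity of expectation over the $m$ points and $b$ tables to bound the expected number of far candidates by $m\cdot b\cdot p_2^r$, and the AND+OR amplification for the success probability. The only cosmetic point is that the paper makes explicit that ``worst case'' means all $m$ data points satisfy $d(x,y)>D$, so that the $p_2^r$ bound applies to every point when you sum; you use this implicitly when you jump from ``$\le p_2^r$ whenever $d(x,y)\ge D$'' to summing over all $m$ points.
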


\begin{example}
	Consider the $(0.4,0.7, 0.999, 0.007)$-sensitive family of Figure \ref{fig:LSH2}. This function has been built with AND+OR construction with parameters $r = 10$ and $b = 1200$ taking as starting point the $(0.4,0.7,0.6,0.3)$-sensitive hash function of Figure \ref{fig:LSH2} (in fact, $1-(1-0.6^r)^b \approx 0.999$ and $1-(1-0.3^r)^b \approx 0.007$). 
	We can therefore use this hash to solve the NNS problem with threshold $D = 0.7$.
	Lemma \ref{lem:NNS} states that \emph{at most} $FP = m \cdot b \cdot p_2^r \approx 0.007\cdot m$ false positives need to be explicitly checked against our query (compare this with a naive strategy that compares $100\%$ of the $m$ points with the query). Moreover, if at least one point within distance $D' = 0.4$  from our query exists, we will return a point within distance $0.4$ with probability at least $1-(1-0.6^r)^b \approx 0.999$. The data structure uses space proportional to $b = 1200$ words (a few kilobytes) for each data point; note that, in big data scenarios, each data point (for example, a document) is likely to use much more space than that so this extra space is negligible. 
\end{example}

\section{Introduction to streaming algorithms}

We will start now using  sketches for solving problems on data streams. 
A data stream is a sequence $x = x_1, x_2, \dots, x_m$ of elements (without loss of generality, integers from the range $[1,n]$). We receive these elements one at a time, from $x_1$ to $x_m$. Typically, $m$ is too large and we cannot keep all the stream in memory. The goal of streaming algorithms is to compute useful statistics on the stream while using as little memory as possible (usually, poly-logarithmic in $m$ and $n$). Example of quantities that we may be interested in computing: average, sum, most frequent element, estimated frequency of every element, frequency moments (e.g. sum of the squares of frequencies). Importantly, the algorithm must be able to output the desired statistic at any point in time: immediately after receiving the $i$-th stream element $x_i$, the algorithm must be able to output the statistic for $x_1, x_2, \dots, x_i$.

A streaming algorithm is evaluated on these parameters:

\begin{enumerate}
	\item \textbf{Working space}  as a function of, e.g., stream length $m$ and universe size $n$.
	\item \textbf{Delay} per element: the worst-case time taken by the algorithm to process each stream element. 
	\item \textbf{Probability} of obtaining a correct solution or a good approximation of the correct result. 
	\item \textbf{Approximation ratio} (e.g. the value returned by the algorithm is a $(1\pm \epsilon)$ approximation of the correct answer, for a small $\epsilon \geq 0$).
\end{enumerate}

A nice introduction to data sketching and streaming is given in \cite{cormode2017data}.
Trivial problems that can be solved efficiently on  data streams are:

\begin{itemize}
    \item Compute the sum/average of all elements. 
    \item Find the minimum/maximum element in the stream. 
\end{itemize}

It is trivial to solve the above problems using  $O(\log n + \log m)$ bits of working space and $O(1)$ delay. The answer is always exact and the success probability is 1.

Our goal in the next sections will be to find  efficient algorithms for less trivial problems: pattern matching, frequency estimation, counting events, and computing frequency moments. 

\section{Pattern matching on streams}

The first example of stream statistic we consider is \emph{pattern matching}. Say the elements $x_i$ belong to some alphabet $\Sigma$: the stream is a string of length $m$ over $\Sigma$. Suppose we are given a pattern $y = y_1y_2\dots y_n \in \Sigma^n$. The pattern's length $n$ is smaller than $m$, but also $n$ could be very large (so that $y$ too does not fit in memory or cache). The question we tackle in this section is: how many times does $y$ appear in $x$ as a substring $y = x_{i}x_{i+1}\dots x_{i+n-1}$?

\begin{example}[Intrusion Detection and Prevention Systems (IDPSs)] 
IDPSs are software tools that scan network traffic in search of known patterns such as virus fragments or malicious code. The searched patterns are usually very numerous, so the memory usage and delay of the used pattern matching algorithm is critical. Ideally, the algorithm should work entirely in cache in order to achieve the best performance. See also the paper \cite{gupta2014pattern}.
\end{example}

\subsection{Karp-Rabin's algorithm}

Rabin's hashing is the main tool we will use to solve the problem. First, we note that the technique itself yields a straightforward solution, even though in $O(n)$ space. In the next section we refine this solution to use $O(\log n)$ space. 

Suppose we have processed the stream up to $x_1,\dots, x_i$ ($i\geq n$) and that we know the hash values $\kappa_{q,z}(x_{i-n+1}x_{i-n+2}\dots x_i)$ and $\kappa_{q,z}(y)$. By simply comparing these two hash values (in constant time) we can discover whether or not the patter occurs in the last $n$ stream's characters. The crucial step is to update the hash of the stream when a new element $x_{i+1}$ arrives. This is not too hard: we have to subtract character $x_{i-n+1}$ from the stream's hash and add the new character $x_{i+1}$. This can be achieved as follows:
$$
\kappa_{q,z}(x_{i-n+2}x_{i-n+2}\dots x_{i+1}) = (\kappa_{q,z}(x_{i-n+1}x_{i-n+2}\dots x_i) - x_{i-n+1}\cdot z^{n-1})\cdot z + x_{i+1} \mod q
$$
The value $z^{n-1} \mod q$ can be pre-computed, so the above operation takes constant time. Note that, since we need to access character $x_{i-n+1}$, at any time the algorithm must keep the last $n$ characters seen in the stream, thereby using $O(n)$ space. 

\subsubsection{Analysis} From Section \ref{sec:KR}, the probability that the pattern collides with a stream's substring is at most $n/q$. By union bound, the probability that the pattern collides with at least one substring is $mn/q \leq m^2/q$. We want this to happen with small (inverse polynomial probability): this can be achieved by choosing a prime $q$ in the range $[m^{c+2}, 2\cdot m^{c+2}]$, for any constant $c$. 
Such a prime (and therefore the output of Rabin's hash function) can be stored in $O(\log m)$ bits = $O(1)$ words.
We obtain: 

\begin{theorem}
	The Karp-Rabin algorithm solves the pattern matching problem in the streaming model using $O(n)$ words of memory and $O(1)$ delay. The correct solution is returned with high (inverse-polynomial) probability $1-m^{-c}$, for any constant $c\geq 1$ chosen at initialization time. 
\end{theorem}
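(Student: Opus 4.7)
The plan is to verify the three claimed properties (space, delay, and error probability) separately, treating the algorithm described informally above the theorem as the object of study.

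For the space bound, I would note that the algorithm must maintain: (i) the parameters $q$ and $z$ defining the Rabin hash, each fitting in $O(\log m)$ bits, (ii) the precomputed constant $z^{n-1}\bmod q$, (iii) the pattern's hash $\kappa_{q,z}(y)$ computed once at initialization by Horner's method, (iv) the current rolling hash $\kappa_{q,z}(x_{i-n+1}\cdots x_i)$, and (v) a circular buffer holding the last $n$ stream characters, which is needed precisely because the rolling-update formula requires access to $x_{i-n+1}$ in order to subtract $x_{i-n+1}\cdot z^{n-1}$. Items (i)--(iv) occupy $O(1)$ words in the standard word-RAM model (since $\log q\in O(\log m)$), while (v) contributes the dominant $O(n)$ words. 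For the delay, each arrival of a new stream character $x_{i+1}$ triggers: one buffer overwrite, a constant number of modular multiplications, additions, and subtractions to produce the new hash via the displayed rolling formula, and one comparison against $\kappa_{q,z}(y)$ with a possible counter increment. All of these are $O(1)$ word operations, so the worst-case per-element delay is $O(1)$.

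For the probabilistic guarantee, let $w_i = x_{i-n+1}\cdots x_i$ denote the $i$-th length-$n$ window of the stream, for $i=n,n+1,\dots,m$; there are at most $m$ such windows. The algorithm outputs the correct count provided no \emph{false positive} occurs, i.e., no window $w_i\neq y$ satisfies $\kappa_{q,z}(w_i)=\kappa_{q,z}(y)$ (true positives contribute correctly to the count since equal strings always have equal hashes). For any single window with $w_i\neq y$, the collision lemma for Rabin hashing established earlier in the excerpt gives
$$
P\!\left(\kappa_{q,z}(w_i)=\kappa_{q,z}(y)\right) \leq n/q.
$$
Applying the union bound over the at most $m$ windows yields a total failure probability bounded by $mn/q \leq m^2/q$. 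Choosing $q$ to be a prime in the interval $[m^{c+2},\, 2m^{c+2}]$ (which exists by Bertrand's postulate and is the standard sizing assumption used in Section \ref{sec:KR}) gives $m^2/q \leq m^{-c}$, i.e., success probability at least $1-m^{-c}$. Since $\log q\in O(\log m)$, this choice is consistent with the $O(1)$-word accounting above.

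The only subtlety worth flagging — and the closest thing to an obstacle — is that the union bound step treats all $m$ window-collision events simultaneously without requiring independence, which is crucial because consecutive windows share $n-1$ characters and are highly dependent; the argument works because the Rabin collision bound is a per-pair guarantee over the randomness of $z$ alone. One should also tacitly assume the usual convention that the stream (and hence the windows $w_i$) is fixed independently of the random choice of $z\in[0,q)$, so that the collision probability bound applies to each window. With these observations in place, combining the three bullets yields the theorem as stated.
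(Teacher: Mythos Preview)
Your proof is correct and follows essentially the same approach as the paper: the paper's analysis likewise bounds the per-window collision probability by $n/q$ via the Rabin collision lemma, applies a union bound over at most $m$ windows to get $mn/q\le m^2/q$, and chooses a prime $q\in[m^{c+2},2m^{c+2}]$ to make this at most $m^{-c}$. Your treatment of the space and delay bounds is in fact more detailed than the paper's, which leaves them implicit in the algorithm description.
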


There exist also deterministic algorithms with $O(1)$ delay and $O(n)$ space. However, as we show in the next section, Karp-Rabin's  randomization enables an exponentially more space-efficient solution. 

\subsection{Porat\&Porat's algorithm}

The big disadvantage of Karp-Rabin's algorithm is that it uses too much memory: $O(n)$ words per pattern. In this section we study an algorithm described by Benny Porat and Ely Porat in \cite{porat2009exact} that uses just $O(\log n)$ words of space and has $O(\log n)$ delay per stream's character \footnote{Note that, no matter how large $n$ is, $O(\log n)$ words will fit in cache. $O(\log n)$ delay in cache is by far more desirable than $O(1)$ delay in RAM: the former is hundreds of times faster than the latter.}. Other algorithms are able to reduce the delay to the optimal $O(1)$ (see \cite{breslauer2014real}).                          
For simplicity, assume that $n$ is a power of two: $n = 2^e$ for some $e\geq 0$. The algorithm can be generalized to any $n$ in a straightforward way. The overall idea is to: 

\begin{itemize}
    \item Keep the hashes of all $1+e = 1+\log_2 n$ prefixes of $y$ whose length is a power of two.
    \item Keep the occurrences of those prefixes of $y$ on the stream, working in $e$ levels: level $0 \leq i < e$ stores the set $W_i$ of all occurrences of the prefix $y[1,2^i]$ in the window containing the last $2^{i+1}$ stream's characters. Using a clever argument based on string periodicity, show that this set can be ``compressed'' in just $O(1)$ words of space per level ($O(\log n)$ space in total).
    \item When a new stream character $x_j$ arrives: 
    \begin{itemize}
        \item If it is an occurrence of $y_1$, insert $j$ in $W_0$.
        \item For every level $i$, position $p = j-2^{i+1}$ (the leftmost position in the window of level $i$) exits from the window: remove it, if it belongs to $W_i$.
        \item If position $p' = j-2^{i+1}+1 \in W_i$ check if $p'$ is an occurrence of $y[1,2^{i+1}]$ (do this check using fingerprints). If this is the case, then insert $p'$ in  $W_{i+1}$ (i.e. promote it to the next level). If $i+1 = e$ then we have found an occurrence of $y$.
    \end{itemize}
\end{itemize}

Crucially, in the last step we need to compute the fingerprint of a whole window. We will show that this can be done using just logarithmic space. 

Figure \ref{fig:PP} depicts two steps of the algorithm: before and after the arrival of a new stream character. Algorithm \ref{alg:PoratPorat} implements one step of the above procedure (hiding details such as compression of the occurrences and update of the hashes, which are discussed below). The window at level $i$ is indicated as $W_i$ and it is a set of positions (integers).

\begin{algorithm}[th!]
	\setstretch{1.35}
	\caption{new\_stream\_character($x_j$)}\label{alg:PoratPorat}
	
	\SetSideCommentLeft
	\LinesNumbered
	
	\BlankLine

 	\If{$x_j == y_1$}{

	   	$W_0 \leftarrow W_0 \cup \{j\}$\;	

    	}
 
	\ForEach{\emph{level} $i = 0,\dots, e-1$}{

      $W_i \leftarrow W_i - \{j - 2^{i+1}\}$\tcp*[r]{delete occurrences exiting the window}
	
     \If{$j - 2^{i+1} + 1 \in  W_i$}{

			\If{$\kappa_{q,z}(y[1,2^{i+1}]) == \kappa_{q,z}(x[j - 2^{i+1}+1,j])$}{

				\eIf{$i == e-1$}{
	
					\textbf{report} $j-2^e+1$\tcp*[r]{pattern occurrence found!}
					
				}{
	
					$W_{i+1} \leftarrow W_{i+1} \cup \{j - 2^{i+1}+1\}$		\tcp*[r]{promote position to next level}	
			
				}
		
			}

		}
		
	}

\end{algorithm}

\begin{figure}[h!]
	\includegraphics[scale=0.65]{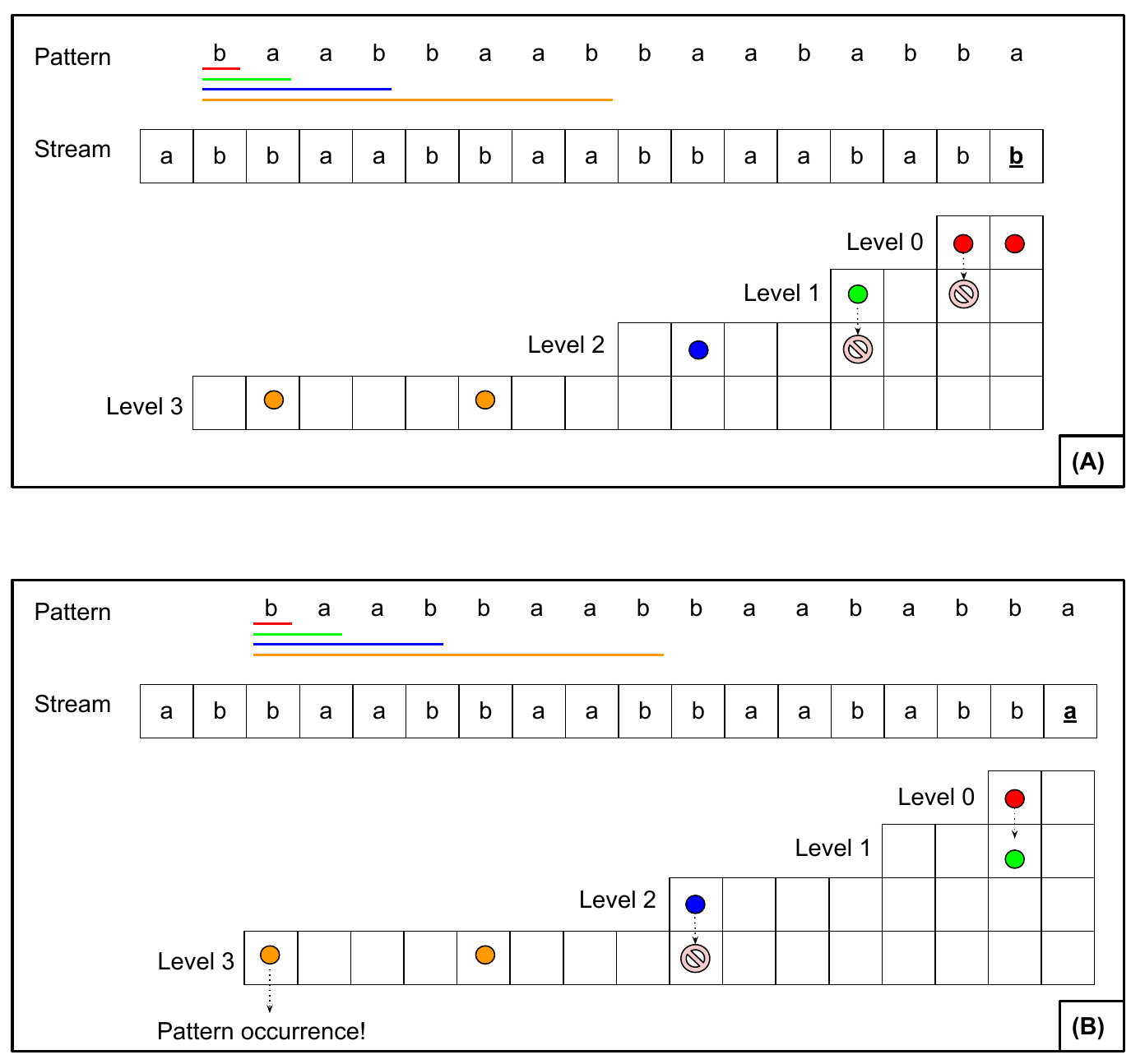}
	\caption{Each colored dot at level $i$ represents an occurrence of a prefix of length $2^i$ of the pattern (underlined with corresponding color). \textbf{(A)} 
    Suppose that the last stream's character ('b', underlined) has just arrived.
    Since the new character 'b' matches the pattern's prefix of length $2^0 = 1$, at level 0 we store the corresponding occurrence.
    In each level, some of the stored occurrences are candidates that could be promoted to the next level: these are the leftmost occurrences at levels $i=0$ and $i=1$. None of these occurrences can be promoted since they are not occurrences of a prefix of length $2^{i+1}$ of the pattern: the leftmost occurrence at level 0 is not an occurrence of ``ba'' and the occurrence at level 1 is not an occurrence of ``baab''.  
    \textbf{(B)} A new stream character ('a', underlined) has arrived. 
    Some occurrences of figure (A) exit their window and get deleted: the leftmost red (level 0) and green (level 1) occurrences of figure (A). 
    Now, in figure (B) three occurrences are candidates that could be promoted to the next level. The occurrence at level 0 (red) is indeed an occurrence of ``ba'', therefore it is promoted to level 1. The occurrence at level 2 (blue) is not an occurrence of ``baabbaab'', therefore it is not promoted. Finally, the leftmost occurrence at level 3 (orange) is promoted. Since this is the last level, we found a pattern occurrence.
	}\label{fig:PP}
\end{figure}

\subsubsection{Compressing the occurrences}

We have $\log  n$ levels, however this is not sufficient to claim that the algorithm uses $O(\log n)$ space: in each level $i$, there could be up to $2^i$ occurrences of the pattern's prefix $y[1,2^i]$. In this paragraph we show that all the occurrences in a window can be compressed in just $O(1)$ words of space. 

The key observation is that, in each level, we store occurrences of the pattern's prefix of length $K = 2^i$ in a window of size $2K = 2^{i+1}$. Now, if there are at least three such occurrences, then at least two of them must overlap. But these are occurrences of the same string $y[1,2^i]$, so if they overlap then the string must be periodic. Finally, if the string is periodic then all its occurrences in the window must be equally-spaced: we have an occurrence every $p$ positions, for some integer $p$ (a period of the string). Then, all $t$ occurrences $W_i = \{r_1, r_2, \dots, r_t\}$ in the window can be encoded in just $O(1)$ space by storing the first occurrence $r_1$, the number $t$ of occurrences, and the period $p$. 
This representation is also easy to update (in constant time) upon insertion of new occurrences to the right (which must follow the same rule) and removal of an occurrence to the left. 
We now formalize this reasoning.

\begin{definition}[Period of a string]
	Let $S$ be a string of length $K$. We say that $S$ has period $p$ if and only if $S[i] = S[i+p]$ for all $1 \leq i \leq K-p$.
\end{definition}

\begin{example}
	The string $S = abcabcabcabca$, of length $K=13$, has periods 3, 6, 9, 12.
\end{example}

\begin{theorem}[Wilf's theorem]
	Any string having periods $p$, $q$ and length at least $p+q-gcd(p,q)$ also has $gcd(p,q)$ as a period.
\end{theorem}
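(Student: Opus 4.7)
The plan is to proceed by strong induction on $p+q$, mimicking the Euclidean algorithm. Let $d = \gcd(p,q)$ and $n = |S|$. The base case is $p = q$: then $d = p = q$ and the conclusion is immediate. More generally, if one of $p,q$ divides the other then one of them already equals $d$ and the statement is trivial. The inductive step therefore assumes $p > q > d$.

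For the inductive step I will show that $S$ admits $p-q$ as a period; since $\gcd(p-q,\,q) = \gcd(p,q) = d$ and the length requirement for the pair $(p-q,\,q)$ is $(p-q)+q-d = p-d \leq p+q-d \leq n$, the inductive hypothesis applied to the pair $(p-q,\,q)$ (whose sum is strictly less than $p+q$) will immediately yield period $d$. So the whole reduction rests on proving that $n \geq p+q-d$ and periods $p,q$ imply period $p-q$.

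To establish the new period, fix $i$ with $1 \leq i \leq n-(p-q)$ and try to derive $S[i] = S[i+p-q]$ by combining the given period relations in two elementary steps. There are two natural ``routes'': (a) go forward by $p$ and then backward by $q$, which is legal when $i+p \leq n$; or (b) go backward by $q$ and then forward by $p$, which is legal when $i > q$. The index $i+p-q$ itself is in range by the hypothesis on $i$, so in either route the intermediate step is what needs validation. Whenever $i+p \leq n$ or $i > q$, we are done.

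The main obstacle is the leftover range where \emph{neither} (a) nor (b) applies directly, namely $i$ with $n-p < i \leq q$. This range is empty when $n \geq p+q$ (the easy version of Fine–Wilf), so the difficulty is specifically in squeezing down from the sharp bound $n \geq p+q-d$. I plan to handle it by chaining hops: view each residue class modulo $d$ as the vertex set of a graph in which two positions in $[1,n]$ are connected whenever they differ by $p$ or $q$; the length bound $n \geq p+q-d$ is exactly what ensures this graph is connected on each residue class, via a Bezout-type argument ($ap - bq = d$ with bounded intermediate indices). Iterating the admissible $\pm p, \pm q$ moves along such a connecting walk propagates the equality $S[i] = S[i+p-q]$ even for the stubborn indices, completing the induction.
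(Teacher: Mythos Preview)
The paper states this theorem without proof, so there is no reference argument to compare against; I evaluate your proposal on its own terms.

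Your Euclidean-induction scaffold is a standard and workable strategy, but the execution has a structural circularity that amounts to a gap. You correctly identify that the two-hop argument (routes (a) and (b)) leaves the range $n-p<i\le q$ uncovered, and you propose to handle it by the graph/Bezout connectivity argument on residue classes mod $d$. The trouble is that this connectivity statement \emph{is} Fine--Wilf: once you know every residue class mod $d$ in $[1,n]$ is connected by $\pm p,\pm q$ hops, you get $S[i]=S[j]$ whenever $i\equiv j\pmod d$, i.e.\ period $d$ directly---no induction needed. So either you actually prove connectivity (and then the induction is dead scaffolding), or you do not (and the stubborn range is genuinely unhandled). Concretely, for $p=3$, $q=2$, $n=4$, the stubborn index $i=2$ needs three hops ($S[2]=S[4]=S[1]=S[3]$), and in general the required chain length depends on the Bezout coefficients; this is exactly the work you are hand-waving.

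The clean way to make the induction go through is to claim $p-q$ as a period not of all of $S$ but of the prefix $P=S[1..\,n-q]$: for $1\le j\le n-p$ the chain $S[j]=S[j+p]=S[j+p-q]$ uses only in-range indices, so $P$ has periods $p-q$ and $q$, length $n-q\ge (p-q)+q-d$, and induction gives $P$ period $d$. Extending to $S$ then needs only period $q$ of $S$ together with period $d$ of $P$: since $p\ge q+d$ forces $n\ge 2q$, every index reduces by multiples of $q$ into $[1,q]\subseteq[1,n-q]$, where congruence mod $d$ (using $d\mid q$) and period $d$ of $P$ finish the job. This route never meets your stubborn range.
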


\begin{example}
	Consider the string above: $S = abcabcabcabca$. The string has periods $6$, $9$ (with $gcd(6,9)=3$) and has length $13 > 6+9-3 = 12$. Wilf's theorem can be used to deduce that the string must also have period $gcd(6,9)=3$.
\end{example}

Wilf's theorem can be used to prove the following: 

\begin{lemma}\label{lem:periodicity}
	Let $P$ be a string of length $K$, and $S$ be a string of length $2K$. If $P$ occurs in $S$ at positions $r_1 < r_2 < \dots < r_{t}$, with $t\geq 3$, then $r_{j+1} = r_j + p$, where $p = r_2-r_1$.
\end{lemma}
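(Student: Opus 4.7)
The plan is to combine Wilf's theorem, applied to consecutive triples of occurrences, with a short combinatorial argument showing that once $P$ has a small period compatible with two occurrences, additional intermediate occurrences of $P$ must exist in $S$. The statement implicitly lists \emph{all} occurrences of $P$ in $S$, so producing an unlisted occurrence will be our source of contradictions.

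First, I would do the basic bookkeeping. Since $|S|=2K$ and $|P|=K$, every occurrence position satisfies $1\le r\le K+1$, hence $r_t-r_1\le K$. From this it follows that (i) each consecutive gap $p_j:=r_{j+1}-r_j$ is strictly less than $K$ when $t\ge 3$ (equality would force $r_j=1$, $r_{j+1}=K+1$, hence $j=1$ and $j+1=t=2$, contradicting $t\ge 3$), and (ii) any two consecutive gaps satisfy $p_j+p_{j+1}=r_{j+2}-r_j\le K$. Since $p_j<K$, the occurrences at $r_j$ and $r_{j+1}$ overlap, and this overlap exactly witnesses that $P$ has period $p_j$. Applying Wilf's theorem to $P$ with the two periods $p_j,p_{j+1}$ (the length hypothesis $K\ge p_j+p_{j+1}-\gcd(p_j,p_{j+1})$ is automatic from (ii)) yields that $P$ has period $d_j:=\gcd(p_j,p_{j+1})$.

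The main obstacle is to upgrade ``$P$ has period $d_j$'' to ``$p_j=d_j$'', since a priori $d_j$ only divides $p_j$. I would prove the following sub-lemma: if $P$ has period $d$ and $P$ occurs in $S$ at positions $r$ and $r+kd$ with $kd<K$, then $P$ also occurs at $r+d$. The argument is to reconstruct $S[r+d,r+d+K-1]$ position by position, taking its first $K-d$ characters from the occurrence at $r$ (which give $P[d+1,K]=P[1,K-d]$ by the period-$d$ property of $P$) and its last $d$ characters from the occurrence at $r+kd$ (those characters are $P[K-kd+1,K-(k-1)d]$; iterating the period-$d$ property $k-1$ times turns them into $P[K-d+1,K]$, which is exactly what is needed for the result to equal $P$). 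All intermediate indices stay inside $[1,K]$ thanks to $kd<K$, so the iteration is valid.

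With the sub-lemma, I close the proof as follows. Fix $j\le t-2$ and write $p_j=k_jd_j$. If $k_j\ge 2$, the sub-lemma (applied to the occurrences at $r_j$ and $r_j+k_jd_j=r_{j+1}$, noting $k_jd_j=p_j<K$) produces an occurrence of $P$ at $r_j+d_j$, which lies strictly between $r_j$ and $r_{j+1}$; since $r_1<\dots<r_t$ enumerates all occurrences, this is a contradiction. Hence $k_j=1$, i.e.\ $p_j=d_j$, and the same argument applied to $r_{j+1},r_{j+2}$ gives $p_{j+1}=d_j$, so $p_j=p_{j+1}$. Chaining these equalities for $j=1,\dots,t-2$ yields $p_1=p_2=\dots=p_{t-1}$, which is exactly $r_{j+1}=r_j+p$ with $p=r_2-r_1$.
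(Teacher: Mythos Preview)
Your proof is correct. The paper does not actually supply a proof of this lemma: it is stated right after Wilf's theorem with the remark ``Wilf's theorem can be used to prove the following (exercise)'', so there is nothing to compare against beyond the hint. Your argument follows precisely that hint---deriving periods $p_j,p_{j+1}$ from overlapping occurrences, invoking Wilf to get period $\gcd(p_j,p_{j+1})$, and then using the (correctly identified) implicit assumption that $r_1<\dots<r_t$ lists \emph{all} occurrences to rule out $p_j>\gcd(p_j,p_{j+1})$ via your sub-lemma. The bookkeeping (in particular $p_j<K$ and $p_j+p_{j+1}\le K$, which legitimise both the overlap and the length hypothesis of Wilf's theorem) is sound, and the index chase in the sub-lemma is valid.
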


The lemma provides a compressed representation for all the occurrences $W_i = \{r_1, r_2, \dots, r_t\}$ in the window at each level $i$: just record $(r_1, t, p)$. This representation is easy to update in constant time when $r_1$ is removed from $W_i$ and when a new occurrence $r_{t+1}$ (greater than $r_t$) is inserted into $W_i$ (these are the only two updates we perform on the sets $W_i$).

\subsubsection{Updating the fingerprints}

The last thing to show is how to efficiently compute $w_i = \kappa_{q,z}(x[j - 2^{i+1},j-1])$ at level $i$ (needed at Line 1 of the algorithm), that is, the fingerprint of the whole window when the first occurrence stands at the beginning of the window: $r_1 = j - 2^{i+1}$. Consider the window $W_i$ at level $i$, and the two smallest positions $r_1,r_2 \in W_i$. 
Let $x = x[1,j-1]$ be the current stream.
We keep in memory three fingerprints (see Figure \ref{fig:PP2}):

\begin{enumerate}
    \item[(A)] $\kappa_{q,z}(x)$: the fingerprint of the whole stream.
    \item[(B)] $\kappa_{q,z}(x[r_1,r_2-1])$: the fingerprint of the stream's substring standing between $r_1$ (included) and $r_2$ (excluded), whenever $W_i$ contains at least two positions. 
    \item[(C)] $\kappa_{q,z}(x[1,r_1-1])$: the fingerprint of the stream's prefix ending at $r_1-1$, whenever $W_i$ contains at least one position. 
\end{enumerate}

\begin{figure}[h!]
\centering
	\includegraphics[scale=0.7]{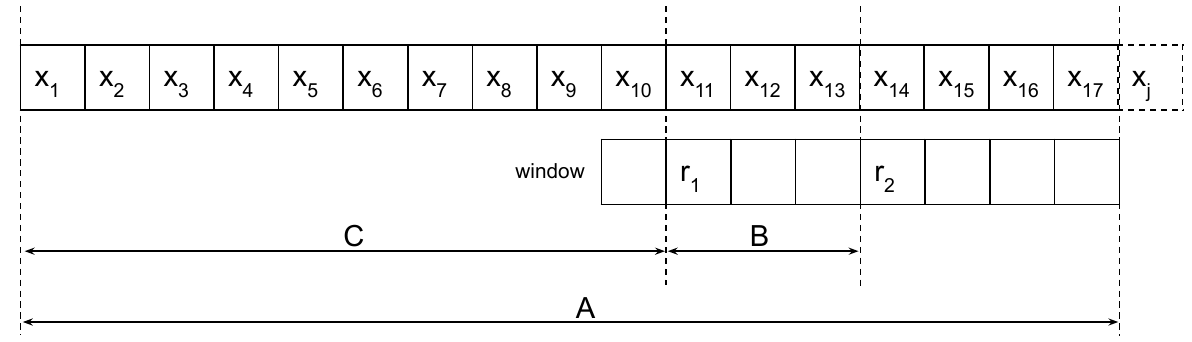}
	\caption{For each level (window), we keep three fingerprints: A (full stream, unique for all windows), B (string between first two pattern's occurrences), and C (from beginning of the stream to the first pattern's occurrence).}\label{fig:PP2}
\end{figure}

Knowing A,B, and C we can easily compute the fingerprint $w_i$ of the whole window when $r_1 = j-2^{i+1}$:

$$
w_i =  \left(A - C \cdot z^{2^{i+1}}\right) \mod q
$$

Note that  $z^{2^{i+1}} \mod q$ can easily be pre-computed for any $i\leq \log n$ at the beginning of the algorithm using the recurrence $z^{2^{i+1}} = (z^{2^{i}})^2$. We now show how to update the three fingerprints A, B, C.

\paragraph{Updating A}

Fingerprint A - the full stream - can be updated very easily in constant time each time a new stream character arrives (see Section \ref{sec:KR}). 

\paragraph{Updating B - case 1}

B needs to be updated in two cases. The first case happens when $r_2$ enters in the window (before that, only $r_1$ was in the window): see Figure \ref{fig:PP-B1}. Then, notice that $x[r_2,j-1] = y[1,2^i]$,  so we have the fingerprint $D = \kappa_{q,z}(y[1,2^i]) = \kappa_{q,z}(x[r_2,j-1])$.

\begin{figure}[h!]
\centering
	\includegraphics[scale=0.7]{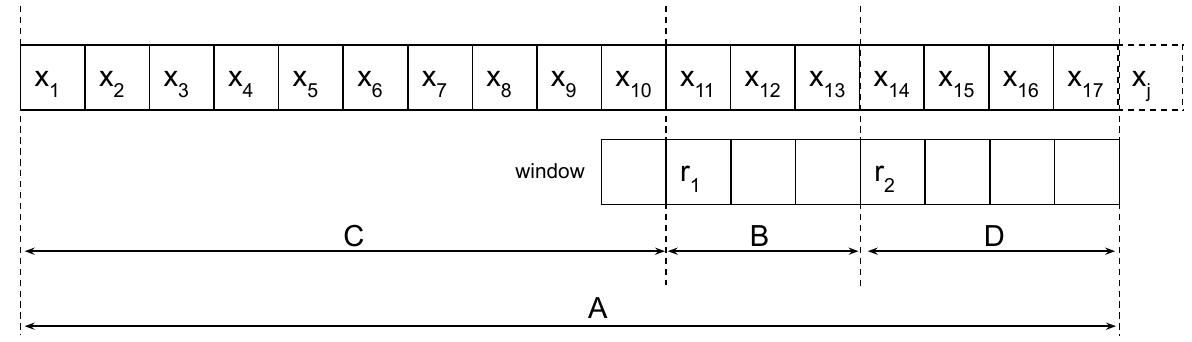}
	\caption{Updating B - case 1: $r_2$ enters in the window.}\label{fig:PP-B1}
\end{figure}

It follows that B can be computed as:

$$
B =  \left(\left(A - D - C\cdot z^{|D|+|B|}\right) \cdot z^{-|D|}\right) \mod q
$$
In the above equation, note that $z^{|D|+|B|} \equiv_q z^{2^i + (r_2-r_1)} \equiv_q z^{2^i} \cdot z^{(r_2-r_1)}$. The constants $z^{2^i}\mod q$ (just $\log n$ constants) can be pre-computed at the beginning, when processing the pattern. Then, $z^{2^i} \cdot z^{(r_2-r_1)}\mod q$ can be obtained by multiplying $z^{2^i}\mod q$ by $z$ a number of times equal to $r_2-r_1$ (i.e. $z^{2^i} \cdot z \cdot z \cdot \dots \cdot z$) while the stream characters from $r_1$ to $r_2$ are arriving (constant time per stream character). 
Also the $\log n$ values $z^{-|D|} = z^{-2^i} \mod q$ can be pre-computed before the stream arrives in $O(\log m)$ time as follows. $z^{2^{i+1}} \equiv_q (z^{2^i})^2$, and $z^{-2^i} \mod q$ can be computed in $O(\log q) = O(\log m)$ time using the equality $a^{-1} \equiv_q a^{q-2}$ and fast exponentiation: $z^{-2^i} \equiv_q z^{2^i\cdot (q-2)}$.

\paragraph{Updating B - case 2}

The second case where we need to update B is when $r_1$ exits the window and $r_3$ is in the window: $B$ should become the fingerprint of the string between $r_2$ and $r_3$. See Figure  \ref{fig:PP-B2}. 

\begin{figure}[h!]
\centering
	\includegraphics[scale=0.7]{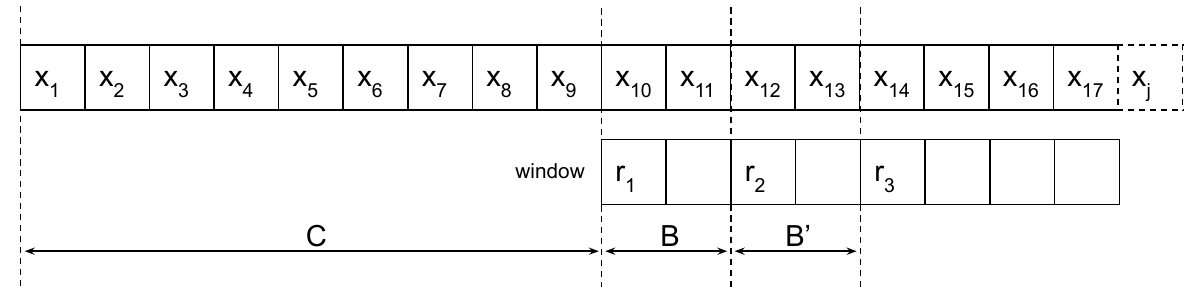}
	\caption{Updating B - case 2: $r_1$ exits the window and $r_3$ is in the window.}\label{fig:PP-B2}
\end{figure}

It turns out that in this case nothing needs to be done: The new fingerprint is $B' = B$. To see this, note that (1) $r_3-r_2 = r_2 - r_1$ by Lemma \ref{lem:periodicity}, and (2) $r_1$ and $r_2$ are both occurrences of the same string of length $2^i$. Since $r_2-r_1 \leq 2^i$, then $x[r_1,r_2-1] = x[r_2,r_3-1]$.

\paragraph{Updating C - case 1}

C needs to be updated in two cases. The first case happens when $r_1$ enters in the window (before that, the window was empty: $W_i = \emptyset$). See Figure \ref{fig:PP-C1}. 
As in case B1, notice that we have the fingerprint $D = \kappa_{q,z}(y[1,2^i]) = \kappa_{q,z}(x[r_1,j-1])$.

\begin{figure}[h!]
\centering
	\includegraphics[scale=0.7]{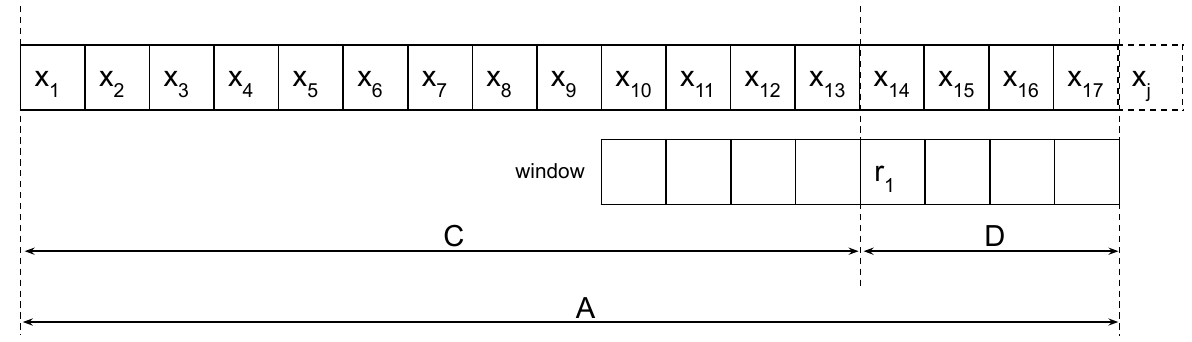}
	\caption{Updating C - case 1: $r_1$ enters in the window.}\label{fig:PP-C1}
\end{figure}

Then:

$$
C =  \left((A - D) \cdot z^{-|D|}\right) \mod q
$$ 

Where $ z^{-|D|} \equiv_q z^{-2^i}$ can be pre-computed as described above. 

\paragraph{Updating C - case 2}

The last case to consider is when $r_1$ exits the window and $r_2$ is in the window. See Figure \ref{fig:PP-C2}.

\begin{figure}[h!]
\centering
	\includegraphics[scale=0.7]{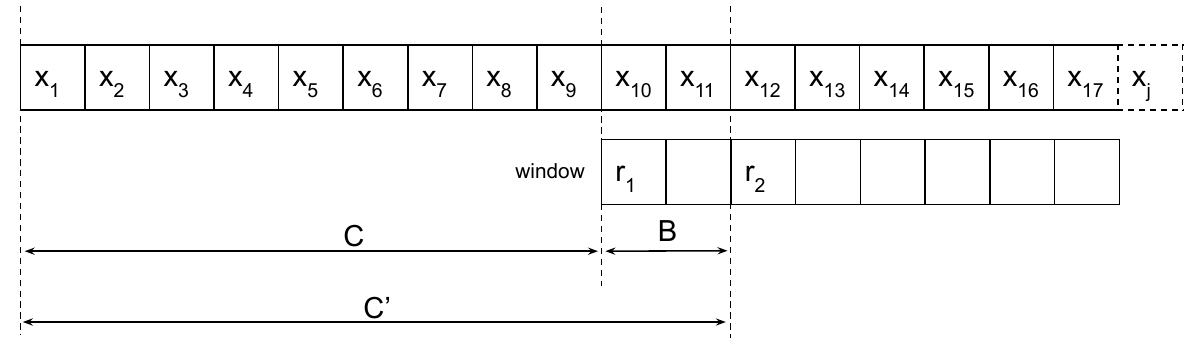}
	\caption{Updating C - case 2: $r_1$ exits the window and $r_2$ is in the window.}\label{fig:PP-C2}
\end{figure}

This is achieved as follows:

$$
C' = \left( C \cdot z^{|B|} + B \right) \mod q 
$$

where $z^{|B|} \equiv_q z^{r_2-r_1}$ is computed as described above while stream characters number $r_1$ and $r_2$ are arriving. 

\subsubsection{Final result}

Observe that each fingerprint update can be performed in constant time (per level, thus $O(\log n)$ time per stream's character). 
We obtain:

\begin{theorem}
	Let $m$ be the stream's length and $n\leq m$ be the pattern's length.
	Porat\&Porat's algorithm solves the pattern matching problem in the streaming model using $O(\log n)$ words of memory and $O(\log n)$ delay. The correct solution is returned with high (inverse-polynomial) probability $1-m^{-c}$, for any constant $c\geq 1$ chosen at initialization time. 
\end{theorem}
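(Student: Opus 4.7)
The plan is to establish the three claims (space, delay, correctness with high probability) as independent but compatible consequences of the algorithm's structure, then show they hold simultaneously by a single choice of the prime $q$.

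First I would argue correctness under the idealized assumption that no fingerprint collisions ever occur during the run. The algorithm maintains the invariant that, before stream character $x_j$ is consumed, each window $W_i$ contains exactly the set of positions $p\in[j-2^{i+1},j-1]$ such that $x[p,p+2^i-1]=y[1,2^i]$. Under no-collision, the fingerprint test $w_i \stackrel{?}{=} \kappa_{q,z}(y[1,2^{i+1}])$ in Algorithm~\ref{alg:PoratPorat} is satisfied iff $x[p,p+2^{i+1}-1]=y[1,2^{i+1}]$, so the invariant is preserved inductively and the promotion from level $e-1$ correctly detects a full occurrence of $y$ whenever one leaves a window. Hence the counter $occ$ equals the true number of occurrences of $y$ in $x$ at the end of the stream.

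Next I would handle space. The algorithm has $1+\log_2 n$ levels; at each level it stores (i) the compressed representation $(r_1,t,p)$ of $W_i$, (ii) the three fingerprints $A$, $B_i$, $C_i$ used to reconstruct $w_i$, and (iii) the precomputed constants $z^{\pm 2^i}\bmod q$. By Lemma~\ref{lem:periodicity}, whenever $|W_i|\geq 3$ the occurrences form an arithmetic progression with common difference $p=r_2-r_1$, so $(r_1,t,p)$ is a faithful $O(1)$-word encoding; for $|W_i|\leq 2$ we store the positions explicitly. Each field uses $O(\log m)$ bits $= O(1)$ machine words, giving $O(\log n)$ words in total. The precomputed fingerprints of the $1+\log_2 n$ dyadic prefixes $y[1,2^i]$ of the pattern add another $O(\log n)$ words.

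For the delay, the body of Algorithm~\ref{alg:PoratPorat} iterates over the $e=\log_2 n$ levels, and at each level performs a constant number of operations: a membership test $j-2^{i+1}\in W_i$, one fingerprint comparison, and the constant-time bookkeeping for fingerprint updates described in cases B1/B2 and C1/C2. The compressed representation $(r_1,t,p)$ supports the only two updates needed (dropping the leftmost occurrence, appending a new one on the right) in $O(1)$ time, using Lemma~\ref{lem:periodicity} to justify that a new appended position respects the progression. Hence the per-character delay is $O(\log n)$.

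The main obstacle, and where I would be most careful, is the probability bound. A wrong answer can only be produced by a spurious fingerprint match, i.e.\ an event $\kappa_{q,z}(u)=\kappa_{q,z}(v)$ for some $u\neq v$ compared by the algorithm. At most one fingerprint comparison is performed per level per stream character, giving at most $m\log_2 n\leq m\log_2 m$ such comparisons in total. By the analysis of Rabin fingerprints in Section~\ref{sec:KR}, each comparison involving strings of length at most $n\leq m$ collides with probability at most $n/q\leq m/q$. Choosing $q$ a prime in the interval $[m^{c+3},2m^{c+3}]$ (which exists by Bertrand's postulate and fits in $O(\log m)$ bits, i.e.\ one word), a union bound over all comparisons gives total failure probability at most $m\log_2 m\cdot m/q\leq m^{-c}$ for large enough $m$. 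Conditioning on the complement of this failure event, the idealized correctness argument above applies verbatim, yielding the claimed success probability $1-m^{-c}$ and completing the proof.
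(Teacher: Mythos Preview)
Your proof is correct and follows the paper's approach; in fact the paper gives no formal proof at all for this theorem, treating it as a summary of the preceding subsections (compressed windows via Lemma~\ref{lem:periodicity}, constant-time fingerprint updates per level, union bound over Rabin collisions), so your write-up is strictly more detailed than the original. One minor imprecision worth fixing: your invariant's range $[j-2^{i+1},j-1]$ is too wide---before $x_j$ is consumed, $W_i$ holds exactly the positions $p\in[j-2^{i+1},\,j-2^i-1]$ matching $y[1,2^i]$ (position $j-2^i$ is only promoted into $W_i$ \emph{during} the processing of $x_j$, at the level-$(i{-}1)$ step of the loop)---but this does not affect the soundness of your inductive argument or the final conclusion.
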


Breslauer and Galil in \cite{breslauer2014real} reduced the delay to $O(1)$ while still using $O(\log n)$ words of space.

\subsection{Extension to approximate pattern matching}\label{sec:P&P - Hamming}

We describe a modification of Porat\&Porat's algorithm that allows finding all stream occurrences $x_{i,n} = x_i\dots x_{i+n-1}$ of a pattern $y = y_1\dots y_n$ such that $D_H(x_{i,n},y) \leq k$ for any parameter $k$, where $D_H$ is the Hamming distance between strings:

$$
D_H(A,B) = \sum_{i=1}^n (A[i]\neq B[i])
$$ 

In the next paragraphs we introduce a sketch for Hamming distance between strings of the same length $n$. 
Then, we will apply this sketch to the approximate pattern matching problem on streams.
Differently from the sketch of Section \ref{sec:Hamming sketch} (which can estimate $d_H(A,B) = D_H(A,B)/n$ with bounded error and bounded failure probability), the technique that we will describe here will allow us to compute (with high probability of success) the \emph{exact} Hamming distance $D_H(A,B)$ between two strings, only if this distance is below some threshold (maximum distance) $k$ (otherwise, we simply report that $D_H(A,B) \ge k$). 
The sketch' space usage will depend on $k$ and $n$. 

For simplicity, we first describe the sketch and pattern matching algorithm for $k=1$, i.e. zero or one mismatch between the pattern and the stream. Then, we will extend the idea to arbitrary $k$.

\paragraph{A sketch for Hamming distance between strings of the same length}

Let $x,y$ be two strings of length $n$.
For some parameters $1 \le i,d \le n$, let $y_{i:d}$ be defined as 
$$
y_{i:d} = y_{i}\ y_{i+d}\ y_{i+2d} \dots
$$
In other words, $y_{i:d}$ is the sub-string of $y$ built by extracting every other $d$-th character from $y$, starting from character $y_i$. We call $y_{i:d}$ a \emph{shift} of $y$.

Clearly, $x=y$ if and only if $x_{i:d} = y_{i:d}$ for all $i=1,\dots, d$. Assume now that $D_H(x,y)=1$. Then, note that the error is captured by exactly one of the $d$ shifts: there exists one $i' \in [1,d]$ such that $x_{i':d} \neq y_{i':d}$, and $x_{i:d} = y_{i:d}$ for all $i\neq i'$.

\begin{example}
    Let $x = $\emph{ab\underline racadabra} and $y = $\emph{ab\underline bacadabra}, with $D_H(x,y)=1$ (the mismatch is underlined).
    Pick $d=2$ and consider the two shifts (per string) $x_{1:2} = $ \emph{arcdba}, $x_{2:2} = $\emph{baaar}, $y_{1:2} = $ \emph{abcdba}, $y_{2:2} = $\emph{baaar}. Then:
    \begin{itemize}
        \item $x_{1:2} \neq y_{1:2}$
        \item $x_{2:2} = y_{2:2}$
    \end{itemize} 
\end{example}

What if $D_H(x,y) = k > 1$? Then, the number of shifts $i'$ such that $x_{i':d} \neq y_{i':d}$ could be \emph{smaller} than $k$ (but never larger). Notice that this happens precisely when the distance $|j'-j|$ between \emph{two} mismatches $x_{j} \neq y_{j}$ and $x_{j'} \neq y_{j'}$ is a multiple of $d$.

\begin{example}
    Let $x = $\emph{ab\underline raca\underline dabra} and $y = $\emph{ab\underline baca\underline aabra}, with $D_H(x,y)=2$ (the two mismatches are underlined).
    Pick $d=2$ and consider the two shifts (per string) $x_{1:2} = $ \emph{arcdba}, $x_{2:2} = $\emph{baaar}, $y_{1:2} = $ \emph{abcaba}, $y_{2:2} = $\emph{baaar}. Then:
    \begin{itemize}
        \item $x_{1:2} \neq y_{1:2}$
        \item $x_{2:2} = y_{2:2}$
    \end{itemize} 
    In particular, the Hamming distance is 2 but only one of the two shifts generates a mismatch. This happens because the two mismatches are distanced 4 positions, which is a multiple of $d=2$.
\end{example}

It is easy to see that the above issue does not happen if $d$ does not divide the distance between the two mismatches. 

\begin{example}
    Let $x = $\emph{ab\underline raca\underline dabra} and $y = $\emph{ab\underline baca\underline aabra}, with $D_H(x,y)=2$ (the two mismatches are underlined).
    Pick $d=3$ and consider the three shifts (per string) $x_{1:3} = $\emph{aadr}, $x_{2:3} = $\emph{bcaa}, $x_{3:3} = $\emph{rab}, $y_{1:3} = $\emph{aaar}, $y_{2:3} = $\emph{bcaa}, $y_{3:3} = $\emph{bab}. Then:
    \begin{itemize}
        \item $x_{1:3} \neq y_{1:3}$
        \item $x_{2:3} = y_{2:3}$
        \item $x_{3:3} \neq y_{3:3}$
    \end{itemize} 
    Now, two shifts generates a mismatch.
\end{example}

This property can be summarized in a corollary:

\begin{corollary}
    Let $x,y$ be two strings of length $n$, and consider their $d$ shifts $x_{i:d}, y_{i:d}$ for $i=1, \dots, d$. Then
    $x_{i:d} \neq y_{i:d}$ for at least two values $i \in [1,d]$ if and only if (i) $D_H(x,y) > 1$ and (ii) there exist two mismatches whose distance $|j-j'|$ is not a multiple of $d$.
\end{corollary}

Consider the distance $|j-j'|$ between (the positions of) any two mismatches between $x$ and $y$. Consider moreover the smallest $\lceil \log_2 n\rceil$ prime numbers $P = \{p_1, p_2, \dots, p_{\lceil \log_2 n\rceil}\}$. Clearly, $|j-j'|$ cannot be a multiple of all numbers in $P$: this would imply that $|j-j'| \geq \prod_{p\in P} p > n$. This immediately implies the following corollary, which directly gives us a small sketch for deciding $D_H(x,y) \le 1$:

\begin{corollary}
    Let $x,y$ be two strings of length $n$. Let $P = \{p_1, p_2, \dots, p_{\lceil \log_2 n\rceil}\}$ be the set containing the $\lceil \log_2 n\rceil$ smallest prime numbers. Then, $D_H(x,y) > 1$ if and only if there exists $d\in P$ such that 
    $x_{i:d} \neq y_{i:d}$ for at least two values $i \in [1,d]$.
\end{corollary}

The corresponding sketch is:

$$
sketch(x) = \langle  \kappa(x_{i:d}) \rangle_{d\in P, i\in [d]}
$$

Where $\kappa$ is Rabin's hash function (for an opportune prime modulo guaranteeing low collision probability, see Section \ref{sec:KR}).
Notice that $\sum_{d\in P} d = O(\log^2n / \log\log n) = O(\log^2n)$ (we are going to use the latter bound for clarity, even if it is less precise), so the sketch uses $O(\log^2 n)$ words of space. Given $sketch(x)$ and $sketch(y)$, we can decide $D_H(x,y) \le 1$ with the following algorithm:

\vspace{5pt}

\begin{algorithm}[H]\label{alg:Hamming sketch 1}
	\footnotesize
	\caption{Decide if $D_H(x,y) \le 1$ for two strings $x,y\in \Sigma^n$}
	\SetKwInOut{Input}{input}
	\SetKwInOut{Output}{output}
	\SetSideCommentLeft
	\LinesNumbered
	
	\BlankLine
	
	\ForEach{$d \in P$}{
        $c \leftarrow 0$\;
        \ForEach{$i\in [d]$}{
            \If{$\kappa(x_{i:d}) \neq \kappa(y_{i:d})$}{
                $c \leftarrow c+1$\;
            }
            \If{$c > 1$}{
                \Return ``$D_H(x,y) > 1$''\;
            }
        }
    }
    \Return ``$D_H(x,y) \le 1$''\;
	
\end{algorithm}

\vspace{5pt}

As an exercise, try to modify Algorithm \ref{alg:Hamming sketch 1} so that it returns $D_H(x,y)$ if $D_H(x,y)\le 1$.

\paragraph{Approximate pattern matching on streams}

It is not hard to use the above sketch in the streaming scenario, where $|x|=m>n$: create $|P|$ copies of the stream $x$ 
\footnote{Note: this means that, when the stream character $x_j$ arrives, we just duplicate it $|P|$ times. Each copy is appended to the corresponding copy of the stream. This means that the original stream is read only once (i.e. we are not reading the stream $|P|$ times, which we cannot do in our setting!).}
and, for each $d\in P$, divide the corresponding copy of the stream into $d$ sub-streams $x_i\ x_{i+d}\ x_{i+2d}\dots$, for all $i \in [d]$.
Using the Porat\&Porat's algorithm, we then decide if $y_{j:d}$ matches a suffix of $x_{i:d}$, for all possible values of $i,j\in [d]$. 
By the discussion in the previous section, this is enough to decide if $y$ and the last $n$ characters of the stream $x$ match up to Hamming distance 1, with high probability.
Note that we run $O(\log^3 n)$ parallel instances of Porat\&Porat's algorithm. Since each instance uses $O(\log n)$ words of memory, we obtain: 

\begin{theorem}
	Let $m$ be the stream's length and $n\leq m$ be the pattern's length.
	The above modification of Porat\&Porat's algorithm finds
	all occurrences of the pattern at Hamming distance at most 1 in the stream
    using $O(\log^4 n)$ words of memory and $O(\log^4 n)$ delay. The correct solution is returned with high probability. 
\end{theorem}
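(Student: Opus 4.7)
The plan is to verify both correctness and the claimed resource bounds of the sketched algorithm, paying particular attention to the coordination between the parallel Porat--Porat instances and the combinatorial guarantee provided by the chosen set of primes.

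For correctness I would rely on the corollary preceding the theorem. Suppose the window $x_{j-n+1,\dots,j}$ satisfies $d_H(x_{j-n+1,\dots,j}, y) \geq 2$, and let $r \neq r'$ be positions of two mismatches inside it, so $1 \leq |r-r'| < n$. Let $P = \{p_1,\dots,p_{\lceil\log_2 n\rceil}\}$ be the smallest $\lceil\log_2 n\rceil$ primes. By Chebyshev/PNT style bounds (or just by noting that $p_1 p_2 \cdots p_{\lceil \log_2 n\rceil}$ far exceeds $n$), $|r-r'|$ cannot be divisible by \emph{every} $d \in P$. Pick such a $d$: then $r$ and $r'$ land in different shifts $y_{i:d}$ and $y_{i':d}$ with $i \neq i'$, so at least two of the $d$ shifts fail to match the corresponding sub-windows of $x_{i:d}$. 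Conversely, if $d_H \leq 1$ then at most one shift can contain the unique mismatch, for every $d$. So reporting position $j$ iff for every $d \in P$ at most one shift fails on the sub-window ending at $j$ is exactly correct.

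For the algorithm I would run, for each $d \in P$ and each $i \in \{1,\dots,d\}$, an independent Porat--Porat instance on the sub-stream $x_{i:d}$ searching for $y_{i:d}$. When the new stream character $x_j$ arrives, exactly one shift per prime $d$ (the one with $i \equiv j \pmod d$) receives a symbol; that instance then reports whether its current aligned sub-window ends in an occurrence or not. I maintain, for each prime $d$, a counter $C_d$ equal to the number of its $d$ shifts currently reporting a mismatch for the window ending at the most recent fully-synchronized main-stream position; $C_d$ is updated in $O(1)$ on each sub-stream update (increment or decrement by one depending on the shift's new status). The position $j$ is reported iff $C_d \leq 1$ for every $d \in P$, which is tested in $O(|P|) = O(\log n)$ time.

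For the complexity bounds, by PNT the $k$-th prime is $O(k \log k)$, so $|P| = O(\log n)$ and $\max P = O(\log n \log\log n)$, giving a total of $\sum_{d \in P} d = O(\log^2 n \log\log n)$ parallel Porat--Porat instances. Each uses $O(\log n)$ words by the Porat--Porat theorem, for total memory $O(\log^3 n \log\log n) = O(\log^3 n)$ up to the absorbed $\log\log$ factor; the $|P|$ counters $C_d$ add only $O(\log n)$ words. Per incoming character exactly $|P| = O(\log n)$ sub-stream instances are advanced, each in $O(\log n)$ delay, for $O(\log^2 n)$ delay; the aggregation step adds $O(\log n)$, well within the $O(\log^3 n)$ budget. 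Finally, each Porat--Porat instance fails with inverse-polynomial probability, and a union bound over the poly-logarithmically many instances preserves high-probability correctness.

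The main obstacle I expect is not the combinatorial step (which follows cleanly from the corollary) but the bookkeeping required to synchronize the $d$ shifts for each prime $d$ with the global stream position: specifying precisely when the ``current aligned sub-window'' for prime $d$ corresponds to main-stream window ending at position $j$, and updating $C_d$ exactly when a sub-stream's previously reported occurrence falls out of alignment. Once this synchronization is carefully described, the theorem follows by combining the corollary with $|P|$ applications of Porat--Porat's theorem.
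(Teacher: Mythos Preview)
Your proposal is correct and follows essentially the same approach as the paper: for each of the $\lceil\log_2 n\rceil$ primes, split both pattern and stream into shifts and run Porat--Porat on each, then aggregate using the corollary. The paper's own justification is equally informal (it simply notes that $O(\log^2 n)$ parallel instances are run and multiplies by the $O(\log n)$ cost per instance); you actually give more implementation detail (the counters $C_d$, the synchronization concern) and a sharper delay analysis---your observation that only $|P|=O(\log n)$ sub-stream instances receive a character per main-stream symbol yields $O(\log^2 n)$ delay, comfortably inside the stated $O(\log^3 n)$ bound.
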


\paragraph{Extension to any threshold $k$ of mismatches}

We can  extend the above idea to $k\geq 1$ mismatches. 
Take $x,y\in \Sigma^n$, and assume that $D_H(x,y) > k$. Consider any group of $k+1$ mismatches between $x$ and $y$, at positions $i_1 < \dots < i_{k+1}$. We want to find a prime number $d\geq k+1$  such that $d$ does not divide $i_{j}-i_{j'}$, for all $1\leq j' < j \leq k+1$. Then, we are guaranteed that
$x_{i:d}\neq y_{i:d}$ for at least $k+1$ shifts $i$, since no pair of mismatches $i_{j},i_{j'}$ can fall in the same shift (which would imply that $d$ divides $|i_{j}-i_{j'}|$). Note: we require that $d$ is greater than or equal to $k+1$ because the number of shifts is $d$ and we need at least $k+1$ distinct mismatching shifts of $x$ and $y$ in order to detect Hamming distance $>k$.

The integer $d$ does not divide $i_{j}-i_{j'}$ for all $1\leq j' < j \leq k+1$
if and only if $d$ does not divide their product $\prod_{1\leq j' < j \leq k+1} (i_{j}-i_{j'}) \leq n^{(k+1)^2}$. We will surely find such an integer $d$ in the set $P_k$ of the smallest $\log_2 n^{(k+1)^2} = O(k^2\log n)$ prime numbers larger than or equal to $k+1$. 

The new sketch is:

$$
sketch(x) = \langle  \kappa(x_{i:d}) \rangle_{d\in P_k, i\in [d]}
$$

From here, to simplify notation we use the $\tilde O$ notation: 

\begin{definition}[tilde-O notation]
    $\tilde O_n(t)$ denotes any complexity in $O(t\cdot \mathrm{polylog\ }n)$, where $\mathrm{polylog\ n} = (\log n)^{O(1)}$ is any  polynomial of $\log n$ of constant degree. When $n$ is clear from the context, we just write $\tilde O(t)$.
\end{definition}

In other words, the tilde-O notation hides polylogarithmic factors. In streaming scenarios this notation makes sense, since $\mathrm{polylog\ n}$ is negligible with respect to the input size $n$ and it is more useful to focus on the main complexity term, which in this particular scenario depends on $k$.

Since $|P_k| \in \tilde O(k^2)$ and the sketch contains $d$ hashes for every $d\in P_k$, the size of the sketch is of $\tilde O(k^4)$ words.

The algorithm for deciding $D_H(x,y) \le k$ works exactly as in the case $k=1$, except that we now look for a prime $d\in P_k$ such that $x_{i:d} \neq y_{i:d}$ for at least $k+1$ values of $i \in [1,d]$:

\vspace{5pt}

\begin{algorithm}[H]\label{alg:Hamming sketch 2}
	\footnotesize
	\caption{Decide if $D_H(x,y) \le k$ for two strings $x,y\in \Sigma^n$}
	\SetKwInOut{Input}{input}
	\SetKwInOut{Output}{output}
	\SetSideCommentLeft
	\LinesNumbered
	
	\BlankLine
	
	\ForEach{$d \in P_k$}{
        $c \leftarrow 0$\;
        \ForEach{$i\in [d]$}{
            \If{$\kappa(x_{i:d}) \neq \kappa(y_{i:d})$}{
                $c \leftarrow c+1$\;
            }
            \If{$c > k$}{
                \Return ``$D_H(x,y) > k$''\;
            }
        }
    }
    \Return ``$D_H(x,y) \le k$''\;
	
\end{algorithm}

\vspace{5pt}

As an exercise, try to modify the above algorithm so that it returns the exact value of $D_H(x,y)$ if $D_H(x,y)\le k$.

In the streaming scenario, for every $d\in P_k$ and every $i,j\in [1,d]$ we need to run a parallel instance of the Porat\&Porat's algorithm to check if $y_{j:d}$ matches a suffix of $x_{i:d}$: this is a cubic (in $|P_k|$) number of instances of Porat\&Porat's algorithm. We obtain:

\begin{theorem}
	The above modification of Porat\&Porat's algorithm finds
	all occurrences of the pattern at Hamming distance at most $k$ in the stream
    using $\tilde O(k^6)$ words of memory and $\tilde O(k^6)$ delay. The correct solution is returned with high probability. 
\end{theorem}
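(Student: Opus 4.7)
My plan is to generalize the construction already sketched for $k=1$ by combining the reduction to exact matching on shifted sub-streams with Porat-Porat's algorithm in parallel. Given a prime $d$, for each shift index $i \in \{1,\dots,d\}$ I would run one instance of the exact-matching streaming algorithm, using as its input the sub-stream of characters at positions congruent to $i-1 \pmod d$ and, as its pattern, the sub-pattern $y_{i:d}$. A sliding window $x[j,j+n-1]$ is then declared to have Hamming distance at most $k$ to $y$ iff for every $d \in P$, fewer than $k+1$ of the $d$ shifts report a mismatch against $y_{i:d}$ when aligned to that window. At each new stream character I would dispatch it to the appropriate residue class for every prime $d\in P$ (feeding one Porat-Porat instance per prime), and maintain for each $(d,\text{window})$ pair a counter of mismatching shifts.

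Correctness would follow from the same pigeonhole-style argument previewed just before the theorem. In the easy direction, if $d_H(x[j,j+n-1],y)\le k$, then the set of mismatching positions has size at most $k$, so for every $d$ at most $k$ shifts can contain a mismatching position. In the hard direction, assume more than $k$ mismatches and pick any $k+1$ of them at positions $i_1<\dots<i_{k+1}$. The product $\prod_{a<b}(i_b-i_a)$ is a positive integer bounded above by $n^{\binom{k+1}{2}}\le n^{(k+1)^2}$, hence has at most $(k+1)^2\log_2 n$ distinct prime factors. Choosing $P$ to consist of $(k+1)^2\log_2 n+1$ primes that are all $\ge k+1$, at least one prime $d\in P$ fails to divide any pairwise difference $i_b-i_a$, which puts the chosen mismatches into $k+1$ distinct residue classes modulo $d$, so that $k+1$ different shifts each contain at least one mismatch and the algorithm correctly rejects the window.

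For the resource accounting I would use two number-theoretic inputs: (i) the bound $|P|=O(k^2\log n)$ from the divisor count above, and (ii) the prime number theorem, which gives that $|P|$ primes starting from $k+1$ are all of size $O(|P|\log|P|)=\tilde O(k^2)$. The total number of Porat-Porat instances is thus $\sum_{d\in P}d=\tilde O(|P|\cdot k^2)=\tilde O(k^4)$, and each such instance consumes $O(\log n)$ words and has $O(\log n)$ delay on its own sub-stream, giving memory $\tilde O(k^4)$ overall; the high-probability guarantee follows by a union bound over all instances and all windows. The counters supporting the aggregation add only $|P|$ integers of $O(\log k)$ bits each, which is absorbed into $\tilde O(k^4)$.

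The part I expect to require the most care is the delay bound. A naive aggregation scheme that, for each arriving character, updates all relevant Porat-Porat instances and then re-scans the per-$(d,\text{window})$ mismatch counters would risk an extra factor of $\max_d d$. The trick will be to design the bookkeeping so that each stream character triggers only $|P|$ sub-stream updates (one per prime) and an $O(|P|)$ amortized amount of counter work, while correctly emitting the verdict for the window that has just been completed. Showing that a simple circular-buffer of counters, combined with a pass over primes whenever a shift-occurrence is reported or retired, keeps the per-character delay at $\tilde O(k^4)$ without blowing up should be routine but is where the constants and the $\tilde O$ factors actually have to be checked against the $\log n$ cost of each underlying Porat-Porat update.
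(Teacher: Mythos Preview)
Your proposal is correct and follows essentially the same route as the paper: the paper's ``proof'' is simply the paragraph preceding the theorem, which sets up the prime set $P$ of size $O(k^2\log n)$ via the product-of-differences argument and then says the algorithm runs exactly as in the $k=1$ case. You reproduce that correctness argument faithfully and, in addition, supply the resource accounting (prime sizes via the prime number theorem, $\sum_{d\in P} d = \tilde O(k^4)$, and the per-character aggregation mechanism) that the paper leaves entirely implicit. Your concern about the delay is slightly over-cautious: since each arriving character feeds exactly one Porat--Porat instance per prime and the running mismatch count per prime can be maintained in $O(1)$ via the circular buffer you describe, the per-character work is already $\tilde O(k^2)$, comfortably within the theorem's $\tilde O(k^4)$ claim; the only space cost you undercount is the circular buffers themselves ($\sum_d d = \tilde O(k^4)$ bits total, not just $|P|$ integers), but this is still absorbed into the stated bound.
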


The above bounds can be easily improved with slightly more sofisticated techniques. 
In their original article \cite{porat2009exact}, Porat and Porat describe a more efficient solution using $\tilde O(k^3)$ space and $\tilde O(k^2)$ delay. Clifford et al. in \cite{clifford2016k} improved this to $\tilde O(k^2)$ space and $\tilde O(\sqrt k)$ delay.
These bounds were further improved in \cite{clifford2019streaming} to $\tilde O(k)$ space and $\tilde O(\sqrt k)$ delay. The authors of \cite{clifford2019streaming} prove that the latter bounds are optimal (up to poly-logarithmic factors).

\section{Counting ones in a window: Datar-Gionis-Indyk-Motwani's algorithm}

The DGIM algorithm \cite{datar2002maintaining} addresses the following basic problem. Consider an input stream of $m$ bits. What is the sum of the last $\bar m \leq m$ elements of the stream? 

This problem models several practical situations in which storing the entire stream is not practical, but we may be interested in counting the number of \emph{interesting} events among the last $\bar m\leq m$ events. 

\begin{example}
    Consider a stream of bank transactions for a given person; we mark a transaction with a 1 if it exceeds a given threshold (say, 50 euros) and with a 0 otherwise. Then, knowledge about the number of 1s in the last $\bar m$ transactions can be used to detect if the credit card's owner has changed behaviour (for example, has started spending much more than usual) and detect potential frauds (e.g. credit card has been cloned). 
\end{example}

It is easy to see that an exact solution requires $m$ bits of space (i.e. the entire stream). For any $0 < \epsilon \leq 1$, the DGIM algorithm uses $O(\epsilon^{-1}\log^2 m)$ bits of space and returns a multiplicative $(1+\epsilon)$-approximation (with certainty: DGIM is a deterministic algorithm). 

DGIM works as follows. Let $B = \lceil 1/\epsilon \rceil$. We group the stream's bits in groups $G_1, G_2, \dots, G_t$ that must satisfy the following rules:\ \\\ \\

\begin{enumerate}
    \item Each $G_i$ begins and ends with a 1-bit.
    \item Between two adjacent groups $G_i$, $G_{i+1}$ there are are only 0-bits, i.e. the stream is of the form $0^{m_0}\cdot G_1 \cdot 0^{m_1} \cdot G_2 \cdot \dots \cdot G_t \cdot 0^{m_{t}}$ for some $m_0,\dots, m_t \geq 0$.
    \item Each $G_i$ contains $2^k$ 1-bits, for some $k\geq 0$.
    \item For any $1\leq i < t$, if $G_i$ contains $2^k$ 1-bits, then $G_{i+1}$ contains either $2^{k}$ or $2^{k-1}$ 1-bits.
    \item For each $k$ except the largest one, the number $Z_k$ of groups containing $2^k$ 1-bits satisfies $B \leq Z_k \leq B+1$ (note that these groups must be adjacent). For the largest $k$, we only require $Z_k \leq B+1$.
\end{enumerate}

See Figure \ref{fig:DGIM1} for an example.

\begin{figure}[h!]
\centering
	\includegraphics[scale=0.7]{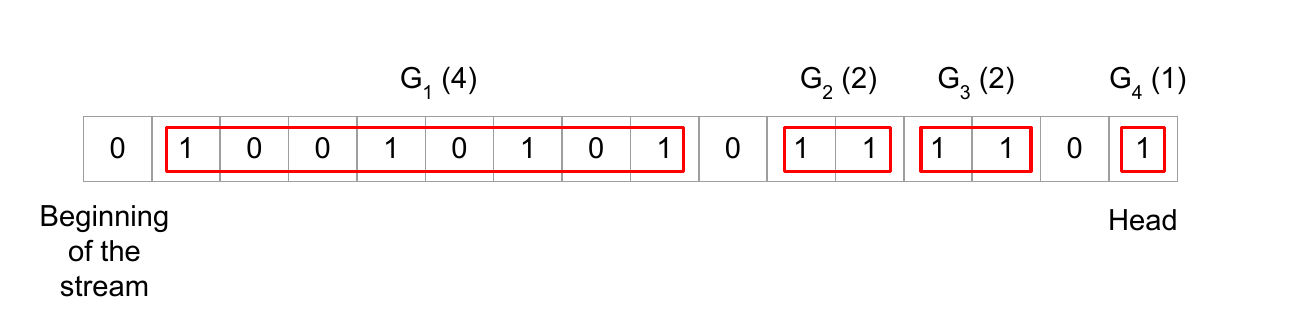}
	\caption{DGIM with parameter $B=1$. The head of the stream (the most recent element) is the rightmost bit. For each $k$, there are at least $B=1$ groups containing $2^k$ 1-bits, and at most $B+1 = 2$ groups containing $2^k$ 1-bits.}\label{fig:DGIM1}
\end{figure}

\subsection{Updates}

It is easy to see how to maintain the rules when a new bit arrives. If the bit is equal to 0, then nothing has to be done. If the bit is equal to 1, then:

\begin{enumerate}
    \item Create a new group with the new bit. 
    \item If there are $B+2$ groups containing $2^0 = 1$ 1-bits, merge the two leftmost such groups so now there are $B$ groups containing one $1$-bit.  This creates a new group containing $2^1 = 2$ 1-bits. \item Repeat with the groups containing $2^i$ 1-bits, for $i=1,2,\dots$.
\end{enumerate}

It is easy to see that
one update step takes $O(\log m)$ worst-case time using doubly linked lists (this time is the delay of the algorithm). Define a global list $L = \ell_q \leftrightarrow \ell_{q-1} \leftrightarrow \dots \leftrightarrow \ell_1$. Element $\ell_i$ contains all the groups with $2^i$ 1-bits and is itself a doubly-linked list: $\ell_i = G_{j_1} \leftrightarrow \dots \leftrightarrow G_{j_s}$, where $G_{j_1},\dots, G_{j_s}$ are all the groups (listed from left to right in the stream) containing $2^i$ 1-bits. 
Each group $G_j$ is simply a pair of integers $G_j = (left, right)$: the leftmost and rightmost positions of the group in the stream. 
For each linked list $\ell_i$, we store its head, tail, number of stream's bits contained in its groups, and number of '1'-bits contained in its groups. 
Then, finding the leftmost two groups in a given $\ell_i$, merging them, and moving the merged group to the end of $\ell_{i+1}$
takes $O(1)$ time. Overall, an update takes therefore $O(q) = O(\log m)$ time. 

Even better, updates take $O(1)$ \emph{amortized} time. To see this, suppose that a particular update increases $Z_k$ by one unit (recall that $Z_k$ is the number of groups containing $2^k$ 1-bits). But then, this means that before that update $Z_{k'} = B+1$ for all $k'<k$. In turn, this configuration required $2^k-1$ previous updates, which added to the new update yields $2^k$ updates in total. This shows that only one over $2^k$ updates costs $k$: the amortized cost is therefore at most $\sum_{k=1}^\infty k/2^k = O(1)$. 

\subsection{Space and queries}

The algorithm uses in total $O(\epsilon^{-1}\log^2m)$ bits of memory: each group uses $O(\log m)$ bits, and there are at most $B+1 = O(\epsilon^{-1})$ groups containing $2^k$ 1-bits, for each $k=0, \dots, \log m$. The lists' pointers also use asymptotically this space.

A query is specified by an integer $\bar m\leq m$ (the window size); our goal is to return the number of 1-bits contained in the most recent $\bar m$ bits of the stream. 
To solve a query, we simply find all the groups intersecting  with (i.e. containing at least one of) the last $\bar m$ stream's bits, and return the total number of $1$-bits they contain. 
This can be implemented by navigating the main list $\ell_1 \rightarrow \ell_2 \rightarrow \dots \rightarrow \ell_i$ until the number 
of bits contained in cells $\ell_1, \ell_2, \dots, \ell_i$ exceeds (or becomes equal to) $\bar m$. Then, we enter the cell $\ell_i$ and start navigating the linked list contained in $\ell_i$ until finding the leftmost group overlapping the last $\bar m$ stream's 
characters. Finally, we add up all '1' bits contained in the cells $\ell_1, \ell_2, \dots, \ell_{i-1}$ and in the groups scanned in the last cell $\ell_i$. The total running time is $O(\epsilon^{-1} + \log \bar m)$.
Finally, if $\bar m$ is fixed then it is easy to see that  queries take $O(1)$ time: at any time, we keep in memory only the groups overlapping with the last $\bar m$ stream's bits (together with the total number of 1-bits that they contain). This also reduces the total space usage to $O(\epsilon^{-1}\log^2 \bar m)$ bits.

\subsection{Approximation ratio}

Next, we analyze the approximation ratio of the algorithm. Consider Figure \ref{fig:DGIM2}, corresponding to the worst-case approximation ratio.

\begin{figure}[h!]
\centering
	\includegraphics[scale=0.7]{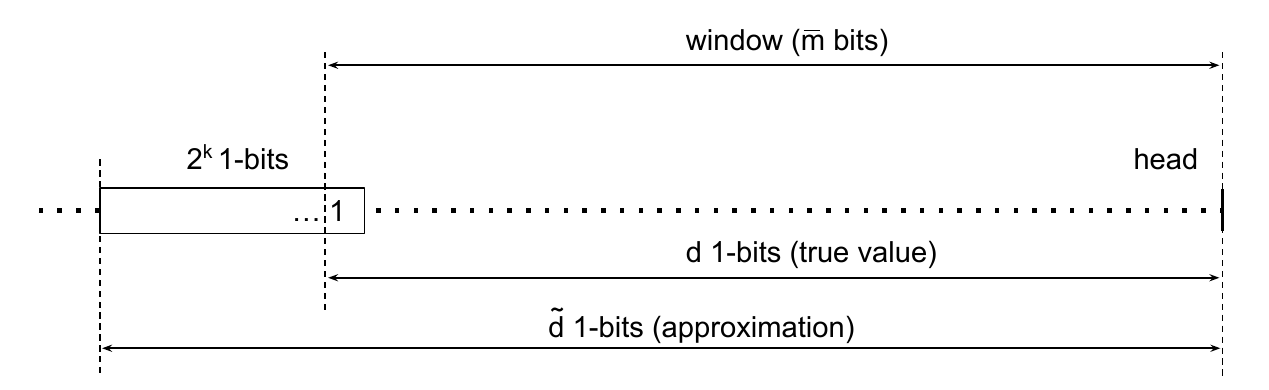}
	\caption{Worst case: our query (window of $\bar m$ bits) spans only the last bit of a block. 
	The true number of 1-bits in the window is $d$.
	The answer we return ($\tilde d$) includes the whole block depicted on the left and containing $2^k$ '1'-bits.}\label{fig:DGIM2}
\end{figure}

Let $k$ be the integer such that the leftmost (oldest) group intersecting the window has $2^k$ 1-bits. 
Let $d$ be the true number of 1-bits in the window, and $\tilde d$ be the sum of 1-bits in the groups intersecting the window (i.e. our approximate answer).
If $k=0$, then it is easy to see that $\tilde d=d$ because  every group intersecting the window contains 1 bit. 
We can therefore assume $k>0$. 
Clearly, $\tilde d \geq d$ since we count every block that overlaps with the window. 
We first compute a lower bound to $d$. Since the window spans a group containing $2^k$ 1-bits, then (by our invariants) the window surely contains at least $B$ groups containing $2^{j}$ 1-bits,  for all $0 \leq j < k$, i.e. 

$$
\begin{array}{rcl}
  d & \geq & B \cdot 2^{k-1} + B \cdot 2^{k-2} + \dots  + B\cdot 2^0\\
  & = &   B\cdot(2^k-1)
\end{array}
$$

On the other hand, $\tilde d \leq d + 2^{k} - 1$ (in the worst case we have equality: this is the case in the example in Figure \ref{fig:DGIM2}). We obtain:

$$
\begin{array}{rcl}
  \frac{\tilde d}{d} & \leq & \frac{d + 2^{k} - 1}{d}\\
  & = & 1 + \frac{2^k-1}{d}\\
  & \leq & 1 + \frac{2^k-1}{B\cdot(2^k-1)}\\
  & = & 1 + 1/B\\
  & \leq & 1 + \epsilon
\end{array}
$$

We conclude that $d \leq  \tilde d \leq d\cdot (1+\epsilon)$.

The web page \footnote{\url{https://observablehq.com/@andreaskdk/datar-gionis-indyk-motwani-algorithm}} implements a very nice simulator of the DGIM algorithm (note that the stream's head is on the left in this simulation).

\subsection{Generalization: sum of integers}

The algorithm can be used as a basis for many generalizations. Consider for example a stream formed by integers of $q$ bits each. We are interested in computing the sum of the last $\bar m$ integers in the stream. 

The solution is to break the stream into $q$ parallel streams, one per bit in the integers: see Figure \ref{fig:DGIM3}.

\begin{figure}[h!]
\centering
	\includegraphics[scale=0.7]{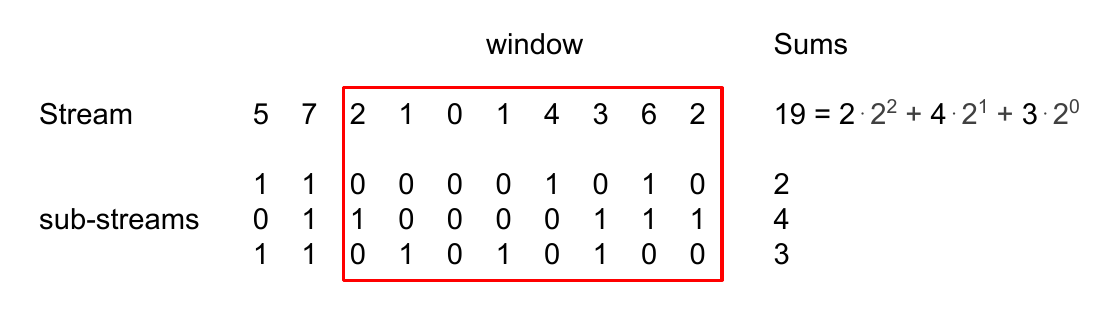}
	\caption{The sum of the last $\bar m$ $q$-bits integers can be reduced to the sum of the last $\bar m$ bits in the $q=3$ streams corresponding to the binary representations of the integers. }\label{fig:DGIM3}
\end{figure}

In other words: the $i$-th bit stream contains the binary weight of power $2^i$ in each integer of the original stream.
Let $s_i$ be the sum of the $i$-th bit stream in the window. The correct answer is $d = \sum_{i=0}^{q-1} s_i2^i$. From the analysis of DGIM, we conclude that the answer $\tilde d$ we return is  $d \leq \tilde d \leq \sum_{i=0}^{q-1} (1+\epsilon)s_i2^i = (1+\epsilon)\cdot d$.

\section{Estimating frequencies on a stream}\label{sec:frequencies}

 Suppose we observe a stream of $m$ integers $x_1, x_2, x_3, \dots, x_m$, each with domain $x_i \in [1,n]$. For any $y \in [1,n]$ define the frequency of $y$ as the number of times $y$ appears in the stream:

$$
f_y = |\{i\ :\ x_i = y\}|
$$

In this section and in and the following  we tackle a class of problems related with estimating particular functions of the frequency vector $(f_1, \dots, f_n)$. In this section, we discuss how to estimate each individual frequency $f_y$. Then, in the next sections we move to \emph{frequency moments}. The $k$-th frequency moment, for $k\ge 0$, is defined as follows:

$$
F_k = \sum_{y\in [1,n]} f_y^k
$$

If we take $0^0 = 0$, then $F_0 = |\{x_1,\dots, x_m\}|$ is simply the number of distinct elements in the stream. Counting distinct elements in a stream is probably the most emblematic problem in the field, and it is tackled in Section \ref{sec:distinct elements}.

Quantity $F_1$, instead, corresponds simply to the stream's length: $F_1 = \sum_{y\in[1,n]} f_y = m$. While it is trivial to compute $F_1$ exactly on a stream using $O(\log m)$ bits of space, using asymptotically less space is not trivial at all. In Section \ref{sec:counting} we will show an algorithm estimating $F_1$ in $O(\log\log m)$ bits of space. 

The second-order frequency moment, $F_2 = \sum_{y\in[1,n]} f_y^2$, is also called the \emph{surprise number} of the stream since it reflects how unbalanced the frequencies in the stream are: the more uneven the frequencies are, the larger $F_2$ is. Estimating this quantity finds unexpected deep applications in other contexts, such as estimating the size of table joins in relational databases, and sketches for dimensionality reduction under the Euclidean distance. These topics are covered in Section \ref{sec:F2}.

\subsection{Sampling}

The first solution that we discuss for estimating individual frequencies $f_y$ is just sampling. 
Assume that we know the stream length $m$. Choose $q$ i.i.d. uniform values $j_1, \dots, j_q \in [1,m]$ (i.e. $q$ values \emph{with replacement}),
where $q$ is a value that will be determined later. When the stream arrives, memorize the elements $x'_1 = x_{j_1}, \dots, x'_q = x_{j_q}$.  This sample is our sketch, of size $q$.

\begin{remark}
    It's  possible to sample $q$ elements with replacement from the stream on-the-fly, without knowing a priori the stream length $m$. The algorithm solving this task is called \emph{reservoir sampling with replacement} and is discussed in \cite{park2004reservoir}.
\end{remark}

Define the boolean indicator $\mathbbm 1_{x=y}$ to be equal to 1 if and only if $x=y$. Observe that, for any given $y$ and any $x'_j$: 

$$
E[\mathbbm 1_{x'_j=y}] = f_y/m
$$

In fact, $\mathbbm 1_{x'_j=y}$ is a Bernoullian variable taking value 1 if and only if $x'_j=y$; since $x'_j$ is a uniform stream element, the probability of choosing a value equal to $y$ is $f_y/m$. 
In particular: 

$$
E\left [\sum_{j=1}^q \mathbbm 1_{x'_j=y} \right] = (q/m) f_y
$$

Our estimator for $f_y$ is:

$$
\tilde f_y = \frac{m}{q}\sum_{j=1}^q \mathbbm 1_{x'_j=y}
$$

In other words, $\tilde f_y$ is the number of occurrences of $y$ in our sample of stream elements, scaled by the factor $m/q$.
It is easy to see that $E[\tilde f_y] = f_y$. Then:

$$
P(|\tilde f_y - f_y| \ge \epsilon\cdot m) = P( (q/m)|\tilde f_y - f_y| \ge \epsilon \cdot q ) = P\left( \left|  \sum_{j=1}^q \mathbbm 1_{x'_j=y} - E\left[\sum_{j=1}^q \mathbbm 1_{x'_j=y}\right] \right| \ge \epsilon\cdot q \right)
$$

Since our sampling is \emph{with replacement} (i.e. the sampled stream positions are i.i.d. uniform random variables in $[1,m]$), the random variables $\mathbbm 1_{x'_1=y}, \dots, \mathbbm 1_{x'_q=y}$ are independent so we can apply Chernoff-Hoeffding (additive form, double sided: Lemma \ref{lem:add Hoeffding}) to the random variable $\sum_{j=1}^q \mathbbm 1_{x'_j=y}$ and obtain that the above probability is upper-bounded by 
$$
2e^{-(\epsilon\cdot q)^2/(2q)} = 2e^{-\epsilon^2q/2} 
$$

We want this probability to be at most $2e^{-\epsilon^2q/2} = \delta$. Solving this equation as a function of $q$, this gives us the final result: 

\begin{theorem}
    By sampling $q = \lceil 2\epsilon^{-2}\ln(2/\delta)\rceil \in \Theta(\epsilon^{-2}\log(1/\delta))$ uniform stream values with replacement, we can approximate any frequency $f_y$ up to additive error $\epsilon\cdot m$ and failure probability $\delta$, i.e. for any $y\in [1,n]$, the sample yields an estimator $\tilde f_y$ such that: 
    $$
    P(|\tilde f_y - f_y| \ge \epsilon\cdot m) \le \delta.
    $$
\end{theorem}

Note that our sampling-based sketch has a double-sided error: the estimate $\tilde f_y$ can be smaller, equal to, or larger than the true value $f_y$.

\subsection{Count-Min sketch}

Count-Min sketch uses less asymptotic space than the sampling-based sketch described in the previous section, and additionally offers a \emph{one-sided} error guarantee. As we now show, the estimator returned by Count-Min sketch always satisfies $\tilde f_y \geq f_y$. In addition, with probability at least $1-\delta$ the bound $\tilde f_y \le f_y + \epsilon\cdot m$ will hold.

\paragraph{Definition of the sketch}
The Count-Min sketch is a matrix of (non-negative) integers $CM \in \mathbb N^{t\times s}$, initialized with all entries equal to zero. The size $s\times t$ of the matrix determines the error rate and the success probability, and will be determined later. Each row is associated with a \emph{universal} hash function (see Section \ref{sec: universal hashing}) $h_j : [1,n] \rightarrow [1,s]$, for $j=1, \dots, t$.

\paragraph{Insertion of an element in the sketch}
To insert a stream element $x$ in the sketch, we perform the following operation: 
$$
CM[j,h_j(x)] \gets CM[j,h_j(x)]+1, \mathrm{for\ all\ }j = 1, \dots, t.
$$ 
In other words, in the $j$-th row we increment by one unit the entry at column $h_j(x)$.

\paragraph{Estimating a frequency}

Observe that, by the way we defined the sketch, the counter $CM[j,h_j(y)]$ is incremented each time an occurrence of $y$ is seen in our stream (for each row $j=1, \dots t$). In fact, if the stream contains only one distinct integer $y$, then $CM[j,h_j(y)] = f_y$ is exactly the frequency of $y$, for any row $j=1, \dots, t$. Problems start occurring when the stream contains more than one distinct integer, due to hash collisions. Assume there is an element $y'\neq y$ such that $h_j(y)=h_j(y')$, for some $j\in [1,t]$. Then, $CM[j,h_j(y)] \ge f_y + f_{y'}$ (the inequality is strict if even more integers collide with $y$ and $y'$).

Due to hash collisions, we can therefore only assume $CM[j,h_j(y)] \ge f_y$ for every $j = 1, \dots, t$. Then, our best estimate for $f_y$ is the \emph{minimum} of those estimates:

$$
\tilde f_y = \min\{ CM[j,h_j(y)]\ :\ j=1, \dots, t  \}
$$

As observed above, $\tilde f_y \ge f_y$ is always true. The next goal is to show an upper bound as well. Fix a row $j\in [1,t]$. We study the probability that $CM[j, h_j(y)] > f_y + \epsilon\cdot m$, for some error $\epsilon > 0$. 

Define the boolean indicator $\mathbbm 1_{x,y}^j$ to be equal to 1 if and only if $h_j(x)=h_j(y)$, i.e. if $x$ and $y$ collide through hash function $h_j$.
Let $S = \{x_1, \dots, x_m\}$ denote the set of distinct elements appearing in the stream. 
The expected value of $CM[j, h_j(y)]$ can be written as:

$$
\begin{array}{rcl}
E[ CM[j, h_j(y)] ]  & = & E\left[\sum_{x \in S} f_x \cdot \mathbbm 1_{x,y}^j\right]\\
& = & f_y + E\left[\sum_{x \in S, x\neq y} f_x \cdot \mathbbm 1_{x,y}^j\right]\\
& = & f_y + \sum_{x \in S, x\neq y} f_x \cdot E\left[\mathbbm 1_{x,y}^j\right]\\
& \le & f_y + \sum_{x \in S, x\neq y} f_x \cdot (1/s)\\
& = & f_y + (1/s)\sum_{x \in S, x\neq y} f_x\\
& \le & f_y + m/s
\end{array}
$$

where in the first inequality we used the fact that the universality of $h_j$ implies $P(h_j(x)=h_j(y)) \le 1/s$ for any $x\neq y$.
We conclude that: 

$$
E[ CM[j, h_j(y)] - f_y] \le m/s
$$

that is, on expectation the difference between our estimate (in any row $j$) and $f_y$ is at most $m/s$. Applying Markov's inequality: 

$$
P(CM[j, h_j(y)] - f_y \ge 2m/s ) \le 1/2
$$

Choose any desired error $\epsilon>0$. By setting $2m/s =\epsilon \cdot m$, we obtain: 

\begin{lemma}\label{lem:CM1}
    By choosing the number of columns in the CM Sketch to be $s = \lceil 2\epsilon^{-1} \rceil$, for any row $j \in [1,t]$ it holds that: 
    $$
    P(CM[j, h_j(y)] \ge f_y + \epsilon \cdot  m ) \le 1/2
    $$
\end{lemma}

What is the probability that $\tilde f_y \ge f_y + \epsilon \cdot  m$? By definition, $\tilde f_y$ is the minimum of $CM[j, h_j(y)]$, for all rows $j=1, \dots, t$. Then, if the minimum $\tilde f_y$ exceeds  $f_y + \epsilon \cdot  m$, it means that $CM[j, h_j(y)] \ge f_y + \epsilon \cdot  m$ for \emph{all} $j=1, \dots, t$. Since the functions $h_1, \dots, h_t$ are independent, we apply Lemma  \ref{lem:CM1} and obtain that 
$$
P(\tilde f_y \ge f_y + \epsilon \cdot  m) \le (1/2)^t
$$

We want this probability to be $(1/2)^t = \delta$, for any desired failure probability $\delta > 0$. Solving in $t$, we finally obtain that 
by choosing $t = \lceil \log_2(1/\delta)\rceil$
our estimate $\tilde f_y$ satisfies $f_y \le \tilde f_y \le f_y + \epsilon \cdot  m$ with probability at least $1-\delta$. 
Assume that each $h_j$ can be evaluated in constant time. 
Putting everything together we obtain: 

\begin{theorem}
    Choose any desired error rate $\epsilon>0$ and failure probability $\delta > 0$.
    The CM Sketch uses $O(\epsilon^{-1}\log(1/\delta))$ words of space and, for any $y\in [1,n]$, returns an estimate $\tilde f_y$ such that:
    
    \begin{itemize}
        \item $\tilde f_y \ge f_y$
        \item $P(\tilde f_y \le f_y + \epsilon \cdot  m) \ge 1-\delta$
    \end{itemize}

    Frequency estimation queries are supported in $O(\log(1/\delta))$ time.
    When a new stream element arrives, the CM sketch can be updated in $O(\log(1/\delta))$ time. 
\end{theorem}

Note that this error is one-sided from above: the sketch never under-estimates $f_y$, but it could over-estimate it. Recall that in the previous subsection we obtained a double-sided error with larger space. It follows that Count-Min sketch is strictly better than sampling. 

\subsection{Misra-Gries sketch}

Interestingly, the problem can be solved also without randomization. Here we describe the Misra-Gries sketch, from \cite{misra1982finding}. The sketch provides a one-side error from below: it never over-estimates $f_y$, but it could under-estimate it (the opposite of the CM Sketch). 

\paragraph{Definition of the sketch}

The sketch is  a dictionary $H : [1,n] \rightarrow \mathbb N$ (a set of key-value pairs) that associates frequencies (natural numbers) to the stream's integers. 
We require $H$ to implement a function:  at any point in time for any $x\in [1,n]$ the dictionary will contain at most one key-value pair $(x,f)$, for some $f\in \mathbb N$. 
We denote with $H[x]$ such a frequency $f$. If the dictionary does not contain any pair of the form $(x,f)$, we define $H[x] = 0$ (in this way, $H$ implements a complete function).
We abuse notation and write $x\in H$ to indicate an integer $x\in [1,n]$ such that the dictionary contains a key-value pair of the form $(x,f)$, for some $f \in \mathbb N$.

We denote with $|H|$ the number of pairs $(x,f) \in [1,n]\times \mathbb N$ stored in $H$. 
Initially, the dictionary is empty ($|H|=0$).
We require the dictionary to use $O(|H|)$ words of space.

Increments on $H[x]$, denoted as $H[x] \gets H[x]+1$,  have the following natural meaning. If $x\notin H$, then $H[x] \gets H[x]+1$ inserts the pair $(x,1)$ in $H$. Otherwise, the operation replaces the (unique) pair $(x,f)$ in $H$ with $(x,f+1)$. Decrements have an analogous meaning. If $x\in H$, then $H[x] \gets H[x]-1$ replaces the (unique) pair $(x,f)$ in $H$ with $(x,f-1)$. If, after such a replacement, the pair is of the form $(x,0)$, then we remove it from $H$ (in particular, $|H|$ decreases by one). We will never call decrements $H[x] \gets H[x]-1$ when  $x\notin H$.

The dictionary $H$ (with the above operations) can be implemented easily with a hash table or, to avoid randomization, with a self-balancing tree.

\paragraph{Insertion of an element in the sketch}

Let $s>1$ be an integer parameter to be determined later. 
To insert a stream element $x$ in the sketch, we perform the following two operations sequentially: 

\begin{enumerate}
    \item We increment $H[x] \gets H[x]+1$.
    \item If $|H| = s$, then for every $x\in H$ we decrement $H[x] \gets H[x]-1$.
\end{enumerate}

Note that, if operation 1 increases $|H|$ by one, then (if $|H| = s$ is true) operation 2 erases the pair $(x,1)$ inserted at step 1. 
From this observation it is immediate to see that the invariant $|H| < s$ is always true, so $H$ always uses $O(s)$ words of space.

\paragraph{Estimating a frequency}

Our estimate for $y \in [1,n]$ is simply $\tilde f_y = H[y]$.

First, for any $y\in[1,n]$ clearly it holds $\tilde f_y \le f_y$. To see this, observe that $H[y]$ is incremented only when a stream element equal to $y$ is observed. 

We now prove that $\tilde f_y \ge f_y - m/s$ \emph{always holds} (deterministically!). Observe that:

\begin{remark}
    When $H[y]$ is decremented by one unit in operation 2, in total $s$ frequencies $H[x]$ are decremented. 
\end{remark}

We can imagine each decrement of $H[x] \gets H[x]-1$ as removing one occurrence of $x$ from the stream. But then, $H[y]$ cannot differ from $f_y$ by more than $m/s$: this would imply that we remove more than $(m/s)\cdot s = m$ elements from the stream (impossible, since the stream contains $m$ elements). We conclude that it must be $H[y] = \tilde f_y \ge f_y - m/s$ for any $y\in [1,n]$.

By choosing $s= \lceil \epsilon^{-1} \rceil$ for any desired error rate $\epsilon>0$, we conclude: 

\begin{theorem}
    Choose any desired error rate $\epsilon>0$.
    The Misra-Gries Sketch uses $O(\epsilon^{-1})$ words of space and, for any $y\in [1,n]$, returns an estimate $\tilde f_y$ such that $f_y - \epsilon\cdot m \le \tilde f_y \le f_y$.

    By implementing the dictionary as a self-balancing tree, frequency estimation queries are supported in $O(\log(\epsilon^{-1}))$ time. 
    When a new stream element arrives, the Misra-Gries sketch can be updated in $O(\epsilon^{-1})$ time. 
\end{theorem}

\section{Estimating $F_0$ (counting distinct elements)}\label{sec:distinct elements}

As mentioned earlier, by defining $0^0 = 0$, then $F_0 = |\{x_1,\dots, x_m\}|$ is the number of distinct elements in the stream. We first discuss naive solutions, then move to a simplified (assuming fully-uniform hashing) version of the venerable Flajolet-Martin algorithm \cite{flajolet1983probabilistic}. We then discuss a more realistic version (the \emph{Bottom-$k$} algorithm, requiring only pairwise independent hashing), and conclude with a broad discussion of cardinality-estimation algorithms used in practice.

\subsection{Naive solutions}

We start with the problem of counting
the number of \emph{distinct} integers in the stream, i.e. $F_0 = d = |\{x_1, x_2, \dots, x_m\}|$. We cannot afford to use too much memory (and $m$ and $d$ are very large --- typically in the order of billions). 

We report some illuminating examples of the practical relevance of the count-distinct problem. Some of these examples are taken from the paper \cite{estan2003bitmap}.

\begin{example}[DoS attacks]
	Denial of Service attacks can be detected by analyzing the number of distinct flows (source-destination IP pairs contained in the headers of TCP/IP packets) passing through a network hub in a specific time interval. The reason is that typical DoS software use large numbers of fake IP sources; if they were to use few IP sources, then those sources could be easily identified (and blocked) because of the large traffic they must generate in order for the DoS attack to be effective.  
\end{example}

\begin{example}[Spreading rate of a worm]
	Worms are self-replicating malware whose goal is to spread to as many computers as possible using a network (e.g. the Internet) as medium. In order to count how many computers have been infected by the worm, one needs to (1) filter packets containing the worm's code, and (2) count the number of distinct source IPs in the headers of those packets. From \url{https://www.caida.org/archive/code-red/} (an analysis of the spread of the Code-Red version 2 worm between midnight UTC July 19, 2001 and midnight UTC July 20, 2001):
	
	\begin{center}
		``On July 19, 2001 more than 359,000 computers were infected with the Code-Red (CRv2) worm in less than 14 hours. At the peak of the infection frenzy, more than 2,000 new hosts were infected each minute.''
	 \end{center}

\end{example}

\begin{example}[Distinct IPs/post views]
Suppose we wish to count how many people are visiting our web site. Then, we need to count how many \emph{distinct} IP numbers are connecting to the server that hosts the web site. 
The same problem occurs with post views; in this case, the problem is more serious since the problem must be solved for each post! For example, Reddit uses a randomized cardinality estimation algorithm (HLL) to count post views.
\end{example}

A first naive solution to the count-distinct problem is to keep a bitvector $B[1,n]$ of $n$ bits, initialized with all 0's. Then it is sufficient to set $B[x_i] = 1$ for each element $x_i$ of the stream. Finally, we count the number of 1's in the bitvector. If $n$ is very large (like in typical applications), this solution uses too much space. A second solution could be to store the stream elements in a self-balancing binary search tree or in a hash table with dynamic re-allocation. This solution uses $O(d\log n)$ bits of space, which could still be too much if the number $d$ of distinct elements is very large. 

\paragraph{Using MinHash} The MinHash sketch described in Section \ref{sec:minhash} can be used to solve the problem in $O(\log n)$ bits of space with additive error $\epsilon n$. Let $A = \{x_1, \dots, x_m\} \subseteq [n]$ be the set of which we want to estimate the cardinality $d=|A|$. Observe that $A = A \cap [n]$, so $|A| = |A \cap [n]|$. We can use MinHash to estimate the Jaccard distance $J(A,[n]) = \frac{|A \cap [n]|}{|A\cup [n]|} = \frac{|A|}{n}$ between $A$ and the universe $[n]$. Then:
$$
|A| = n\cdot J(A,[n])
$$
Recalling that MinHash gives us an additive $\epsilon$-approximation $J^+(A,[n])$ of $J(A,[n])$ with probability at least $1-\delta$, we obtain that the estimator $n\cdot J^+(A,[n])$ satisfies (with probability $\ge 1-\delta$):
$$
n\cdot J^+(A,[n]) = n\cdot (J(A,[n]) \pm \epsilon) = |A| \pm \epsilon n
$$
This estimate is good if $|A|$ is close to $n$, because then the error $\epsilon n$ is close to $|A|$. However, for small $|A|$ this estimate is very bad: the error $\epsilon n$ is much larger than $|A|$. Unfortunately, in typical scenarios $|A| \ll n$. Consider, for example, the task of counting how many distinct IPv4 addresses sent a query to a server: the answer is surely much smaller than the total number $n = 2^{32}$ of possible IPv4 addresses! In the next sections, the goal will be to achieve a \emph{multiplicative} error $\epsilon|A|$ proportional to the set's cardinality. 

An observation: recall, from Section \ref{sec:minhash}, that our estimator $\hat J_h(A,[n])$ is a Bernoullian RV which takes the value 1 if and only if $\hat h(A) = \hat h([n])$, where $\hat h(X) = \min\{h(x)\ : \ x\in X\}$ and $h$ is a uniform permutation of $[n]$. But then, $\hat h([n]) = 1$ so our estimator $\tilde d =  n\cdot \hat J_h(A,[n])$ can equivalently be defined as:
\begin{equation}
\tilde d = 
\begin{cases*}
n & if $\hat h(A) = 1$ \\
0        & otherwise
\end{cases*}
\end{equation}


\subsection{Flajolet-Martin's algorithm}\label{sec:idealized FM}

We now show algorithms for estimating set cardinalities with multiplicative (i.e. relative) error.
The following solution is an idealized version (requiring totally uniform hash functions) of the algorithm described by Flajolet and Martin in \cite{flajolet1983probabilistic}.
Let $[1,n]$ denote the range of integers $1,2,\dots, n$ and $[0,1]$ denote the range of all real values between 0 and 1, included. 
We use a uniform hash function $h: [1,n] \rightarrow [0,1]$. Note that such a function actually requires $\Theta(n)$ words of space to be stored, see Section \ref{sec:hash functions}: the algorithm is not practical, but we describe it for its simplicity and elegance.

\begin{algorithm}[t!h!]
	\caption{\texttt{FM}}
	\label{alg:FM}
	
	\SetKwInOut{Input}{input}
	\SetKwInOut{Output}{output}
	\SetSideCommentLeft
	\LinesNumbered
	
	\Input{A stream of integers $x_1, \dots, x_m$.}
	\Output{An estimate $\hat d$ of the number of distinct integers in the stream.}
	\BlankLine
	\BlankLine
	
	Initialize $y = 1$\;
	For each stream element $x$, update $y \leftarrow \min(y,h(x))$\;
	When the stream ends, return the estimate $\hat d = \frac{1}{y} - 1$\;
	
\end{algorithm}

Intuitively, why does FM work? First, note that repeated occurrences of some integer $x$ in the stream will yield the same hash value $h(x)$. Since $h$ is uniform, we end up drawing $d$ uniform real numbers $y_1 < y_2 < \dots < y_d$ in the interval $[0,1]$ \footnote{Note that we can safely assume $x \neq y \Rightarrow h(x) \neq h(y)$: since we draw uniform numbers on the real line, the probability that $h(x) = h(y)$ is zero.}. At the end, the algorithm returns $1/y_1 - 1$. The more distinct $y_i$'s we see, the more likely it is to see a smaller value. In particular, $h$ will spread the $y_i$'s uniformly in the interval $[0,1]$; think, for a moment, about the most ``uniform'' (regular) way to spread those numbers in $[0,1]$: this happens when the intervals $[0,y_1]$, $[y_i,y_{i+1}]$, $[y_d,1]$ have all approximately the same length $1/(d+1) \approx y_{i+1}-y_i \approx y_1 - 0 = y_1$. But then, our claim $1/y_1 - 1 \approx d$ follows. It turns out that this is true also \emph{on average} (not just in this idealized ``regular'' case): the \emph{average} distance between $0$ and the smallest hash $y_1$ seen in the stream is precisely $1/(d+1)$. Next, we prove this intuition. 

\begin{lemma}
    Let $y = \min\{h(x_1), \dots, h(x_m)\}$. Then, 
    $E[y] = 1/(d+1)$.
\end{lemma}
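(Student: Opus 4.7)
The plan is to reduce the minimum of $m$ hash values to the minimum of $d$ independent uniform $[0,1]$ random variables, and then apply Lemma \ref{lem:expected CDF} (the CDF form of expectation) to compute the expected minimum by a single integration. The first step I would take is a de-duplication observation: repeated occurrences of the same stream element contribute the same hash value, so if $\{x_1, \dots, x_m\} = \{z_1, \dots, z_d\}$ are the distinct elements, then $y = \min\{h(z_1), \dots, h(z_d)\}$. Because $h$ is (idealized) uniform on the reals, $h(z_1), \dots, h(z_d)$ behave as $d$ fully independent uniform $[0,1]$ random variables (and collisions among distinct inputs have probability $0$, so can be ignored).

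Next I would compute the tail probability of $y$. For any $t \in [0,1]$, by independence,
\[
P(y \geq t) \;=\; P\!\left(\bigwedge_{i=1}^d h(z_i) \geq t\right) \;=\; \prod_{i=1}^d P(h(z_i)\geq t) \;=\; (1-t)^d,
\]
and $P(y \geq t) = 0$ for $t > 1$. Now I would apply Lemma \ref{lem:expected CDF} to the nonnegative RV $y$:
\[
E[y] \;=\; \int_0^\infty P(y \geq t)\, dt \;=\; \int_0^1 (1-t)^d \, dt \;=\; \left[-\frac{(1-t)^{d+1}}{d+1}\right]_0^1 \;=\; \frac{1}{d+1}.
\]

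Both steps are mechanical, and I do not anticipate a real obstacle: the only subtle point is justifying that replacing the multiset $\{h(x_1),\dots,h(x_m)\}$ with the set $\{h(z_1),\dots,h(z_d)\}$ of $d$ iid uniform values is legitimate, which follows immediately from determinism of $h$ (duplicates contribute identical hashes and do not affect the minimum) together with full independence and uniformity of $h$ on the $d$ distinct arguments. Everything else is the standard ``expected minimum of $d$ uniforms equals $1/(d+1)$'' computation.
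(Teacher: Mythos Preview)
Your proof is correct and follows essentially the same route as the paper: apply Lemma~\ref{lem:expected CDF} to write $E[y]=\int_0^1 P(y\ge t)\,dt$, use full independence and uniformity of $h$ on the $d$ distinct elements to get $P(y\ge t)=(1-t)^d$, and integrate. Your explicit de-duplication step (passing from $m$ hashes to $d$ iid uniforms) is exactly what the paper does implicitly when it jumps to $(1-\lambda)^d$.
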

\begin{proof}
    $$
    \begin{array}{rcll}
        E[y] & = & \int_{0}^{1} P(y \geq \lambda)\,d\lambda & \tab  \mathrm{Lemma\  \ref{lem:expected CDF}} \\
        & = & \int_{0}^{1} P( \forall x_i\ :\ h(x_i) \geq \lambda )\,d\lambda  & \\
        & = & \int_{0}^{1} (1-\lambda)^d\,d\lambda & \tab h\ \mathrm{is\ uniform} \\
        & = &-\frac{(1-\lambda)^{d+1}}{d+1}\ \bigg|^{1}_{0} & \\
        & = & \frac{1}{d+1} &
    \end{array}
    $$
\end{proof}

Unfortunately, in general $E[1/y] \neq 1/E[y]$ so it is not true that $E[\hat d] = E[1/y-1]  = d$. Technically, we say that $\hat d$ is a \emph{biased} estimator for $d$, because its expected value is not equal to the value we want to compute. On the other hand, if $y$ is very close to $1/(d+1)$, then intuitively also $\hat d = 1/y -1 $ will be very close to $d$. We will prove this intuition by studying the relative error of $y$ with respect to $1/(d+1)$, and then turn this into a relative error of $\hat d$ with respect to $d$.

\begin{lemma}
    Let $y = \min\{h(x_1), \dots, h(x_m)\}$. Then, 
    $Var[y] \leq 1/(d+1)^2$.
\end{lemma}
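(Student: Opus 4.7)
The plan is to write $Var[y] = E[y^2] - E[y]^2$ and exploit the fact that $E[y] = 1/(d+1)$ is already known from the previous lemma, so only $E[y^2]$ needs to be computed. The crucial reusable ingredient is the tail bound that was implicitly derived in that earlier proof: since $h$ is completely uniform, for any $\lambda \in [0,1]$,
$$
P(y \geq \lambda) \;=\; P(\,\forall\,i\ :\ h(x_i)\geq \lambda\,) \;=\; (1-\lambda)^d,
$$
using independence of the $h$-values on the $d$ distinct stream elements (duplicate $x_i$'s produce the same hash, hence do not affect the minimum taken over distinct values).

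Next I would compute $E[y^2]$ by applying Lemma \ref{lem:expected CDF} to the non-negative variable $y^2$. Substituting $t = \lambda^2$ (so $dt = 2\lambda\,d\lambda$) converts $\int_0^\infty P(y^2\geq t)\,dt$ into
$$
E[y^2] \;=\; \int_0^1 2\lambda\, P(y\geq \lambda)\,d\lambda \;=\; \int_0^1 2\lambda (1-\lambda)^d\,d\lambda.
$$
This integral is a standard Beta integral; an integration by parts (or the identity $\int_0^1 \lambda(1-\lambda)^d\,d\lambda = B(2,d+1) = \tfrac{1\cdot d!}{(d+2)!}$) yields
$$
E[y^2] \;=\; \frac{2}{(d+1)(d+2)}.
$$

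Finally, I would assemble the variance:
$$
Var[y] \;=\; \frac{2}{(d+1)(d+2)} - \frac{1}{(d+1)^2} \;=\; \frac{2(d+1)-(d+2)}{(d+1)^2(d+2)} \;=\; \frac{d}{(d+1)^2(d+2)},
$$
and conclude $Var[y] \leq \tfrac{1}{(d+1)^2}$ by bounding $d/(d+2) \leq 1$.

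I do not expect a real obstacle here: the proof mirrors the earlier $E[y]$ calculation almost verbatim, with the $2\lambda$ factor coming from the change of variables. The only subtle point worth stating carefully is the identity $P(y \geq \lambda) = (1-\lambda)^d$, which rests on two things: (i) the $m$-fold minimum collapses to a $d$-fold minimum over distinct inputs (duplicates contribute equal hash values), and (ii) a uniform hash function yields fully independent $\mathrm{Uniform}[0,1]$ values on those distinct inputs. Everything else is routine calculus.
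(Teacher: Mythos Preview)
Your proof is correct and follows essentially the same approach as the paper: both compute $E[y^2]$ via the tail-integral formula and subtract $E[y]^2=1/(d+1)^2$. The only cosmetic differences are the substitution used (the paper writes $E[y^2]=\int_0^1(1-\sqrt\lambda)^d\,d\lambda$ and substitutes $u=1-\sqrt\lambda$, whereas you substitute $t=\lambda^2$ first and invoke the Beta integral) and the final step (you compute the exact variance $d/((d+1)^2(d+2))$, while the paper bounds $\tfrac{2}{(d+1)(d+2)}\le\tfrac{2}{(d+1)^2}$ before subtracting).
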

\begin{proof}
    We use the equality $Var[y] = E[y^2] - E[y]^2$. We know that $E[y]^2 = 1/(d+1)^2$. We compute $E[y^2]$ as follows:
    $$
    \begin{array}{rcl}
        E[y^2] & = & \int_{0}^{1} P(y^2 \geq \lambda)\,d\lambda \\
        &=& \int_{0}^{1} P(y \geq \sqrt\lambda)\,d\lambda\\
        & = & \int_{0}^{1} (1-\sqrt\lambda)^{d}\,d\lambda\\
    \end{array}
    $$
    We can solve the latter integral by the substitution $u = 1-\sqrt\lambda$. We have $\lambda = (1-u)^2$ and $\frac{d\lambda}{du} = d(1-u)^2/du = -2(1-u)$, so $d\lambda = -2(1-u)\,du$. 
    Also, note that $u=0$ for $\lambda=1$ and $u=1$ for $\lambda=0$ so the integral's interval switches. 
    By applying the substitution we obtain: 
    $$
    \begin{array}{rcl}
        E[y^2] & = & \int_{0}^{1} (1-\sqrt\lambda)^{d}\,d\lambda\\
        & = & \int_{1}^{0} -2(1-u)u^d \,du\\
        & = & -2 \left( \int_{1}^0 u^d\,du - \int_{1}^0 u^{d+1}\,du \right)\\
        & = & -2 \left( \frac{u^{d+1}}{d+1} \bigg|^{0}_{1} -  \frac{u^{d+2}}{d+2} \bigg|^{0}_{1} \right)\\
        & = & -2 \left( -\frac{1}{d+1} + \frac{1}{d+2} \right)\\
        & = & \frac{2}{d+1} - \frac{2}{d+2}
    \end{array}
    $$    
    To conclude:
      $$
        \begin{array}{rcl}
            Var[y] & = & E[y^2] - E[y]^2 \\
            & =  & \frac{2}{d+1} - \frac{2}{d+2} - \frac{1}{(d+1)^2} \\
            & = & \frac{2(d+2)-2(d+1)}{(d+1)(d+2)} - \frac{1}{(d+1)^2}\\
            & = & \frac{2}{(d+1)(d+2)} - \frac{1}{(d+1)^2}\\
            & \leq & \frac{2}{(d+1)^2} - \frac{1}{(d+1)^2}\\
            & = & \frac{1}{(d+1)^2}
        \end{array}
        $$    
\end{proof}

We run $s$ independent instances of algorithm \ref{alg:FM}, for a parameter $s\ge 1$ to be determined later, and take the average $y' = \frac{1}{s}\sum_{i=1}^s y_i$ of their internal registers $y_1, \dots, y_s$. Boosted Chebyshev (Lemma \ref{lem:boosted Cheb}) gives: 
$$
P\left(\bigg|y' - \frac{1}{d+1}\bigg| >  \frac{\epsilon}{d+1} \right) \leq \frac{1}{(d+1)^2} \cdot \frac{(d+1)^2}{s\epsilon^2} = \frac{1}{s\epsilon^2}
$$
We define $\hat d' = 1/y' -1$. How much does this value differ from the true value $d$? Note that the above inequality gives us $\frac{1-\epsilon}{d+1} \leq y' \leq \frac{1+\epsilon}{d+1}$ with probability at least $1-\frac{1}{s\epsilon^2}$. 
Let us assume $0 < \epsilon < 1/2$. In this range, the following inequality holds:  $\frac{1}{1-\epsilon} \leq 1+2\epsilon$. We have:

$$
\begin{array}{rcl}
    \frac{1}{y'} - 1 & \leq  & \frac{d+1}{1-\epsilon} -1 \\
     & \leq & (1+2\epsilon)(d+1)-1 \\
     & = & d + 2\epsilon d + 2\epsilon \\
     & \leq & d + 4\epsilon d \\
     & = & d(1+4\epsilon)
\end{array}
$$

Similarly, in the interval $0 < \epsilon < 1/2$  the following inequality holds:  $\frac{1}{1+\epsilon} \geq 1-\epsilon$. We have:

$$
\begin{array}{rcl}
    \frac{1}{y'} - 1 & \geq  & \frac{d+1}{1+\epsilon} -1 \\
    & \geq & (1-\epsilon)(d+1) - 1 \\
    & \geq & d(1-2\epsilon) \\
    & \geq & d(1-4\epsilon) 
\end{array}
$$

Thus, $d'$ is a $(1\pm 4\epsilon)$ approximation of $d$ with probability at least $1-1/(s\epsilon^2)$ for any $0 < \epsilon < 1/2$. To obtain a $(1\pm \epsilon)$-approximation, we simply adjust $\epsilon$ (i.e. turn to a relative error $\epsilon' = 4\epsilon$) and obtain that $d'$ is a $(1\pm \epsilon)$-approximation of $d$ with probability at least $1-16/(s\epsilon^2)$ for any $0 < \epsilon < 1$. We force the failure probability to be 1/3 and obtain $s = \frac{48}{\epsilon^2}$:
$$
\frac{16}{s\epsilon^2} = \frac{1}{3} \Leftrightarrow s = \frac{48}{\epsilon^2}
$$

Finally, we apply the median trick (Theorem \ref{thm:median trick}). We draw $t = 72\ln(1/\delta)$ i.i.d. realizations of $d'$ and return their median $\hat d$. We obtain:

\begin{theorem}
    For any desired relative error $0< \epsilon \leq 1$ and failure probability $0<\delta < 1$, by running $O(\epsilon^{-2}\log(1/\delta))$ parallel independent instances of Algorithm \ref{alg:FM} we can 
	count the number $d$ of distinct elements in the stream with relative error at most $\epsilon$, i.e. we can compute a value $\hat d$ such that:
	$$
	P(|\hat d - d| \ge \epsilon\cdot d) \leq \delta.
	$$ 
\end{theorem}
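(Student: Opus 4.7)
The plan is to combine the two lemmas already established for a single run of Algorithm \ref{alg:FM} (namely $E[y] = 1/(d+1)$ and $Var[y] \leq 1/(d+1)^2$) with the mean-and-median trick of Section \ref{sec:mean+median}. The full chain of reasoning is almost entirely laid out in the paragraphs preceding the theorem; the job is just to organize it and tune the constants.

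First, I would run $s$ fully independent copies of Algorithm \ref{alg:FM}, obtaining registers $y_1,\dots,y_s$ with $E[y_i] = 1/(d+1)$ and $Var[y_i]\leq 1/(d+1)^2$, and form the average $y' = \frac{1}{s}\sum_{i=1}^s y_i$. Applying the boosted Chebyshev bound (Lemma \ref{lem:boosted Cheb}) with deviation $k = \epsilon/(d+1)$ yields
\[
P\!\left(\bigl|y' - 1/(d+1)\bigr| \geq \epsilon/(d+1)\right) \leq \frac{1}{s\epsilon^2}.
\]

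Next, I would translate the relative error on $y'$ into a relative error on the estimator $\hat d' = 1/y' - 1$. This is the only slightly delicate step: since $1/x$ is nonlinear, I would restrict to $0<\epsilon<1/2$ and use the elementary inequalities $\frac{1}{1-\epsilon}\leq 1+2\epsilon$ and $\frac{1}{1+\epsilon}\geq 1-\epsilon$ (both valid in this range), as shown in the excerpt just before the theorem. Plugging the bound $\frac{1-\epsilon}{d+1}\leq y' \leq \frac{1+\epsilon}{d+1}$ into $\hat d' = 1/y' - 1$ gives $d(1-4\epsilon)\leq \hat d'\leq d(1+4\epsilon)$ with probability at least $1-1/(s\epsilon^2)$. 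Rescaling $\epsilon'=4\epsilon$ (i.e.\ using $16/\epsilon'^2$ in place of $1/\epsilon^2$) and choosing $s=48/\epsilon^2$ forces the single-estimator failure probability down to $1/3$, so that $P(|\hat d' - d|\geq \epsilon d)\leq 1/3$ for any $0<\epsilon<1$.

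Finally, to reduce the failure probability to an arbitrary $\delta$, I would apply the median trick (Theorem \ref{thm:median trick}) to $\hat d'$: draw $t = 72\ln(1/\delta)$ independent copies $\hat d'_1,\dots,\hat d'_t$ and output $\hat d = \mathtt{median}(\hat d'_1,\dots,\hat d'_t)$. Theorem \ref{thm:median trick} then immediately gives $P(|\hat d - d|\geq \epsilon d)\leq \delta$. The total number of independent copies of Algorithm \ref{alg:FM} used is $s\cdot t = O(\epsilon^{-2}\log(1/\delta))$, which matches the claimed bound. The only real obstacle in the whole argument is the conversion from relative error on $y'$ to relative error on $\hat d' = 1/y'-1$; everything else is a direct application of the previously proved results.
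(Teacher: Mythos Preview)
Your proposal is correct and follows essentially the same approach as the paper: average $s=\Theta(\epsilon^{-2})$ independent registers, apply boosted Chebyshev to control $y'$, convert the relative error on $y'$ into a relative error on $\hat d'=1/y'-1$ via the inequalities $1/(1-\epsilon)\le 1+2\epsilon$ and $1/(1+\epsilon)\ge 1-\epsilon$ (with the same factor-of-4 loss and rescaling), set $s=48/\epsilon^2$ to reach failure probability $1/3$, and finish with the median trick using $t=72\ln(1/\delta)$ copies. The steps, constants, and even the identification of the error-conversion as the only nontrivial point all match the paper.
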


\subsection{Bottom-k algorithm}\label{sec: bottom-k}

Motivated by the fact that a uniform $h: [1,n] \rightarrow [0,1]$ takes too much space to be stored (see Section \ref{sec:hash functions}), 
in this section we present an algorithm that only requires a pairwise-independent hash function $h: [1,n] \rightarrow [0,1]$. See Section \ref{sec:hash reals} for a discussion on how to implement such a function in practice. 

The Bottom-k algorithm is presented as Algorithm \ref{alg:bottom k}. It is a generalization of Flajolet-Martin's algorithm: we keep the smallest $k$ distinct hash values $y_1 < y_2 < \dots < y_k$ seen in the stream so far, and finally return the estimate $k/y_k$. In our analysis we will show that, by choosing $k \in O(\epsilon^{-2})$, we obtain a $(1\pm\epsilon)$-approximation with constant probability. Finally, we will boost the success probability with a classic median trick.

\begin{algorithm}[t!h!]
	\caption{\texttt{Bottom-k}}
	\label{alg:bottom k}
		
	\SetKwInOut{Input}{input}
	\SetKwInOut{Output}{output}
	\SetSideCommentLeft
	\LinesNumbered
	
	\Input{A stream of integers $x_1, \dots, x_m$ and a desired relative error $\epsilon \leq 1/2$.}
	\Output{A $(1\pm \epsilon)$-approximation $\hat d$ of the number of distinct integers in the stream, with failure probability 1/3.}
	\BlankLine
	\BlankLine
	
	Choose $k =24/\epsilon^2$\;
	Initialize $(y_1, y_2, \dots, y_k) = (1,1,\dots, 1)$\;
	For each stream element $x$, update the $k$-tuple $(y_1, y_2, \dots, y_k)$ with the new hash $y = h(x)$ so that the $k$-tuple stores (in increasing order) the $k$ smallest hashes seen so far\;
    When the stream ends, return the estimate $\hat d = k/y_k$\;
	
\end{algorithm}

\subsubsection{Analysis}

Crucially, note that the proof of the following lemma will only require pairwise-independence of $h$.

\begin{lemma}
	For any $\epsilon \leq 1/2$, Algorithm \ref{alg:bottom k} outputs an estimator $\hat d$ such that
	$$
	P(|\hat d - d| > \epsilon\cdot d) \leq 1/3
	$$
\end{lemma}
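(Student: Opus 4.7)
The plan is to reduce the event $|\hat d - d| > \epsilon d$ to two tail events about how many hashed items fall below a threshold, and then apply Chebyshev using only pairwise independence of the hash function. For every $t \in [0,1]$ define the counter $T_t = \sum_{i=1}^d \mathbbm{1}[h(x_i) \le t]$, where $x_1,\dots,x_d$ are the $d$ \emph{distinct} elements of the stream. Since $h$ is pairwise independent and each $h(x_i)$ is uniform on $[0,1]$, the indicators $\mathbbm{1}[h(x_i) \le t]$ are pairwise independent $Be(t)$ random variables, so $E[T_t] = dt$ and, by Lemma \ref{lem:variance pairwise ind}, $\mathrm{Var}[T_t] = d\cdot t(1-t) \le dt = E[T_t]$. (Throughout I assume $d \ge k$; the case $d<k$ is easy to handle separately.)

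Next I would translate the two failure modes into tail events of $T_t$. The key observation is that $y_k \le t$ if and only if $T_t \ge k$. Hence $\hat d = k/y_k > (1+\epsilon)d$ is equivalent to $y_k < t_1$, with $t_1 = k/((1+\epsilon)d)$, which is equivalent to $T_{t_1} \ge k$; similarly $\hat d < (1-\epsilon)d$ is equivalent to $T_{t_2} < k$, with $t_2 = k/((1-\epsilon)d)$. The respective expectations are $E[T_{t_1}] = k/(1+\epsilon)$ and $E[T_{t_2}] = k/(1-\epsilon)$, so in both cases the deviation from the mean required for failure is $k\epsilon/(1\pm\epsilon)$.

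Now I apply Chebyshev's inequality (Lemma \ref{lem:chebyshev}) to each side. For the upper tail,
\[
P(T_{t_1} \ge k) \le P\!\left(|T_{t_1} - E[T_{t_1}]| \ge \tfrac{k\epsilon}{1+\epsilon}\right) \le \frac{E[T_{t_1}]}{(k\epsilon/(1+\epsilon))^2} = \frac{1+\epsilon}{k\epsilon^2},
\]
and symmetrically for the lower tail $P(T_{t_2} < k) \le (1-\epsilon)/(k\epsilon^2)$. A union bound yields a total failure probability of at most $2/(k\epsilon^2)$ for $\epsilon \le 1$; with the algorithm's choice $k = 24/\epsilon^2$ this is at most $2/24 = 1/12 < 1/3$, which proves the lemma.

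The only delicate point I foresee is the $d < k$ regime: there the $k$-tuple never ``fills up'' and $y_k$ stays at $1$, so $\hat d = k$ is not automatically a $(1\pm\epsilon)$-approximation. The standard fix is to declare that, in that corner case, the algorithm returns the number of distinct hashes observed (which, with high probability by collision-freeness of pairwise-independent hashing from Section \ref{sec:collision free}, equals $d$ exactly). Beyond this, the analysis only uses pairwise independence of $h$, matching the practical family of Section \ref{sec:hash reals}.
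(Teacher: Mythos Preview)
Your proof is correct and follows the same overall strategy as the paper: reduce each tail of $|\hat d - d| > \epsilon d$ to a deviation event for a sum of pairwise-independent Bernoulli indicators (counting how many hashes land below a threshold), then apply Chebyshev. The paper treats the lower tail slightly differently---it flips to indicators for $h(z_i) > k/((1-\epsilon)d)$ and argues $\sum X_i > d-k$---and it plugs in intermediate deviations $\sqrt{6k}$, $\sqrt{12k}$ before solving for the admissible $k$; your unified counter $T_t$ and direct computation of the gap $k\epsilon/(1\pm\epsilon)$ is cleaner and in fact yields $2/(k\epsilon^2) = 1/12$ rather than the paper's $1/3$. One small point: when $t_2 = k/((1-\epsilon)d) > 1$ your formula $E[T_{t_2}] = dt_2$ is not literally correct, but in that regime $T_{t_2} = d \ge k$ deterministically so the lower-tail failure probability is zero and the bound holds anyway. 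Your handling of the $d<k$ corner case is also appropriate.
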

\begin{proof}
	We first compute one side of the inequality: $P(\hat d > (1+\epsilon)d)$. 
	Let $z_1, \dots, z_d$ be the $d$ distinct integers in the stream, sorted arbitrarily.
	Let $X_i$ be an indicator 0/1 variable defined as $X_i = 1$ if and only if $h(z_i) < \frac{k}{d(1+\epsilon)}$. Observe that, if $\sum_{i=1}^d X_i \geq k$, then at the end of the stream the smallest $k$ hash values must satisfy $y_1 < y_2 < \dots < y_k < \frac{k}{d(1+\epsilon)}$. But then, the returned estimate is $\hat d = k/y_k > d(1+\epsilon)$. The converse is also true: if $\hat d = k/y_k > d(1+\epsilon)$, then $y_k < \frac{k}{d(1+\epsilon)}$, thus $y_1 < y_2 < \dots < y_k < \frac{k}{d(1+\epsilon)}$ and then $\sum_{i=1}^d X_i \geq k$. To summarize:
	\begin{center}
		$\sum_{i=1}^d X_i \geq k$ if and only if $\hat d > d(1+\epsilon)$
	\end{center}
	We can therefore reduce our problem to an analysis of the random variable $\sum_{i=1}^d X_i$. Since $h(z_i)$ is uniform in $[0,1]$, $P\left(h(z_i) < \frac{k}{d(1+\epsilon)}\right) =  \frac{k}{d(1+\epsilon)} = p$. $X_i$ is a Bernoullian R.V. with success probability $p$, so $E[X_i] = p = \frac{k}{d(1+\epsilon)}$. By linearity of expectation:
	$$
	E\left[\sum_{i=1}^d X_i\right] = \frac{k}{1+\epsilon}
	$$
	The variance of this R.V. is also easy to calculate. Note that, since the $h(z_i)$'s are pairwise-independent, then the $X_i$'s are pairwise-independent (in addition to being identically distributed) and we can apply Lemma \ref{lem:variance pairwise ind} to $Var\left[\sum_{i=1}^d X_i\right]$. 
	Recall also (Corollary after Lemma \ref{lem:Ber Exp Var}) that $Var[X_i] \leq E[X_i]$. We obtain: 
	$$
	Var\left[\sum_{i=1}^d X_i\right] = \sum_{i=1}^d Var[X_i]  \leq \sum_{i=1}^d E[X_i] = E\left[\sum_{i=1}^d X_i \right] = \frac{k}{1+\epsilon} \leq k
	$$
	We can now apply Chebyshev to $\sum_{i=1}^d X_i$: 
	$$
	P\left( \left| \sum_{i=1}^d X_i - \frac{k}{1+\epsilon} \right| >  \sqrt{6k}\right) \leq \frac{Var\left[ \sum_{i=1}^d X_i \right]}{(\sqrt{6k})^2} \leq \frac{k}{6k} = 1/6
	$$
	In particular, we can remove the absolute value:
	$$
	P\left( \sum_{i=1}^d X_i - \frac{k}{1+\epsilon} >  \sqrt{6k}\right)  \leq 1/6 \Leftrightarrow P\left( \sum_{i=1}^d X_i >  \sqrt{6k} + \frac{k}{1+\epsilon}\right)  \leq 1/6
	$$
	For which $k$ does it hold that $\sqrt{6k} + \frac{k}{1+\epsilon} \leq k$? a few manipulations give 
	$$
	k \geq \frac{6(1+\epsilon)^2}{\epsilon^2}
	$$
	Moreover: $\frac{6(1+\epsilon)^2}{\epsilon^2} \leq \frac{6(1+1)^2}{\epsilon^2} = \frac{24}{\epsilon^2}$.
	Therefore, if we choose $k = 24/\epsilon^2$ then $\sqrt{6k} + \frac{k}{1+\epsilon} \leq k$ and:
	$$
	P\left( \sum_{i=1}^d X_i >  k\right)  \leq P\left( \sum_{i=1}^d X_i >  \sqrt{6k} + \frac{k}{1+\epsilon}\right)  \leq 1/6
	$$
	We finally obtain $P(\hat d > (1+\epsilon)d) \leq 1/6$.
	
	We are now going to prove the symmetric inequality $P(\hat d < (1-\epsilon)d) \leq 1/6$. The proof will proceed similarly to the previous case. 
	Let $z_1, \dots, z_d$ be the $d$ distinct integers in the stream, sorted arbitrarily.
	Let $X_i$ be an indicator 0/1 variable defined as $X_i = 1$ if and only if $h(z_i) > \frac{k}{d(1-\epsilon)}$. 
	Observe that, if $\sum_{i=1}^d X_i > d-k$, then at the end of the stream the largest $(d-k)+1$ hash values must be larger than $\frac{k}{d(1-\epsilon)}$. In particular, the $k$-th smallest hash $y_k$ is also larger than this value: $y_k > \frac{k}{d(1-\epsilon)}$.
	But then, the returned estimate is $\hat d = k/y_k < d(1-\epsilon)$. 
	The converse is also true: if $\hat d = k/y_k < d(1-\epsilon)$, then $y_k > \frac{k}{d(1-\epsilon)}$. Since $y_k$ is the $k$-th smallest hash value, all the following (larger) $d-k$ hash values must also be larger than  $\frac{k}{d(1-\epsilon)}$, i.e. $\sum_{i=1}^d X_i > d-k$. To summarize:
	\begin{center}
		$\sum_{i=1}^d X_i > d-k$ if and only if $\hat d < d(1-\epsilon)$
	\end{center}
	Note that $X_i \sim Be\left( 1 - \frac{k}{d(1-\epsilon)} \right)$, so $E[X_i] =  1 - \frac{k}{d(1-\epsilon)}$.
	The expected value of $\sum_{i=1}^d X_i$ is:
	$$
	E\left[\sum_{i=1}^d X_i\right] = dE[X_i] = d- \frac{k}{1-\epsilon}
	$$
	Recall (Corollary after Lemma \ref{lem:Ber Exp Var}) that $Var[X_i] \leq 1- E[X_i]$. Recalling that we assume $\epsilon \leq 1/2$, we have:
	$$
	Var\left[\sum_{i=1}^d X_i\right] = dVar[X_i] \leq d(1 - E[X_i]) = d\cdot \frac{k}{d(1-\epsilon)} \leq 2k
	$$
	By Chebyshev:
	$$
	P\left(\left| \sum_{i=1}^d X_i - \left( d-\frac{k}{1-\epsilon} \right)\right| > \sqrt{12k} \right) \leq \frac{2k}{12k} = 1/6
	$$
	Removing the absolute value and re-arranging terms: 
	$$
	P\left(\sum_{i=1}^d X_i > \sqrt{12k} + d - \frac{k}{1-\epsilon}  \right) \leq  1/6
	$$
	For which values of $k$ do we have $\sqrt{12k} + d - \frac{k}{1-\epsilon} \leq d-k$? after a few manipulations, we get 
	$$
	k \geq \frac{12(1-\epsilon)^2}{\epsilon^2}
	$$
	Moreover, $\frac{12(1-\epsilon)^2}{\epsilon^2} \leq 12/\epsilon^2$. Therefore, choosing $k = 24/\epsilon^2 > 12/\epsilon^2$, we have $\sqrt{12k} + d - \frac{k}{1-\epsilon} \leq d-k$. Then:
	$$
	P\left(\sum_{i=1}^d X_i > d-k  \right) \leq P\left(\sum_{i=1}^d X_i > \sqrt{12k} + d - \frac{k}{1-\epsilon}  \right) \leq  1/6
	$$
	We  conclude that $P(\hat d < (1-\epsilon)d) \leq 1/6$. Combining this with $P(\hat d > (1+\epsilon)d) \leq 1/6$ by union bound, we finally obtain the two-sided bound $P(|\hat d-d| > \epsilon\cdot d) \leq 1/3$.
\end{proof}

We finally apply the median trick (Theorem \ref{thm:median trick}): we run $t = 72\ln(1/\delta)$ parallel instances of our algorithm, and return the median result. Recall that one hash value takes $O(\log n)$ bits to be stored, and that we keep in total $kt \in O(\log(1/\delta)/\epsilon^2)$ hash values.
Theorem \ref{thm:median trick} allows us to conclude:

\begin{theorem}\label{thm:bottom+}
	For any desired relative error $0< \epsilon \leq 1/2$ and failure probability $\delta >0$,
	Algorithm  \ref{alg:bottom k} uses $O\left(\frac{\log(1/\delta)}{\epsilon^2}\log n\right)$ bits and, with probability at least $1-\delta$, counts the number $d$ of distinct elements in the stream with relative error at most $\epsilon$, i.e. it returns a value $\bar d$ such that:
	$$
	P(|\bar d - d| \geq \epsilon\cdot d) \leq \delta
	$$
\end{theorem}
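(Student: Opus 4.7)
The plan is to bootstrap the preceding lemma (which guarantees constant failure probability $1/3$) into the desired $\delta$-bound via the median trick (Theorem \ref{thm:median trick}), then add up the bits.

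First, I would run $t = 72\ln(1/\delta)$ parallel independent copies of Algorithm \ref{alg:bottom k}, each using its own independent pairwise-independent hash function $h_j : [1,n] \to [0,1]$ (constructed as in Section \ref{sec:hash reals}) and the same parameter $k = 24/\epsilon^2$. Let $\hat d_1, \dots, \hat d_t$ be the estimates produced by the $t$ copies, and let the algorithm return $\bar d = \mathtt{median}(\hat d_1, \dots, \hat d_t)$.

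Second, I invoke the median trick. The preceding lemma tells me that each $\hat d_j$ falls outside the interval $[(1-\epsilon)d,\, (1+\epsilon)d]$ with probability at most $1/3$. The argument underlying Theorem \ref{thm:median trick} — defining indicators $\mathbbm{1}_j$ for the event ``copy $j$ fails,'' observing that $E[\sum_{j=1}^t \mathbbm{1}_j] \leq t/3$, and applying the one-sided additive Chernoff--Hoeffding bound with deviation $t/6$ — is insensitive to where the target interval is centered; it applies verbatim with the true value $d$ in place of $E[\hat d]$. Since at least $t/2$ of the copies would have to fail for the median to fail, the choice $t = 72\ln(1/\delta)$ yields $P(|\bar d - d| \geq \epsilon d) \leq \delta$, as required.

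Third, I tally the space. Each copy stores $k = 24/\epsilon^2$ hash values, each encoded in $O(\log n)$ bits (Section \ref{sec:hash reals}), together with an $O(\log n)$-bit description of its pairwise-independent hash function. Across the $t$ copies this totals $O(t \cdot k \cdot \log n) + O(t \log n) = O(\epsilon^{-2} \log(1/\delta) \log n)$ bits, matching the claimed bound. I do not expect a serious obstacle: the lemma already did the heavy lifting (variance via pairwise independence, Chebyshev, and the choice $k = 24/\epsilon^2$), and Theorem \ref{thm:median trick} plugs in directly. The only mild subtlety worth flagging explicitly is that the median trick is stated in terms of $E[Y]$ while our per-copy bound is centered at the true $d$; as noted, this is purely cosmetic since the underlying Chernoff argument never uses that identification.
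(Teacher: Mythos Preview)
Your proposal is correct and follows essentially the same approach as the paper: invoke the preceding lemma for the constant-probability guarantee, boost via the median trick with $t = 72\ln(1/\delta)$ independent copies, and tally the $kt$ hash values at $O(\log n)$ bits each. Your explicit remark that the median-trick argument does not actually require the target to be $E[Y]$ is a nice touch that the paper glosses over.
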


\begin{example}
	We want to estimate how many distinct IPv4 addresses (32 bits each) are visiting our website. Then, $n = 2^{32}$. Say we choose a function $h' : [1,n] \rightarrow [0,M]$ that is collision-free with probability at least $1-n^{-2}$. Then (see Section \ref{sec:perfect hashing}), $M = n^4$ and each hash value requires $\log_2 M = 4\log_2 n = 128$ bits (16 bytes) to be stored. We want \texttt{Bottom-k} to return an answer that is within $10\%$ of the correct answer ($\epsilon = 0.1$, $1/\epsilon^2 = 100$) with probability at least $1-10^{-5}$ ($\delta = 10^{-5}$, $\ln(1/\delta) < 12$). Then, replacing the constants that pop up from our analysis we obtain that \texttt{Bottom-k} uses at most around 32 MiB of RAM.
\end{example}

Note that to prove our main Theorem \ref{thm:bottom+} we used rather loose upper bounds. Still, \texttt{Bottom-k}'s memory usage of $<$ 32 MiB is rather limited if compared with the naive solutions. A bitvector of length $n$ would require 4 GiB of RAM. On the other hand, C++'s \texttt{std::set} uses 32 bytes per distinct element\footnote{\url{https://lemire.me/blog/2016/09/15/the-memory-usage-of-stl-containers-can-be-surprising/}}, so it is competitive with our analysis of \texttt{Bottom-k} only for $d$ up to $\approx 10\cdot 10^5$; this is clearly not sufficient in big-data scenarios such as a search engine: with over 5 billion searches per day\footnote{\url{https://review42.com/resources/google-statistics-and-facts}}, Google would need gigabytes of RAM to solve the problem with a \texttt{std::set} (even assuming as many as 10 searches per distinct user, and even using more space-efficient data structures).
Even better, practical optimized implementations of distinct-count algorithms solve the same problem within few kilobytes of memory\footnote{\url{https://en.wikipedia.org/wiki/HyperLogLog}} (see also \cite{flajolet2007hyperloglog}). 

\subsection{The LogLog family of count-distinct algorithms}

The original algorithm by Flajolet and Martin \cite{flajolet1983probabilistic} 
is a discrete version of the algorithm presented in Section \ref{sec:idealized FM} and it  
is based on the following idea: map each element to a $N$-bits hash $h'(x_i) \in [0,2^N)$, remember the maximum number $L = lz(h'(x_i))$ of leading zeros seen in any $h'(x_i)$, and finally return the estimate $2^L$.
For example, the number of leading zeros of $00111010$ is $lz(00111010) = 2$.
To see why this works, note that in a set of hash values of cardinality $2^L$, we expect to see \emph{one} hash $h'(x_i)$ prefixed by $L$ zeroes (the pattern $0^L$ does not have anything special: it is used just because $lz()$ can be computed very efficiently on modern architectures). 
It is not hard to see that our idealized algorithm presented in Subsection \ref{sec:idealized FM} is essentially equivalent to this variant: $h'(x)/(2^N-1) \approx h'(x)/2^N = y$ is approximately a uniform number in $[0,1]$ (\emph{approximately} because $h'$ is discrete). Let $L = lz(h'(x))$. Note that $2^{N-L}$ is a 2-approximation of $h'(x)$, so $y = h'(x)/2^N \approx 2^{-L}$. Then, the analysis of the algorithm of Section \ref{sec:idealized FM} tells us that $2^L = 1/y \approx 1/y -1$ is a good estimator for $d$, the number of distinct elements.
Observe that the algorithm only needs to keep in memory the maximum value $L = lz(h'(x_i))$, which, by the discussion above, is $L\in O(\log d)$. But then, $bitsize(L) = O(\log \log d)$: the algorithm uses space proportional to $O(\log \log d)$ bits (doubly exponentially less space than the naive solution using $O(d)$ words!).

Durand and Flajolet \cite{durand2003loglog} later refined this algorithm, giving it the name LogLog (from the space used by the algorithm). Their algorithm computes a  $(1 \pm 1.3/\sqrt k)$ approximation of the result with high probability, where $k$ is the number of independent instances of the algorithm. In the same paper they proposed a more accurate variant named SuperLogLog which, by removing $30\%$ of the largest $h(x_i)$'s, improves the approximation to $(1\pm 1.05/\sqrt k)$. In 2007, Flajolet, Fusy, Gandouet and Meunier \cite{flajolet2007hyperloglog} further improved the approximation to $(1\pm 1.04/\sqrt k)$. This algorithm is named HyperLogLog and uses an harmonic mean of the estimates.  Also Google has its own version: HyperLogLog++. See Heule, Nunkesser and Hall \cite{heule2013hyperloglog}.

\section{Estimating $F_1$ (probabilistic counting)}\label{sec:counting}

Consider the basic task of counting. In order to count up to $m$ we clearly need $\log m$ bits (logarithms are base 2). What if we allow for a 2-approximation, that is, we allow our answer to be off by at most a factor of two? then, it is easy to see that $\log\log m$ bits are sufficient: instead of storing our number $x \le m$, we store $y = \lceil \log x \rceil$. 
Then, $m \le 2^y \le 2m$ (2-approximation) and, since $y$ takes integer values between $0$ and $\log m$, it uses just $\log\log m$ bits. In general, we may fix a relative error $0 < \epsilon \leq 1$ and approximate $x$ with the smallest integer power of $(1+\epsilon)$ being larger than or equal to $x$: $(1+\epsilon)^y \geq x$, with $y = \lceil \log_{1+\epsilon} x \rceil = \left\lceil \frac{\log x}{\log(1+\epsilon)} \right\rceil$. Then, $y$ requires at most $\log \log m - \log\log(1+\epsilon) + 1 \leq \log\log m + O(\log(\epsilon^{-1}))$ bits to be stored 
(logarithms are base 2 unless otherwise specified; we used the bound $\log(1+\epsilon) \geq \frac{1}{1/\epsilon+1/2}$ and assumed $0 < \epsilon < 1$) and is a $(1 + \epsilon)$-approximation of $x$. 

\begin{example}
Suppose we allow for a $10\%$ relative error (i.e. $\epsilon = 0.1$). Then, we need approximately just $\log\log m + 3$ bits. What is the largest number we can store in 8 bits? Solving $\log\log m + 3 = 8$ we obtain $\log\log m = 5$, i.e. we can store a number as large as $2^{2^5} = 2^{32}$ with 10\% relative error.
\end{example}

While the above reasoning shows how one can store approximately a large counter in a small number of bits, it does not show how to \emph{increment} such counter: in real-case applications, we may wish to start from an approximate counter initialized to 0 and increase it one unit at a time (for example, every time a certain event occurs). In the next section we see that this goal can be achieved by using \emph{randomization}, incrementing the counter with some small probability.

\subsection{Morris' algorithm}\label{sec:Morris}

In 1978 Robert Morris\footnote{\url{https://en.wikipedia.org/wiki/Robert_Morris_(cryptographer)}}, a computer scientist working at Bell labs, studied the problem of counting large numbers using very small (8 bits) registers. Using just 8 bits, the largest number that can be stored (without skipping any positive integer) is clearly 255. However, as seen above, this is true only if we wish to store \emph{exact} counts; if we allow for some error, then the register can actually hold larger numbers. 
Algorithm \ref{alg:Morris} shows the basic algorithm devised by Morris, first described in \cite{morris1978counting}. The algorithm uses just one register ($X$) and its failure probability $1/(2\epsilon^2)$ makes it useful only for $\epsilon > 1/\sqrt{2}$, since otherwise the bound on the failure probability $1/(2\epsilon^2)$ is larger than one (thus not meaningful). 
To reduce the error probability, we will run several parallel versions of the algorithm (mean + median trick).

\begin{algorithm}[t!h!]
	\caption{\texttt{Morris}}
	\label{alg:Morris}
	
	\SetKwInOut{Input}{input}
	\SetKwInOut{Output}{output}
	\SetSideCommentLeft
	\LinesNumbered
	
	\Input{A stream of $m$ events and a desired relative error $0 < \epsilon \leq 1$}
	\Output{A $(1\pm\epsilon)$-approximation $\hat m$ of the number $m$ of events in the stream, with failure probability $1/(2\epsilon^2)$}
	\BlankLine
	\BlankLine
	
	Initialize a register $X = 0$\;
	For each event to be counted: increment $X$ with probability $2^{-X}$\;
	Finally, output $2^X-1$\;
	
\end{algorithm}

\paragraph{Analysis}

This analysis of the algorithm has been adapted from \cite{gregorygundersenblog,nelsonLectures,gilbertLectures}.
We first prove that the estimator $2^X-1$ returned by the algorithm is unbiased:

\begin{lemma}
$E[2^X-1] = m$
\end{lemma}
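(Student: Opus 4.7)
The plan is to prove $E[2^{X_m}] = m+1$ by induction on $m$, where $X_m$ denotes the value of the register $X$ after $m$ events have been processed. The conclusion $E[2^X - 1] = m$ then follows immediately by linearity of expectation.

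For the base case, $m = 0$ corresponds to the initial state where no events have been processed, so $X_0 = 0$ deterministically, giving $E[2^{X_0}] = 1 = 0 + 1$.

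For the inductive step, assume $E[2^{X_m}] = m+1$. I would use the law of total expectation (Lemma \ref{lem:total exp}) conditioning on the value of $X_m$. Given $X_m = k$, the register is incremented to $k+1$ with probability $2^{-k}$ and stays at $k$ otherwise, so
$$
E[2^{X_{m+1}} \mid X_m = k] = 2^{-k} \cdot 2^{k+1} + (1 - 2^{-k}) \cdot 2^k = 2 + 2^k - 1 = 2^k + 1.
$$
This is the key algebraic identity: the probabilistic update rule $2^{-X}$ was chosen precisely so that the expected increment of $2^X$ is exactly $1$, regardless of the current value of $X$. Taking expectations over $X_m$ and applying the inductive hypothesis:
$$
E[2^{X_{m+1}}] = E[\,E[2^{X_{m+1}} \mid X_m]\,] = E[2^{X_m} + 1] = E[2^{X_m}] + 1 = (m+1) + 1 = m + 2.
$$

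There is no real obstacle here; the proof is a clean induction, and the cleverness lies entirely in the algorithm's design choice of $2^{-X}$ as the update probability, which makes the conditional expectation telescope nicely. The only subtlety to flag explicitly is that the expectation on the left in the inductive step is over the joint randomness of all $m+1$ coin flips, while the inner conditional expectation only uses the randomness of the $(m+1)$-st flip — this is exactly what the law of total expectation resolves.
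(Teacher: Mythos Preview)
Your proof is correct and follows essentially the same approach as the paper's: induction on $m$, conditioning on $X_m$ via the law of total expectation, and the key computation $E[2^{X_{m+1}} \mid X_m = k] = 2^k + 1$. Your presentation is slightly more streamlined (writing the tower property as $E[E[2^{X_{m+1}} \mid X_m]] = E[2^{X_m}+1]$ rather than summing explicitly over $P(X_m=i)$), but the argument is identical.
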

\begin{proof}
We proceed by induction on $m$. Let us denote with $X_i$ the register's content after event $i$. For $m=0$, we have $X_0=0$ and $E[2^{X_0}-1] = 0$ so we are done. Assume inductively that the claim holds for $m$, i.e. $E[2^{X_{m}}-1] = m$ (equivalently, $E[2^{X_{m}}] = m+1$). Then, applying the law of total expectation to the partition $\{X_m = i\}$ of the event space we have:
$$
\begin{array}{ccl}
E[2^{X_{m+1}}-1] & = & E[2^{X_{m+1}}]-1  \\
& = & \left(\sum_{i=0}^\infty P(X_m=i)\cdot E[2^{X_{m+1}}\ |\ X_m=i]\right) - 1 \\
\end{array}
$$
Observe that 
$$
\begin{array}{ccl}
    E[2^{X_{m+1}}\ |\ X_m=i] & = &  P(X_m=i \mathrm{\ is\ incremented})\cdot 2^{i+1}  + \\
    && P(X_m=i \mathrm{\ is\ not\ incremented})\cdot 2^{i}\\
    & = & 2^{-i}\cdot 2^{i+1} + (1-2^{-i})\cdot 2^i \\
    &= & 2^i + 1
\end{array}
$$

so:

$$
\begin{array}{ccl}
E[2^{X_{m+1}}-1] & = & \left(\sum_{i=0}^\infty P(X_m=i)\cdot (2^i + 1 )\right)  - 1\\
& = & \left(\sum_{i=0}^\infty P(X_m=i)\cdot 2^i + \sum_{i=0}^\infty P(X_m=i)\right)  - 1\\
& = & \sum_{i=0}^\infty P(X_m=i)\cdot 2^i \\
& = & E[2^{X_m}] = m+1\\
\end{array}
$$
\end{proof}

Having established that the expected value of our estimator is exactly the count $m$ that we wish to store, we only miss to establish how much a single realization of the estimator can differ from the expected value. We first compute the estimator's variance:

\begin{lemma}
$Var[2^{X}-1] \leq m^2/2$
\end{lemma}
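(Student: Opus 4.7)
The plan is to mirror the proof of $E[2^X-1]=m$ that preceded the lemma, but one level up: compute $E[4^X]$ by induction on $m$ (using the conditional-expectation technique), then assemble the variance via $Var[2^X - 1] = Var[2^X] = E[4^X] - E[2^X]^2 = E[4^X] - (m+1)^2$.

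First I would set $E_m = E[4^{X_m}]$, with $X_m$ the register after the $m$-th increment step. Conditioning on $X_m = i$, with probability $2^{-i}$ the register becomes $i+1$ (contributing $4^{i+1}$) and with probability $1-2^{-i}$ it stays at $i$ (contributing $4^i$). A short calculation gives
\[
E[4^{X_{m+1}} \mid X_m = i] \;=\; 2^{-i}\cdot 4^{i+1} + (1-2^{-i})\cdot 4^i \;=\; 4^i + 3\cdot 2^i.
\]
By the law of total expectation, and using $E[2^{X_m}] = m+1$ from the previous lemma, this yields the clean recurrence
\[
E_{m+1} \;=\; E_m + 3\cdot E[2^{X_m}] \;=\; E_m + 3(m+1),
\]
with $E_0 = 1$. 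Summing telescopically gives $E_m = 1 + 3\cdot \tfrac{m(m+1)}{2}$.

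Then I would just plug in:
\[
Var[2^X] \;=\; E_m - (m+1)^2 \;=\; 1 + \tfrac{3m(m+1)}{2} - (m+1)^2 \;=\; \tfrac{m^2}{2} - \tfrac{m}{2} \;\le\; \tfrac{m^2}{2},
\]
which is the claim (noting $Var[2^X - 1] = Var[2^X]$ since subtracting a constant doesn't change variance).

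There is no real obstacle here: the only place requiring a little care is the conditional-expectation step, which must be set up correctly so that the $4^i$ and $2^i$ terms separate cleanly and let the previously-proven fact $E[2^{X_m}]=m+1$ be reused. Once that is in place, the recurrence is linear and trivially solvable, and the final algebraic simplification to $m^2/2 - m/2$ falls out directly.
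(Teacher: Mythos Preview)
Your proof is correct and essentially identical to the paper's: both compute $E[2^{2X_m}]$ by conditioning on $X_m$ (you condition forward via $E[4^{X_{m+1}}\mid X_m=i]$, the paper expands $P(X_m=i)$ backward in terms of $P(X_{m-1}=\cdot)$), obtain the same recurrence $E_m = E_{m-1} + 3m$ with $E_0=1$, and plug $E_m = \tfrac{3m(m+1)}{2}+1$ into $Var[2^X-1] = E[2^{2X}] - (m+1)^2$ to get $(m^2-m)/2$. Your observation that $Var[2^X-1]=Var[2^X]$ is a slightly cleaner entry point than the paper's direct expansion of $E[((2^X-1)-m)^2]$, but the substance is the same.
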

\begin{proof}
\begin{equation}\label{eq:var Morris}
\begin{array}{ccl}
Var[2^{X}-1] & = & E[((2^{X}-1) - m)^2] \\
 & = & E[(2^{X}- (m+1))^2] \\
 & = & E[2^{2X}] - 2(m+1)E[2^X] + (m+1)^2 \\
  & = & E[2^{2X}] - (m+1)^2 \\
\end{array}
\end{equation}
Let $X_m$ denote the value of our register after seeing $m$ events. We now compute $E[2^{2X_m}]$ and plug it into Equation \ref{eq:var Morris}. 
$$
\begin{array}{ccl}
E[2^{2X_m}] & = & \sum_{i=0}^\infty P(X_m=i)\cdot 2^{2i} \\
 & = & \sum_{i=0}^\infty 2^{2i} \cdot \left( P(X_{m-1}=i-1)\cdot 2^{-(i-1)}  + P(X_{m-1}=i)\cdot (1-2^{-i})  \right)\\
& = & \sum_{i=0}^\infty 2^{i+1}\cdot P(X_{m-1}=i-1) + \sum_{i=0}^\infty 2^{2i}\cdot P(X_{m-1}=i) - \sum_{i=0}^\infty 2^{i}\cdot P(X_{m-1}=i)\\
& = & \sum_{i=0}^\infty 4\cdot 2^{i-1}\cdot P(X_{m-1}=i-1) + \sum_{i=0}^\infty 2^{2i}\cdot P(X_{m-1}=i) - \sum_{i=0}^\infty 2^{i}\cdot P(X_{m-1}=i)\\
& = & 4\cdot E[ 2^{X_{m-1}}] + E[2^{2X_{m-1}}] - E[2^{X_{m-1}}]\\
& = & 3\cdot E[ 2^{X_{m-1}}] + E[2^{2X_{m-1}}]\\
& = & 3m + E[2^{2X_{m-1}}]\\
\end{array}
$$
The above yields a recursive definition: denoting $E_m = E[2^{2X_m}]$, we have $E_m = 3m + E_{m-1}$. Since $E_0 = E[2^{2\cdot 0}] = 1$, this series expands to $E_m = \sum_{i=1}^m 3i + 1 = \frac{3m(m+1)}{2}+1$. We can finally plug this into Equation \ref{eq:var Morris} and obtain
$$
\begin{array}{ccl}
Var[2^{X}-1] & = & E[2^{2X}] - (m+1)^2 \\
& = & \frac{3m(m+1)}{2}+1 - (m+1)^2 \\
& = & (m^2 -m)/2 \leq m^2/2 \\
\end{array}
$$
\end{proof}

Note that $Var[2^X-1] = O(E[X]^2)$. We can therefore apply the combined mean+median trick of Corollary \ref{cor:mean+median trick} and obtain (note that register $X$ in  Algorithm \ref{alg:Morris} stores a number whose expected value is $\log m$, so it requires on expectation $\log\log m$ bits):


\begin{theorem}
For any desired relative error $\epsilon >0$ and failure probability $\delta > 0$, by running $O(\epsilon^{-2} \log(1/\delta))$ parallel independent instances of Morris' algorithm we use in total 
$O\left(\epsilon^{-2} \log(1/\delta) \log\log m\right)$ bits of space on expectation and, with probability at least $1-\delta$, we  count numbers up to $m$ with relative error at most $\epsilon$, i.e.  return a value (random variable) $\hat Y$ such that:
$$
P( |\hat Y - m| \geq m\cdot \epsilon ) \leq \delta
$$
\end{theorem}

\section{Estimating $F_2$}\label{sec:F2}

We now present a very simple and elegant algorithm to estimate $F_2 = \sum_{y\in [1,n]} f_y^2$ (second-order frequency moment), then discuss some applications of the corresponding sketch: dimensionality reduction in the Euclidean space and sketches for relational algebra. 

\subsection{The tug-of-war sketch}

This technique is due to Alon, Matias, and Szegedy \cite{alon1999space} and it is also known as the \emph{AMS sketch}.
We use a $4$-wise independent hash function $f:[1,n] \rightarrow \{-1,+1\}$. As we show next, the  extremely simple (and elegant) Algorithm  \ref{alg:tow} returns an unbiased estimator for $F_2$.

\begin{algorithm}[t!h!]
	\caption{\texttt{Tug of war}}
	\label{alg:tow}
	
	\SetKwInOut{Input}{input}
	\SetKwInOut{Output}{output}
	\SetSideCommentLeft
	\LinesNumbered
	
	\Input{A stream of integers $x_1, \dots, x_m$.}
	\Output{An estimate of the second-order frequency moment $F_2$.}
	\BlankLine
	\BlankLine
	
	Initialize $z = 0$\;
	For each stream element $x$, update $z \leftarrow z + h(x)$\;
	When the stream ends, return the estimate $z^2$\;
	
\end{algorithm}

In other words, after seeing the $j$-th stream element we maintain a sketch $z = \sum_{i=1}^j h(x_i)$. 
Observe that the sketch $z$ is easy to update upon additive frequency updates (i.e. $f_y \gets f_y + \Delta$, for any $\Delta \in \mathbb N$ and $y\in[1,n]$). It is also easy to combine the sketches of two streams and obtain the sketch of  the concatenation of the two streams (solve this as an exercise). 

It is not hard to see that the estimator $z^2$ is unbiased: 
\newline
\begin{lemma}\label{lem: expected tow}
    $E[z^2] = F_2.$ 
\end{lemma}
\begin{proof}
Observe that $\sum_{i=1}^m h(x_i) = \sum_{y=1}^n f_y\cdot h(y)$. Then:
$$
\arraycolsep=1.4pt\def\arraystretch{2.2}
\begin{array}{rcl}
   E[z^2]   & = & E\left[ \left( \sum_{i=1}^m h(x_i) \right)^2 \right] \\
            & = & E\left[ \left(\sum_{y=1}^n f_y\cdot h(y)\right)^2 \right] \\
            & = & E\left[  \sum_{y=1}^n f_y^2\cdot h(y)^2 + \sum_{a\neq b\in [1,n]} f_af_b\cdot h(a) h(b) \right] \\
            & = & E\left[  \sum_{y=1}^n f_y^2\cdot h(y)^2\right] + \sum_{a\neq b\in [1,n]} f_af_b E[ h(a) h(b)]
\end{array}
$$
Note that $h(y)^2 = 1$ for all $y\in [1,n]$, so $E\left[  \sum_{y=1}^n f_y^2\cdot h(y)^2  \right] = E\left[  \sum_{y=1}^n f_y^2  \right] = \sum_{y=1}^n f_y^2 = F_2$. We are left to evaluate the second term, $\sum_{a\neq b\in [1,n]} f_af_b E[ h(a) h(b)]$. 
Since $h$ is $4$-wise independent, it is in particular pairwise independent. Then, $h(a)$ and $h(b)$ are independent random variables whenever $a\neq b$. Notice also that $E[h(a)]= \frac{1}{2}\cdot 1 + \frac{1}{2}\cdot (-1) = 0 $. Then,  
from Lemma \ref{lem:expectation product} we obtain $E[ h(a) h(b)] = E[h(a)]E[h(b)]=0$ whenever $a\neq b$. It follows that $\sum_{a\neq b\in [1,n]} f_af_b E[ h(a) h(b)] = 0$, so the thesis follows. 
\end{proof}

\begin{remark}
    Observe that, since $E[h(x)]=0$ for any $x\in [1,n]$, then by linearity of expectation it holds $E[z]=0$: the Tug-of-war algorithm seems to do almost nothing (it computes a random variable with expectation 0), but it actually estimates second order frequency moments.  
\end{remark}

In order to obtain an $(1\pm\epsilon)$-approximation with arbitrarily large probability, our next goal is to prove that the variance of $z^2$ is proportional to $O(F^2_2)$, i.e. the square of its expected value. This will allow us to apply the combined mean+median trick of Corollary \ref{cor:mean+median trick}.

\begin{lemma}\label{lem:variance tow}
    $Var[z^2] \le 3 E[z^2]^2 = 3F_2^2$.
\end{lemma}
\begin{proof}
First, observe that $Var[z^2]  = E[z^4] - E[z^2]^2 \le E[z^4]$. We evaluate $E[z^4]$ similarly to how we proceeded in Lemma \ref{lem: expected tow}, this time taking the fourth power of $z$ instead of its second power:
$$
\arraycolsep=1.4pt\def\arraystretch{2.2}
\begin{array}{rcl}
   E[z^4]   & = & E\left[ \left( \sum_{i=1}^m h(x_i) \right)^4 \right] \\
            & = & E\left[ \left(\sum_{y=1}^n f_y\cdot h(y)\right)^4 \right] \\
\end{array}
$$
We distinguish three types of terms in the expansion of the power $E\left[\left(\sum_{y=1}^n f_y\cdot h(y)\right)^4\right]$: 

\begin{enumerate}
    \item Terms of the form $E[abcd\cdot h(a)h(b)h(c)h(d)] = abcd\cdot E[h(a)h(b)h(c)h(d)]$ such that at least one of the four integers $a,b,c,d$ is distinct from the other three. Without loss of generality, let $a$ be this integer. Then, since $h$ is four-wise independent, $h(a)$ is independent from $h(b)h(c)h(d)$ and therefore $E[h(a)h(b)h(c)h(d)] = E[h(a)]E[h(b)h(c)h(d)] = 0$. We conclude that such terms do not contribute to $E\left[\left(\sum_{y=1}^n f_y\cdot h(y)\right)^4\right]$.
    
    \item Terms of the form $f_a^4\cdot h(a)^4 = f_a^4$. These terms contribute to a total amount of $\sum_{a\in [1,n]}f_a^4 = F_4$.
    
    \item Terms of the form $f_a^2f_b^2 \cdot h(a)^2\cdot h(b)^2 = f_a^2f_b^2$, for $a\neq b$. For fixed $a < b$, there are in total $\binom{4}{2}=6$ such terms. As a result, these terms contribute to a total amount of $6\sum_{a<b\in [1,n]}f_a^2f_b^2$.
\end{enumerate}

We conclude that 

$$
Var[z^4] \le  E[z^4] = E\left[\left(\sum_{y=1}^n f_y\cdot h(y)\right)^4\right] = F_4 + 6\sum_{a<b\in [1,n]}f_a^2f_b^2.
$$
Rewrite 
$$
\arraycolsep=1.4pt\def\arraystretch{2.2}
\begin{array}{rcl}
   F_2^2 & = & \left( \sum_{a\in [1,n]} f_a^2 \right)^2 \\
   & = & \sum_{a\in [1,n]} f_a^4 + 2\sum_{a<b\in [1,n]} f_a^2f_b^2 \\
   & = & F_4 + 2\sum_{a<b\in [1,n]} f_a^2f_b^2
\end{array}
$$
Re-arranging terms: $2\sum_{a<b\in [1,n]} f_a^2f_b^2 = F_2^2 - F_4$.

Putting together the above two equations, we derive that 
$$
\arraycolsep=1.4pt\def\arraystretch{2.2}
\begin{array}{rcl}
    Var[z^2] & \le & E[z^4] \\
    & = & E\left[\left(\sum_{y=1}^n f_y\cdot h(y)\right)^4\right]\\
   &= & F_4 + 6\sum_{a<b\in [1,n]}f_a^2f_b^2 \\
   & = & F_4 + 3 (F_2^2-F_4) \\
   & = & 3F_2^2 - 2F_4 \le 3F_2^2 \\
\end{array}
$$
which concludes the proof.
\end{proof}

Since we proved that the estimator $z^2$ is unbiased for $F_2$ and that $Var[z^2] \in O(E[z^2]^2)$, the combined mean+median trick of Corollary \ref{cor:mean+median trick} allows us to conclude: 

\begin{corollary}
    For any pre-defined parameters $\epsilon, \delta >0$, the tug-of-war algorithm uses $O(\epsilon^{-2}\log(1/\delta))$ words of space and estimates $F_2$ with relative error $\epsilon$ and failure probability $\delta$.
\end{corollary}

\subsection{Dimensionality reduction}\label{sec: dimensionality reduction}

We now show that the idea behind the tug-of-war sketch solves another very important problem in massive data processing: dimensionality reduction under the Euclidean distance. The main idea is to simply view the frequencies $f_y$ of the previous subsections as a size-$n$ vector. Note that the calculations of the previous subsections did not rely on the fact that the frequencies $f_y$ are integers. We will now switch to real numbers and show that this slight change of perspective allows us to perform dimensionality reduction.  
The reason why this technique is called \emph{dimensionality reduction} and not just ``distance approximation'' is that it is much powerful than simple distance approximation: our sketch will be a \emph{linear map} mapping $\mathbb R^n$ to $\mathbb R^{c}$ for some small constant $c$ (depending on the desired relative error) preserving the Euclidean distance up to some relative error $\epsilon$.

Let $a = (a_1, \dots, a_n) \in \mathbb R^n$ and $b = (b_1, \dots, b_n) \in \mathbb R^n$ be two real-valued vectors of dimension $n$. As in the previous subsections, let $h: [1,n] \rightarrow \{-1,+1\}$ be a $4$-wise independent hash function. Our final goal will be to devise a sketch allowing us to estimate 
$$
||a-b||_2 = \ell_2 = \sqrt{\sum_{i=1}^n(a_i-b_i)^2}
$$
with relative error $\epsilon$ and failure probability $\delta$.
In order to achieve this result, we first show how to use tug-of-war to approximate the square of the Euclidean distance $\ell_2^2 = \sum_{i=1}^n(a_i-b_i)^2$, and then argue that this is enough to approximate the Euclidean distance itself.

The sketch of vector $a$ (the same for $b$) is, again, the tug-of-war sketch: 

$$
z_a = \sum_{i=1}^n a_i\cdot h(i)
$$

Observe that, if we treat $z_a$ and $z_b$ as 1-dimensional vectors, then on expectation $\ell_2^2$ is preserved under this transformation. In fact, denote $z = z_a - z_b$. This is precisely the sketch of the vector $a-b$ so, as we showed in the previous subsections, $E[z^2] = \sum_{i=1}^n (a_i-b_i)^2 = \ell_2^2$. This result, however, holds just on expectation. What if we want a $(1\pm\epsilon)$-approximation with constant probability?

Observe that computing the distance between $a$ and $b$ is equivalent to computing the distance of $a-b$ from the origin, so from now on we 
work with just one vector $a$ and focus on computing $\ell_2^2 = ||a||_2^2 =  \sum_{i=1}^n a_i^2$ (i.e. the square of the Euclidean distance between $a$ and the origin).

Let's use the mean trick (Corollary \ref{cor:mean trick}). Instead of using just one hash function $h$, we use $s$ independent such hash functions $h_1, \dots, h_s$ and compute $s$ tug-of-war sketches:
$$
z_{a,j} = \sum_{i=1}^n a_i\cdot h_j(i), \ \ \ for\ j=1, \dots, s.
$$

Let $\hat \ell_2^2 = \frac{1}{s}\sum_{j=1}^s z_{a,j}^2$. 
Recall (Lemma \ref{lem: expected tow}) that $E[z_{a,j}^2] = 
\sum_{i=1}^n a_i^2 = \ell_2^2$ for all $j=1,\dots, s$.
By Corollary  \ref{cor:mean trick} (mean trick), if we choose $s =  \lceil 9\epsilon^{-2} \rceil \ge \frac{3 Var[z_{a,j}^2]}{E[z_{a,j}^2]^2\epsilon^2}$ (the inequality comes from Lemma \ref{lem:variance tow}) then:

\begin{equation}   \label{eq:distance preserving 1}
P\left( |\hat \ell_2^2 - \ell_2^2| \ge \epsilon \ell_2^2 \right) \le 1/3
\end{equation}

In other words: $\hat \ell_2^2$ is a $(1\pm \epsilon)$-approximation of $\ell_2^2$ with constant probability. 

We can express the above operations in the language of linear algebra. 
Consider the following random matrix (linear map): 

$$
H = \frac{1}{\sqrt s}\begin{bmatrix}
h_1(1) & h_1(2) & \dots & h_1(n-1) & h_1(n)\\
h_2(1) & h_2(2) & \dots & h_2(n-1) & h_2(n)\\
&&\dots\\
h_s(1) & h_s(2) & \dots & h_s(n-1) & h_s(n)\\
\end{bmatrix}
$$

Note that $H$ is simply a matrix of dimension $s\times n$ filled with uniform values from the set $\{-1/\sqrt s, 1/\sqrt s\}$.
We show that $H$ is our distance-preserving map. Note that $H$ maps vector $a$ to:

$$
H\cdot a = \frac{1}{\sqrt s} \begin{bmatrix}
z_{a,1}\\
z_{a,2}\\
\dots\\
z_{a,s}
\end{bmatrix}  
$$

But then it holds that $||H\cdot a||_2^2 = \frac{1}{s}\sum_{j=1}^s z_{a,j}^2 = \hat \ell_2^2$. So, Equation \ref{eq:distance preserving 1} can be rewritten as:

\begin{equation}   \label{eq:distance preserving 2}
P\left( \Big|\ ||H\cdot a||_2^2 - \ell_2^2\ \Big| \ge \epsilon \ell_2^2 \right) \le 1/3
\end{equation}

in other words, $H:\mathbb R^n \mapsto \mathbb R^{s}$, with $s\in \Theta(\epsilon^{-2})$, is  a linear map that preserves $\ell_2^2$ with relative error $\epsilon$ and success probability at least $2/3$. Finally note that $||H\cdot a||_2^2 = D \cdot \ell_2^2$ is equivalent to $||H\cdot a||_2 = \sqrt D \ell_2$, where $D\in [1-\epsilon, 1+\epsilon ]$ is the distortion factor. Since
$$
[\sqrt{1-\epsilon}, \sqrt{1+\epsilon} ] \subseteq 
[1-\epsilon, 1+\epsilon ]
$$

it follows that $H$ preserves the Euclidean distance as well (with relative error $\epsilon$):

\begin{equation}   \label{eq:distance preserving 3}
P\left( \Big|\ ||H\cdot a||_2 - \ell_2\ \Big| \ge \epsilon \ell_2 \right) \le 1/3
\end{equation}

Since $H$ is a linear map, $||H\cdot a - H\cdot b||_2 = || H(a-b) ||_2$ so we conclude:

\begin{theorem}[Dimensionality reduction]
  Let $a,b \in \mathbb R^n$. Let
  $s = \lceil 9\epsilon^{-2} \rceil$, for any desired error rate $\epsilon >0$. Let
  $H\in \mathbb R^{s\times n}$ be a matrix filled with uniform values from the set $\{-1/\sqrt{s},1/\sqrt{s}\}$. Then, with probability at least $2/3$ the following holds: 
  $$
  || H\cdot a - H\cdot b ||_2 \in (1\pm \epsilon)|| a-b ||_2.
  $$
  In other words: $H:\mathbb R^n \mapsto \mathbb R^{\Theta(\epsilon^{-2})}$ is  a linear map that preserves the Euclidean distance up to relative error $\epsilon$ with probability at least $2/3$.
\end{theorem}

To boost the probability of success, we can apply the median trick (Lemma \ref{thm:median trick}): we compute $t \in \Theta(\log(1/\delta))$ independent maps $H_1, \dots, H_t$, obtaining $t$ estimates for the distance between vectors $a$ and $b$. Finally, we return the median of the $t$ estimated distances. By Lemma \ref{thm:median trick}, the result is a multiplicative $(1\pm\epsilon)$-approximation of $||a-b||_2$ with probability at least $1-\delta$. 

Note that the median is not a linear operator, so the median trick does not give us a linear transformation. 
The Johnson-Lindenstrauss transform \cite{johnson1986extensions} removes the need for computing the median by filling $H$ with values following a normal distribution (rather than with uniform values from $\{-1/\sqrt s,1/\sqrt s\}$). This technique provides a linear transformation 
$H:\mathbb R^n \mapsto \mathbb R^{\Theta(\epsilon^{-2}\log(1/\delta))}$ that preserves the Euclidean distance up to relative error $\epsilon$ with probability at least $1-\delta$.

\subsection{Sketches for relational algebra (SQL)}

Relational algebra is a mathematical framework for manipulating data stored in the forms of relations (also called tables). SQL is the most influential implementation of relational algebra, being employed in most commercial implementations of relational databases.
While until now we studied sketches for conceptually-simple operations such as counting distinct elements, in the following we show that the tug-of-war sketch can be used to estimate the answer to simple SQL queries whose final result is an aggregation \cite{alon1999tracking, dobra2002processing}. This result is very powerful: imagine collapsing an entire SQL database (gigabytes) into a tiny sketch (kilobytes) allowing to compute approximate answers to SQL queries!

\paragraph{Aggregations on one relation}
Let $R$ be a relation containing tuples from  $[n]^r$. Let us call $\rho_1, \dots, \rho_r$ the $r$ attributes of $R$.
We start from a very simple case of aggregation: the sum under a \texttt{SELECT} operation on one relation. For example, let $R$ be the relation shown in Table \ref{tab:SQL1}, with $r=4$ attributes.

\begin{table}[th!]
    \centering
    \begin{tabular}{|c|c|c|c|}
        \hline
        matriculation & course\_code & module & evaluation\\\hline
        12345 & 664 & 1 & 28 \\\hline
        23315 & 664 & 2 & 18 \\\hline
        10123 & 622 & 1 & 25\\\hline
        12345 & 622 & 1 & 25\\\hline
        12345 & 622 & 2 & 27\\\hline   
        10123 & 664 & 1 & 25\\\hline    
    \end{tabular}
    \caption{A relational table with four attributes.}
    \label{tab:SQL1}
\end{table}

Then the following SQL query 

\begin{verbatim}
    SELECT sum(evaluation)
    FROM R
    WHERE matriculation = 12345
\end{verbatim}

will return $28+25+27 = 80$ (the sum of all evaluations of student 12345). 
More in general, we want to design a sketch able to estimate the answer to a query of the form:

\begin{lstlisting}
    -- query Q1
    SELECT sum($\rho_i$)
    FROM R
    WHERE $\rho_j$ = x
\end{lstlisting}

In other words: an aggregation (sum) over a one-relation query with one equality condition. The query selects all tuples where attribute $\rho_j$ is equal to $x$, and returns the sum $\sum\limits_{(x_1, \dots, x_r)\in R\ :\ x_j=x} x_i$ of their attribute $\rho_i$. We first build the following tug-of-war sketch:

$$
Q1(R) = \sum_{(x_1, \dots, x_r)\in R} h(x_j)\cdot x_i
$$

where $h : [1,n] \rightarrow \{-1,1\}$ is a four-wise independent hash function. 
Note that $Q1(R)$ collapses an entire relation (possibly, gigabytes) into \emph{one} integer!
Our estimator is:

$$
Q1(R,x) = Q1(R)\cdot h(x)
$$

We prove that $Q1(R,x)$ is an unbiased estimator for SQL query \texttt{Q1}:

\begin{theorem}\label{thm: Tow select}
    $$
    E[Q1(R,x)] = \sum_{(x_1, \dots, x_r)\in R\ :\ x_j=x} x_i
    $$ 
\end{theorem}
\begin{proof}
    $$
    \begin{array}{rcl}
        E[Q1(R,x)] & = & E[Q1(R)\cdot h(x)] \\
        & = & E\left[ \sum\limits_{(x_1, \dots, x_r)\in R} h(x_j)h(x)\cdot x_i \right]\\
        & = & \sum\limits_{(x_1, \dots, x_r)\in R} E[h(x_j)h(x)]\cdot x_i\\
        & = & \sum\limits_{(x_1, \dots, x_r)\in R\ :\ x_j=x} E[h(x)^2]\cdot x_i + \sum\limits_{(x_1, \dots, x_r)\in R\ :\ x_j\neq x} E[h(x_j)]E[h(x)]\cdot x_i \\
        & = & \sum\limits_{(x_1, \dots, x_r)\in R\ :\ x_j=x} x_i
    \end{array}
    $$
    Where we used the fact that $E[h(x_j)h(x)] =  E[h(x_j)]E[h(x)] = 0\cdot 0 = 0$ when $x_j\neq x$ (by independence of $h$) and $E[h(x)^2] = E[1] = 1$.
\end{proof}

The precision of the estimate can be improved by repetition (as done in previous sections). 

\medskip
\begin{Exercise}\noindent
    Show an estimator (not necessarily unbiased) allowing to replace \texttt{sum()} with \texttt{average()} in query \texttt{Q1}.
    \medskip
\end{Exercise}
\begin{Exercise}\noindent
    Generalize the solution to support the conjunction of multiple equality conditions in the \texttt{WHERE} clause. 
    \medskip
\end{Exercise}

\paragraph{Aggregations on two relations (equi-joins)}
Let $R, P$ be two relations containing tuples from  $[n]^{r}$ and  $[n]^{p}$, respectively. Let us call $\rho_1, \dots, \rho_r$ the $r$ attributes of $R$ and $\pi_1, \dots, \pi_p$ the $p$ attributes of $P$.
For example, let $R$ be the relation of Table \ref{tab:SQL1} and let $P$ be the relation shown in Table \ref{tab:SQL2}, with $p=2$ attributes.

\begin{table}[th!]
    \centering
    \begin{tabular}{|c|c|}
        \hline
        course\_code & active \\\hline
        664 & 0 \\\hline
        622 & 1 \\\hline
        123 & 1 \\\hline
    \end{tabular}
    \caption{A relational table with two attributes.}
    \label{tab:SQL2}
\end{table}

Then the following SQL query 

\begin{verbatim}
    SELECT sum(evaluation)
    FROM R, P
    WHERE matriculation = 12345 AND active = 1 AND R.course_code = P.course_code
\end{verbatim}

will return $25+27 = 52$: the sum of evaluations of student 12345, only for active courses.
More in general, we want to design a sketch able to estimate the answer to a query of the form:

\begin{lstlisting}
    -- query Q2
    SELECT sum($\rho_i$)
    FROM R,P
    WHERE $\rho_j$ = x AND $\pi_k$ = y AND $\rho_q$ = $\pi_t$
\end{lstlisting}

In other words: an aggregation (sum) over a join query with one equi-join condition and two other equality conditions (generalizing to any number of join and equality conditions is then immediate). 

For simplicity, we now assume that  $h: [1,n]\rightarrow \{-1,1\}$ is fully-independent (exercise: after reading this section, try to put an upper-bound to the amount of independence needed for $h$). 
Assume moreover that the domains of the attributes of $P$ and $R$ are disjoint, unless two attributes have the same name (for example, \texttt{course\_code} is an attribute of both $R$ and $P$ in our example, so the intersection of those two columns will possibly be nonempty). This is true in our example: it will never happen, for example, that a matriculation number is equal to a course code. 

We build one tug-of-war sketch for table $R$ as follows:

$$
Q2(R) = \sum_{(x_1, \dots, x_r)\in R} h(x_j)h(x_q)\cdot x_i
$$

in the above sketch, $h(x_j)$ is included for condition $\rho_j = x$, $h(x_q)$ is included for the equi-join condition $\rho_q = \pi_t$, and $x_i$ is included for the final sum() aggregate.

Similarly, the tug-of-war sketch for table $P$ is:

$$
Q2(P) = \sum_{(x_1', \dots, x_p')\in P} h(x_k')\cdot h(x_t')
$$

In the above sketch, $h(x_k')$ is included for condition $\pi_k = y$, and $h(x_t')$ is included for the equi-join condition $\rho_q = \pi_t$.

Our estimator for query \texttt{Q2}  is:

$$
Q2(R,P,x,y) = Q2(R)\cdot Q2(P) \cdot h(x) \cdot h(y)
$$

Now we show that the estimator is unbiased for query \texttt{Q2}:

\begin{theorem}
    $$
    E[Q2(R,P,x,y)] = \sum_{(x_1, \dots, x_r)\in R,\ (x'_1, \dots, x'_p) \in P :\ x_j=x\ \wedge\ x'_k=y\ \wedge\ x_q = x'_t} x_i 
    $$
\end{theorem}
\begin{proof}
    The proof works as in Theorem \ref{thm: Tow select}: the idea is that the terms that do not satisfy the \texttt{WHERE} condition will disappear on expectation. We expand the definition of $Q2(R,P,x,y)$:

    \def\arraystretch{3}
    $$
    \begin{array}{rcl}
        E[Q2(R,P,x,y)] & = & E\left[ \left(\sum\limits_{(x_1, \dots, x_r)\in R} h(x_j)h(x_q)\cdot x_i\right)\cdot \left(\sum\limits_{(x'_1, \dots, x'_p)\in P} h(x'_k)h(x'_t)\right)h(x)h(y)  \right] \\
        & = & E\left[ \sum\limits_{(x_1, \dots, x_r)\in R, (x'_1, \dots, x'_p)\in P} h(x_j)h(x_q)\cdot x_i\cdot h(x'_k)h(x'_t)h(x)h(y)  \right]
    \end{array}
    $$
    Moving the expectation inside the summation and re-grouping terms, this becomes:
    $$
    \sum\limits_{(x_1, \dots, x_r)\in R, (x'_1, \dots, x'_p)\in P}   
    E\left[\Big(h(x_j)h(x)\Big)\cdot \Big(h(x'_k)h(y)\Big)\cdot \Big(h(x_q) h(x'_t) \Big) \cdot x_i \right]
    $$
    Recall that we are assuming that the domains of attributes with different names are disjoint. Then, the expectation of the products in the equation above can be split into a product of three expectations (since the three components are independent random variables and we assume $h$ to be fully-independent):

    $$
    \sum\limits_{(x_1, \dots, x_r)\in R, (x'_1, \dots, x'_p)\in P}   
    E\left[h(x_j)h(x)\right] \cdot E\left[h(x'_k)h(y)\right]\cdot E\left[h(x_q) h(x'_t) \right] \cdot x_i
    $$
    
    In the quantity above, observe that only terms such that $x_j=x$, $x'_k=y$, and $x_q = x'_t$  survive in the summation, becoming equal to $x_i$. For the other terms, at least one of the three expectations goes to zero. This proves the claim.
\end{proof}

While we proved that our estimators for SQL queries are unbiased, we did not try to compute their variance in order to derive guarantees on the approximation ratio. Dobra et al. \cite{dobra2002processing} prove that the variance grows exponentially with $2N$, where $N$ is the number of equality conditions in the \texttt{WHERE} clause (in query \texttt{Q2}, $N=3$). Still, these estimators can be used in practice and produce accurate results with a sufficiently large number of repetitions.

\chapter{Exercises}

\setcounter{Exercise}{0}

\medskip \begin{Exercise} \noindent
You are given a string $S\in \Sigma^n$ containing $r$ equal-letter runs. For example: if $S = AAABBBBCCAAA$, then $r=4$ (the four runs are $AAA$, $BBBB$, $CC$, and $AAA$). Design a data structure using $O(r)$ words of space able to answer \emph{rank} and \emph{select} queries on $S$ in $O(\log n)$ time. Can you improve the space and query time of the data structure?
\end{Exercise}

\medskip \begin{Exercise} \noindent
Design a compressed data structure 
    implementing the  following predictive text model. The data structure should store a string $S\in \Sigma^n$ in high-order entropy-compressed space and should be able to answer quickly the following query: given a substring $T$ of $S$, what is the character that follows $T$ in $S$ more frequently? For example, if $Ta$ appears 50 times in $S$, $Tb$ appears 10 times in $S$, and $Tc$ appears 123 times in $S$, then the answer is $c$.
\end{Exercise}

\medskip \begin{Exercise} \noindent
Prove that you cannot encode (lossless) in $m$ bits a set containing $m$ integers.
\end{Exercise}

\medskip \begin{Exercise} \noindent
    We want to implement a SPAM filter for tweets formed by at most 100 words (in English). The filter should maintain a list $L$ of $10^6$ blacklisted phrases, where each phrase is formed by a minimum of one to a maximum of five words. For example, one of the phrases in $L$ could be:
    \begin{center}
    ``you have won 1000 EUR''
    \end{center}
    When concatenated together, the phrases of $L$ form an ASCII text of length $20$ MiB with second-order empirical entropy $H_2=0.2$.

    Given a tweet X, we want to put it in the SPAM folder if and only if it contains at least one sentence belonging to $L$. 

    We tolerate a $1\%$ probability that a tweet is incorrectly flagged as SPAM.
    Show how to solve the problem by using the least possible amount of space.  What if we want this probability to be $0\%$? How much space do your solutions use?
\end{Exercise}

\newpage
\medskip \begin{Exercise} \noindent
You are given two Bloom filters for sets $S_1$ and $S_2$ (the sets are unknown: you know only their filters). 
The two Bloom filters have been built with the same parameters ($M$, $k$) and the same hash functions. 
Show how to compute the Bloom filter for $S_1 \cup S_2$. Solve the same exercise with Counting Bloom filters.
\end{Exercise}

\medskip \begin{Exercise} \noindent
    The ``Algorithms and learning over massive data'' (ALMD) 12-ECTS course is formed by two modules: AMD and LMD. Let $[1,n]$ be the students enrolled in ALMD. Let moreover $A \subseteq [1,n]$ and $L \subseteq [1,n]$ be the sets of students who have passed the AMS and LMS modules, respectively. 

    Design a data sketch for $A$ and $L$ supporting the following queries:

    \begin{enumerate}
        \item Insert a new student in the set ($A$ or $L$). We assume that the student does not already belong to the set.
        \item Check if $A = L$. The answer should be correct with high probability (i.e. with probability $1-n^{-c}$ for any desired constant $c$).
        \item Estimate the fraction of students who have passed exactly one module between AMD and LMD (i.e. do not count students that have passed zero or both modules).
        For example, if there are in total $n=30$ students, 7 students have passed only AMD, and 8 students have passed only LMD, then this fraction is $(7+8)/30 = 0.5$.
        \item Focus on the set of students who passed at least one module. Among these students, estimate the fraction of students that passed both modules. For example, if $20$ students passed at least one between AMD and LMD, and $5$ students passed both AMD and LMD, this fraction is $5/20 = 0.25$.
    \end{enumerate}

    Assuming constant absolute error $0 < \epsilon < 1$ and constant failure probability $0<\delta<1$ for queries (3-4), the sketch should use $O(\log n)$ bits of space.
\end{Exercise}

\medskip \begin{Exercise} \noindent
    An embedded device with few KiB of RAM contains a sensor that measures the environment's temperature (an integer number) every 10 seconds. We want to store \emph{all} the generated data, in a streaming fashion, in a sketch using just few Kilobytes  of space (note that the sensor produces several MiB of data per year). At any point in time, the data sketch should be able to answer quickly to the following query, with relative error $\epsilon$: what was the average temperature in a specific time range? (e.g. April 15th 2021, 10:15:40 AM - September 21th 2023, 09:10:20 PM).
\end{Exercise}

\medskip \begin{Exercise} \noindent
We want to store a relational table $t(a_1, a_2, \dots, a_k)$ in a small (polylogarithmic-space in the table's size) sketch supporting the following two queries: (i) insert a new row into the table, and (ii) given $q$ attributes $a_{i_1}, \dots, a_{i_q}$ at query time, estimate with relative error $\epsilon$ and failure probablity $\delta$ the number of distinct rows in the projection $\pi_{a_{i_1}, \dots, a_{i_q}}(t)$. 
\end{Exercise}

\medskip \begin{Exercise} \noindent
We want to store a relational table $t(a_1, a_2, \dots, a_k)$, where columns $a_1, a_2, \dots, a_k$ store integer numbers, in a small (polylogarithmic-space in the table's size) sketch supporting the following two queries (we allow a small error in the answer): (i) insert a new row into the table, and (ii) given an attribute $a_i$ and two integers $\ell\leq r$, estimate the number of rows (not necessarily distinct) such that $\ell \leq a_i \leq r$. 
\end{Exercise}

\medskip \begin{Exercise} \noindent
An embedded device with few KiB of RAM contains a GPS sensor that outputs the absolute position of a train (an integer number) on a railway every second. We want to store \emph{all} the generated data, in a streaming fashion, in a small-space sketch. 
At any point in time, the data sketch should be able to answer quickly to the following query, with relative error $\epsilon$: what was the average speed of the train in a specific time range? (e.g. on April 15th 2021, from 10:15:45 AM to 12:10:22 PM).
\end{Exercise}

\medskip \begin{Exercise} \noindent
Design an algorithm to be implemented on a router with few MiB of memory, to solve the following problem. We have the original code of a huge virus, occupying 1 GiB of space. 
To elude antivirus software, after infecting a new computer the virus modifies itself by swapping some characters from a small subset of fixed positions in the code (let's say, 5 positions) that are unknown to us.
Then, the infected computer sends the modified virus' code on the network. 
Before and after sending the virus' code, the infected computer could send other unrelated data.
Assume that our router receives all packets passing through the network and, for simplicity, assume that IP packets are sorted by sender. The algorithm should be able to estimate how many computers have been infected.
\end{Exercise}

\medskip \begin{Exercise} \noindent
Design an algorithm for a web crawler that explores the web and has the task to identify clusters of web pages being similar to each other, where similarity is measured according to the words contained in the web page. 
\end{Exercise}

\medskip \begin{Exercise} \noindent
You are hired at CIA with the task of preventing terrorist attacks. 
Design an algorithm to be implemented on a router with few KiB of memory, to estimate the number of IP addresses that issued Google searches for both these questions (possibly, the searches do not belong to the same IP packet): (i) where to buy ammonium-nitrate fertilizers, and (ii) how to activate a mechanical relay from distance with a mobile phone.
\end{Exercise}

\medskip \begin{Exercise} \noindent
Design an algorithm to be implemented on a router with 3 GiB of free RAM, to solve the following problem: at any point in time, given an IPv4 address estimate how many IP packets have been sent from that address. 
You can assume that each IP address sends no more than $2^{63}$ IP packets.
The answer must not exceed \emph{relative} error
$82\%$ with probability at least $1/4$. 
\end{Exercise}

\medskip \begin{Exercise} \noindent
Design an algorithm to be implemented on a router with 100 MiB of free RAM, to solve the following problem: at any point in time, given an IPv4 address estimate how many IP packets have been sent from that address. 
You can assume that the number of IP packets sent in total by all addresses does not exceed $2^{32}$.
The answer must not exceed \emph{absolute} error 1000.
\end{Exercise}

\medskip \begin{Exercise} \noindent
Let $c_1, \dots, c_n$ be a list of computers that send IP packets to two servers $A$ and $B$; we see the packets arriving in a streaming fashion as a list of pairs $(c_i,X)$, where $i\in [n]$ and $X \in \{A,B\}$. Assuming that every computer $c_i$ sends at least one packet to either $A$ or $B$, estimate with small absolute error how many computers sent a packet to both $A$ and $B$ (i.e. computers $c_i$ such that both pairs $(c_i,A)$ and $(c_i,B)$ appear on the stream).
\end{Exercise}

\medskip \begin{Exercise} \noindent
Let $c_1, \dots, c_n$ be a list of computers that send IP packets to two servers $A$ and $B$; we see the packets arriving in a streaming fashion as a list of pairs $(c_i,X)$, where $i\in [n]$ and $X \in \{A,B\}$. Estimate with small absolute error how many computers sent a packet to both $A$ and $B$ (i.e. computers $c_i$ such that both pairs $(c_i,A)$ and $(c_i,B)$ appear on the stream). Differently from the previous exercise, there could exist computers that do not send packets to $A$ nor $B$.
\end{Exercise}

\medskip \begin{Exercise} \noindent
A router filters all IP packets --- of the form $(S,M)$ --- from our company's intranet, where $S$ is the sender's IP address, and $M$ is the message. 
We know that two worms $w_1$ and $w_2$ (self-replicating viruses) circulate on the network. The code of each of the two worms fits in $n$ bytes, and we can assume that if an IP packet $(S,M)$ contains a worm's code, the whole worm's code is contained inside the message $M$.
Design a streaming algorithm that uses $O(\log n)$ words of space and solves the following problem. 
Let $W_i^S$, for $i\in \{1,2\}$, be the event ``$S$ has been infected by $w_i$'', where $S$ is a uniform IP address in our intranet. Estimate the correlation
$$
P(W_1^S | W_2^S)
$$
i.e. the probability of being infected by $w_1$ given that we've been infected by $w_2$ where, again, $S$ is a uniform IP address in our intranet.  
\end{Exercise}

\bibliographystyle{plain}
\bibliography{notes}

\end{document}